\algrenewcommand\alglinenumber[1]{\tiny #1:}
\pgfplotsset{compat=1.7}
\title{\textbf{Efficient means of Achieving Composability using Object based Conflicts on Transactional Memory}\thanks{A preliminary version of this work was accepted in AADDA 2018 as work in progress.}}
\author{Sathya Peri, Ajay Singh and Archit Somani\thanks{Author sequence follows lexical order of last names.}\\Department of Computer Science \& Engineering, IIT Hyderabad, India\\(sathya\_p, cs15mtech01001, cs15resch01001)@iith.ac.in}
\date{}
\newcommand{\punt}[1]{}
\newcommand{\cmnt}[1]{}
\newtheorem{theorem}{Theorem}
\newtheorem{lemma}[theorem]{Lemma}
\newtheorem{proof}{\textit{Proof:}}
\newtheorem{corollary}[theorem]{Corollary}
\newtheorem{definition}{Definition}
\newcounter{history}
\newtheorem{observation}[theorem]{Observation}
\newcommand{\secref}[1]{Section~\ref{sec:#1}}
\newcommand{\figref}[1]{Figure~\ref{fig:#1}}
\newcommand{\stref}[1]{step~\ref{step:#1}}
\newcommand{\lemref}[1]{Lemma~\ref{lem:#1}}
\newcommand{\corref}[1]{Corollary~\ref{cor:#1}}
\newcommand{\defref}[1]{Definition~\ref{def:#1}}
\newcommand{\obsref}[1]{Observation~\ref{obs:#1}}
\newcommand{\algoref}[1]{{Algo~\ref{algo:#1}}}
\newcommand{\subsecref}[1]{SubSection{\ref{subsec:#1}}}
\newcommand{\Secref}[1]{Section~\ref{sec:#1}}
\newcommand{\Corref}[1]{Corollary~\ref{cor:#1}}
\newcommand{\Eqref}[1]{eq(\ref{eq:#1})}
\newcommand{\Obsref}[1]{Observation~\ref{obs:#1}}
\newcommand{\Lineref}[1]{Line~\ref{lin:#1}}
\newcommand{\ignore}[1]{}
\newcommand{\tobj} {transaction object}
\newcommand{\txns}[1] {txns(#1)}
\newcommand {\comm}[1] {committed(#1)}
\newcommand {\aborted}[1] {aborted(#1)}
\newcommand {\incomp}[1] {incomp(#1)}
\newcommand {\term}[1] {term(#1)}
\newcommand {\key}[1] {key(#1)}
\newcommand {\conf}[1] {conflict(#1)}
\newcommand{\seq} {sequential}
\newcommand{\lupdt}[2] {#2.lastUpdt(#1)}
\newcommand{\mr} {MR}
\newcommand{\tr} {TR}
\newcommand{\legal} {legal}
\newcommand{\legality} {legality}
\newcommand{\op} {operation}
\newcommand{\mth} {method\xspace}
\newcommand{\mths} {methods\xspace}
\newcommand{\rvmt} {rv\_method\xspace}
\newcommand{\upmt} {upd\_method\xspace}
\newcommand{\rvp} {\emph{\rvmt execution} phase}
\newcommand{\cp} {\emph{\upmt execution} phase}
\newcommand{\cc} {correctness-criterion}
\newcommand{\inv} {$inv$}
\newcommand{\rsp} {$rsp$}
\newcommand{\evts}[1] {evts(#1)}
\newcommand{\met}[1] {methods(#1)}
	\newcommand{\tread} {\emph{t\_read}}
	\newcommand{\twrite} {\emph{t\_write}}
	\newcommand{\tins} {\emph{t\_ins\xspace}}
	\newcommand{\tdel} {\emph{t\_del\xspace}}
	\newcommand{\tlook} {\emph{t\_look\xspace}}
	\newcommand{\tryc} {\emph{tryC\xspace}}%{\emph{tryC}}
	\newcommand{\trya} {\emph{tryA\xspace}}%{\emph{tryA}}
\newcommand{\tbg} {\emph{t\_begin\xspace}}
\newcommand{\tbeg} {\emph{STM\_begin\xspace}}
\newcommand{\tread} {\emph{t\_read\xspace}}
\newcommand{\twrite} {\emph{t\_write\xspace}}
\newcommand{\tins} {\emph{STM\_insert\xspace}}
\newcommand{\tdel} {\emph{STM\_delete\xspace}}
\newcommand{\tlook} {\emph{STM\_lookup\xspace}}
\newcommand{\tryc} {\emph{tryC}}%{\emph{tryC}}
\newcommand{\trya} {\emph{tryA}}%{\emph{tryA}}
\newcommand{\unaborted} {unaborted}
\newcommand{\inss} {\emph{list\_insert}}
\newcommand{\dell} {\emph{list\_del}}
\newcommand{\lookk} {\emph{list\_lookup}}
\newcommand{\ins} {\emph{rbl\_ins}}
\newcommand{\del} {\emph{rbl\_del}}
\newcommand{\look} {\emph{rbl\_Search}}
\newcommand{\rv} {\emph{rv}}
\newcommand{\up} {\emph{up}}
\newcommand{\opq} {opaque\xspace}
\newcommand{\opty} {opacity\xspace}
\newcommand{\lble} {linearizabale\xspace}
\newcommand{\lbty} {linearizability\xspace}
\newcommand{\coop} {co-opaque\xspace}
\newcommand{\coopty} {co\text{-}opacity\xspace}
\newcommand{\tab} {\texttt{hash-table\xspace}}
\newcommand{\llist} {list}
\newcommand{\otm} {\textit{HT-OSTM}\xspace}%{ObjSTM}
\newcommand{\ltm} {\textit{list-OSTM}\xspace}%{ObjSTM}
\newcommand{\rwtm} {\textit{RWSTMs}}
\newcommand{\lotm} {OSTM\xspace}
\newcommand{\lotms} {OSTMs\xspace}
\newcommand{\bto} {BTO\xspace}
\newcommand{\bst} {BST\xspace}
\newcommand{\lsl} {lazyrb-list\xspace}
\newcommand{\csr} {CSR\xspace}
\newcommand{\cg}[1] {CG(#1)}
\newcommand{\rvm} {\emph{rvm}\xspace}
\newcommand{\fevt}[1] {#1.firstEvt}
\newcommand{\levt}[1] {#1.lastEvt}
\newcommand{\fkmth}[3] {#3.firstKeyMth(#1, #2)}
\newcommand{\pkmth}[3] {#3.prevKeyMth(#1, #2)}
\newcommand{\udset}[1] {updtSet(#1)}
\newcommand{\rvset}[1] {rvSet(#1)}
\newcommand{\lin}[1]{#1.LP\xspace}
\newcommand\tabspace[1][1cm]{\hspace*{#1}}
\newcommand{\preds} {sh\_preds[]}
\newcommand{\currs} {sh\_currs[]}
\newcommand{\glslhead}{getRBLHead($obj\_id \downarrow, key \downarrow$)}
\newcommand{\llgopn}[1] {$le.getOpn(obj\_id \downarrow$, $ key \downarrow$)}
\newcommand{\llsopn}[1] {$le.setOpn(obj\_id \downarrow$, $key \downarrow$, #1)}
\newcommand{\llgval}[1] {$le.getValue(obj\_id \downarrow$, $ key \downarrow$)}
\newcommand{\llsval}[1] {$le.setValue(obj\_id \downarrow$, $ key \downarrow$, #1)}
\newcommand{\llspc} {$le.setPreds\&Currs(obj\_id \downarrow$, $ key \downarrow$, $\preds \downarrow$, $\currs \downarrow$)}
\newcommand{\llgaptc}[1] {$le.getAptCurr$(#1)}
\newcommand{\llgkeyobj} {$le.getKey\&Objid(le_i \downarrow$)}
\newcommand{\llsopst}[1] {$le.setOpStatus(obj\_id \downarrow$, $ key \downarrow$, #1)}
\newcommand{\txlfind} {\textup{txlog.findInLL}$(t\_id \downarrow, obj\_id \downarrow, key \downarrow, le \uparrow)$}
\newcommand{\txgllist} {txlog.getLlList($t\_id \downarrow$)}
\newcommand{\txsetst}[1] {txlog.setStatus($txstatus \downarrow$, $ OK \downarrow$)}
\newcommand{\lsls}[1] {rblSearch($t\_id \downarrow$, $obj\_id \downarrow$, $ key \downarrow$, #1, $\preds \uparrow$, $\currs \uparrow$, $op\_status \uparrow$ )}
\newcommand{\lslins}[1] {rblIns($\preds \downarrow$, $\currs \downarrow$, #1)}
\newcommand{\lsldel} {rblDel($\preds \downarrow$, $\currs \downarrow$)}
\newcommand{\rlsol} {releaseOrderedLocks($ordered\_ll\_list \downarrow$)}
\newcommand{\toval} {transValidation($t\_id \downarrow$, $ key \downarrow$, $\currs \downarrow$, $val\_type \downarrow$, $op\_status \uparrow$)}
\newcommand{\cld} {commonLu\&Del($t\_id\downarrow$, $obj\_id\downarrow$, $key\downarrow$, $value\uparrow,$ $op\_status\uparrow$)}
\newcommand{\cldd} {commonLu\&Del()}
\newcommand{\handlea}{handleAbort($t\_id \downarrow$)}
\newcommand{\llsort}{txlog.sort ($ll\_list \downarrow$)}
\newcommand{\nptov} {\emph{transValidation()}}
\newcommand{\npintv} {\emph{methodValidation()}}
\newcommand{\nptc} {\emph{STM\_tryC()}}
\newcommand{\npins} {\emph{STM\_insert()}}
\newcommand{\npdel} {\emph{STM\_delete()}}
\newcommand{\npluk} {\emph{STM\_lookup()}}
\newcommand{\nplsls} {\emph{rblSearch()}}
\newcommand{\nplsldel} {\emph{rblDel()}}
\newcommand{\nplslins} {\emph{rblIns()}}
\newcommand{\npbegin} {\emph{STM\_begin}}
\newcommand{\nppoval} {\emph{intraTransValidation()}}
\newcommand{\rn} {\textcolor{red}{rl\xspace}}
\newcommand{\bn} {\textcolor{blue}{bl\xspace}}
\newcommand{\rc} {\textcolor{red}{sh\_currs[0]}}
\newcommand{\bc} {\textcolor{blue}{sh\_currs[1]}}
\newcommand{\bp} {\textcolor{blue}{sh\_preds[0]}}
\newcommand{\rp} {\textcolor{red}{sh\_preds[1]}}
\newcommand {\cmntwa}[1] {\State{\textcolor{gray}{/* #1 */} }}
\begin{document}
\maketitle

\begin{abstract}
Composing together the individual atomic \mths of concurrent data-structures ($cds$) pose multiple design and consistency challenges. In this context composition provided by transactions in software transaction memory (STM) can be handy. However, most of the STMs offer read/write primitives to access shared $cds$. These read/write primitives result in unnecessary aborts. Instead, semantically rich higher-level \mths of the underlying $cds$ like lookup, insert or delete (in case of \tab{} or lists) aid in ignoring unimportant lower level read/write conflicts and allow better concurrency. 

In this paper, we adapt transaction tree model in databases to propose OSTM which enables efficient composition in $cds$. We extend the traditional notion of conflicts and legality to higher level \mths of $cds$ using STMs and lay down detailed correctness proof to show that it is \coop{}. We implement OSTM with concurrent closed addressed \tab{} (\emph{HT-OSTM}) and list (\ltm{}) which exports the higher-level operations as transaction interface.

In our experiments with varying workloads and randomly generated transaction operations, \otm{} shows speedup of 3 to 6 times and w.r.t aborts \otm{} is 3 to 7 times better than ESTM and read/write based STM, respectively.
Where as, \emph{list-OSTM} outperforms state of the art lock-free transactional list, NOrec STM list and boosted list by 30\% to 80\% across all workloads and scenarios. Further, \emph{list-OSTM} incurred negligible aborts in comparison to other techniques considered in the paper. 
\end{abstract}

%\vspace{-12pt}
\section{Introduction}
\label{sec:intro}
Software Transaction Memory Systems (\textit{STMs}) are a convenient programming interface for a programmer to access shared memory without worrying about concurrency issues \cite{HerlMoss:1993:SigArch,ShavTou:1995:PODC} and are natural choice for achieving composability\cite{Harretal:2005:PPoPP}.

Most of the \textit{STMs} proposed in the literature are specifically based on read/write primitive operations (or methods) on memory buffers (or memory registers). These \textit{STMs} typically export the following methods: \tbg{} which begins a transaction, \tread{} which reads from a buffer, \twrite{} which writes onto a buffer, \tryc{} which validates the \op{s} of the transaction and tries to commit. We refer to these as \textit{Read-Write STMs or \rwtm{}}. As a part of the validation, the STMs typically check for \emph{conflicts} among the \op{s}. Two \op{s} are said to be conflicting if at least one of them is a write (or update) \op. Normally, the order of two conflicting \op{s} cannot be commutated.  %Two \op{s} are said to be conflicting if the relative execution order of these \op{s} can not be reversed or commutated without affecting the behaviour of the \op{s} following them. 
On the other hand, \emph{Object STMs} or \textit{OSTM} operate on higher level objects rather than read \& write \op{s} on memory locations. They include more semantically rich \op{s} such as enq/deq on queue objects, push/pop on stack objects and insert/lookup/delete on sets, trees or \tab{} objects depending upon the underlying data structure used to implement OSTM. %In this paper, we develop an efficient \otm{} with \emph{\tab} using \lsl{} being the underlying object on which the \otm{} operates. 

\ignore{We assume that the \tab{} object supports insert, delete and lookup \op{s} on $\langle$key, value$\rangle$ pairs. For showing correctness of our implementation, we will use \opty{}\cite{GuerKap:2008:PPoPP} since this \cc{} is generic enough to apply on \otm{s} unlike other criteria which are specific only to \rwtm{s}. }

%similar to \cite{Gramoli+:OptListSet:DISC-BA:2015}. 
It was shown in databases that object-level systems provide greater concurrency than read/write systems \cite[Chap 6]{WeiVoss:2002:Morg}. Along the same lines, we propose a model to achieve composability with greater concurrency for \textit{STMs} by considering higher-level objects which leverage the richer semantics of object level \mths. We motivate this with an interesting example.

%Herlihy et al.\cite{herlihy2008transactional, hassan2014developing,Harris_abstractnested} have explored this concept in the context for STMs in the context of Boosting. 

Consider an \textit{OSTM} operating on the \tab{} object called as \textit{Hash-table Object STM} or \textit{HT-OSTM} which exports the following  methods - \tbeg: which begins a transaction (same as in \rwtm); \tins{} which inserts a value for a given key; \tdel{} which deletes the value associated with the given key; \tlook{} which looks up the value associated with the given key and \emph{STM\_tryC} which validates the \op{s} of the transaction.

%The \otm{} export a \tab{} internally uses a \llist{} implementation. Each element of the \llist{} stores the $\langle$key, value$\rangle$ pair. It can be seen that the underlying \lsl{} is a concurrent data-structure manipulated by multiple transactions (and hence threads). The \op{s} of the \lsl{} denoted as: \ins, \del{} and \look{}. Thus when a transaction invokes \tins{} method, the STMs internally invokes the \ins{} method.
%We denote this \otm{} as \emph{\lotm} (Object STM). 
%\vspace{-.7cm}
\begin{figure}[H]
		\centering
%		\captionsetup{justification=centering}
	\centerline{\scalebox{0.5}{\input{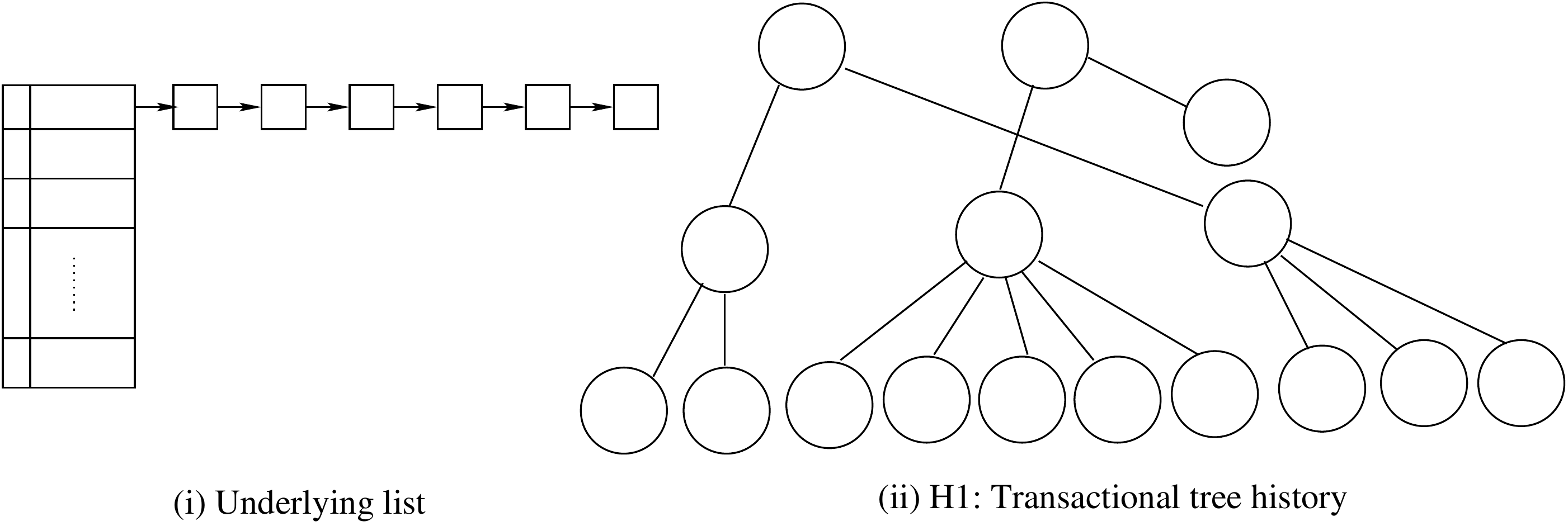_t}}}
	\caption{Motivational example for OSTMs}
	\label{fig:tree-exec}
\end{figure}
%\vspace{-.7cm}

A simple way to implement the concurrent \otm is using a \llist{} (a single bucket) where each element of the \llist{} stores the $\langle$key, value$\rangle$ pair. The elements of the \llist{} are sorted by their keys similar to the set implementations discussed in \cite[Chap 9]{Herlihy:ArtBook:2012}. It can be seen that the underlying \llist{} is a concurrent data-structure manipulated by multiple transactions. So, we may use the lazy-list based concurrent set \cite{Heller+:LazyList:PPL:2007} to implement the \op{s} of the \llist{} denoted as: \inss, \dell{} and \lookk{}. Thus, when a transaction invokes \tins{}, \tdel{} and \tlook{} methods, the STM internally invokes the \inss{}, \dell{} and \lookk{} methods respectively. 

Consider an instance of \llist{} in which the nodes with keys $\langle k_2~ k_5~ k_7~ k_8 \rangle$ are present in the \tab{} as shown in \figref{tree-exec}(i) and transactions  $T_1$ and $T_2$ are concurrently executing $\tlook_1(k_5)$ (shortened as l), $\tdel_2(k_7)$ (shortened as d) and $\tlook_1(k_8)$ as shown in \figref{tree-exec}(ii).
In this setting, suppose a transaction $T_1$ of \otm{} invokes methods \tlook{} on the keys $k_5, k_8$. This would internally cause the \otm{} to invoke \lookk{} method on keys $\langle k_2, k_5 \rangle$ and $\langle k_2, k_5, k_7, k_8 \rangle$ respectively.

%\vspace{-.3cm}
%\vspace{-.5cm}
\cmnt{
\begin{figure}[tbph]
\centerline{\scalebox{0.4}{\input{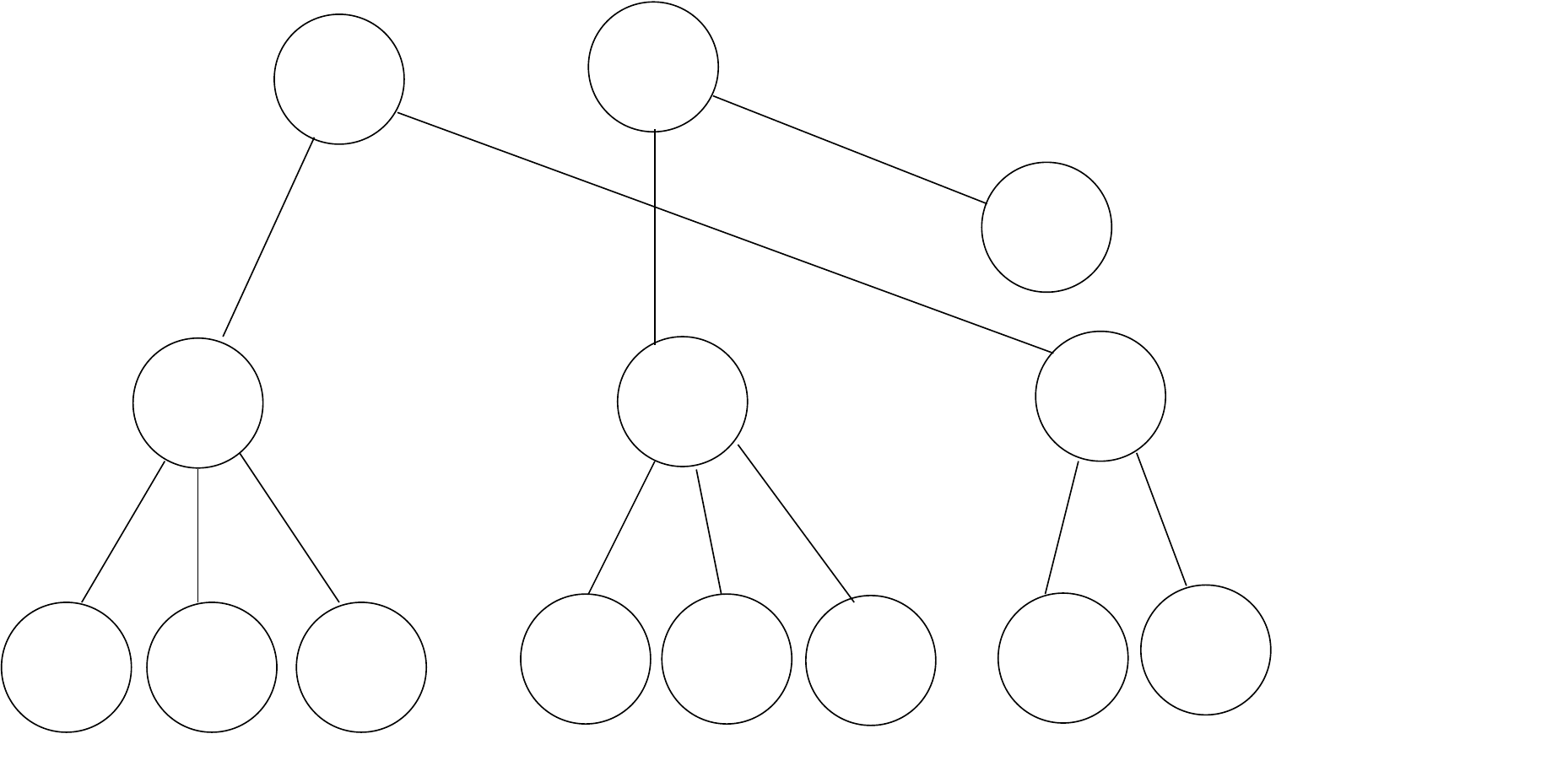_t}}}
\caption{An execution on \lotm{}}
\label{fig:tree-exec1}
\end{figure}
}

Concurrently, suppose transaction $T_2$ invokes the method \tdel{} on key $k_7$ between the two \tlook{s} of $T_1$. This would cause, \otm{} to invoke \dell{} method of \llist{} on $k_7$. Since, we are using lazy-list approach on the underlying \llist, \dell{} involves pointing the next field of element $k_5$ to $k_8$ and marking element $k_7$ as deleted. Thus \dell{} of $k_7$ would execute the following sequence of read/write level operations- $r(k_2) r(k_5) r(k_7) w(k_5) w(k_7)$ where $r(k_5), w(k_5)$ denote read \& write on the element $k_5$ with some value respectively. The execution of \otm{} denoted as a \emph{history} can be represented as a transactional forest as shown in \figref{tree-exec}(ii). Here the execution of each transaction is a tree. 

In this execution, we denote the read/write \op{s} (leaves) as layer-0 and \tlook, \tdel{} methods as layer-1. Consider the history (execution) at layer-0 (while ignoring higher-level \op{s}), denoted as $H0$. It can be verified this history is not \opq \cite{GuerKap:2008:PPoPP}. This is because between the two reads of $k_5$ by $T_1$, $T_2$ writes to $k_5$. It can be seen that if history $H0$ is input to a \rwtm{} one of the transactions among $T_1$ \& $T_2$ would be aborted to ensure correctness (in this case opacity\cite{GuerKap:2008:PPoPP}). On the other hand consider the history $H1$ at layer-1 consisting of \tlook, \tdel{} methods while ignoring the underlying read/write \op{s}. We ignore the underlying read \& write \op{s} since they do not overlap (referred to as pruning in \cite[Chap 6]{WeiVoss:2002:Morg}). Since these methods operate on different keys, they are not conflicting and can be re-ordered either way. Thus, we get that $H1$ is \opq{}\cite{GuerKap:2008:PPoPP} with $T_1 T_2$ (or $T_2 T_1$) being an equivalent serial history.

\begin{figure}[H]
	\centering
	\captionsetup{justification=centering}
	\centerline{\scalebox{0.5}{\input{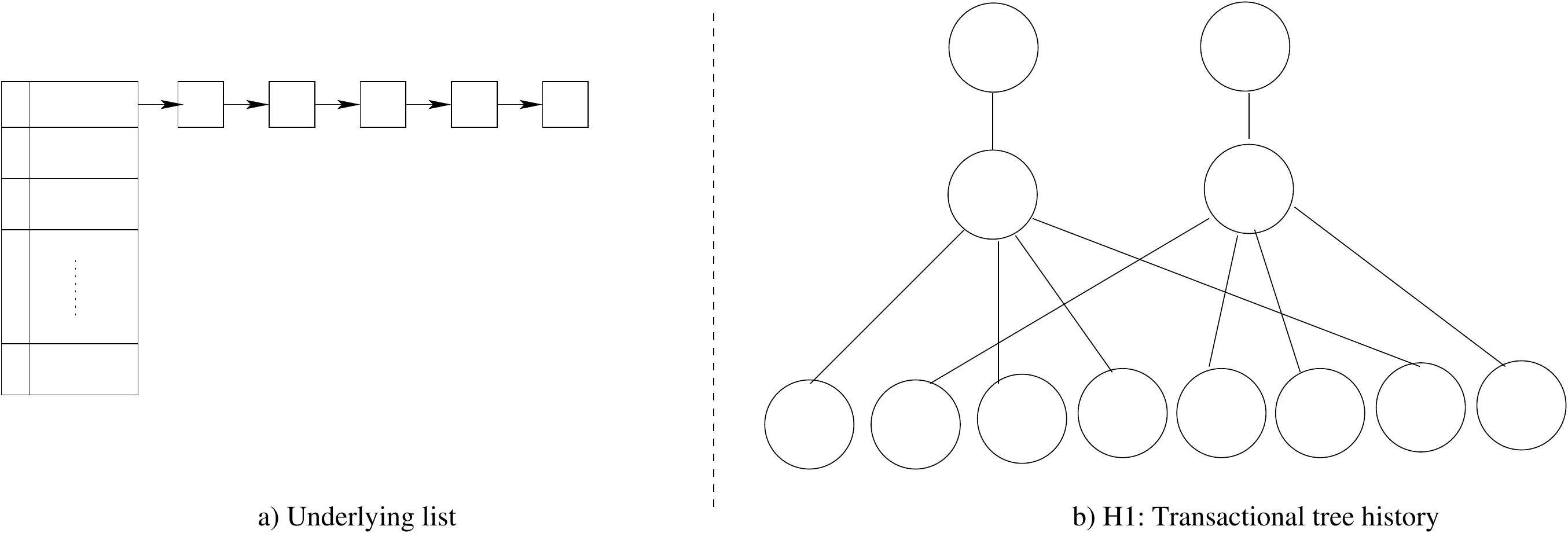_t}}}
	\caption{Not linearizable at layer-0 due to cyclic conflicts  $r_2(k_2) w_1(k_2) w_2(k_2)$. Thus, lower level can not be isolated which causes no particular order at layer-1.}
	\label{fig:mvostm13}
\end{figure}

The important idea in the above argument is that some conflicts at lower-level \op{s} do not matter at higher level operations. Thus, such lower level conflicting operations may be ignored as shown in \figref{tree-exec}. Harris et al. referred to it as \emph{benign-conflicts}\cite{Harris_abstractnested}. On the other hand, \figref{mvostm13} shows that some lower level conflicts do matter at higher level. With object level modeling of histories, we get a higher number of acceptable schedules than read/write model.  The history, $H1$ in \figref{tree-exec}(ii) clearly shows the advantage of considering STMs with higher level \tins, \tdel{} and \tlook{} \op{s}. 

\cmnt{Now consider an application where $ht1$ and $ht2$ are two \tab{} objects. A process $p_1$ deletes $k_5$ from $ht1$ and inserts it into $ht2$. Another process $p_2$ looks up $k_5$. Now, $p_2$ may see an intermediate state of the system where $p_1$ has deleted the $k_5$ from $ht1$ but has not inserted in the $ht2$ violating compositionality.} 

The atomic property of transactions helps to correctly compose together several different individual operations. The above examples demonstrate that the concurrency in such STM can be enhanced by considering the object level semantics. To achieve this, in this paper:

\begin{enumerate}[label=(\alph*)]
    \item We propose a generic framework for composing higher level objects based on the notion of conflicts for objects in databases \cite[Chap 6]{WeiVoss:2002:Morg}.
\item For correctness our framework we consider, \opty\cite{GuerKap:2008:PPoPP} a popular \cc{} for STMs which is different from serializability commonly used in databases. It can be proved that verifying the membership of \opty similar to view-serializability is NP-Complete \cite{Papad:1979:JACM}. Hence, using conflicts we develop a subclass of \opty - \emph{conflict \opty} or \emph{\coopty} for objects. We then develop polynomial time graph characterization for \coopty based on conflict-graph acyclicity. The proposed \cc, \coopty is similar to the notion of conflict-opacity developed for \emph{\rwtm} by Kuznetsov \& Peri \cite{KuzPer:NI:TCS:2017}. %\emph{conflict-serializability} or \emph{CSR} in databases and 
\item To show the efficacy of this framework, we develop \emph{\otm} based on the idea of \emph{basic timestamp order (\bto)} scheduler developed in databases \cite[Chap 4]{WeiVoss:2002:Morg}. For showing correctness of \otm, we show that all the \mth{s} are \lble while the transactions are \emph{co-opaque} by showing that the corresponding conflict graph is acyclic. Although we have considered \otm here, we believe that this notion of conflicts can be extended to other high-level objects such as Stacks, Queues, Tries etc. 
\end{enumerate}
A simple modification of \otm gives us a concurrent list based STM  or \emph{\ltm}. Finally, we compared the     performance of \otm against a \tab{} application built using \rwtm: ESTM \cite{Felber:2017:jpdc} and \bto \cite{WeiVoss:2002:Morg,Singh2017PerformanceCO}. The \ltm{} is compared with lock-free transactional list\cite{Zhang:2016:LTW:2935764.2935780}, NOrec based RSTM list\cite{conf/ppopp/DalessandroSS10} and boosting list\cite{herlihy2008transactional}. The results show that \otm{} and \ltm{} reduces the number of aborts to minimal and show significant performance gain in comparison to other techniques.

\ignore {
Our OSTM ensures that the sequence of operations compose powered by the legality and conflict notion and the correctness proofs of the histories generated. Following is the summary of our contribution: 
\vspace{-.3cm}
\begin{itemize}
	\setlength\itemsep{0em}
	\item{We build OSTM- an alternative theoretical model for efficiently transactifying the concurrent data structures using their semantic information such that they are composable \cite{Harretal:2005:PPoPP} too. The theoretical model consists of proposed legality and the notion of conflicts using which we lay down detailed \coopty{} proof.}
	\item{ We implement the model for closed addressed hash table denoted as \otm{} and modify it to implement a concurrent list named as \ltm{}.}
	\item {\otm{} is evaluated against the ESTM based \tab{} of Synchrobench and a \tab{} application built using  basic time-stamp order based read/write STM. The \ltm{} is compared with transactional list, Norec based RSTM list and boosted list.}
\end{itemize}

Thus, considering higher level semantics provides efficient means of achieving composability of operations. 

It can be seen that if history $H0$ been input to a \rwtm, one of the transactions among $T_1$ or $T_2$ would have been aborted due to the reasons mentioned above. But it can be argued that in most \rwtm{s}.
 the same variable will not read more than once from the shared memory within a transaction. If within a transaction a variable has to read again, it will be read from the cached local buffer. And subsequent operations will inherit the conflicts of the first operation on the same key.

Transactions in such system are flat interleaving of read and write operations along with begin(), trycommit(some systems also export tryabort()) on underlying shared data structure. \textbf{Object-based STMs} includes more semantically rich operations than mere reads/writes. Our notion of object-based STMs has following operations: begin(), insert(), delete(), lookup() and trycommit(). Each operation may in turn call lower level operations giving rise to a transactional tree. Using this transactional tree model we propose a protocol to build an object-based STMs for concurrent-set using lazy list. The concurrent set is a shared object which all transactions access concurrently.

Our work focuses on object-based transactions notion as proposed in databases Weikum and Vossen \cite[Chap 6]{WeiVoss:2002:Morg}.

It uses an optimistic approach which comprises of three states:  Local read/write, Validation and Commit/Abort state. Concurrently multiple transactions are executing and performing read/write operation in their local buffer. Upon completion, the operations of a transaction are validated for consistency. On successful validation, the transaction is committed and the writes on its local buffer are made permanent. Otherwise, the transaction is aborted. A typical STM exports the methods: begin which begins a transaction, \textit{read} which reads a \tobj, \textit{write} which writes to a \tobj, \textit{\tryc} which tries to commit.

In recent years, Software Transactional Memory systems (STMs) \cite{HerlMoss:1993:SigArch,ShavTou:1995:PODC} have garnered significant interest as an elegant alternative for addressing concurrency issues in memory. STM systems take an optimistic approach. Multiple transactions are allowed to execute concurrently. On completion, each transaction is validated and if any inconsistency is observed it is \emph{aborted}. Otherwise, it is allowed to \emph{commit}. 

In this paper, we specifically focus on STMs that operate on \tab{} implemented using \bst. We build the theory of correctness for such an STM system and plan to develop an STM system. Due to space restrictions, we have given a brief outline of the implementation. Our model for object based system differs from general existing object based STM like FSTM and DSTM \cite{Fraser:PracticalLF:2004, Marathe+:STM-Tradeoff:LCRS:2004} which still act on read-write model.
}%comnt ends
%\vspace{-0.3cm}
%on the basis that object level histories consisting of higher level operations have more permissive schedules, then the histories generated by the read/write model operations as we have already seen in the example above.
%\vspace{-.3cm}

\vspace{1mm}
\noindent
\textbf{Roadmap.} We explain the system model in \secref{model}. In \secref{opty}, we build the notion of \legality, conflicts to describe \opty, \coopty and the graph characterization. Based on the model we demonstrate the \otm{} design in \secref{htostm}. In \secref{pscode}, \secref{optm} and \secref{pocapp} we define \otm{} pseudocode, optimizations and proof sketch of \otm, respectively. In \secref{results} we show the evaluation results. Finally, we conclude in \secref{conc}.  
%followed by the implementation details in \secref{htostm-pcode}
\section{Related Work} 
Our work differs from databases model in with regard to \cc{} used for safety. While databases consider \csr. We consider \lbty to prove the correctness of the \mth{s} of the transactions and \opty to show the correctness of the transactions. Earliest work of using the semantics of concurrent data structures for object level granularity include that of open nested transactions\cite{ni2007open} and transaction boosting of Herlihy et al.\cite{herlihy2008transactional} which is based on serializability(strict or commit order serializability) of generated schedules as correctness criteria. Herlihy's model is pessimistic and uses undo logs for rollback. Our model is more optimistic in that sense and the underlying data structure is updated only after there is a guarantee that there is no inconsistency due to concurrency. Thus, we do not need to do rollbacks which keeps the log overhead minimal. This also solves the problem of irrevocable operations being executed during a transaction which might abort later otherwise.

Hassan et al.\cite{Hassan+:OptBoost:PPoPP:2014} have proposed Optimistic Transactional Boosting (OTB) that extends original transactional boosting methodology by optimizing and making it more adaptable to STMs. They further have implemented OTB on set data structure using lazy-linked list\cite{Hassan+:OptBoost:PPoPP:2014}. Although there seem similarities between their work and our implementation, we differ w.r.t the \cc{} which is \coopty a subclass of \opty\cite{KuzPer:NI:TCS:2017} in our case. Furthermore, we also differ in the development of the conflict-based theoretical framework which can be adapted to build other object based STMs.

Transactional boosting idea of Herlihy et. al\cite{herlihy2008transactional} tries to utilize the object level semantics of linearizable datastructures. They assume $cds$ to be blackbox and try to transactify the base object(underlying datastructure); We in turn, consider the lower level operations (level-0) which aids to introduce $cds$ specific optimizations. Herlihy claims to differ from open nested transactions by providing a precise methodology and characterization of the mechanism. However, they maintain a log of each operation's inverse, which needs to execute once a transaction aborts. This incurs additional computational and memory cost. Moreover, many data structures do not provide reverse operations (for example, priority queue). The proposed \otm{} do not need reverse operation as we follow deferred update augmented with optimism of time-order based validation.
Moreover, transactional boosting is based on serlizabilty(strict or commit order serializabilty) of generated schedules as correctness critera. Herlihy's model is pessimistic and uses undo logs for rollback. Our model is more optimistic in that sense and underlying data structure is updated only after there is a guarantee that there is no inconsistency due to concurrency. Thus, we do not need to do rollbacks which keeps the log overhead minimal. This also solves the problem of irrevocable operations being executed during a transaction which might abort later otherwise.

Zhang et al.\cite{Zhang:2016:LTW:2935764.2935780} recently propose a method to transform lockfree $cds$ to transactional lockfree linked $cds$ and base the correctness on \emph{strict serializability}. The transactions are synchronized using CAS and they compare their work against STM based approaches. Our evaluation shows that \ltm{} implementation comprehensibly beats Zhang's transactional lock free list data structure. 

Fraser et. al.\cite{Fraser:2007:CPW:1233307.1233309} proposed OSTM based on shadow copy mechanism, which involves a level of indirection to access the shared objects through \emph{OSTMOpenForReading} and \emph{OSTMOpenForWriting} as exported \mths. Contrary to it, our OSTM model exports the higher object level methods like \npluk{}, \npins{} and \npdel{} while hiding the internal read and write lower level primitives. So, it seems that using the Fraser OSTM one can write the higher level methods transactionally using its read/write \mths{}. For example, one may implement a \emph{lookup} on the underlying list object using its transactional interface. But we differ here because we allow such multiple higher level operations to be grouped together atomically without requiring user to implement them explicitly. 
The exported \mth{s} in Fraser et.al's OSTM may allow \emph{OSTMOpenForReading} to see the inconsistent state of the shared objects but our OSTM model precludes this possibility by validating the access during execution of \rvmt{} (i.e. the methods which do not modify the underlying objects and only return some value by performing a search on them).Fraser's OSTM uses the transaction descriptors which stores the previous and new copies of the shared objects increasing the memory requirement to maintain the meta data. We, on the other hand, maintain a single copy of the underlying shared object and the meta information is augmented within each shared object. For example, in case of a list, each node is a shared object. Here we augment each shared node with the meta data (in our case the time-stamp of access by the other transactions) along with a unique \emph{key} and the \emph{value} pair (value may store any complex data type of any type). 
 Thus, we can say our motivation and implementation is different from Fraser OSTM\cite{Fraser:2007:CPW:1233307.1233309} and only the name happens to coincide.

\begin{figure}[H]
	\centering
	\includegraphics[scale=0.8]{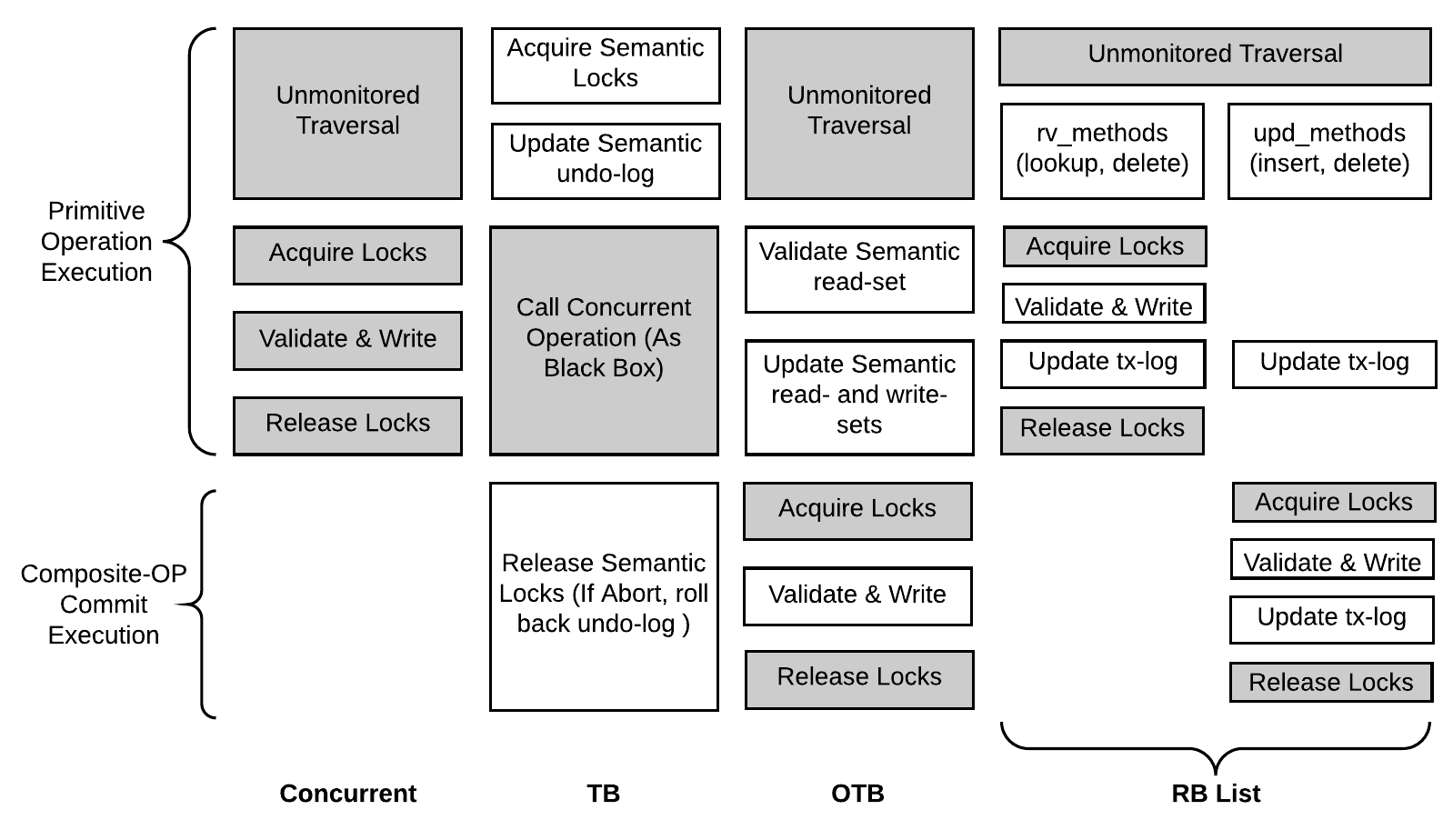}
	\caption{\otm{} design comparison against state of art techniques.}
	\label{fig:ostm-design-comp}
\end{figure}
 \figref{ostm-design-comp} compares the execution flow of normal concurrent data structure, boosted data structure, optimistically boosted data structure and the \otm{}.%\todo{relevant Hassan paper}  
\vspace{-4mm}
\section{Building System Model}
\label{sec:model}
\vspace{-4mm}
In this paper, we assume that our system consists of finite set of $P$ processors, accessed by a finite number of $n$ threads that run in a completely asynchronous manner and communicate using shared objects. The threads communicate with each other by invoking higher-level \mth{s} on the shared objects and getting corresponding responses. Consequently, we make no assumption about the relative speeds of the threads. We also assume that none of these processors and threads fail or crash abruptly.

\ignore{
	
\vspace{1mm}
\noindent
\textbf{\textsl{Events:}} 
Similar to Lev{-}Ari et. al.\cite{DBLP:conf/wdag/Lev-AriCK14, DBLP:conf/wdag/Lev-AriCK15}

\textbf{\textsl{Global States:}} We define the \emph{global state} or \emph{state} of the system as the collection of local and shared variables across all the threads in the system. The system starts with an initial global state. We assume that all the events executed by different threads are totally ordered. Each update event transitions the global state of the system leading to a new global state.\
}

\vspace{1mm}
\noindent
\textbf{\textsl{Events:}} We assume that the threads execute atomic \emph{events}. We assume that these events by different threads are (1) read/write on shared/local memory objects, (2) \mth{} invocations (or \emph{\inv}) event \& responses (or \emph{\rsp}) event on higher level shared-memory objects.

\vspace{1mm}
\noindent
\textbf{\textsl{Global States:}} We define the \emph{global state} or \emph{state} of the system as the collection of local and shared variables across all the threads in the system. The system starts with an initial global state. We assume that all the events executed by different threads are totally ordered. Each update event transitions the global state of the system leading to a new global state.

\vspace{1mm}
\noindent
\textbf{\textsl{Methods:}} The $n$ processes access a collection of \emph{\tobj{s}} via atomic \emph{transactions} supported by a \lotm. Each transaction has a unique identifier typically denoted as $T_i$. Within a transaction, a process can invoke transactional \mth{s} on a \emph{\tab} \tobj. A \tab{}($ht$) consists of multiple key-value pairs of the form $\langle k, v \rangle$. The keys and values are respectively from sets $\mathcal{K}$ and $\mathcal{V}$. The \mth{s} that a transaction $T_i$ can invoke are: (1) $\tins_i(ht, k, v)$: this \mth{} inserts the pair $\langle k,v \rangle$ into object $ht$ and return $ok$. If $ht$ already has a pair $\langle k,v' \rangle$ then $v'$ gets replaced with $v$. (2) $\tdel_i(ht, k, v)$: if $ht$ has a $\langle k,v \rangle$ pair then this \op{} deletes the pair and returns $v$. If no such $\langle k,v \rangle$ pair is present in $ht$, then the \op{} returns $nil$. (3) $\tlook_i(ht, k, v)$: if $ht$ has a $\langle k,v \rangle$ pair then this \op{} returns $v$. If no such $\langle k,v \rangle$ pair is present in $ht$, then the \mth{} returns $nil$. It can be seen that \tlook{} is similar to \tdel.% But \tlook{} does not modify the underlying \tab.

For simplicity, we assume that all the values inserted by transactions through \tins{} \mth{} are unique. We denote \tins{} and \tdel{} as \emph{update} \mth{s} since both these change the underlying data-structure.We denote \tdel{} and \tlook{} as \emph{return-value methods or \rvmt{s}} as these return values which are different from $ok$.

%It can be seen that the \mth{s} \tdel, \tlook{} can return $nil$. In that case, we call these \mth{s} as \emph{\failed}. 

In addition to these return values, each of these methods can always return an abort value $\mathcal{A}$ which implies that the transaction $T_i$ is aborted. A \mth{} $m_i$ returns $\mathcal{A}$ if $m_i$ along with all the \mth{s} of $T_i$ executed so far are not consistent (w.r.t  \cc{} which is formally defined later). 

The \otm{} supports two other methods: (4) $\tryc_i$: this \mth{} tries to validate all the \op{s} of the $T_i$. \otm{} returns $ok$ if $T_i$ is successfully committed. Otherwise, \otm{} returns $\mathcal{A}$ implying abort. This method is invoked by a process after completing all its transactional \op{s}. (5) $\trya_i$: this method returns $\mathcal{A}$ and \otm{} aborts $T_i$. 

When any \mth{} of $T_i$ returns $\mathcal{A}$, we denote that method as well as $T_i$ as aborted. We assume that a process does not invoke any other \op{s} of a transaction $T_i$, once it has been aborted. We denote a \mth{} which does not return $\mathcal{A}$ as \emph{\unaborted}.
%Note that aborted methods are different from \failed{} methods (\tdel{} or \tlook). A transaction can invoke further \mth{s} after a failed \mth{} $m_i$ of a transaction $T_i$. 

Having described about \mth{s} of a transaction, we describe about the events invoked by these \mth{s}. We assume that each \mth{} consists of a \inv{} and \rsp{} event. Specifically, the \inv{} \& \rsp{} events of the methods of a transaction $T_i$ are: (1) $\tins_i(ht, k, v)$: $\inv(\tins_i(ht, k, v))$ and $\rsp(\tins_i(ht, k, v, ok/\mathcal{A}))$. (2) $\tdel_i(ht, k, v)$: $\inv(\tdel_i(ht, k))$ and $\rsp(\\\tdel_i(h, k, v/nil/\mathcal{A}))$. (3) $\tlook_i(h, k, v)$: $\inv(\tlook_i(h, k))$ and $\rsp(\tlook_i\\(h, k, v/nil/\mathcal{A}))$. (4) $\tryc_i$: $\inv(\tryc_i())$ and $\rsp(\tryc_i(ok/\mathcal{A}))$. (5) $\trya_i$: $\inv(\trya_i())$ and $\rsp(\trya_i(\mathcal{A}))$. 

For clarity, we have included all the parameters of \inv{} event in \rsp{} event as well. In addition to these, each \mth{} invokes read/write primitives (operations) of $T_i$ are represented as: $r_i(x, v)$ implying that $T_i$ reads value $v$ for $x$; $w_i(x, v)$ implying that $T_i$ writes value $v$ onto $x$. Depending on the context, we ignore some of the parameters of the transactional methods and read/write primitives. We assume that the first event of a \mth{} is \inv{} and the last event is \rsp. 
%read-modify-write (\rmw) primitives (which could as well be read, write primitives) on atomic objects. 

Formally, we denote a \mth{} $m$ by the tuple $\langle \evts{m}, <_m\rangle$. Here, $\evts{m}$ are all the events invoked by $m$ and the $<_m$ a total order among these events. For instance, the \mth{} $l_{11}(k_5)$ of \figref{trans-tree} is represented as: $\inv(l_{11}(h, k_5))~ r_{111}(k_2, o_2) r_{112}(k_5, o_5)~ \rsp(l_{11}(h, k_5, o_5))$. In our representation, we abbreviate \tins{} as $i$, \tdel{} as $d$ and \tlook{} as $l$. From our assumption, we get that for any read/write primitive $rw$ of $m$, $\inv(m) <_m rw <_m \rsp(m)$.
%$\inv(\tlook_1(h, k_5))~ r_{11}(k_2, o_2) r_{11}(k_5, o_5)~ \rsp(\tlook(h, k_5, o_5))$. 

\vspace{1mm}
\noindent
\textbf{\textsl{Transactions:}} Following the notations used in database multi-level transactions \cite{WeiVoss:2002:Morg}, we model a transaction as a two-level tree. \figref{trans-tree} shows a tree execution of a transaction $T_1$. The leaves of the tree denoted as \emph{layer-0} consist of read, write primitives on atomic objects. Hence, they are atomic. For simplicity, we have ignored the \inv{} \& \rsp{} events in level-0 of the tree. \emph{Level-1} of the tree consists of \mth{s} invoked by transaction. In the transaction shown in \figref{trans-tree}, level-1 consists of \tlook{} and \tdel{} \mth{s} operating on the \lsl{} as also shown in \figref{tree-exec}(i). 

\cmnt{
	\begin{Figure}
		\centering
		\scalebox{.5}{\input{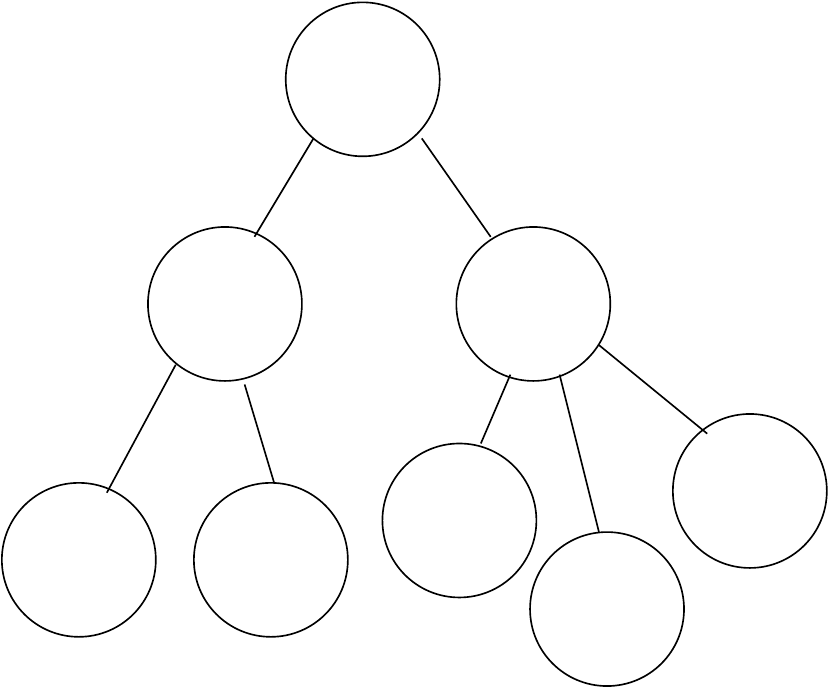_t}\label{fig:trans-tree}}
		\vspace{-16cm}
		\captionof{figure}{T1 : A sample transaction on \bst{}(of \figref{listex}) representing a \tab{} object.}
	\end{Figure}

	\begin{wrapfigure}{R}{5cm}
		\centerline{\scalebox{0.4}{\input{figs/trans-tree.pdf_t}}}
		\caption{T1 : A sample transaction on \bst{}(of \figref{listex}) representing a \tab{} object.}
		\label{fig:trans-tree}
		
		\vspace{-10pt}
	\end{wrapfigure}
}

\begin{figure}[H]
	\centering
	\captionsetup{justification=centering}
	{\scalebox{0.5}{\input{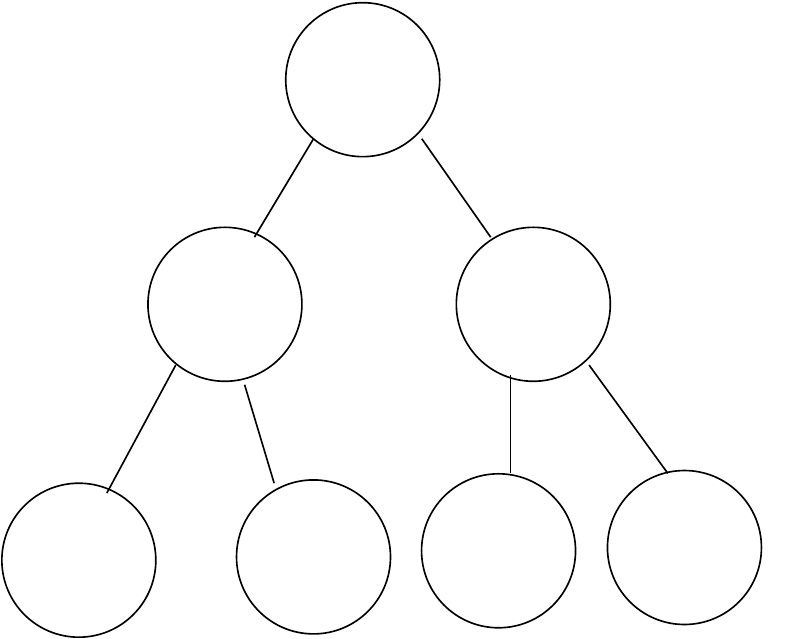_t}}}
	\caption{T1 : A sample transaction on \lsl{} (of \figref{tree-exec}(i)) representing a \tab{} object.}
	\label{fig:trans-tree}
\end{figure}
%\vspace{-2.cm}

%\begin{figure}[tbph]
%	\centerline{\scalebox{0.4}{\input{figs/trans-tree.pdf_t}}}
%	\caption{A sample \llist{} representing a \tab{} object}
%	\label{fig:trans-tree}
%\end{figure}

Thus a transaction is a tree whose nodes are \mth{s} and leaves are events. Having informally explained a transaction, we formally define a transaction $T$ as the tuple $\langle \evts{T}, <_T\rangle$. Here $\evts{T}$ are all the read/write events (primitives) at level-0 of the transaction. $<_T$ is a total order among all the events of the transaction. For instance, the transaction $T_1$ of \figref{trans-tree} is: $\inv(l_{11}(ht, k_5))~ r_{111}(k_2, o_2) r_{112}(k_5, o_5)~ \rsp(l_{11}(ht, k_5, o_5))~  \inv(d_{12}(ht, k_2))~
r_{121}(k_2, o_2)~ w_{122}(k_2, o_2)~ \\\rsp(d_{12}(ht, k_2, o_2))$. Given all level-0 events, it can be seen that the level-1 \mth{s} and the transaction tree can be constructed. 
% $\inv(\tlook_1(ht, k_5))~ r_{11}(k_2, o_2) r_{11}(k_5, o_5)~ \rsp(\tlook(ht, k_5, o_5))~ \inv(\tdel_1(ht, k_7))~ r_{11}(k_2, o_2)~ r_{11}(k_5, o_5)~ r_{11}(k_7, o_7)~ w_{11}(k_5, v_i)~ w_{11}(k_7, v_j) \rsp(\tdel(ht, k_7, o_7))$. 

We denote the first and last events of a transaction $T_i$ as $\fevt{T_i}$ and $\levt{T_i}$. Given any other read/write event $rw$ in $T_i$, we assume that $\fevt{T_i} <_{T_i} rw <_{T_i} \levt{T_i}$. 

All the \mth{s} of $T_i$ are denoted as $\met{T_i}$. We assume that for any method $m$ in $\met{T_i}$, $\evts{m}$ is a subset of $\evts{T_i}$ and $<_m$ is a subset of $<_{T_i}$. Formally, $\langle \forall m \in \met{T_i}: \evts{m} \subseteq \evts{T_i} ~ \land <_m \subseteq <_{T_i} \rangle$.

We assume that if a transaction has invoked a \mth{}, then it does not invoke a new \mth{} until it gets the response of the previous one. Thus all the \mth{s} of a transaction can be ordered by $<_{T_i}$. Formally,  $(\forall m_{p}, m_{q} \in \met{T_i}: (m_{p} <_{T_i} m_{q}) \lor (m_{q} <_{T_i} m_{p}))\rangle$. %Similar to the \firste \& \laste, we define $\fmth{T_i}$ \& $\lmth{T_i}$ as the first and last \mth{s} of $T_i$ respectively. 

%For two \mth{s} $m_{p}, m_{q})$ in $\met{T_i}$, we denote $m_{p} = \pmth{m_{q}, T_i})$ if $m_p$ is the previous closest \mth of $m_q$ in $T_i$.

\cmnt{
	\paragraph{Real-time Order \& Serial Histories:} Given a history $H$, $<_H$ orders all the events in $H$. Now, we define a precedence order on methods and transactions based on real-time ordering of events in the history $H$. Given two complete \mth{s} $m_{ij}, m_{pq}$ in $\met{H}$, we denote $m_{ij} \prec_H^{\mr} m_{pq}$ if $\rsp(m_{ij}) <_H \inv(m_{pq})$. Here \mr{} stands for method real-time order. It must be noted that all the \mth{s} of the same transaction are ordered. 
	
	Similarly, for two transactions $T_{i}, T_{p}$ in $\term{H}$, we denote $(T_{i} \prec_H^{\tr} T_{p})$ if $(\levt{T_{i}} <_H \fevt{T_{p}})$. Here \tr{} stands for transactional real-time order. 
	
	Thus, $\prec$ partially orders all the \mth{s} and transactions in $H$. It can be seen that if $H$ is sequential, then $\prec_H^{\mr}$ totally orders all the \mth{s} in $H$. Formally, $\langle (H \text{ is seqential}) \implies (\forall m_{ij}, m_{pq} \in \met{H}: (m_{ij} \prec_H^{\mr} m_{pq}) \lor (m_{pq} \prec_H^{\mr} m_{ij}))\rangle$. 
}

\vspace{1mm}
\noindent
\textbf{\textsl{Histories:}} A \emph{history} is a sequence of events belonging to different transactions. The collection of events is denoted as $\evts{H}$. Similar to a transaction, we denote a history $H$ as tuple $\langle \evts{H},<_H \rangle$ where all the events are totally ordered by $<_H$. The set of \mth{s} that are in $H$ is denoted by $\met{H}$. A \mth{} $m$ is \emph{incomplete} if $\inv(m)$ is in $\evts{H}$ but not its corresponding response event. Otherwise $m$ is \emph{complete} in $H$. 

Coming to transactions in $H$, the set of transactions in $H$ are denoted as $\txns{H}$. The set of committed (resp., aborted) transactions in $H$ is denoted by $\comm{H}$ (resp., $\aborted{H}$). The set of \emph{live} transactions in $H$ are those which are neither committed nor aborted.  On the other hand, the set of \emph{terminated} transactions are those which have either committed or aborted. 
%$\live{H} =\txns{H}-\comm{H}-\aborted{H}$. % and is denoted by $\term{H} = \comm{H} \cup \aborted{H}$.

%The relation between the events of transactions \& histories is analogous to the relation between \mth{s} \& transactions. We assume that for any transaction $T$ in $\txns{H}$, $\evts{T}$ is a subset of $\evts{H}$ and $<_T$ is a subset of $<_{H}$. Formally, $\langle \forall T \in \txns{H}: (\evts{T} \subseteq \evts{H}) ~ \land (<_T \subseteq <_{H}) \rangle$. 

We denote two histories $H_1, H_2$ as \emph{equivalent} if their events are the same, i.e., $\evts{H_1} = \evts{H_2}$. A history $H$ is qualified to be \emph{well-formed} if: (1) all the \mth{s} of a transaction $T_i$ in $H$ are totally ordered, i.e. a transaction invokes a \mth{} only after it receives a response of the previous \mth{} invoked by it (2) $T_i$ does not invoke any other \mth{} after it received an $\mathcal{A}$ response or after $\tryc(ok)$ \mth. We only consider \emph{well-formed} histories for \otm.

A \mth{} $m_{ij}$ ($j^{th}$ \mth of a transaction $T_i$) in a history $H$ is said to be \emph{isolated} or \emph{atomic} if for any other event $e_{pqr}$ belonging to some other \mth{} $m_{pq}$ (of transaction $T_p$) either $e_{pqr}$ occurs before $\inv(m_{ij})$ or after $\rsp(m_{ij})$. Here, $e_{pqr}$ stands for $r^{th}$ event of $m_{pq}$.

\begin{figure}[H]
	\centering
	\captionsetup{justification=centering}
	%	\centering
	%\subfloat[H1 : Transactional tree histroy.]
	{\scalebox{0.5}{\input{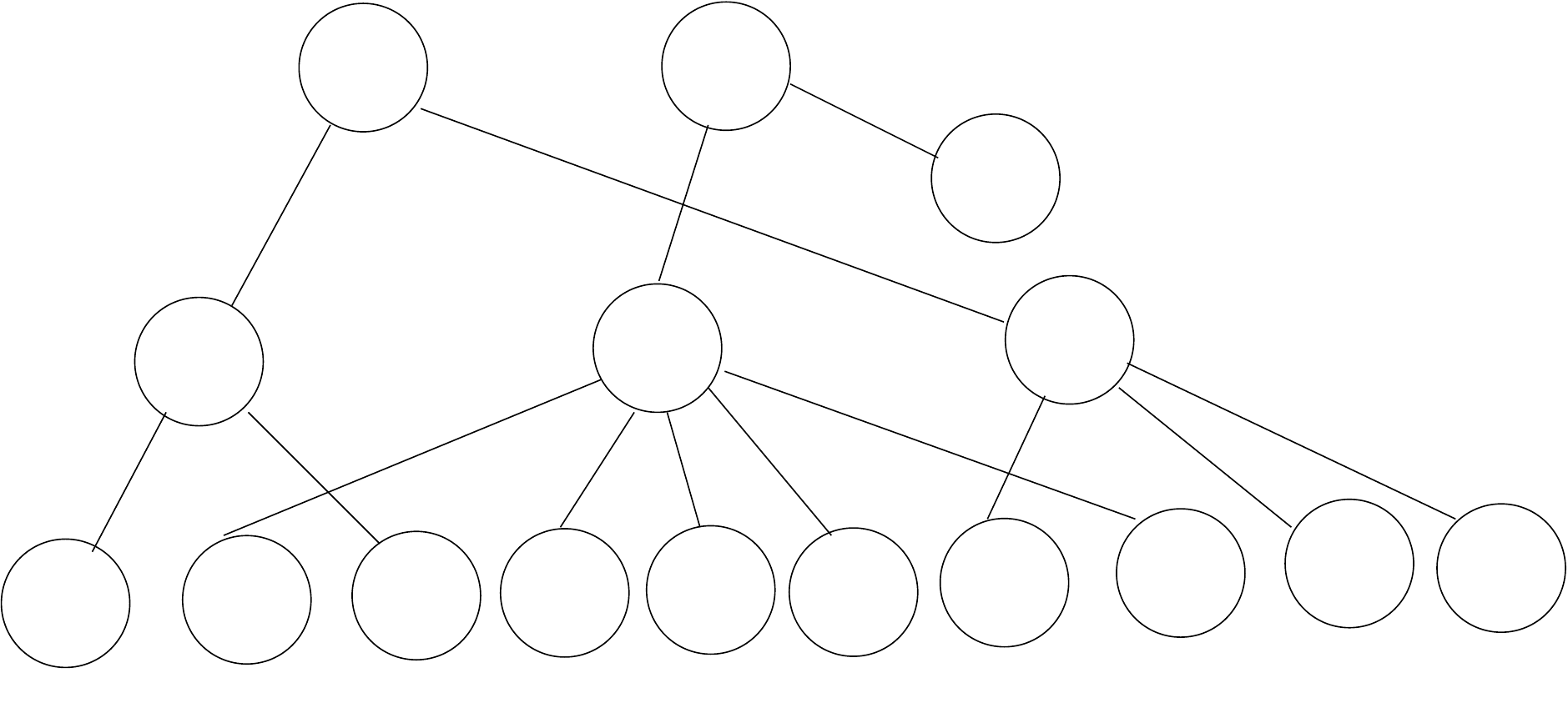_t}}}
	\caption{H2 : A non-sequential History.}
	\label{fig:unseq}
\end{figure}
\vspace{1mm}
\noindent
\textbf{\textsl{Sequential Histories:}} A \mth{} $m_{ij}$ of a transaction $T_i$ in a history $H$ is said to be \emph{isolated} if for any other event $e_{pqr}$ belonging to some other \mth{} $m_{pq}$ (of transaction $T_p$) either $e_{pqr}$ occurs before $\inv(m_{ij})$ or after $\rsp(m_{ij})$. Formally, $\langle m_{ij} \in \met{H}: m_{ij} \text{ is isolated} \equiv (\forall m_{pq} \in \met{H}, \forall e_{pqr} \in m_{pq}: e_{pqr} <_H \inv(m_{ij}) \lor \rsp(m_{ij}) <_H e_{pqr})\rangle$. For instance in $H1$ shown in \figref{tree-exec}(ii), $d_{2}(k_2)$ is isolated. In fact all the \mth{s} of $H1$ are isolated.

Consider history $H2$ shown in \figref{unseq}. It can be seen that the all the three \mth{s} in $H2$, ($l_{11}, d_{21}, l_{12}$) are not isolated.
%\begin{figure}[tbph]
%	\centerline{\scalebox{0.4}{\input{figs/nonseq.pdf_t}}}
%	\caption{A non-sequential History}
%	\label{fig:unseq}
%\end{figure}

A history $H$ is said to be \emph{sequential} (term used in \cite{KuzPer:NI:TCS:2017, KR:2011:OPODIS}) or \emph{linearized} \cite{HerlWing:1990:TPLS} if all the methods in it are complete and isolated. Thus, it can be seen that $H1$ is sequential whereas $H2$ is not. From now onwards, most of our discussion would relate to sequential histories. 

Since in sequential histories all the \mth{s} are isolated, we treat each \mth{} as whole without referring to its inv and rsp events. For a sequential history $H$, we construct the \emph{completion} of $H$, denoted $\overline{H}$, by inserting $\trya_k(\mathcal{A})$ immediately after the last \mth{} of every transaction $T_k \in \incomp{H}$. Since all the \mth{s} in a sequential history are complete, this definition only has to take care of completing transactions.

Consider a sequential history $H$. Let $m_{ij}(ht, k, v/nil)$ be the first \mth{} of $T_i$ in $H$ operating on the key $k$. Since all the \mth{s} of a transaction are sequential and ordered, we can clearly identify the first \mth{} of $T_i$ on key $k$. Then, we denote $m_{ij}(ht, k, v)$ as $\fkmth{\langle ht, k \rangle}{T_i}{H}$. For a \mth{} $m_{ix}(ht, k, v)$ which is not the first \mth{} on $\langle ht, k \rangle$ of $T_i$ in $H$, we denote its previous \mth{} on $k$ of $T_i$ as $m_{ij}(ht, k, v) = \pkmth{m_{ix}}{T_i}{H}$.\\

\begin{figure}
	\centering
	\captionsetup{justification=centering}
	%	\centering
	%\subfloat[H1 : Transactional tree histroy.]
	{\scalebox{0.5}{\input{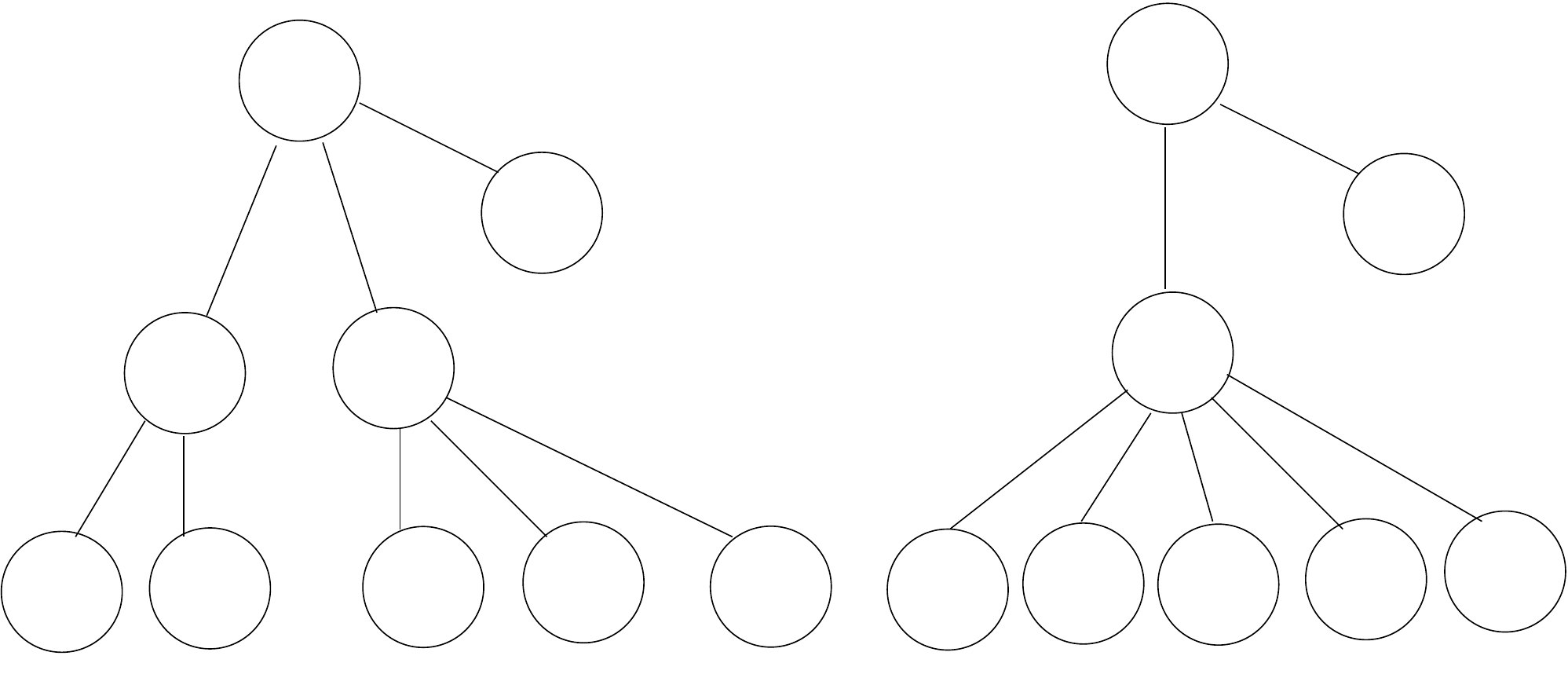_t}}}
	\caption{A serial History}
	\label{fig:serial}
\end{figure}

\cmnt{
	\begin{figure}[H]
		\centering
		\subfloat[H2 : A non-sequential History.]{\scalebox{0.4}{\input{figs/nonseq.pdf_t}}\label{fig:unseq}}
		%\caption{This is the Share\LaTeX{} logo}
	\end{figure}
	
	\begin{figure}[H]
		\centering
		\subfloat[A serial History.]{\scalebox{0.4}{\input{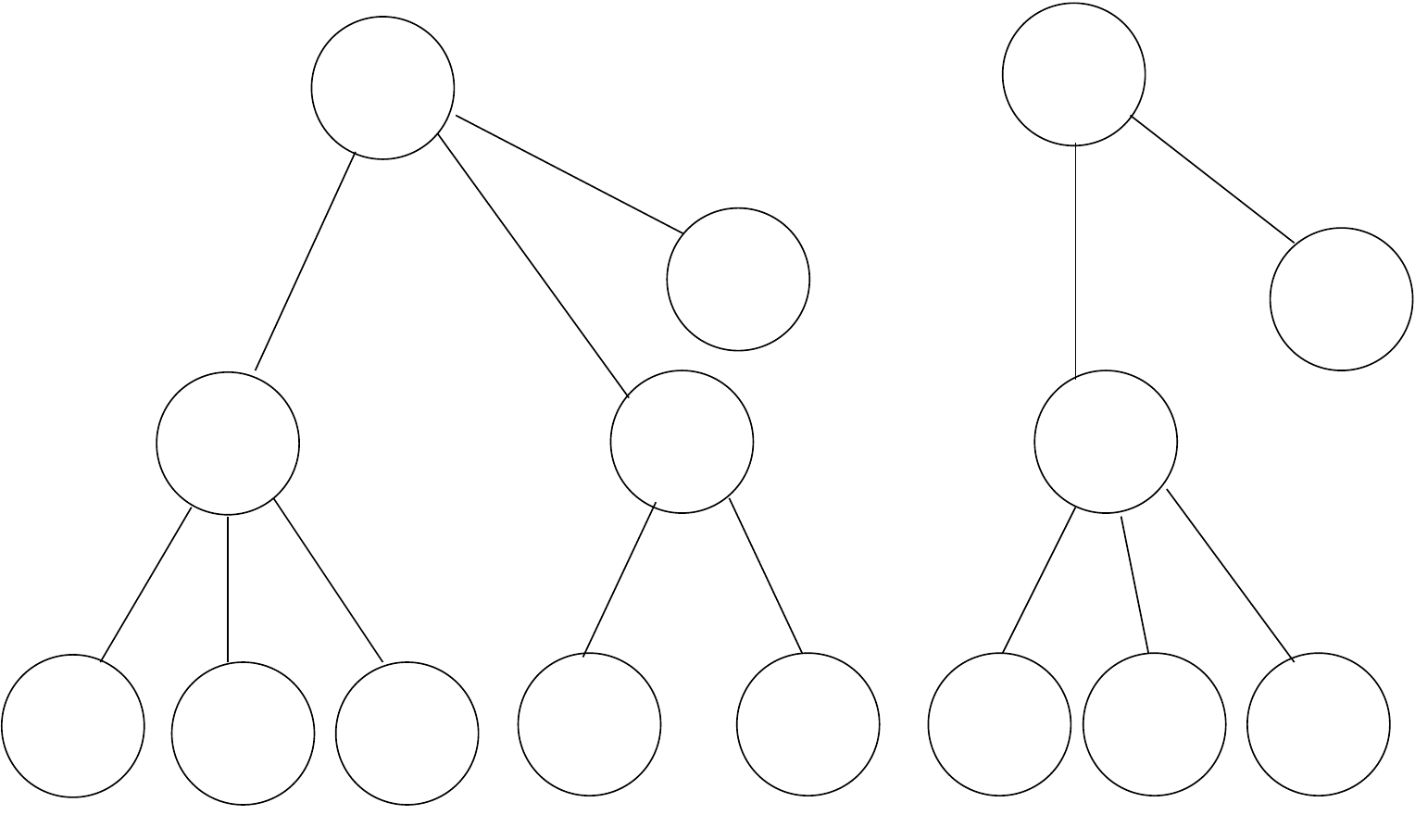_t}}\label{fig:serial}}
		\caption{serial and non sequential History.}
	\end{figure}
}

\vspace{1mm}
\noindent
\textbf{\textsl{Real-time Order \& Serial Histories:}} Given a history $H$, $<_H$ orders all the events in $H$. For two complete \mth{s} $m_{ij}, m_{pq}$ in $\met{H}$, we denote $m_{ij} \prec_H^{\mr} m_{pq}$ if $\rsp(m_{ij}) <_H \inv(m_{pq})$. Here \mr{} stands for method real-time order. It must be noted that all the \mth{s} of the same transaction are ordered. Similarly, for two transactions $T_{i}, T_{p}$ in $\term{H}$, we denote $(T_{i} \prec_H^{\tr} T_{p})$ if $(\levt{T_{i}} <_H \fevt{T_{p}})$. Here \tr{} stands for transactional real-time order. 

\cmnt{
Thus, $\prec$ partially orders all the \mth{s} and transactions in $H$. It can be seen that if $H$ is sequential, then $\prec_H^{\mr}$ totally orders all the \mth{s} in $H$. Formally, $\langle (H \text{ is seqential}) \implies (\forall m_{ij}, m_{pq} \in \met{H}: (m_{ij} \prec_H^{\mr} m_{pq}) \lor (m_{pq} \prec_H^{\mr} m_{ij}))\rangle$. 
}

We define a history $H$ as \emph{serial} \cite{Papad:1979:JACM} or \emph{t-sequential} \cite{KR:2011:OPODIS} if all the transactions in $H$ have terminated and can be totally ordered w.r.t $\prec_{\tr}$, i.e. all the transactions execute one after the other without any interleaving. Intuitively, a history $H$ is serial if all its transactions can be isolated. Formally, $\langle (H \text{ is serial}) \implies (\forall T_{i} \in \txns{H}: (T_i \in \term{H}) \land (\forall T_{i}, T_{p} \in \txns{H}: (T_{i} \prec_H^{\tr} T_{p}) \lor (T_{p} \prec_H^{\tr} T_{i}))\rangle$. Since all the methods within a transaction are ordered, a serial history is also sequential. \figref{serial} shows a serial history. Here all the \emph{layer-1} \mths are isolated thus the involved transaction can be ordered as $T_1$ followed by $T_2$. Thus we attain a serial order $T_1$, $T_2$.

\cmnt{
%\begin{figure}[tbph]
%\centerline{\scalebox{0.4}{\input{figs/serial-hist.pdf_t}}}
%\caption{A serial History}
%\label{fig:serial}
%\end{figure}

\begin{figure}[!tbp]
	\centering
	\subfloat[H2 : A non-sequential History.]{\scalebox{0.4}{\input{figs/nonseq.pdf_t}}\label{fig:unseq}}
	\hfill
	\subfloat[A serial History.]{\scalebox{0.4}{\input{figs/serial-hist.pdf_t}}\label{fig:serial}}
	\caption{serial and non sequential History.}
\end{figure}
}

%\vspace{-5mm}
\section{Correctness of \otm: Opacity \& Conflict Opacity}
\label{sec:opty}
%\vspace{-3mm}
In this section, we define the correctness of \otm by extending \opty\cite{GuerKap:2008:PPoPP}. We then define a tractable subclass of \opty, \coopty which is defined using conflict like \csr\cite{WeiVoss:2002:Morg} in databases. We start with \legality{} and \opty. 

\vspace{-3mm}
\subsection{Legal Histories \& Opacity}
\label{subsec:legal}
\vspace{-2mm}
In this subsection, we start with defining legal histories. To simplify our analysis, we assume that there exists an initial transaction $T_0$ that invokes $\tdel$ \mth on all the keys of all the hash-tables used by any transaction. 

We define \emph{\legality{}} of \rvmt{s}  (\tdel{} \& \tlook{}) on sequential histories which we later use to define correctness criterion. Consider a sequential history $H$ having a \rvmt{} $\rvm_{ij}(ht, k, v)$ (with $v \neq nil$) belonging to transaction $T_i$. We define this \rvm \mth{} to be \emph{\legal} if: 
\vspace{-1mm}
\begin{enumerate}
	%\setlength\itemsep{-1em}
	%\vspace{-.2cm}
	\item[LR1] \label{step:leg-same} If the $\rvm_{ij}$ is not first \mth of $T_i$ to operate on $\langle ht, k \rangle$ and $m_{ix}$ is the previous \mth of $T_i$ to operate on $\langle ht, k \rangle$. Formally, $\rvm_{ij} \neq \fkmth{\langle ht, k \rangle}{T_i}{H}$ $\land (m_{ix}(ht, k, v') = \pkmth{\langle ht, k \rangle}{T_i}{H})$ (where $v'$ could be nil). Then,
	\begin{enumerate}
		\setlength\itemsep{0em}
		\item if $m_{ix}(ht, k, v')$ is a \tins{} \mth i.e. $\tins_{ix}(ht, k, v')$ then $v = v'$. 
		\item if $m_{ix}(ht, k, v')$ is a \tlook{} \mth i.e. $\tlook_{ix}(ht, k, v')$ then $v = v'$. 
		\item if $m_{ix}(ht, k, v')$ is a \tdel{} \mth i.e. $\tdel_{ix}(ht, k, v'/nil)$ then $v = nil$. 
	\end{enumerate}
	
	In this case, we denote $m_{ix}$ as the last update \mth{} of $\rvm_{ij}$, i.e.,  $m_{ix}(ht, k, v') = \\\lupdt{\rvm_{ij}(ht, k, v)}{H}$. 
	
	\item[LR2] \label{step:leg-ins} If $\rvm_{ij}$ is the first \mth{} of $T_i$ to operate on $\langle ht, k \rangle$ and $v$ is not nil. Formally, $\rvm_{ij}(ht, k, v) = \fkmth{\langle ht, k \rangle}{T_i}{H} \land (v \neq nil)$. Then,
	\begin{enumerate}
		\setlength\itemsep{0em}
		\item There is a \tins{} \mth{} $\tins_{pq}(ht, k, v)$ in $\met{H}$ such that $T_p$ committed before $\rvm_{ij}$. Formally, $\langle \exists \tins_{pq}(ht, k, v) \in \met{H} : \tryc_p \prec_{H}^{\mr} \rvm_{ij} \rangle$. 
		\item There is no other update \mth{} $up_{xy}$ of a transaction $T_x$ operating on $\langle ht, k \rangle$ in $\met{H}$ such that $T_x$ committed after $T_p$ but before $\rvm_{ij}$. Formally, $\langle \nexists up_{xy}(ht, k, v'') \in \met{H} : \tryc_p \prec_{H}^{\mr} \tryc_x \prec_{H}^{\mr} \rvm_{ij} \rangle$. 		
	\end{enumerate}
	
	In this case, we denote $\tryc_{p}$ as the last update \mth{} of $\rvm_{ij}$, i.e.,  $\tryc_{p}(ht, k, v)$= $\lupdt{\rvm_{ij}(ht, k, v)}{H}$.
	
	\item[LR3] \label{step:leg-del} If $\rvm_{ij}$ is the first \mth of $T_i$ to operate on $\langle ht, k \rangle$ and $v$ is nil. Formally, $\rvm_{ij}(ht, k, v) = \fkmth{\langle ht, k \rangle}{T_i}{H} \land (v = nil)$. Then,
	\begin{enumerate}
		\setlength\itemsep{0em}
		\item There is \tdel{} \mth{} $\tdel_{pq}(ht, k, v')$ in $\met{H}$ such that $T_p$ (which could be $T_0$ as well) committed before $\rvm_{ij}$. Formally, $\langle \exists \tdel_{pq}\\(ht, k,$ $ v') \in \met{H} : \tryc_p \prec_{H}^{\mr} \rvm_{ij} \rangle$. Here $v'$ could be nil. 
		\item There is no other update \mth{} $up_{xy}$ of a transaction $T_x$ operating on $\langle ht, k \rangle$ in $\met{H}$ such that $T_x$ committed after $T_p$ but before $\rvm_{ij}$. Formally, $\langle \nexists up_{xy}(ht, k, v'') \in \met{H} : \tryc_p \prec_{H}^{\mr} \tryc_x \prec_{H}^{\mr} \rvm_{ij} \rangle$. 		
	\end{enumerate}
	In this case similar to \stref{leg-ins}, we denote $\tryc_{p}$ as the last update \mth{} of $\rvm_{ij}$, i.e., $\tryc_{p}(ht, k, v)$ $= \lupdt{\rvm_{ij}(ht, k, v)}{H}$. 
\end{enumerate}
%\vspace{-.3cm}
We assume that when a transaction $T_i$ operates on key $k$ of a \tab{} $ht$, the result of this \mth is stored in \emph{local logs} of $T_i$ for later \mth{s} to reuse. Thus, only the first \rvmt{} operating on $\langle ht, k \rangle$ of $T_i$ accesses the shared-memory. The other \rvmt{s} of $T_i$ operating on $\langle ht, k \rangle$ do not access the shared-memory and they see the effect of the previous \mth{} from the \emph{local logs}. This idea is utilized in LR1. With reference to LR2 and LR3, it is possible that $T_x$ could have aborted before $\rvm_{ij}$. For LR3, since we are assuming that transaction $T_0$ has invoked a \tdel{} \mth{} on all the keys used of all \tab{} objects, there exists at least one \tdel{} \mth{} for every \rvmt on $k$ of $ht$. 
%This we also call \emph{conflict inheritance} as the conflict of the later \mth{} of $T_i$ operating on $\langle ht, k \rangle$ can be found using the conflicts of the first \mth{} of $T_i$.
Coming to \tins{} \mth{s}, since a \tins{} \mth{} always returns $ok$ as they overwrite the node if already present therefore they always take effect on the $ht$. 
We explain the above formalized legality definitions with help of intuitive examples in following text:\\
\noindent
\textbf{Legality through examples:}
LR1 says that, for a given key (node), if $\rvmt{}$ is not the first method on the key in a transaction, then it will observe the value returned by the previous method of the same transaction. We show this in \figref{nostm12} for lookups, but wlog same behaviou holds for delete \mth as well. In \figref{nostm12}(i), previous method for $l_{ij}(ht, k_5, v_5)$ of transaction $T_i$ on key $k_5$ is $i_{ix}(ht, k_5, v_5)$. So, $l_{ij}(ht, k_5, v_5)$ will return the value $v_5$ which will be inserted by previous method $i_{ix}(ht, k_5, v_5)$. Same mechanism will be followed in \figref{nostm12}(ii) and \figref{nostm12}(iii) where previous method is a lookup and delete, respectively.
\begin{figure}[H]
	\centering
	\captionsetup{justification=centering}
	\centerline{\scalebox{0.5}{\input{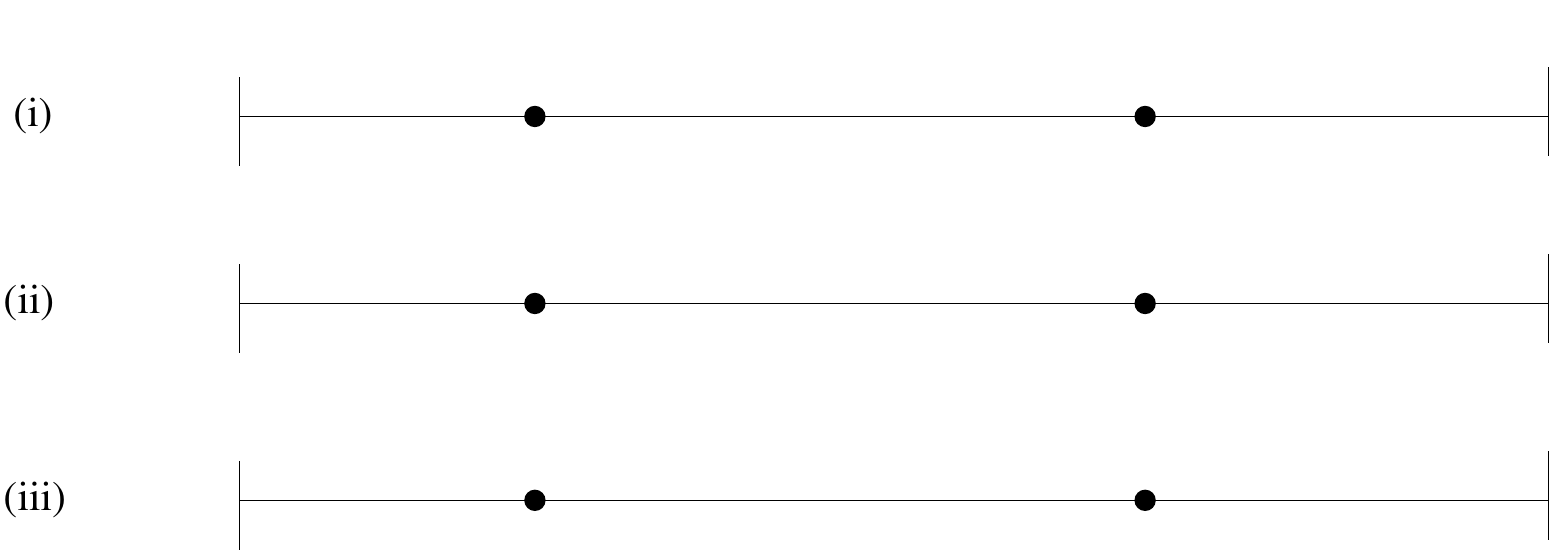_t}}}
	\caption{Explanation for LR1}
	\label{fig:nostm12}
\end{figure}
%If $\rvmt{}$ is the first method of a transaction on any key then it will return the same value as the previous closest method of committed transaction on the same key. In \figref{nostm13}, previous method for $Lu_{ij}(ht, k, v_p)$ of transaction $T_i$ on same key $k$ is $Ins_{pq}(ht, k, v_p)$ of transaction $T_p$. So, $Lu_{ij}(ht, k, v_p)$ will return the same value which has been inserted by $Ins_{pq}(ht, k, v_q)$ and there can't be any other transaction $\upmt{}$ working on the same key. Same technique will be follow in \figref{nostm14}.
LR2 says that, for a given shared key, if $\rvmt{}$ is the first method of the key in a transaction and it's value is not null then the previous closest method of committed transaction should be an insert on the key. In \figref{nostm13}, previous closest method for $l_{ij}(ht, k, v_p)$ of transaction $T_i$ on same key $k$ is $i_{pq}(ht, k, v_p)$ of transaction $T_p$. So, $l_{ij}(ht, k, v_p)$ will return the $v_p$ which has been inserted by $i_{pq}(ht, k, v_p)$ and there can't be any other transaction $\upmt{}$ working on the key $k$ between $T_p$ and $T_i$.% \figref{nostm14} represents, previous closest method of committed transaction $T_p$ is $d_{pq}(ht, k, v_p)$ on key $k$ so $l_{ij}(ht, k, Nil)$ of transaction $T_i$ returns nil for same key $k$.
\begin{figure}[H]
	\centering
	\captionsetup{justification=centering}
	\centerline{\scalebox{0.5}{\input{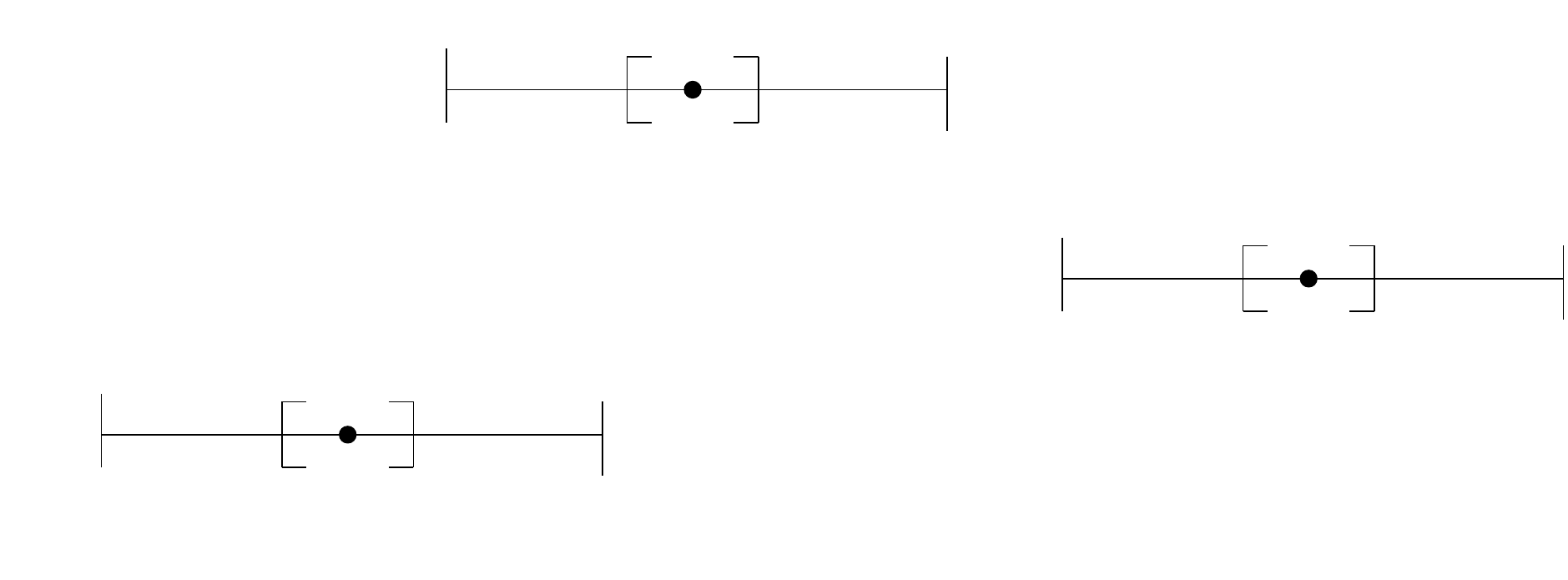_t}}}
	\caption{Explanation for LR2.}
	\label{fig:nostm13}
\end{figure}

Finally LR3 says that, for a given shared key, if $\rvmt{}$ is the first method of the key in a transaction and it's value is null then the previous closest method of committed transaction should be a delete on the key. In \figref{nostm14}, previous closest method for $l_{ij}(ht, k, v_p)$ of transaction $T_i$ on key $k$ is $d_{pq}(ht, k, v_p)$ of transaction $T_p$. So, $l_{ij}(ht, k, v_p)$ will return the $v_p$ which has been returned by $d_{pq}(ht, k, v_p)$ and there can't be any other transaction $\upmt{}$ working on the same key between $T_p$ and $T_i$.

\begin{figure}[H]
	\centering
	\captionsetup{justification=centering}
	\centerline{\scalebox{0.5}{\input{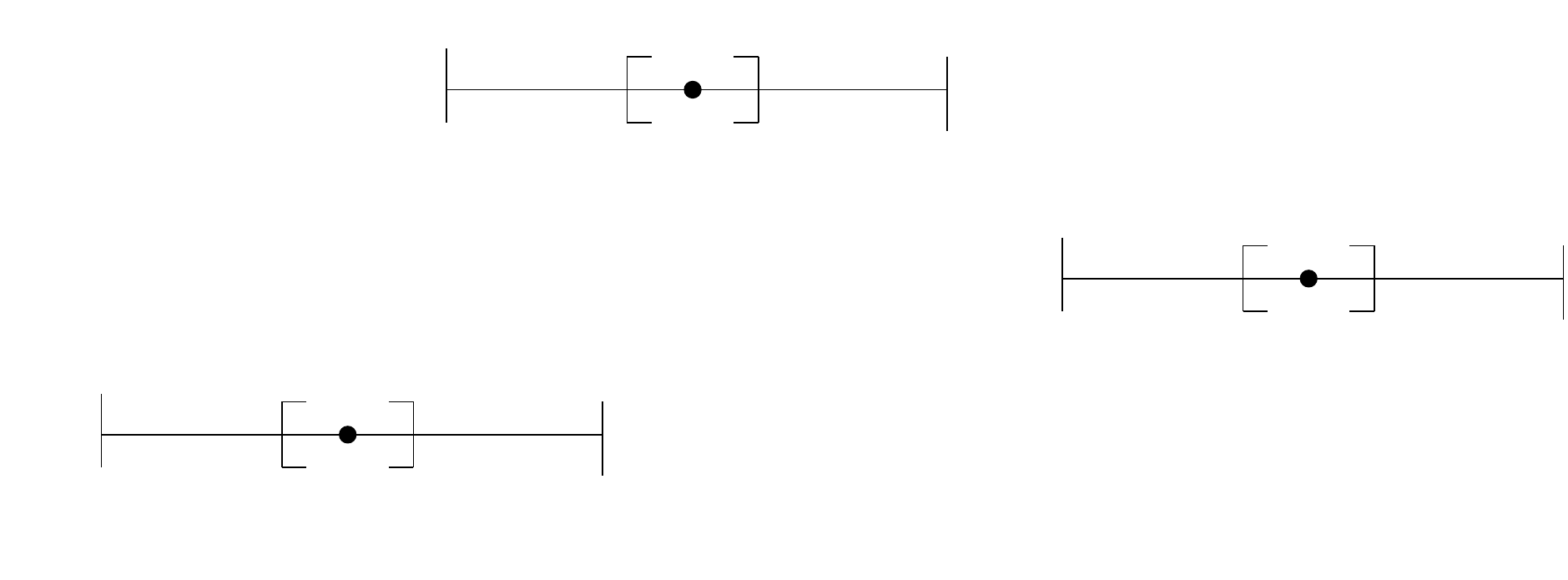_t}}}
	\caption{Explanation for LR3}
	\label{fig:nostm14}
\end{figure}

Thus, we denote all \tins{} \mth{s} as \legal. We denote a sequential history $H$ as \emph{\legal} or \emph{linearized} \cite{HerlWing:1990:TPLS} if all its \rvm \mth{s} are \legal. While defining \legality{} of a history, we are only concerned about \rvm (\tlook{} and \tdel) \mth{s} since all \tins{} \mth{s} are by default \legal. %A sequential history \emph{linearized} \cite{HerlWing:1990:TPLS} 
History $H_2$ in \figref{legal} is legal because $l_{2}(ht, k_2, v_0)$ follows LR2, $d_{1}(ht, k_1, v_0)$ adheres to LR2 and $l_{2}(ht, k_1, nil)$ follows LR3. Thus all the \rvmt{} are legal.
\begin{figure}[H]
	\centering
	\captionsetup{justification=centering}
	%	\centering
	%\subfloat[H1 : Transactional tree history.]
	{\scalebox{0.5}{\input{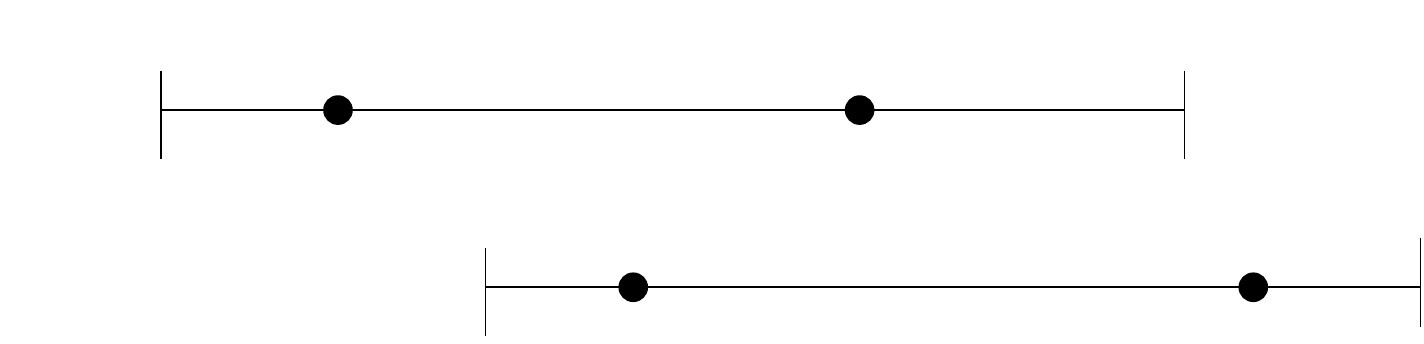_t}}}
	\caption{Legal History H2}
	\label{fig:legal}
\end{figure}

We formally prove legality using \lemref{provingLegalityy}. \lemref{provingLegalityy} and then we finally show that \otm{} histories are \coop{} which is a subclass of opacity\cite{KuzPer:NI:TCS:2017}.

\vspace{1mm}
\noindent
\textbf{\textsl{Correctness-Criteria \& Opacity:}} A \emph{\cc} is a set of histories. A history $H$ satisfying a \cc{} has some desirable properties. A popular \cc{} is \emph{\opty} \cite{GuerKap:2008:PPoPP}. A sequential history $H$ is \opq{} if there exists a serial history $S$ such that: (1) $S$ is equivalent to $\overline{H}$, i.e. , $\evts{\overline{H}} = \evts{S}$ (2) $S$ is \legal{} and (3) $S$ respects the transactional real-time order of $H$, i.e., $\prec_H^{\tr} \subseteq \prec_S^{\tr}$. 
\vspace{-2mm}
\subsection{Conflict Notion \& Conflict-Opacity}
\label{sec:conflicts}
\vspace{-2mm}
Opacity is a popular \cc{} for STMs. But, as observed in \secref{intro}, it can be proved that verifying the membership of \opty similar to view-serializability (VSR) in databases is NP-Complete \cite{Papad:1979:JACM}. To circumvent this issue, researchers in databases have identified an efficient sub-class of VSR, called conflict-serializability or \csr, based on the notion of conflicts. The membership of \csr can be verified in polynomial time using conflict graph characterization. Along the same lines, we develop the notion of conflicts for \otm and identify a sub-class of \opty, \coopty. The proposed \cc{} is extension of the notion of conflict-opacity developed for \emph{\rwtm} by Kuznetsov \& Peri \cite{KuzPer:NI:TCS:2017}.

We say two transactions $T_i, T_j$ of a sequential history $H$ for \otm{} are in \emph{conflict} if atleast one of the following conflicts holds:
\vspace{-.2cm}
\begin{itemize}
	\setlength\itemsep{0em}
\item \textbf{tryC-tryC} conflict:(1) $T_i$ \& $T_j$ are committed and (2) $T_i$ \& $T_j$ update the same key $k$ of the \tab{}, $ht$, i.e., $(\langle ht,k \rangle \in \udset{T_i}) \land (\langle ht,k \rangle \in \udset{T_j})$, where $\udset{T_i}$ is update set of $T_i$. (3) $T_i$'s \tryc{} completed before $T_j$'s \tryc, i.e., $\tryc_i \prec_{H}^{\mr} \tryc_j$. 

\item \textbf{tryC-rv} conflict:(1) $T_i$ is committed (2) $T_i$ updates the key $k$ of \tab{}, $ht$. $T_j$ invokes a \rvmt{} $rvm_{jy}$ on the key same $k$ of \tab{} $ht$ which is the first \mth{} on $\langle ht, k \rangle$. Thus, $(\langle ht,k \rangle \in \udset{T_i}) \land (\rvm_{jy}(ht, k, v) \in \rvset{T_j}) \land (\rvm_{jy}(ht, k, v) = \fkmth{\langle ht, k \rangle}{T_j}{H})$, where $\rvset{T_j}$ is return value set of $T_j$. (3) $T_i$'s \tryc{} completed before $T_j$'s \rvm, i.e., $\tryc_i \prec_{H}^{\mr} \rvm_{jy}$.% Same conflict is preserved by the subsequent \mth{s} of $T_j$, we call it \emph{conflict inheritance}. 

\item \textbf{rv-tryC} conflict:(1) $T_j$ is committed (2) $T_i$ invokes a \rvmt on the key same $k$ of \tab{} $ht$ which is the first \mth{} on $\langle ht, k \rangle$. $T_j$ updates the key $k$ of the \tab{}, $ht$. Thus, $(\rvm_{ix}(ht, k, v) \in \rvset{T_i}) \land (\rvm_{ix}(ht, k, v) = \fkmth{\langle ht, k \rangle}{T_i}{H}) \land (\langle ht,k \rangle \in \udset{T_j})$ (3) $T_i$'s \rvm completed before $T_j$'s \tryc, i.e., $\rvm_{ix} \prec_{H}^{\mr} \tryc_j$. %Similarly, all subsequent \mth{s} of $T_i$ which happen before $\tryc_j$ preserve the same conflict we term it as \emph{conflict inheritance}. 
\end{itemize}

A \rvmt{} $\rvm_{ij}$ conflicts with a \tryc{} \mth{} only if $\rvm_{ij}$ is the first \mth{} of $T_i$ that operates on \tab{} with a given key. Thus the conflict notion is defined only by the \mth{s} that access the shared memory. $(\tryc_i{}, \tryc_j{})$, $(\tryc_i{}, \tlook_j)$, $(\tlook_i, \tryc_j{})$, $(\tryc_i{}, \tdel_j)$ and $(\tdel_i, \tryc_j{})$ can be the possible conflicting \mth{s}. For example, consider the history $H5: l_1(ht, k_1, NULL) l_2(ht, k_2$ $, NULL) i_2(ht, k_1, v_1) i_1(ht, k_4,$ $v_1) c_1 i_3(ht, k_3\\, v_3) c_3 d_2(ht, k_4, v_1) c_2 l_4(ht, k_4, NULL)$ $i_4(ht, k_2, v_4) c_4$ in \figref{netys}. $\langle l_1(ht, k_1, NULL), i_3(ht, k_1, v_1)\rangle$ and $\langle l_2(ht, k_2, NULL), i_4$ $(ht, k_2, v_4) \rangle$ are a conflict of type \emph{rv-tryC}. Conflict type of $ \langle i_1(ht, k_4, v_1), \\d_2(ht, k_4, v_1) \rangle$ and $ \langle i_1(ht, k_4, v_1),$ $l_4(ht, k_4, NULL) \rangle$ are \emph{tryC-tryC} and \emph{tryC-rv} respectively. 
%\vspace{-.3cm}
\begin{figure}[H]
	\centering
	%		\captionsetup{justification=centering}
	\scalebox{.5}{\input{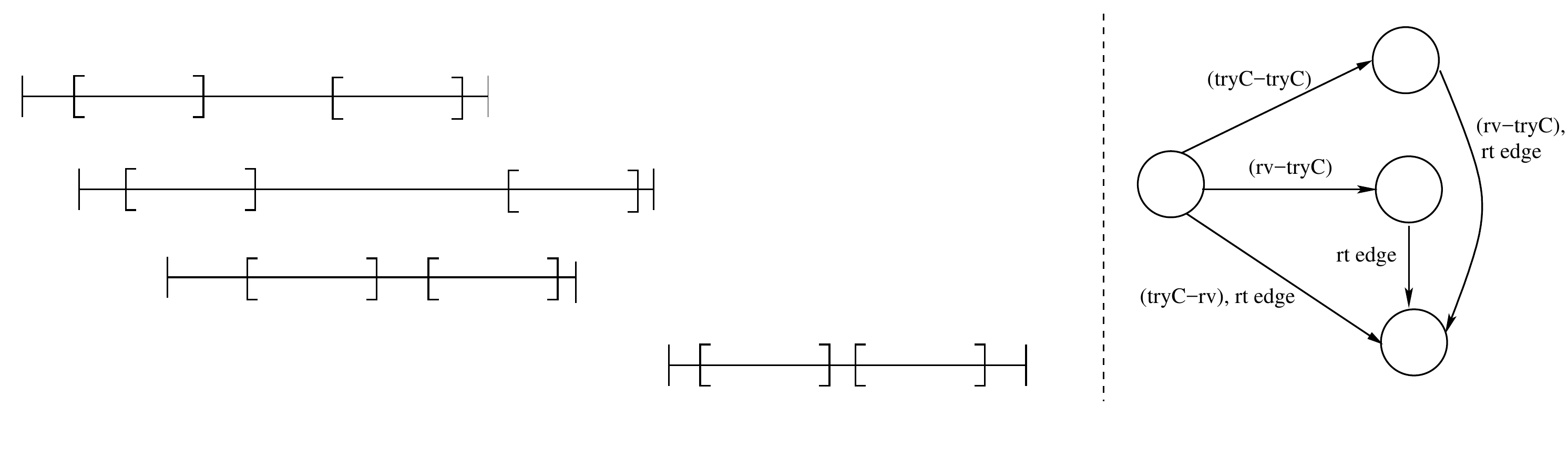_t}}
	% \vspace{-16cm}
	\caption{Graph Characterization of history $H5$}
	\label{fig:netys}
\end{figure}

\noindent
\textbf{Conflict Opacity:} Using this conflict notion, we can now define \coopty. A sequential history $H$ is conflict-opaque (or co-opaque) if there exists a serial history $S$ such that: 
\begin{enumerate}[nosep]
    \item $S$ is equivalent to $\overline{H}$, i.e. , $\evts{\overline{H}} = \evts{S}$, 
    \item $S$ is \legal{},
    \item $S$ respects the transactional real-time order of $H$, i.e., $\prec_H^{\tr} \subseteq \prec_S^{\tr}$ and
    \item S preserves conflicts (i.e. $\prec^{CO}_{H}\subseteq\prec^{CO}_{S}$). 
\end{enumerate}

\noindent Thus from the above definition, it can be seen that any history that is \coop is also \opq. 
\ignore{
\vspace{-.4cm}
\begin{definition}
	\label{def:co-opacity}
	\texttt{Co-opacity :} 	A sequential history $H$ is conflict-opaque (or co-opaque) if there exists a serial history $S$ such that: (1) $S$ is equivalent to $\overline{H}$, i.e. , $\evts{\overline{H}} = \evts{S}$ (2) $S$ is \legal{} and (3) $S$ respects the transactional real-time order of $H$, i.e., $\prec_H^{\tr} \subseteq \prec_S^{\tr}$ and (4) S preserves conflicts (i.e. $\prec$$^{CO}_{H}$ $\subseteq$ $\prec$$^{CO}_{S}$) \cite{KuzPer:NI:TCS:2017}. 
\end{definition}
\vspace{-.5cm}
}

\vspace{1mm}
\noindent
\textbf{Graph Characterization:} We now develop a graph characterization of \coopty. For a sequential history $H$, we define \emph{conflict-graph} of $H$, $\cg{H}$ as the pair $(V, E)$ where $V$ is the set of $\txns{H}$ and \emph{E} can be of following types: 
\begin{enumerate}[nosep]
\item \emph{conflict edges:} \{($T_i$, $T_j$) : ($T_i$, $T_j$) $\in$ \conf{H}\} where, \conf{H} is an ordered pair of transactions such that the transactions have one of the above pair of conflicts. 
\item \emph{real-time edge(or rt edge):} \{($T_i$, $T_j$): Transaction $T_i$ precedes $T_j$ in real-time, i.e., $T_i \prec_{H}^{\tr} T_j $\}.
\end{enumerate}

\ignore{
\vspace{-.2cm}
\begin{itemize}%[noitemsep,nolistsep]
	\item \emph{conflict edges:} \{($T_i$, $T_j$) : ($T_i$, $T_j$) $\in$ \conf{H}\} where, \conf{H} is an ordered pair of transactions such that the transactions have one of the above pair of conflicts. 

	\item \emph{real-time edge(or rt edge):} \{($T_i$, $T_j$) : $T_i \prec_{H}^{\tr} T_j $\}
\end{itemize}
}
Now, we have the following theorem which explains how graph characterization is useful.

\ignore{
\begin{lemma}
\otm{} histories are sequential and legal.
\vspace{-0.4cm}
\end{lemma}

In the accompanying report\cite{DBLP:journals/corr/abs-1709-00681} we show that at operational level \otm{} histories are linearizable and the conflicts \mth{} can be ordered in such a way that the generated sequential histories are legal.
}
%\vspace{-.3cm}
\begin{theorem}
\label{thm:coop-graph}
A legal \otm{} history $H$ is \coop{} iff CG(H) is acyclic.
%\vspace{-0.4cm}
\end{theorem}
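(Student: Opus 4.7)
The plan is to prove the two directions of the equivalence in turn, using the standard conflict-graph argument from serializability theory, adapted to our conflict notion and legality rules LR1--LR3.

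For the forward direction ($H$ co-opaque implies $CG(H)$ acyclic), I would start from a serial witness $S$ guaranteed by the definition of co-opacity, and observe that the transactions in $S$ are totally ordered by $\prec_S^{\tr}$. I would then show that every edge $(T_i, T_j)$ of $CG(H)$ satisfies $T_i \prec_S^{\tr} T_j$: real-time edges transfer directly from $\prec_H^{\tr} \subseteq \prec_S^{\tr}$, and conflict edges transfer from the conflict-preservation condition $\prec_H^{CO} \subseteq \prec_S^{CO}$ together with the fact that in a serial history the direction of any conflict must agree with $\prec_S^{\tr}$ (since the earlier of the two methods in $\prec_S^{\mr}$ belongs to the transaction that appears first in $\prec_S^{\tr}$). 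Because every edge of $CG(H)$ is consistent with a single total order on vertices, $CG(H)$ is acyclic.

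For the reverse direction ($CG(H)$ acyclic implies $H$ co-opaque), I would take any topological order of $CG(H)$ and build $S$ by executing the transactions of $\overline{H}$ one at a time in that order, emitting each transaction's events in its original $<_{T_i}$ order. By construction $S$ is a serial history, equivalent to $\overline{H}$, and it automatically respects $\prec_H^{\tr}$ (real-time edges are in $CG(H)$) and preserves conflicts (conflict edges are in $CG(H)$). What remains is to verify that $S$ is legal, which is the main obstacle.

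The hard part is legality: I have to show that every \rvmt{} $\rvm_{ij}(ht,k,v)$ in $S$ satisfies one of LR1, LR2, LR3 with the same witnessing update transaction as it had in $H$. LR1 is immediate because intra-transaction order is preserved. For LR2, suppose $\lupdt{\rvm_{ij}}{H} = \tryc_p$ with a committed \tins{} on $\langle ht,k \rangle$ in $T_p$. The pair $(T_p, T_i)$ is a tryC-rv conflict in $H$, so $T_p \prec_S^{\tr} T_i$. To rule out an interposing committed updater $T_x$ of $\langle ht,k \rangle$ with $T_p \prec_S^{\tr} T_x \prec_S^{\tr} T_i$, I would split on the real-time order of $\tryc_x$ relative to $\tryc_p$ and $\rvm_{ij}$ in $H$: if $\tryc_x \prec_H^{\mr} \tryc_p$, the tryC-tryC conflict gives an edge $T_x \to T_p$, contradicting $T_p \prec_S^{\tr} T_x$; if $\tryc_p \prec_H^{\mr} \tryc_x \prec_H^{\mr} \rvm_{ij}$, then $H$ itself would violate LR2, contradicting legality of $H$; and if $\rvm_{ij} \prec_H^{\mr} \tryc_x$, the rv-tryC conflict gives $T_i \to T_x$, contradicting $T_x \prec_S^{\tr} T_i$. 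The argument for LR3 is symmetric, using $T_0$'s initializing \tdel{s} when needed. This case analysis, which pairs each legality rule with the conflict type that protects it, is the technical core of the proof and is where the completeness of the three conflict types (tryC-tryC, tryC-rv, rv-tryC) is essential.
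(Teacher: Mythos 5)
Your proposal is correct and follows essentially the same route as the paper: the forward direction embeds $CG(H)$ into the (acyclic) conflict graph of the serial witness, and the reverse direction topologically sorts $CG(H)$, notes that real-time and conflict edges force $S$ to respect $\prec_H^{\tr}$ and $\prec_H^{CO}$, and then establishes legality of $S$ by the same trichotomy on where an interposing committed updater would sit in $H$ (before $\tryc_p$, between $\tryc_p$ and $\rvm_{ij}$, or after $\rvm_{ij}$), each case killed by a conflict edge or by legality of $H$. The only difference is organizational: the paper packages your LR2/LR3 case analysis as a standalone lemma (legality transfers between equivalent histories whose conflicts are preserved), whereas you carry it out inline rule by rule; the content is the same.
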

\noindent Using this framework, we next develop \otm using the notion of \bto. We show the correctness of the proposed algorithm by showing that all conflict graph of the histories generated by it are acyclic. %The legality and conflict notion established here are used to prove that histories generated by the \otm{} are correct or \coop{}\cite{KuzPer:NI:TCS:2017} in \secref{POC}.

\section{\otm{}}
\label{sec:htostm}

We design \otm{} a concurrent closed addressed \tab{} using above explained legality and conflict notion. The \otm{} exports \npbegin{\emph{()}}, \npins{}, \npdel{}, \npluk{} and \nptc{} and has $m$ number of buckets, which we refer to as size of the \tab{}. The main part of interest from concurrency perspective is each bucket of the \tab{} implemented as \lsl(lazy red-blue list), the shared memory data structure.

\subsection{Lazyrb-list}
It is a linked structure with immutable $head$ and $tail$ sentinel nodes of the form of a tuple \emph{$\langle$ key, value, lock, marked, max\_ts, rl, bl $\rangle$} representing a node. The $key$ represents unique id of the node so that a transaction could differentiate between two nodes. The $key$ values may range from $-\infty$ ( key of head node ) to $+\infty$ ( key of tail node ). The $value$ field may accommodate any type ranging from a basic integer to a complex class type. The $marked$ field is to have lazy deletion as popular in lazylists\cite{Heller+:LazyList:PPL:2007, Herlihy:ArtBook:2012} and $lock$ to implement exclusive access to the node. 

\cmnt
{
\begin{wrapfigure}{L}{0.4\textwidth}
    \begin{minipage}{0.4\textwidth}
    \begin{tcolorbox}
        \begin{algorithmic}[H]
%            \State \textbf{\lsl{} node structure}
            \State key, value;
	        \State lock, marked;
	        \State struct max\_ts;
	        \State node* rl; node* bl;
	    \end{algorithmic}
\end{tcolorbox}
\caption{\lsl{} node structure}
\label{fig:nodestruct}
\end{minipage}
\end{wrapfigure}
}

Lazyrb-list node have two links - $\bn$ (blue links) and $\rn$ (red links). First, the nodes which are not marked (not deleted) are reachable by \bn{} from the head. Second, the nodes which are marked (i.e. logically deleted) and are only reached by \rn{}. Thus, the name \lsl{}. All marked nodes are reachable via $\rn$ and all the unmarked nodes are reachable via $\bn$ \& $\rn$ from the head. Thus nodes reachable by $\bn$ are the subset of the nodes reachable by $\rn$. Every node of \lsl{} is in increasing order of its key.

Furthermore, every \lsl{} node also has a tuple $max\_ts\langle insert, delete, lookup \rangle$ to record the time-stamp of the transaction which most recently executed some \mth. Augmenting the underlying shared data structure with time-stamps help in identifying conflicts which can cause a cycle in the execution and hence violate \coopty{}\cite{KuzPer:NI:TCS:2017}. This is captured by the graph characterization of a generated history as discussed in \figref{netys} which implies that cyclic conflicts leads to non \coop{} execution.

\begin{figure}[H]
	\centering
	\begin{minipage}[b]{0.49\textwidth}
		\scalebox{.43}{\input{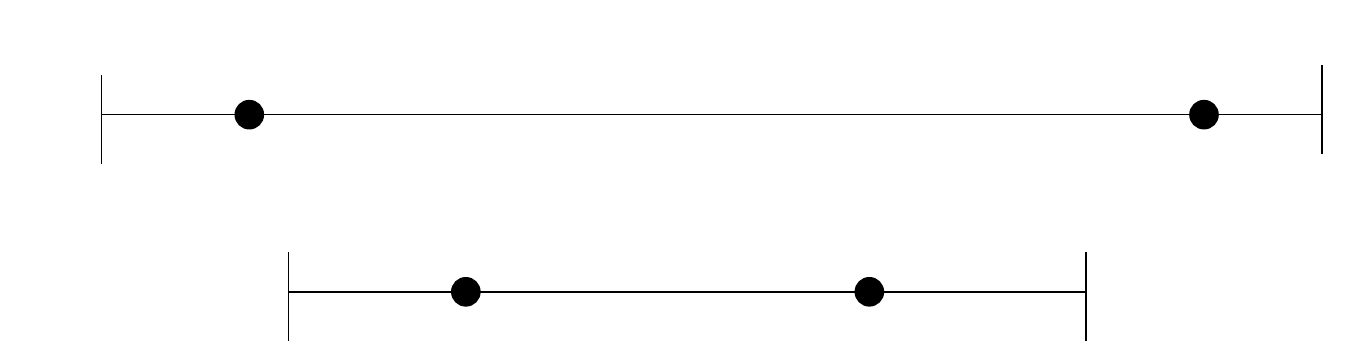_t}}
		\centering
		%		\captionsetup{justification=centering}
		% \vspace{-16cm}
		\caption{History H is not \coop{}}
		\label{fig:nostm25}
	\end{minipage}
	\hfill
	\begin{minipage}[b]{0.49\textwidth}
		\centering
		%		\captionsetup{justification=centering}
		\scalebox{.43}{\input{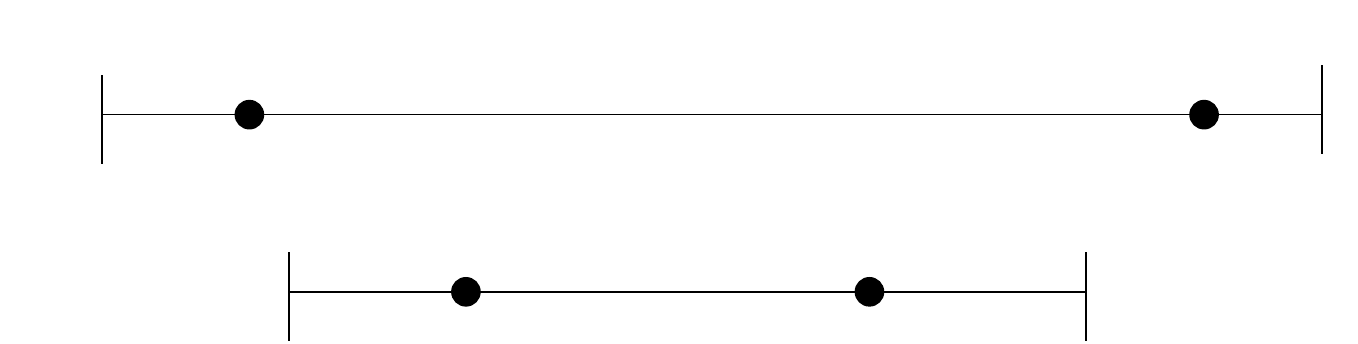_t}}
		% \vspace{-16cm}
		\caption{\coop{} History H1}
		\label{fig:nostm26}
	\end{minipage}   
\end{figure}

Now, we explain why we need to maintain deleted nodes through \figref{nostm25} and \ref{fig:nostm26}. History H shown in \figref{nostm25} is not \coop{} because there is no serial execution of T1 \& T2 that can be shown \coop{}. In order to make it \coop{} $l_1(ht, k_1, Nil)$ needs to be aborted. And $l_1(ht, k_1, Nil)$ can only be aborted if \otm{} scheduler knows that a conflicting operation $d_2(ht, k_1, v_0)$ has already been scheduled and thus violating \coopty. One way to have this information is that if the node represented by $k_1$ records the time-stamp of the delete \mth so that the scheduler realizes the violation of the time-order\cite{WeiVoss:2002:Morg} and aborts $l_1(ht, k_1, Nil)$ to ensure \coopty{}.

Thus, to ensure correctness, we need to maintain information about the nodes deleted from the \tab{}. This can be achieved by only marking node deleted from the list of \tab{}. But do not unlink it such that the marked node is still part of the list. This way, the information from deleted nodes can be used for ensuring \coopty{}. In this case, after aborting $l_1(ht, k_1)$, we get that the history is \coop{} with $T1$ \& $T2$ being the equivalent serial history as shown in \figref{nostm26}. The deleted keys (nodes with marked field set) can be reused if another transaction comes \& inserts the same key back.

\begin{figure}[H]
	\centering
		\begin{minipage}[b]{0.49\textwidth}
		\centering
		%		\captionsetup{justification=centering}
		\scalebox{.41}{\input{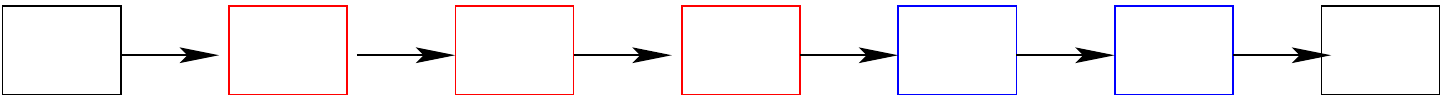_t}}
		% \vspace{-16cm}
		\caption{Searching $k_8$ over lazylist}
		\label{fig:nostm2}
	\end{minipage}   
	\hfill
	\begin{minipage}[b]{0.49\textwidth}
		\scalebox{.41}{\input{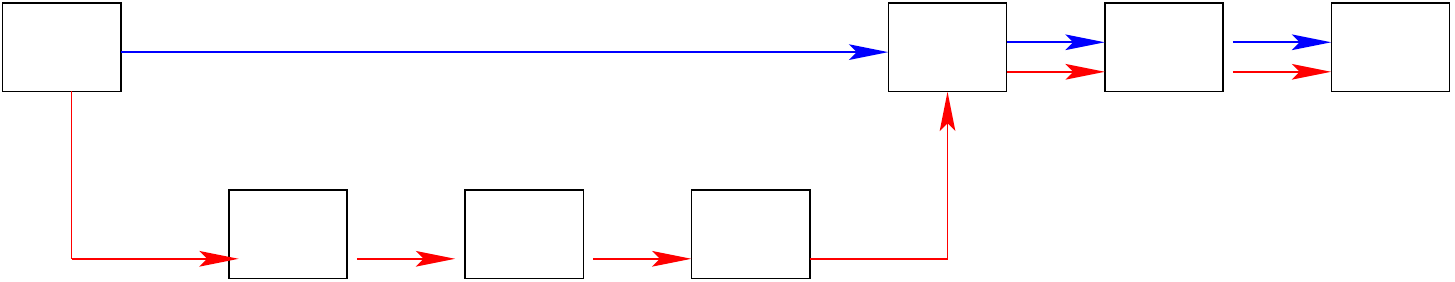_t}}
		\centering
		%		\captionsetup{justification=centering}
		% \vspace{-16cm}
		\caption{Searching $k_8$ over \lsl{}}
		\label{fig:nostm}
	\end{minipage}
\end{figure}

But, the major hindrance in maintaining the deleted nodes as part of the ordinary lazy-list is that it would reduce search efficiency of the data structure. For example, in \figref{nostm2} searching $k_8$ would unnecessary cause traversal over marked ( marked for lazy deletion ) nodes represented by $k_1, k_3$ and $k_6$. We solve this problem in \lsl{} by using two pointers. 1) \bn (blue link): used to traverse over the actual inserted nodes and 2) \rn (red link) used to traverse over the deleted nodes. Hence, in \figref{nostm} to search for $k_8$ we can directly use \bn{} saving significant search computations.
\cmnt{
\begin{multicols}{3}
\begin{tcolorbox}
\begin{algorithmic}
    \State \textbf{txlog:}
    \State 
    \State t\_id; 
	\State tx\_status;
	\State vector $\langle key, le \rangle$;
\end{algorithmic}
\end{tcolorbox}
\begin{tcolorbox}
\begin{algorithmic}
    \State \textbf{max\_ts data structure:}
    \State lookup;
	\State insert;
	\State delete;
\end{algorithmic}
\end{tcolorbox}
\begin{tcolorbox}
\begin{algorithmic}
    \State \textbf{$le$ structure:}
    \State obj\_id,key,value;	
	\State preds, currs;
	\State op\_status;	
	\State operation\_name;
\end{algorithmic}
\end{tcolorbox}
\end{multicols}
}
A question may arise that how would we maintain the time-stamp of a node which has not yet been inserted? Such a case arises when \npluk{} or \npdel{} is invoked from $\rvmt{}$, and node corresponding to the key, say $k$ is not present in $\bn$ and $\rn$. Then the $\rvmt{}$ will create a node for key $k$ and insert it into underlying data structure as deleted (marked field set) node.

For example, lookup wants to search key $k_{10}$ in \figref{nostm} which is not present in the $\bn$ as well as $\rn$. Therefore, lookup method will create a new node corresponding to the key $k_{10}$ and insert it into $\rn$ (refer the \figref{nostm1}). So, we discuss in detail the invariants and properties of the \lsl{} and ensure that no duplicate nodes are inserted while proving the method level correctness in \secref{opnlevel}.

\begin{figure}
	\centering
	%		\captionsetup{justification=centering}
	\scalebox{.4}{\input{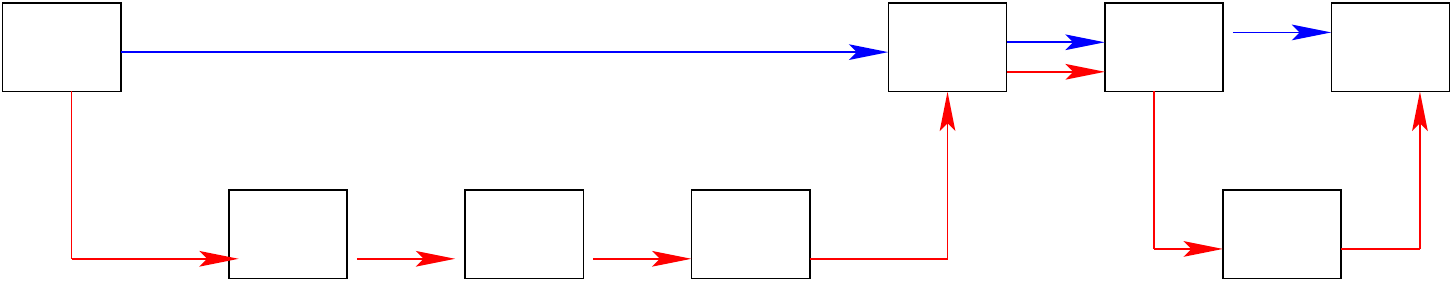_t}}
	% \vspace{-16cm}
	\caption{Execution under \lsl{}. $k_{10}$ is added in \lsl{} if not present.}
	\label{fig:nostm1}
\end{figure}

%\subsection{HT-OSTM data-structure design}
%\label{sec:pcode}

\subsection{Thread local log DS }
In proposed \otm{}, we use $thread$ $local$ $DS$ which is private to each thread for logging the local execution and $shared$ $memory$ $DS$ which is concurrently accessed by multiple transactions to communicate the meta information logged for validation of the \mth{s}.\cmnt{ On successful validation effect of each transaction will be visible into underlying shared DS.}%%%%%%%%%%%ajay  

Each transaction $T_i$ maintains \emph{local log} which is a tuple of type $txlog\langle t\_id, tx\_status, le\rangle$, which consists of $t\_id$ and $tx\_status$ of the transaction. Transactions can have live, commit or abort as their status signifying that transaction is executing, has successfully committed or has aborted due to some \mth{} failing the validation respectively. %rvSet and updtSet store the meta information of all the nodes corresponding to the key which is accessed by $\rvmt{}$ and $\upmt{}$ of $T_i$ respectively. 

\ignore{
\begin{tcolorbox}
\begin{lstlisting}{language = C++}
class txlog{
private :
int t_id;			STATUS tx_status;
/*a log entry is uniquely identified using key and obj_id
in le(log_entry)*/
vector <key, le> ll_list;  
public :
txlog();	~txlog();	findInLL();     setStatus();
getLlList();	  sort();       tryAbort();         };
\end{lstlisting}
\end{tcolorbox}
}
The $txlog$ also maintains a list $le$ (log\_entries) of meta information of each method a transaction executes in its life time. The $le$ is again a tuple $\langle key, value, opn, op\_status, preds, currs \rangle$ which records 1) $key$ and $value$ a method operates on, 2) $opn$: name of the \mth{}, 3) $op\_status$: \mth{'s} status ($OK$, $FAIL$) and 4) $preds$, $currs$: its \emph{location} over the \lsl{}.

We say a \mth{} identifies its \emph{location} over the \lsl{} when it finds the predecessor and successor nodes over the $\bn$ and $\rn$ respectively. We represent predecessor as $preds\langle \textcolor{blue}{k_m}, \textcolor{red}{k_n} \rangle$ ($k_m$ is unmarked node reachable by $\bn$ and $k_n$ is marked node reachable by $\rn$) and successor as $currs\langle\textcolor{red}{k_p}, \textcolor{blue}{k_q}\rangle$ ($k_p$ is marked for deletion node reachable by $\rn$ and $k_q$ is unmarked node reachable by $\bn$) respectively. Here, $ \langle \textcolor{blue}{k_m}, \textcolor{blue}{k_q} \rangle $ are predecessor (preds[0]) and current (currs[1]) node for $\bn$ and $ \langle\textcolor{red}{k_n}, \textcolor{red}{k_p}\rangle $ are predecessor (preds[1]) and current (currs[0]) node for $\rn$. We use word location with $preds$ and $currs$ interchangeably in rest of the paper. The $le$ is operated by getter and setter \mth{s} for each of the member variables as shown in table \ref{tabel:1}. Addtionally, we use following macros while explaining the pseudocode of \otm{} in \secref{pscode}.
\begin{tcolorbox}
\begin{lstlisting}{language = C++}
/*types of method exported by the HT-OSTM*/
enum OPERATION_NAME = {INSERT, DELETE, LOOKUP}

/*a transaction can ABORT/COMMIT and a method can ABORT, OK,
FAIL */
enum STATUS = {ABORT = 0, OK, FAIL, COMMIT}

/*to know whether validation is requested from TRYC or 
rv-method*/
enum VALIDATION_TYPE = {RV, TRYC}

/*To recognize on which list method has to be performed*/
enum LIST_TYPE = {RL, BL, RL_BL} 
\end{lstlisting}
\end{tcolorbox}

\begin{table}[H]
	\begin{tabular}{ || m{8em} | m{11cm}|| } 
		\hline
		\textbf{Functions} & \textbf{Description} \\ 
		\hline 		\hline
		setOpn() & store method name into ll\_list of the $txlog$\\ 
		\hline
		setValue() & store value of the key into ll\_list of the $txlog$\\ 
		\hline
		setOpStatus() & store status of method into ll\_list of the $txlog$\\ 
		\hline
		setPreds\&Currs() & store location of $preds$ and $currs$ according to the node corresponding to the key into ll\_list of the $txlog$\\ 
		\hline
		getOpn() & give operation name from ll\_list of the $txlog$\\ 
		\hline
		getValue() & give value of the key from ll\_list of the $txlog$\\ 
		\hline
		getOpStatus() & give status of the method from ll\_list of the $txlog$\\ 
		\hline
		getKey\&Objid() & give key and obj\_id corresponding to the method from ll\_list of the $txlog$\\ 
		\hline
		getAptCurr() & give the red or blue curr node from the log corresponding to the key of the $txlog$\\ 
		\hline
		getPreds\&Currs() & give location of $preds$ and $currs$ according to the node corresponding to the key from ll\_list of the $txlog$ \\ 
		\hline		
	\end{tabular}
	\caption{utility \mths to manipulate $txlog$.}
	\label{tabel:1}
\end{table}

% In ll\_entry setOpn(), setValue(), setOpStatus() and setPreds\&Currs() stores the method name, value of the key, status of the method and the location of preds and currs according to the node corresponding to the key into the ll\_list respectively. getOpn(), getValue(), getOpStatus(), getKey\&Objid(), getAptCurr and getPreds\&Currs() provides operation name, value of the key, status of the method, key and obj\_id corresponding to the method, appropriate curr node and the location of preds and currs according to the node corresponding to the key from the ll\_list respectively. 
%\subsubsection{keycentric datastructure}
%\vspace{-.10cm}
\ignore{
\subsubsection{Shared memory DS:}
%\subsubsection{shared memory DS :}
%\begin{multicols}{2}
\otm{} shared memory is the chained \tab{} where each node of the chain (\lsl{}) is a key-value pairs of the form $\langle k, v \rangle$. Most of the notations used here are derived from \cite{Vafeiadis:2006:PCH:1122971.1122992}. A node $n$ when created is initialized as follows: (1) $key$ and $val$ is the key and value of the method that creates the node (2) $rednext$ and $bluenext$ are set to $nil$ (3) $marked$ is set to $false$ (4) $lock$ is null (5) $max\_ts$ is initialized to 0.%, which is a part of Abstract set (please refer the \defref{Abs} in appendix).
%\vspace{-.15cm}
\begin{tcolorbox}
\begin{lstlisting}{language = C++}
struct node{
int key, value;		bool marked; 	struct max_ts;
lock;			node* rl;	node* bl;    };
/*hash table where each bucket is a lazyrb-list chain*/
node* shared_ht [];
\end{lstlisting}
\end{tcolorbox}
%\vspace{-.2cm}
We adapt timestamp validation\cite{WeiVoss:2002:Morg} to ensure schedules generated by proposed \otm{} are serial. Therefore we maintain \textit{max\_ts\_lookup($ht,k$), max\_ts\_insert($ht,k$) and max\_ts\_delete($ht,k$)} that represents timestamp of last committed transaction which executed \tlook($ht,k$), \tins($ht,k$) and \tdel($ht,k$) respectively. $max\_ts$, $node$ and $le$ form the part of the meta information for the \otm{}.
%\vspace{-.15cm}
\begin{tcolorbox}
\begin{lstlisting}{language = C++}
/*stores the time stamp of last transaction that performed 
lookup, insert or delete respectively  */
struct max_ts { lookup; insert; delete; }; 
\end{lstlisting}
\end{tcolorbox}
%\vspace{-.15cm}
%We use the notation that $n$ is a node or $Node(n)$. For a global state $S$, we denote $S.nodes$ as the set of all nodes in $S$. We use $\refr{n}$ to mean that $n$ is a reference to a node. If in state $S$, node $a$'s next is node $b$, then we say that $a$ points to $b$. In this case, we also denote $S.a\rightarrow S.b$. We say that a node $n$ is public in $S$ or $S.public(n)$ is true if some other node's next (say $a$) is $n$, i.e., $S.a \rightarrow S.n$. 
}
\ignore{
\subsection{Pseudocode} \label{sec:pcodee}
%\vspace{-.6cm}
%\includegraphics[scale=0.5]{figs/insertfc.png}
%\includegraphics[scale=0.5]{figs/lookupfc.png}
%\includegraphics[scale=0.5]{figs/trycfc.png}

%\vspace{-.5cm}
Through out its life an \otm{} transaction may execute \emph{STM\_begin()}, \npins{}, \npluk{}, \npdel{} and \nptc{} \mth{s} which are also exported to the user. Each transaction has a 1) \rvp: where \upmt{} \& \rvmt{} locally identify and logs the location to be worked upon and other meta information which would be needed for successful validation. Within \rvp{} \rvmt{s} do lock free traversal and then validate while \npins{} merely log their execution to be validated and updated during transaction commit. 2) \cp: where it validates the \upmt{} executed during its lifetime and validates whether the transaction will commit and finally make changes in \tab{} atomically or it will abort and flush its log. \figref{nostm24} depicts the transaction life cycle.
%We ensure that \emph{local execution} phase is lock free by adopting the lazy approach\cite{Herlihy:ArtBook:2012}. %This is a high-level overview of the actual implementation. 
\begin{figure}[H]
	\centering
	\captionsetup{justification=centering}
	\centerline{\scalebox{0.6}{\input{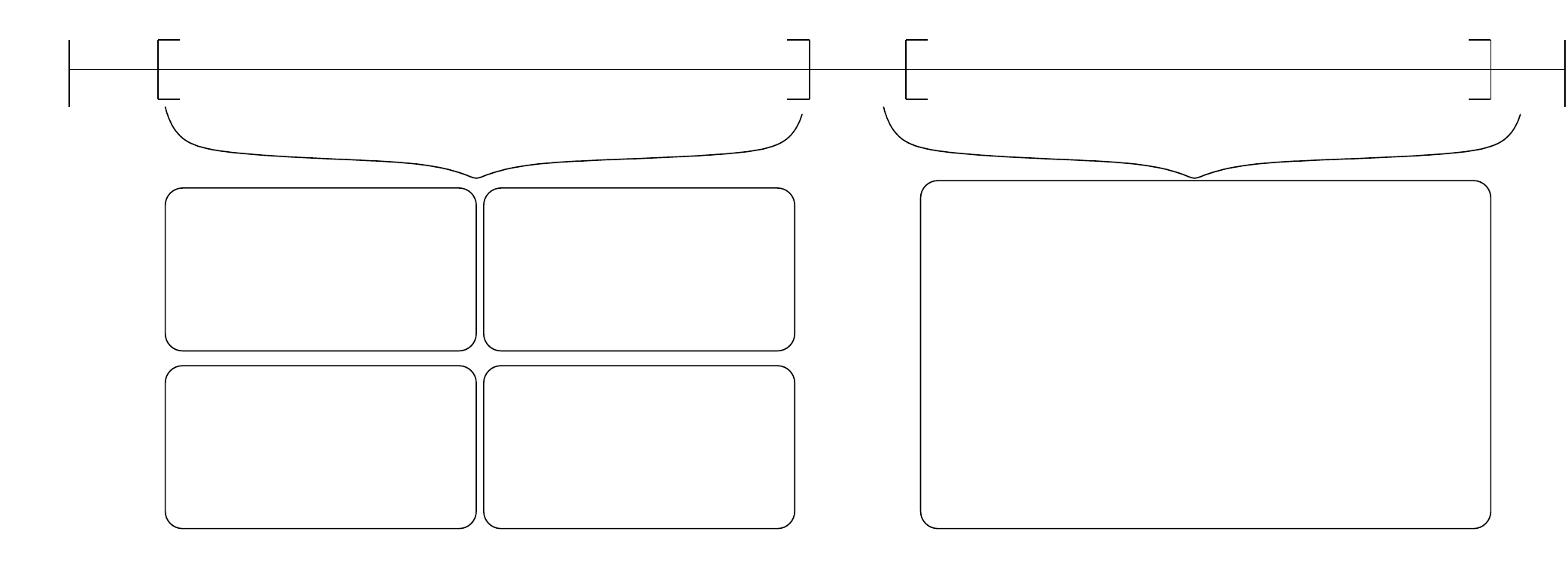_t}}}
	\caption{Transaction lifecycle of \otm}
	\label{fig:nostm24}
\end{figure}
\textbf{Pseudocode convention:} In each algorithm $\downarrow$ represents the input parameter and $\uparrow$ shows the output parameter (or return value) of the corresponding methods (such in and out variables are italicized). Instructions in \emph{read()} and \emph{write()} with in each \mth{} denote that they touch the shared memory. Color of $preds$ \& $currs$ in algorithm depicts the red or blue node which are accessed by red or blue links respectively.

\textbf{\rvmt{} execution phase:}
Initially, %in \emph{\rvmt{} execution} phase each transaction invokes \emph{STM\_begin()} of \algoref{begin} for getting unique transaction id and \emph{local log}. 
\npbegin\emph{()} is the first function a transaction executes in its life cycle. It initiates the $txlog$ (local log) for the transaction (\Lineref{begin2}) and provides an unique id to the transaction (\Lineref{begin3}).

%%%%%%%%%%%%%%%%%%%%%%%%%%%%%%%%%%%%%%%%%%%%%%%%%%%%%%%%%%%%%%%%%%%%%%%%%%%%%%%%%%%%%%%%%%%%%%%%%%%   Begin  %%%%%%%%%%%%%%%%%%%%%%%%%%%%%%%%%
%%%%%%%%%%%%%%%%%%%%%%%%%%%%%%%%%%%%%%%%%%%%%%%%%%%%%%%%%%%%%%%%%%%%%%%
%\setlength{\intextsep}{0pt}
\begin{algorithm}[H]
	\caption{\tabspace[0.2cm]  STM\_begin($t\_id\uparrow$) : initiates local transaction log and return the transaction id.}     
	\label{algo:begin}
	%	\setlength{\multicolsep}{0pt}
	%	\begin{multicols}{2}
	
	\begin{algorithmic}[1]
		%			\scriptsize
		\Function{stm\_begin}{} 
		\label{lin:begin1}
		\cmntwa{ init the local log}
		\State txlog $\gets$ new txlog(); \label{lin:begin2}
		%\State    /*locking to get TS and populate Live Set atomically to ensure Min\_Tx increases*/
		%\State Lock(); 
		\cmntwa{atomic variable to assign transaction id i.e. TS initilized by OSTM as 0}
		\State t\_id$\gets$ $get\&inc(sh\_cntr \uparrow)$;//$\Phi_{lp}$ \label{lin:begin3}
		%\State txlog.t\_id $\gets$ t\_id; \label{lin:begin3}
		\State return $\langle t\_id\rangle$;\label{lin:begin4}
		%\State  /*Live Set is concurrent Min Heap*/
		%\State addToLiveSet($txlog.t\_id$); 
		%\State unlock();
		\EndFunction 
	\end{algorithmic}
	
	%	\end{multicols}
\end{algorithm}

Then transaction may encounter the \upmt{} or \rvmt{}. \npins{} method in \rvp{} simply checks if their is a previous \mth{} that executed on the same $key$. If their is already a previous \mth{} that has executed within the same transaction it simply updates the new $value$, $opn$ as insert and $op\_status$ to $OK$ (\Lineref{insert5}, \Lineref{insert6} and \Lineref{insert7} respectively). In case the \npins{} is the first \mth{} on $key$ it creates a new log entry for the $ll\_list$ of $txlog$ at \Lineref{insert3}. Finally the \npins{} gets to modify the underlying \tab{} using \nplslins{} at the \cp{}.

%%%%%%%%%%%%%%%%%%%%%%%%%%%%%%%%%%%%%%%%%%%%%%%%%%%%%%%%%%%%%%%%%%%%%%%%%%
%%%%%%%%%%%%%%%%%%%%%%%% Insert      %%%%%%%%%%%%%%%%%%%%%%%%%%%%%%%%%%%%%
%%%%%%%%%%%%%%%%%%%%%%%%%%%%%%%%%%%%%%%%%%%%%%%%%%%%%%%%%%%%%%%%%%%%%%%%%%

\begin{algorithm}[H]
	%	\scriptsize
	\caption{\tabspace[0.2cm] STM\_insert ($t\_id \downarrow, obj\_id \downarrow, key\downarrow, value\downarrow, op\_status\uparrow$) : updates log entry and return op\_status locally.%: Optimistically defers the insert operation till the $\nptc{}$, stores the operational info in local log.
		%\emph{DESCP}\tabspace: Optimistically defers the insert operation till the tryCommit(), stores the operational info in \tabspace[2.2cm] local log.\\
		%\emph{IN}\tabspace \tabspace[0.72cm]: $obj\_id, key, value$  \\
		%\emph{OUT}\tabspace \tabspace[0.37cm]:     
	}
	\label{algo:insert}
	%	\setlength{\multicolsep}{0pt}
	%	\begin{multicols}{2}
	\begin{algorithmic}[1]
		\makeatletter\setcounter{ALG@line}{7}\makeatother
		%			\scriptsize
		\Function{STM\_insert}{} \label{lin:insert1}
		\State STATUS op\_status $\gets$ OK;
		\cmntwa{get the txlog of the current transaction by t\_id}
		\State txlog $\gets$ getTxLog($t\_id \downarrow$);
		\If{$(!$\txlfind$)$} \label{lin:insert2}
		\cmntwa{no le present for this $\langle obj\_id, key\rangle$, create one}
		
		\State le $\gets$ new le$\langle obj\_id \downarrow, key\downarrow\rangle$\label{lin:insert3}; 
		\EndIf \label{lin:insert4}
		
		\cmntwa{le present for $\langle obj\_id, key\rangle$, merely update the log}
		\State \llsval{$value \downarrow$} \label{lin:insert5}; //$\Phi_{lp}$
		\State \llsopn{$INSERT \downarrow$}
		\label{lin:insert6};
		
		\State \llsopst{$OK \downarrow$} \label{lin:insert7};
		\cmntwa{return op\_status to the transaction that invoked insert}
		\State return $\langle op\_status\rangle$;
		\EndFunction \label{lin:insert8}
		
	\end{algorithmic}
	%	\end{multicols}
\end{algorithm}
\cmnt{
	\begin{figure}[H]
		\centering
		\captionsetup{justification=centering}
		\includegraphics[width=6cm, height=6cm]{figs/insertfc.png}
		\caption{Flow of \npins{}}
		\label{fig:insertfc}
	\end{figure}
	%\end{multicols}
}

\nptc{} and \rvmt{} of \otm{} uses \nplsls{} to find the location at the \lsl{} (thus the name) in lock free manner. \Lineref{lslsearch3} to \Lineref{lslsearch8} and \Lineref{lslsearch11} to \Lineref{lslsearch15} of \algoref{lslsearch} find the location at \lsl{} for $\bn$ and $\rn$ respectively. This is motivated by the search in lazylist \cite[section 9.7]{Herlihy:ArtBook:2012}. The $preds$ and $currs$ thus identified are subjected to \npintv{} of \algoref{interferenceValidation} and \nptov{} of \algoref{tovalidation} after acquiring locks on the $preds$ and $currs$ (\Lineref{lslsearch17} of \algoref{lslsearch}). If the validation succeeds \nplsls{} returns the correct location to the operation which invoked it, otherwise \nplsls{} retries (if interference detected) or aborts (if time order violated) post releasing locks (\Lineref{lslsearch22}). 

%%%%%%%%%%%%%%%%%%%%%%%%%%%%%%%%%%%%%%%%%%%%%%%%%%%%%%%%%%%%%%%%%%%%%%%%%%
%----------------- lazyrb-list-search ---------------------------
%%%%%%%%%%%%%%%%%%%%%%%%%%%%%%%%%%%%%%%%%%%%%%%%%%%%%%%%%%%%%%%%%%%%%%%%%%
%\begin{spacing}{0.8}
\begin{algorithm}[H]
%	\scriptsize
	%lslSearch($obj\_id \downarrow, key \downarrow, preds[] \uparrow, currs[] \uparrow, value \uparrow, val\_type \downarrow$)
	\caption{rblSearch($t\_id \downarrow, obj\_id \downarrow, key \downarrow, val\_type \downarrow, \preds \uparrow, \currs \uparrow, op\_status \uparrow$) : finds location (\preds \& \currs) for given $ \langle obj\_id, key \rangle$ and returns them in locked state else returns ABORT. %: Find the location of the corresponding $\left\langle obj\_id, key  \right\rangle$ in underlying hash table. First it will search into the blue link then red link. After finding the preds and currs for the node corresponding to the key, it will acquire the locks on it and validate it.
		%\emph{DESCP}\tabspace: Finds location of the corresponding $\left\langle obj\_id, key  \right\rangle$ in underlying hash table. color of preds[] \tabspace[2.2cm] \& currs[] depicts the red or blue node.\\
		%\emph{IN}\tabspace \tabspace[0.72cm]: $\left\langle obj\_id, key  \right\rangle$, $val\_type$  \\
		%\emph{OUT}\tabspace \tabspace[0.37cm]: $preds[], currs[], value$     
	}\label{algo:lslsearch}
%	\setlength{\multicolsep}{0pt}
%	\begin{multicols}{2}
		\begin{algorithmic}[1]
			\makeatletter\setcounter{ALG@line}{22}\makeatother
			\Function{rblSearch}{} \label{lin:lslsearch1}
			
			%\Statex{    //by default setting the op\_status as RETRY}
			%   \State $op\_status$ $\gets$ RETRY; 
			\State STATUS $op\_status$ $\gets$ RETRY;
			%	 \State   /*identify the preds and currs for node corresponding to the key if op\_status is RETRY*/   
			\While{($op\_status$ = \textup{RETRY})} \label{lin:lslsearch2}
			\cmntwa{get the head of the bucket in hash-table}
			\State head $\gets$ \glslhead \label{lin:lslsearch3};
			\cmntwa{init $\bp$ to head}
			\State $\bp$ $\gets$ $\texttt{read}$($head$) \label{lin:lslsearch4}; 
			\cmntwa{init $\bc$ to $\bp.\bn$}
			\State $\bc$ $\gets$ $\texttt{read}$($\bp$.\bn) \label{lin:lslsearch5};
		\algstore{testcont} 
	\end{algorithmic}
\end{algorithm}

\begin{algorithm}[H]
	\begin{algorithmic}[1]
		\algrestore{testcont} 

			\cmntwa{search node $ \langle obj\_id, key \rangle$ location in blue list}
			\While{$(\texttt{read}($$\bc$.\textup{key}$)$ $<$ $key)$} \label{lin:lslsearch6}
			\State $\bp$ $\gets$ $\bc$ \label{lin:lslsearch7};
			
			\State $\bc$ $\gets$ $\texttt{read}$($\bp$.\bn) \label{lin:lslsearch8};
			
			\EndWhile \label{lin:lslsearch9}
			%		\State    /*get the value*/
			%		\State $value$ $\gets$ $\texttt{read}$(\textcolor{blue}{$currs[1]$}.value) \label{lin:lslsearch10};
			\cmntwa{ init $\rp$ to $\bp$}
			\State $\rp$ $\gets$ $\bp$ \label{lin:lslsearch11};
			\cmntwa{init $\rc$ to $\bp.\rn$}
			\State $\rc$ $\gets$ $\rp$.\rn \label{lin:lslsearch12};
			\cmntwa{search node $ \langle obj\_id, key \rangle$ location in red list between \bp \& \bc}			\While{$(\texttt{read}($$\rc$.\textup{key}$)$ $<$ $key)$} \label{lin:lslsearch13}
			
			\State $\rp$ $\gets$ $\rc$ \label{lin:lslsearch14};
			\State $\rc$ $\gets$ $\texttt{read}$($\rp$.\rn) \label{lin:lslsearch15};
			
			\EndWhile \label{lin:lslsearch16}
			\cmntwa{acquire the locks on increasing order of keys}
			\State acquirePred\&CurrLocks($ \preds \downarrow$, $ \currs \downarrow$)\label{lin:lslsearch17};
			%		\State  \label{lin:lslsearch18};
			
			%		\State  \label{lin:lslsearch19};
			
			%		\State \textcolor{blue}{$currs[1]$}.$\texttt{lock()}$ \label{lin:lslsearch20};
			\cmntwa{validate the location recorded in \preds \& \currs. Also verify if the transaction has to be aborted.}
			\State validation($t\_id \downarrow$, $key$ $\downarrow$, $\preds$ $\downarrow$, $\currs$ $\downarrow$, $val\_type$ $\downarrow$, $op\_status \uparrow$)\label{lin:lslsearch21};	
			%\State $op\_status$ $\gets$ \validation \label{lin:lslsearch21};
			\cmntwa{if validation returns op\_status as RETRY or ABORT then release all the locks}
			\If{(($op\_status$ = \textup{RETRY}) $\lor$ ($op\_status$ = \textup{ABORT}))} \label{lin:lslsearch22}
			\cmntwa{release all the locks}
			\State{releasePred\&CurrLocks($\preds \downarrow$, $\currs \downarrow$)}
			
			\EndIf \label{lin:lslsearch27}
			
			\EndWhile \label{lin:lslsearch28}
			
			\State return $\langle \preds, \currs, op\_status \rangle$ \label{lin:lslsearch29};
			
			\EndFunction \label{lin:lslsearch30}
		\end{algorithmic}
%	\end{multicols}
\end{algorithm}
%\end{spacing}

Interference validation helps detecting the execution where underlying data structure has been changed by second concurrent transaction while first was under execution without it realizing. This can be illustrated with \figref{nostm8}. Consider the history in \figref{nostm8}(iii) where two conflicting transactions $T_1$ and $T_2$ are trying to access key $k_5$, here $s_1$, $s_2$ and $s_3$ represent the state of the \lsl{} at that instant. Let at $s_1$ both the methods record the same $preds \langle k_1, k_3 \rangle $ and $currs \langle k_5, k_5 \rangle$ with the help of $\nplsls{}$ for key $k_5$ (refer \figref{nostm8}(i)). Now, let $d_1(k_5)$ acquire the lock on the $preds$ and $currs$ before the $l_2(k_5)$ and delete the node corresponding to the key $k_5$ from $\bn$ leading to state $s_2$ (in \figref{nostm8}(iii)) and commit. \figref{nostm8}(ii) shows the state $s_2$ where key $k_5$ is the part of $\rn$. Now, \npintv{} (in \algoref{interferenceValidation}) will identify that location of $l_2(k_5)$ is no more valid due to ($\bp.\bn$ $\neq$ $\bc$) at \Lineref{iv2} of \algoref{interferenceValidation}. Thus, $\nplsls{}$ will retry to find the updated location for $l_2(k_5)$ at state $s_3$ (in \figref{nostm8}(iii)) and eventually $T_2$ will commit.
%\vspace{-.5cm}
%\vspace{-.5cm}
\begin{figure}[H]
	\centering
	\captionsetup{justification=centering}
	\centerline{\scalebox{0.5}{\input{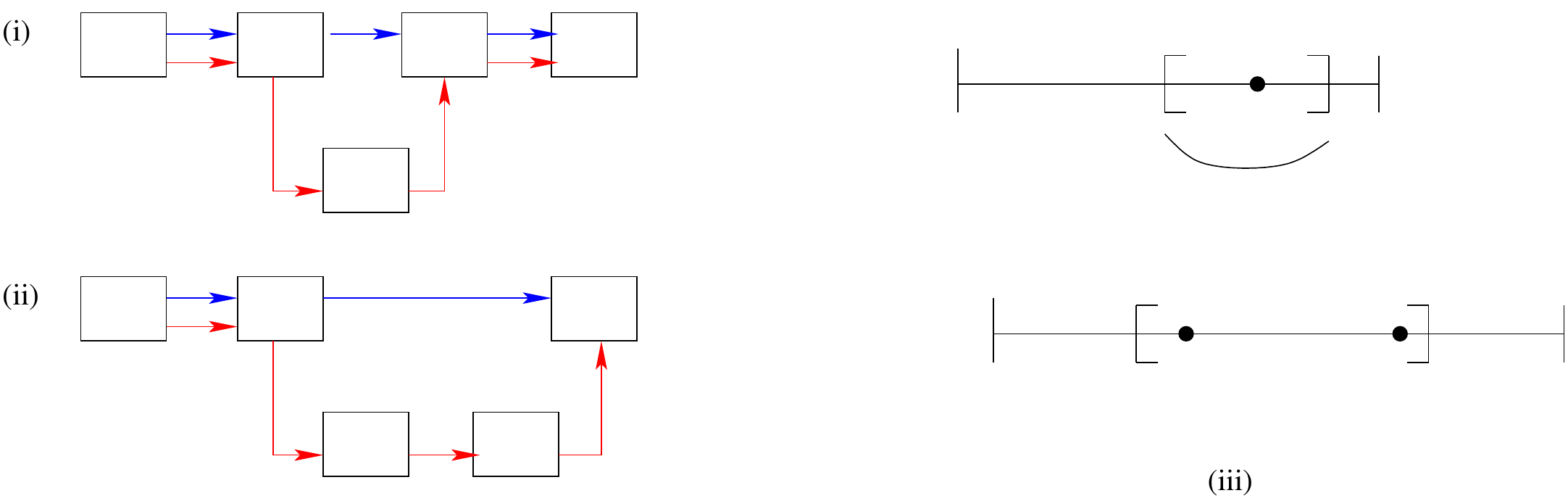_t}}}
	\caption{Interference Validation for conflicting concurrent methods on key $k_5$}
	\label{fig:nostm8}
\end{figure}
%\vspace{-.5cm}

%%%%%%%%%%%%%%%%%%%%%%%%%%%%%%%%%%%%%%%%%%%%%%%%%%%%%%%%%%%%%%%%%%%%%%%%%%
%-----------------------------------------LOOKUP-------------------------------------%------
%----
%%%%%%%%%%%%%%%%%%%%%%%%%%%%%%%%%%%%%%%%%%%%%%%%%%%%%%%%%%%%%%%%%%%%%%%%%%
%\begin{spacing}{2}
\begin{algorithm}[H]
	%\algsetup{linenosize=\tiny}
%	\scriptsize
	\caption{\tabspace[0.2cm] STM\_lookup($t\_id \downarrow, obj\_id \downarrow, key \downarrow, value \uparrow, op\_status \uparrow$ ) %: If the transaction to which this operation belongs has locally done an operation on the same key then returns apt value and status(wrt the previous local operation). Else do the \nplsls{} to find the correct location of the key and validate it.
		%\emph{DESCP}\tabspace: If the Tx has locally done an operation returns apt value and status else does TO \tabspace[2.2cm] and method validation traversal phase in only.\\
		%\emph{IN}\tabspace \tabspace[0.72cm]: $obj\_id$, $key$\\
		%\emph{OUT}\tabspace \tabspace[0.37cm]: $value$, $op\_status$     
	}
	\label{algo:lookup}
%	\setlength{\multicolsep}{0pt}
%	\begin{multicols}{2}
		
		\begin{algorithmic}[1]
			\makeatletter\setcounter{ALG@line}{22}\makeatother
			\Function{STM\_lookup}{} \label{lin:lookup1}
			\State STATUS $op\_status$ $\gets$ RETRY \label{lin:lookup2};
			\cmntwa{get the txlog of the current transaction by t\_id}
			\State txlog $\gets$ getTxLog($t\_id \downarrow$);
			
			\cmntwa{If already in log update the le with the current operation}
			\If{$($\txlfind$)$} \label{lin:lookup3}
			%            \State    /*get the previous operation's name*/
			\State opn $\gets$ \llgopn{} \label{lin:lookup4}; %\Comment{$\Phi_{lp}$}
			
			\cmntwa{if previous operation is insert/lookup then current method would have value/op\_status same as previous log entry}
			\If{$(($\textup{INSERT} $=$ \textup{opn} $)||($ \textup{LOOKUP} $=$ \textup{opn}$))$} \label{lin:lookup5}
			
			\State $value$ $\gets$ \llgval{} \label{lin:lookup6};
			\State $op\_status$ $\gets$  $le.getOpStatus$($obj\_id \downarrow$, $key \downarrow$) \label{lin:lookup7};
			\cmntwa{if previous operation is delete then current method would have value as NULL and op\_status as FAIL}
			\ElsIf{$($\textup{DELETE} $=$ \textup{opn}$)$} \label{lin:lookup8}
			\State $value$ $\gets$ NULL \label{lin:lookup9}; 
			\State $op\_status$ $\gets$ FAIL \label{lin:lookup10}; 
			\EndIf \label{lin:lookup11}
			\Else \label{lin:lookup12}
			\cmntwa{common function for \rvmt{}, if node corresponding to the key is not the part of underlying DS}
			\State \cld{};
			\cmnt{            
				\cmntwa{le$\langle obj\_id,key \rangle$ is not in log, search correct location for the operation over lsl and lock the corresponding \preds and \currs.}
				\State \lsls{$RV \downarrow$} \label{lin:lookup13}; 
				\If{$(op\_status$ $=$ \textup{ABORT}$)$} \label{lin:lookup14}
				\cmntwa{release local memory in case lslSearch returns abort}
				\State \handlea{} \label{lin:lookup15};
				\State return $\langle op\_status\rangle$;
				\Else \label{lin:lookup16}
				\cmntwa{if node$ \langle obj\_id, key \rangle$ is present update its lookup timestamp}	\If{$(\texttt{read}($$\bc$.\textup{key}$)$ $=$ key$)$} \label{lin:lookup17}
				\cmntwa{$node<obj\_id, key>$ is part of blue list}
				\State $op\_status$ $\gets$ OK \label{lin:lookup18};
				%\State /*update the max\_ts of lookup for node corresponding to the key*/
				\State $\texttt{write}$($\bc$.max\_ts.lookup, TS($t\_id$)) \label{lin:lookup19};
				\State $value$ $\gets$ $\bc.value$ \label{lin:lookup20};
				\ElsIf{$(\texttt{read}($$\rc$.\textup{key}$)$ $=$ key$)$} \label{lin:lookup21}
				\cmntwa{$node<obj\_id, key>$ is part of red list}
				\State $op\_status$ $\gets$ FAIL \label{lin:lookup22};
				%\State    /*update the max\_ts of lookup for node corresponding to the key*/
				\State $\texttt{write}$($\rc$.max\_ts.lookup, TS($t\_id$)) \label{lin:lookup23};
				
				\State $value$ $\gets$ NULL \label{lin:lookup24};
				\Else \label{lin:lookup25}
				\cmntwa{if node$<obj\_id,key>$ is neither in blue or red list add the node in red list and update timestamp}
				\State \lslins{\textcolor{red}{$RL$} $\downarrow$} \label{lin:lookup26}; 
				\State $op\_status$ $\gets$ FAIL \label{lin:lookup27};
				\State $\texttt{write}$(\textcolor{red}{$sh\_node$}.max\_ts.lookup, TS($t\_id$)) \label{lin:lookup28};
				\State $value$ $\gets$ NULL \label{lin:lookup29};
				\EndIf \label{lin:lookup30}
				
				\cmntwa{release all the locks}
				\State{releasePred\&CurrLocks($\preds \downarrow$, $\currs \downarrow$);}

				\cmntwa{new log entry created to help upcoming method on the same key of the same tx}
				\State le $\gets$ new le$\langle obj\_id \downarrow, key\downarrow\rangle$ \label{lin:lookup31}; 
				\State \llsval{$NULL \downarrow$};
				\State \llspc{} \label{lin:lookup32};
				
				\State \llsopn{$LOOKUP \downarrow$} \label{lin:lookup33};
				\EndIf \label{lin:lookup38}
			}					
			\EndIf \label{lin:lookup39}
			\cmntwa{update the local log}
			\State \llsopn{$LOOKUP \downarrow$} \label{lin:lookup33};
			\State \llsopst{$op\_status \downarrow$} \label{lin:lookup40};
			\State return $\langle value, op\_status\rangle$\label{lin:lookup41}; 
			
			\EndFunction \label{lin:lookup42}
		\end{algorithmic}
		
%	\end{multicols}
	
\end{algorithm}

Consider $STM\_lookup_i(ht, k)$. If this is the subsequent operation by a transaction $T_i$ for a particular key $k$ on hash-table $ht$ i.e. an operation on $k$ has already been scheduled with in the same transaction $T_i$, then this \npluk{} return the value from the ll\_list and does not access shared memory (\Lineref{lookup3} to \Lineref{lookup10} in \algoref{lookup}). If the last operation was an \npins{} (or \npluk{}) on same key then the subsequent \npluk{} of the same transaction returns the previous value (\Lineref{lookup6} in \algoref{lookup}) inserted (or observed) without accessing shared memory, and if the last operation was an \npdel{} then \npluk{} returns the value NULL (\Lineref{lookup9} in \algoref{lookup}). Thus in this process subsequent \mth{s} also have same conflicts as the first \mth{} on same key within the same transaction (\emph{conflict inheritance}).

If \npluk{} is the first operation on a particular key then it has to do a wait free traversal (\Lineref{delete21} in \algoref{commonLu&Del}) with the help of $\nplsls$ (\algoref{lslsearch}) to identify the target node ($preds$ and $currs$) to be logged in ll\_list for subsequent \mth{s} in \rvp{} (discussed above for the case where \npluk{} is the subsequent \mth{}). If the node is present as blue (or red) node then it updates the operation status as OK (or FAIL) and returns the value respectively (\Lineref{delete25} to \Lineref{delete32} in \algoref{commonLu&Del}). If node corresponding to the key is not found  then it inserts that node (\Lineref{delete33} to \Lineref{delete37} in \algoref{commonLu&Del}) corresponding to the key into  $\rn$ of \lsl{}. The inserted node can be accessed only via red links. Hence, it will not visible to any subsequent \npluk{}. The node is inserted to take care of situations as  illustrated in \figref{nostm25} \& \figref{nostm26} . Finally, it updates the meta information in ll\_list and releases the locks acquired inside $\nplsls$ (\Lineref{delete433}).   %This also helps to show that the transactions take effect atomically.

We prefer \npluk{} to be validated instantly and is never validated again in \nptc{} as the design choice to aid performance. Let's consider \otm{} history in \figref{nostm19}(i), if we would have validated $l(ht, k_1, v_0)$ again during $tryC$, $T_1$ would abort due to time order violation\cite{WeiVoss:2002:Morg}, but we can see that this history is acceptable where $T_1$ can be serialized before $T_2$ (\figref{nostm19}(ii)). Thus, \otm{} prevents such unnecessary aborts. Another advantage for this design choice is that $T_1$ doesn't have to wait for $tryC$ to know that the transaction is bound to abort as can be seen in \figref{nostm19}(iii). Here $l(ht, k_1, Abort)$ instantly aborts as soon as it realizes that time order is violated and schedule can no more be ensured to be correct saving significant computations of $T_1$.  This gain becomes significant if the application is lookup intensive where it would be inefficient to wait till \nptc{} to validate the \npluk{} only to know that transaction has to abort.

%\vspace{-.3cm}
\begin{figure}[H]
	\centering
	\captionsetup{justification=centering}
	{\scalebox{0.5}{\input{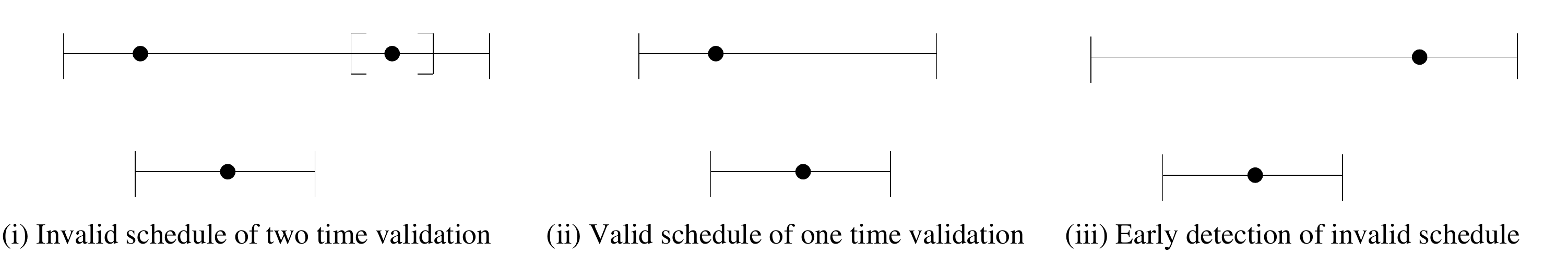_t}}}
	\caption{Advantages of lookup validated once}
	\label{fig:nostm19}
\end{figure}
%\vspace{-.5cm}
%If the \npluk{} is the first ever operation of the STMs on a key then we invisibly insert a corresponding node in \lsl{} during validation phase of this \npluk{}. This node needs to be inserted in \lsl{} to aid validation and detect time order violations\cite{WeiVoss:2002:Morg} against the concurrent transactions. This insertion is invisible unless some \npins{} arrives i.e. marked field of this node is set to true. \npluk{} returns fail if node has marked field true, i.e. the key is in logically deleted state, it returns $ok$ if node's marked field is false and it may also return abort if \npluk{} sees inconsistent state from shared memory thus it helps to ensure opacity\cite{GuerKap:2008:PPoPP}.

%\end{spacing}
%\begin{multicols}{2}
%\vspace{-.20cm}
\setlength{\textfloatsep}{0pt}

%\end{multicols}
%\vspace{-0.5cm}

%\npdel{} of \algoref{delete} behaves as \npluk{} (during \emph{\rvmt{} execution} phase) but it is validated twice. First, in \emph{\rvmt{} execution} similar to \npluk{} and secondly in \emph{\upmt{} execution} (of \nptc{}) to ensure opacity\cite{GuerKap:2008:PPoPP}. We adopt lazy delete approach for \npdel{} \mth{}. Thus, nodes are marked for deletion and not physically deleted for \npdel{} \mth{}. In the current work we assume that a garbage collection mechanism is present and we donot worry about it.

\npdel{} (\algoref{delete}) in \rvp{} executes as similar to \rvmt{} and in \cp{} executes as \upmt{}. In \rvp{}, the \npdel{} first checks if their is already a previous \mth{} on same $key$ using the local log. In case their is already a \mth{} that executed on same $key$, \npdel{} does not need to touch shared memory and sees the effect of the previous \mth{} and returns accordingly (\Lineref{delete3} to \Lineref{delete18}). For example if previous executed \mth{} is an insert then the current \npdel{} \mth will return $OK$ (\Lineref{delete5} to \Lineref{delete9}). If the previous executed \mth{} is an \npdel{} then the current \npdel{} should return $FAIL$ (\Lineref{delete10} to \Lineref{delete13}). In case previous \mth{} was \npluk{} then current \npdel{} returns the status same as that of the previous \npluk{} \mth{} also overwriting the log for the $value$ and $opn$.

%%%%%%%%%%%%%%%%%%%%%%%%%%%%%%%%%%%%%%%%%%%%%%%%%%%%%%%%%%%%%%%%%%%%%%%%%%
%-------------------------------------DELETE------------------------------------------------------
%%%%%%%%%%%%%%%%%%%%%%%%%%%%%%%%%%%%%%%%%%%%%%%%%%%%%%%%%%%%%%%%%%%%%%%%%%
\begin{algorithm}[H]
	%	\scriptsize
	\caption{\tabspace[0.2cm] STM\_delete($t\_id \downarrow, obj\_id \downarrow, key \downarrow, value \uparrow, op\_status \uparrow$ ) %: If the transaction has locally done an operation on the same key then returns apt value and status. Else do the \nplsls{} to find the correct location of the key and validate it after that locally logs the method information to be revalidated and written in underlying data-structure during \nptc{}.
		%\emph{DESCP}\tabspace: If the Tx has locally done an operation returns apt value else validates and locally logs the \tabspace[2.2cm]operational information to be revalidated and written in underlying DS during tryCommit().\\
		%\emph{IN}\tabspace \tabspace[0.72cm]: $obj\_id$, $key$\\
		%\emph{OUT}\tabspace \tabspace[0.37cm]: $value$, $op\_status$     
	}
	\label{algo:delete}
	%	\setlength{\multicolsep}{0pt}
	%	\begin{multicols}{2}
	\begin{algorithmic}[1]
		\makeatletter\setcounter{ALG@line}{74}\makeatother
		\Function{STM\_delete}{} \label{lin:delete1}
		
		%		\Statex{    //by default setting the op\_status as RETRY}
		%		\State $op\_status$ $\gets$ RETRY \label{lin:delete2};
		\State STATUS $op\_status$ $\gets$ RETRY;
		\cmntwa{get the txlog of the current transaction by t\_id}
		\State txlog $\gets$ getTxLog($t\_id \downarrow$);
		\cmntwa{If le$\langle obj\_id, key \rangle$ already in log, update the le with the current operation}
		\If{$($\txlfind$)$} \label{lin:delete3}
		%\State /*get the previous operation on the same key to local log*/
		\State opn $\gets$ \llgopn{} \label{lin:delete4}; %\Comment{$\Phi_{lp}$}
		\cmntwa{if previous local method is insert and current operation is delete then overall effect should be of delete, update log accordingly}
		\If{$($\textup{INSERT} $=$ \textup{opn}$)$} \label{lin:delete5}
		\State $value$ $\gets$ \llgval{} \label{lin:delete6};
		
		\State \llsval{$NULL \downarrow$} \label{lin:delete7};
		
		\State \llsopn{$DELETE \downarrow$} \label{lin:delete8};
		
		\State $op\_status$ $\gets$ OK \label{lin:delete9};
		\cmntwa{if previous local method is delete and current operation is delete then overall effect should be of delete, update log accordingly}
		\ElsIf{$($\textup{DELETE} $=$ \textup{opn}$)$} \label{lin:delete10}
		\State \llsval{$NULL \downarrow$} \label{lin:delete11};
		\State $value$ $\gets$ NULL \label{lin:delete12}; 
		\State $op\_status$ $\gets$ FAIL \label{lin:delete13}; 
		\Else \label{lin:delete14}
		\cmntwa{if previous local method is lookup and current operation is delete then overall effect should be of delete, update log accordingly}
		\State $value$ $\gets$ \llgval{} \label{lin:delete15}; 
		
		\State \llsval{$NULL \downarrow$} \label{lin:delete16};

		\State \llsopn{$DELETE \downarrow$} \label{lin:delete17};
		\State $op\_status$ $\gets$  $le.getOpStatus$($obj\_id \downarrow$, $key \downarrow$) \label{lin:delete18};
		
		\EndIf \label{lin:delete19}
		\Else \label{lin:delete20}
		\cmntwa{common function for \rvmt{}, if node corresponding to the key is not the part of underlying DS}
		\algstore{testcont} 
	\end{algorithmic}
\end{algorithm}

\begin{algorithm}[H]
	\begin{algorithmic}[1]
		\algrestore{testcont} 
		
		\State \cld{};
		\cmnt{		
			\cmntwa{le$ \langle obj\_id,key \rangle$ is not in log, search correct location for the operation over lsl and lock the corresponding \preds and \currs.}
			\State \lsls{$RV \downarrow$} \label{lin:delete21};
			\If{$(op\_status$ $=$ \textup{ABORT}$)$} \label{lin:delete22}
			%		\State release all the locks
			\cmntwa{release local memory in case lslSearch returns abort}
			\State \handlea{} \label{lin:delete23}; 
			\State return $\langle op\_status\rangle$;

			\Else \label{lin:delete24}
			\cmntwa{if node$ \langle obj\_id, key \rangle$ is present update its lookup timestamp as delete in rv phase behaves as lookup}
			\If{$(\texttt{read}($$\bc$.\textup{key}$)$ $=$ key$)$} \label{lin:delete25}
			\cmntwa{$node<obj\_id, key>$ is part of blue list}
			\State $op\_status$ $\gets$ OK \label{lin:delete26};
			\State $\texttt{write}$($\bc$.max\_ts.lookup, TS($t\_id$)) \label{lin:delete27};
			\State $value$ $\gets$ $\bc.value$ \label{lin:delete28};
			
			\ElsIf{$(\texttt{read}($$\rc$.\textup{key}$)$ $=$ key$)$} \label{lin:delete29}
			\cmntwa{$node<obj\_id, key>$ is part of red list}		
			
			\State $op\_status$ $\gets$ FAIL \label{lin:delete30};
			%		\cmntwa{update the max\_ts of lookup for node corresponding to the key}
			
			\State $\texttt{write}$($\rc$.max\_ts.lookup, TS($t\_id$)) \label{lin:delete31};
			\State $value$ $\gets$ NULL \label{lin:delete32};
			\Else \label{lin:delete33}
			\cmntwa{if node$<obj\_id,key>$ is neither in blue or red list add the node in red list and update timestamp}
			\State \lslins{\textcolor{red}{$RL$} $\downarrow$} \label{lin:delete34};
			\State $op\_status$ $\gets$ FAIL \label{lin:delete35};
			
			\State $\texttt{write}$(\textcolor{red}{$sh\_node$}.max\_ts.lookup, TS($t\_id$)) \label{lin:delete36};
			
			\State $value$ $\gets$ NULL \label{lin:delete37};
			\EndIf \label{lin:delete38}
			\cmntwa{release all the locks}
			\State{releasePred\&CurrLocks($\preds \downarrow$, $\currs \downarrow$)}\label{lin:delete433};
			\cmntwa{create new log entry in log}
			
			\State le $\gets$ new le$\langle obj\_id \downarrow, key\downarrow\rangle$\label{lin:delete39};
			\State \llsval{$NULL \downarrow$} \label{lin:delete40};
			
			\State \llspc{} \label{lin:delete41};
			
			\State \llsopn{$DELETE \downarrow$} \label{lin:delete42};
			
			\EndIf\label{lin:delete47}
		}
		\EndIf \label{lin:delete48}
		\cmntwa{update the local log}
		\State \llsopn{$DELETE \downarrow$} \label{lin:delete42};
		
		\State \llsopst{$op\_status \downarrow$} \label{lin:delete49};
		\State return $\langle value, op\_status\rangle$\label{lin:delete50};
		\EndFunction \label{lin:delete51}
	\end{algorithmic}
	%	\end{multicols}	
\end{algorithm}

\textbf{\upmt{} execution phase:} Finally a transaction after executing the designated operations reaches the \emph{\upmt{} execution} phase executed by the \nptc{} method. It starts with modifying the log to $ordered\_ll\_list$ which contains the log entries in sorted order of the keys (so that locks can be acquired in an order, refer \Lineref{tryc4} of \algoref{trycommit}) and contains only the \upmt{} (because we do not validate the lookup again for the reasons explained above for \figref{nostm19}). From \Lineref{tryc5} to \Lineref{tryc12} (in \algoref{trycommit}) we re-validate the modified log operation to ensure that the location for the operations has not changed since the point they were logged during \rvp{}. If the location for an operation has changed this block ensures that they are updated. Now, \nptc{} enters the phase where it updates the shared memory using local data stored from \Lineref{tryc14} to \Lineref{tryc37} in \algoref{trycommit}. \figref{nostm10} \& \figref{nostm9} explain the execution of insert and delete in update phase of \nptc{} using \nplslins{} and \nplsldel{} respectively. \figref{nostm10}(i) represents the case when $k_5$ is neither present in $\bn$ and nor in $\rn$ (\Lineref{tryc28} to \Lineref{tryc31} in \algoref{trycommit}). It adds $k_5$ to \lsl{} at location $preds\langle k_3, k_4 \rangle $ and $currs\langle k_8, k_8 \rangle$. \figref{nostm10}(i)(a) is \lsl{} before addition of $k_5$ and \figref{nostm10}(i)(b) is \lsl{} state post addition. Similarly, \figref{nostm10}(ii) represents the case when $k_5$ is present in $\rn$ (\Lineref{tryc24} to \Lineref{tryc27} in \algoref{trycommit}). It adds $k_5$ to \lsl{} at location $pred \langle k_3, k_4 \rangle $ and $curr \langle k_5, k_8 \rangle$. \figref{nostm10}(i)(c) is \lsl{} before addition of $k_5$ into \bn{} and \figref{nostm10}(i)(d) is \lsl{} state post addition. In case of $d(k_5)$ from \lsl{} when $k_5$ is present in $\bn$ (\Lineref{tryc33} to \Lineref{tryc37} in \algoref{trycommit}) \figref{nostm9}(i) represent the \lsl{} state before $k_5$ is deleted at location $preds \langle k_1, k_3 \rangle $ and $currs \langle k_5, k_5 \rangle $ and \figref{nostm9}(ii) represents the \lsl{} state after deletion.       
%\vspace{-.5cm}
\begin{figure}[H]
	\centering
	\captionsetup{justification=centering}
	\centerline{\scalebox{0.55}{\input{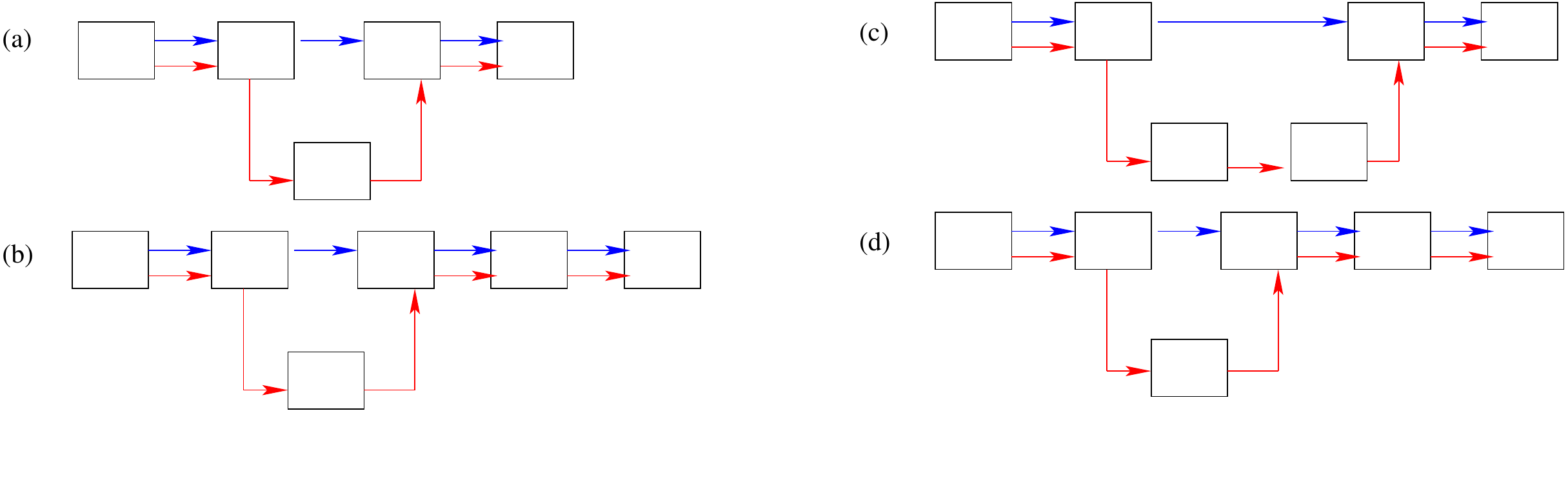_t}}}
	\caption{$i(k_5)$ using \nplslins{} in $\nptc{}$}
	\label{fig:nostm10}
\end{figure}
%\vspace{-.5cm}
\cmnt{
	Optimistically, \nptc{} of \algoref{trycommit}, find preds and currs for each $\upmt{}$ with the help of \nplsls{} method of \algoref{lslsearch}. Then \nplsls{} validates preds and currs for each $\upmt{}$. %Shared memory-   is accessed by acquiring ordered locks similar to 2PL mechanism\cite{WeiVoss:2002:Morg}. 
	After successful validation of \nplsls{}, \nptc{} performs \nppoval{} of \algoref{povalidation}, to get the program order validation. If all the validations are successful then shared memory is modified aptly before locks will release.  Otherwise \nptc{} returns abort. We avoid the overhead of rollbacks in case of transaction aborts as transaction are not allowed to modify the shared memory. After successful validation of \nptc{}, the actual effect of \npins{} and \npdel{} takes place atomically.
}
\cmnt{
	Consider \figref{nostm10} and \figref{nostm9} to illustrate more about \nptc{}. In \figref{nostm10}, we are taking two cases. First, last $\upmt{}$ of $\nptc{}$ wants to insert key $k_5$ in \figref{nostm10}(i), where $k_5$ is not the part of $\bn$ as well as $\rn$. Then it will create the node corresponding to the key $k_5$ and insert it into $\bn$ as well as $\rn$ with the help of $\nplslins{}$ of \algoref{lslins} (refer \figref{nostm10}(ii)). Second, last $\upmt{}$ of $\nptc{}$ wants to insert key $k_5$ in \figref{nostm10}(iii), where key $k_5$ is part of $\rn$ but not $\bn$. So after finding the location with the help of $\nplsls{}$ of \algoref{lslsearch}, $\nptc{}$ will add that node into $\bn$ as well by $\nplslins{}$ of \algoref{lslins} (refer \figref{nostm10}(iv)).  
}
%\vspace{-.4cm}
\begin{figure}[H]
	\centering
	\captionsetup{justification=centering}
	\centerline{\scalebox{0.55}{\input{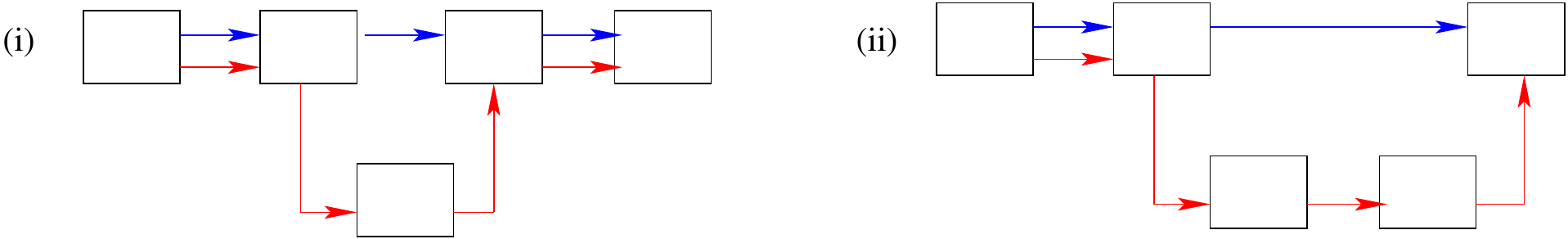_t}}}
	\caption{$d(k_5)$ using \nplsldel{} in $\nptc{}$}
	\label{fig:nostm9}
\end{figure}
%\vspace{-.5cm}
\cmnt{
	Last $\upmt{}$ of $\nptc{}$ wants to delete key $k_5$ in \figref{nostm9}(i), where $k_5$ is the part of $\bn$ as well as $\rn$. After finding the location with the help of \nplsls{} of \algoref{lslsearch}, $\nptc{}$ deletes the node corresponding to the key $k_5$ from the $\bn$ by $\nplsldel{}$ of \algoref{lsldelete} (refer \figref{nostm9}(ii)).
}
\cmnt{
	We correct this through $\nppoval{}$ which is $\nppoval{}$ is invoked before every $\upmt{}$ over the \lsl{} in update phase of \nptc{}(\Lineref{tryc14} to \Lineref{tryc40} of \algoref{trycommit}). \figref{nostm16} represents the functionality of $\nppoval{}$ of \algoref{povalidation}. Here, If \nppoval{} fails for any \upmt{} then as a corrective measure the $preds$ and $currs$ of the \upmt{} under execution will be updated using the previous \upmt{}'s $preds$ and $currs$ with the help of its $le$}
\cmnt{In \figref{nostm16}(i), same transaction is trying to insert key $k_5$ and $k_7$ in between the node $k_3$ and $k_8$. Before inserting the key $k_5$ both the operations recorded the same preds and currs (deferred write approach). But after successful insertion of key $k_5$ (refer \figref{nostm16}(ii)), preds for key $k_7$ has been changed. To solve this issue $\nppoval{}$ of \algoref{povalidation}, will check the preds and currs after every $\upmt{}$ of $\nptc{}$. If any changes occur then it updates the preds and currs with the help of previous $\upmt{}$. The same technique is used in the case of delete methods (\figref{nostm16}(iv)). After the successful insertion on key $k_5$ in \figref{nostm16}(v), it will search the new preds and currs with the help of $\nppoval{}$. New inserted node should be lock until the unlock of other preds and currs of same transaction. For example, all nodes ($k_5$ and $k_7$ in \figref{nostm16}(iii)) are locked so other transaction can't acquire the lock simultaneously. Please refer \secref{algorithmm} in appendix for more detail for pseudocode. 
}
%\vspace{-.5cm}
\begin{figure}[H]
	\centering
	\captionsetup{justification=centering}
	\scalebox{.5}{\input{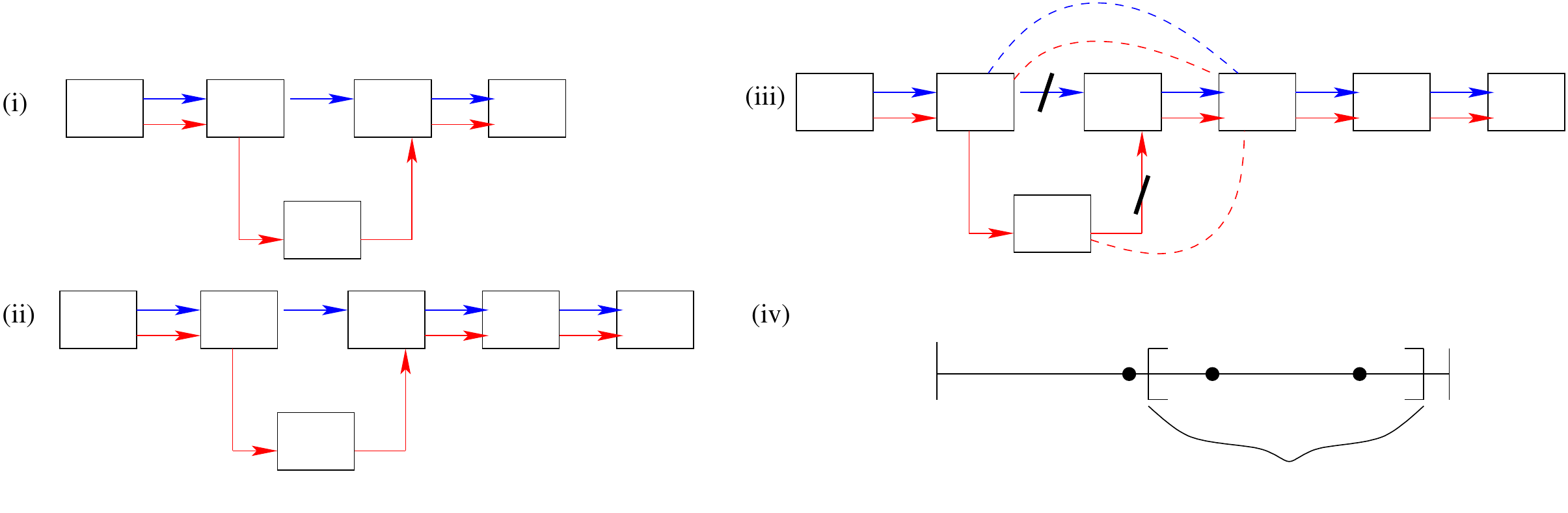_t}}
	% \vspace{-16cm}
	\captionof{figure}{Problem in execution without lostUpdateValidation() ($i_1(k_5)$ and $i_1(k_7)$). (i) \lsl{} at state s. (ii) \lsl{} at state $s_1$. (iii) \lsl{} at state $s_2$ (lost update problem).}
	\label{fig:nostm16}
\end{figure}
%\vspace{-.5cm}
\noindent
In \cp{} two consecutive updates within same transaction having overlapping $preds$ and $currs$ may overwrite the previous \mth{} such that only effect of the later \mth{} is visible (\emph{lost update}). This happens because the previous \mth{} while updating, changes the \lsl{} causing the $preds$ \& $currs$ of the next \mth{} working on the consecutive key to become obsolete. \figref{nostm16} explains this lucidly. Suppose, $T_1$ is in update phase of \nptc{} at state $s$ where $ins_1(k_5)$ and $ins_1(k_7)$ are waiting to take effect over the \lsl{}. The \lsl{} at $s$ is as in \figref{nostm16}(i) also $ins_1(k_5)$ and $ins_1(k_7)$ have $preds \langle k_3, k_4 \rangle $ and $currs \langle k_8, k_8 \rangle $ as their location. Now, Lets say $ins_1(k_5)$ adds $k_5$ between $k_3$ and $k_8$ and changes \lsl{} (as in \figref{nostm16}(ii)) at state $s_1$ in \figref{nostm16}(iv). But, at $s_1$ $\bn$ $preds$ and $currs$ of $ins_1(k_7)$ are still $k_3$ and $k_8$ thus it wrongly adds $k_7$ between $k_3$ and $k_8$ overwriting $ins_1(k_5)$ as shown in \figref{nostm16}(iii) with dotted links. We correct this through $\nppoval{}$ which updates current \upmt{}'s $preds$ and $currs$ with the help of its $le$. We discuss lost update validation in detail in at \algoref{povalidation}.

In case the current \npdel{} is not the first \mth{} on $key$ then it touches the shared memory to identify the correct location over the \tab{} from \Lineref{delete20} to \Lineref{dete49}. \nplsls{} gives the correct location for the current \npdel{} to take effect over the \tab{} in form of $preds$ and $currs$ (\Lineref{delete21}) along with the validation status which reveals weather the \npdel{} will succeed or abort. If the $op\_status$ is Abort, the \mth{} simply aborts the transaction. Otherwise, \npdel{} updates the local log and the time stamps of the corresponding nodes in the \lsl{} of the \tab{} from line \Lineref{delete24} to \Lineref{delete49}. From \Lineref{delete25} to \Lineref{delete28}, \npdel{} observes that the node to be deleted is reachable from $\bn$ i.e. it is \bc{} thus it updates it's time-stamp field and returns $op\_status$ to $OK$ with the value of \bc{} (the update corresponding to this case takes place in \nptc{} as represented in \figref{1nostm9}). From \Lineref{delete29} to \Lineref{delete32}, \npdel{} observes that the node to be deleted is reachable by $\rn$ i.e. it is \rc{} thus it updates its time-stamp field and sets $op\_status$ to $FAIL$ (as the node is dead node or marked for deletion) and value returned is $NULL$. Otherwise, in \Lineref{delete33} to \Lineref{delete37} the node is not at all present in \lsl{}. Thus first \npdel{} adds a node in \rn{} and updates its time-stamp and returns the $value$ as $NULL$ and sets the $op\_status$ as $FAIL$ (\figref{0nostm} and \figref{1nostm1} represents the case). \Lineref{delete40}, \Lineref{delete41} and \Lineref{delete42} sets the $value$, location and $opn$ in local log respectively. At \Lineref{delete433} the locks acquired(in invoked \nplsls{}) to update shared memory time-stamps are released in order.
%\vspace{-.5cm}
\begin{figure}[H]
	\centering
	\begin{minipage}[b]{0.49\textwidth}
		\scalebox{.47}{\input{figs/nostm.pdf_t}}
		\centering
		\captionsetup{justification=centering}
		% \vspace{-16cm}
		\captionof{figure}{$k_{10}$ is not present in $\bn$ as well as $\rn$}
		\label{fig:0nostm}
	\end{minipage}
	\hfill
	\begin{minipage}[b]{0.49\textwidth}
		\centering
		\captionsetup{justification=centering}
		\scalebox{.47}{\input{figs/nostm1.pdf_t}}
		% \vspace{-16cm}
		\captionof{figure}{Adding $k_{10}$ into $\rn$}
		\label{fig:1nostm1}
	\end{minipage}   
\end{figure}

%%%%%%%%%%%%%%%%%%%%%%%%%%%%%%%%%%%%%%%%%%%%%%%%%%%%%%%%%%%%%%%%%%%%%%%%%%
%-------------------------------------commonLu&Del()------------------------------------------------------
%%%%%%%%%%%%%%%%%%%%%%%%%%%%%%%%%%%%%%%%%%%%%%%%%%%%%%%%%%%%%%%%%%%%%%%%%%
\begin{algorithm}[H]
%	\scriptsize
	\caption{\tabspace[0.2cm] commonLu\&Del($t\_id \downarrow, obj\_id \downarrow, key \downarrow, value \uparrow, op\_status \uparrow$ ) %: If the transaction has locally done an operation on the same key then returns apt value and status. Else do the \nplsls{} to find the correct location of the key and validate it after that locally logs the method information to be revalidated and written in underlying data-structure during \nptc{}.
		%\emph{DESCP}\tabspace: If the Tx has locally done an operation returns apt value else validates and locally logs the \tabspace[2.2cm]operational information to be revalidated and written in underlying DS during tryCommit().\\
		%\emph{IN}\tabspace \tabspace[0.72cm]: $obj\_id$, $key$\\
		%\emph{OUT}\tabspace \tabspace[0.37cm]: $value$, $op\_status$     
	}
	\label{algo:commonLu&Del}
%	\setlength{\multicolsep}{0pt}
%	\begin{multicols}{2}
		\begin{algorithmic}[1]
			\makeatletter\setcounter{ALG@line}{74}\makeatother
			\Function{commonLu\&Del}{} 
			
			%		\Statex{    //by default setting the op\_status as RETRY}
			%		\State $op\_status$ $\gets$ RETRY \label{lin:delete2};
			
			\cmntwa{le$ \langle obj\_id,key \rangle$ is not in log, search correct location for the operation over lsl and lock the corresponding \preds and \currs.}
			\State \lsls{$RV \downarrow$} \label{lin:delete21};
			\If{$(op\_status$ $=$ \textup{ABORT}$)$} \label{lin:delete22}
			%		\State release all the locks
			\cmntwa{release local memory in case lslSearch returns abort}
			\State \handlea{} \label{lin:delete23}; 
			\State return $\langle op\_status\rangle$;

			\Else \label{lin:delete24}
			\cmntwa{if node$ \langle obj\_id, key \rangle$ is present update its lookup timestamp as delete in rv phase behaves as lookup}
			\If{$(\texttt{read}($$\bc$.\textup{key}$)$ $=$ key$)$} \label{lin:delete25}
			\cmntwa{$node \langle obj\_id, key \rangle$ is part of blue list}
			\State $op\_status$ $\gets$ OK \label{lin:delete26};
			\State $\texttt{write}$($\bc$.max\_ts.lookup, TS($t\_id$)) \label{lin:delete27};
			\State $value$ $\gets$ $\bc.value$ \label{lin:delete28};
			
			\ElsIf{$(\texttt{read}($$\rc$.\textup{key}$)$ $=$ key$)$} \label{lin:delete29}
			\cmntwa{$node \langle obj\_id, key \rangle$ is part of red list}		
			
			\State $op\_status$ $\gets$ FAIL \label{lin:delete30};
			%		\cmntwa{update the max\_ts of lookup for node corresponding to the key}
			
			\State $\texttt{write}$($\rc$.max\_ts.lookup, TS($t\_id$)) \label{lin:delete31};
			\State $value$ $\gets$ NULL \label{lin:delete32};
			\Else \label{lin:delete33}
			\cmntwa{if node$ \langle obj\_id,key \rangle$ is neither in blue or red list add the node in red list and update timestamp}
			\State \lslins{\textcolor{red}{$RL$} $\downarrow$} \label{lin:delete34};
			\State $op\_status$ $\gets$ FAIL \label{lin:delete35};
			\algstore{testcont} 
		\end{algorithmic}
	\end{algorithm}
	
	\begin{algorithm}[H]
		\begin{algorithmic}[1]
			\algrestore{testcont} 
			
			\State $\texttt{write}$(\textcolor{red}{$sh\_node$}.max\_ts.lookup, TS($t\_id$)) \label{lin:delete36};
			
			\State $value$ $\gets$ NULL \label{lin:delete37};
			\EndIf \label{lin:delete38}
			\cmntwa{release all the locks}
			\State{releasePred\&CurrLocks($\preds \downarrow$, $\currs \downarrow$)}\label{lin:delete433};
			\cmntwa{create new log entry in log}
			
			\State le $\gets$ new le$\langle obj\_id \downarrow, key\downarrow\rangle$\label{lin:delete39};
			\State \llsval{$NULL \downarrow$} \label{lin:delete40};
			
			\State \llspc{} \label{lin:delete41};
			
			\EndIf\label{lin:delete47}

			\State return $\langle value, op\_status\rangle$
			\EndFunction 
		\end{algorithmic}
%	\end{multicols}	
\end{algorithm}

%%%%%%%%%%%%%%%%%%%%%%%%%%%%%%%%%%%%%%%%%%%%%%%%%%%%%%%%%%%%%%%%%%%%%%%%%%
%-----------------------------------------TryCommit----------------------------------%---------
%----
%%%%%%%%%%%%%%%%%%%%%%%%%%%%%%%%%%%%%%%%%%%%%%%%%%%%%%%%%%%%%%%%%%%%%%%%%%

\begin{algorithm}[H]
%	\scriptsize
	\caption{\tabspace[0.2cm] STM\_tryC($t\_id \downarrow, tx\_status \uparrow$) %: All update methods will takes effect in underlying data-structure atomically if the validation will succeed. 
		%\emph{DESCP}\tabspace: All update methods of the transaction take ordered locks, validate and update underlying \tabspace[2.2cm]  data-structure atomically. \\
		%\emph{IN}\tabspace \tabspace[0.72cm]: \\
		%\emph{OUT}\tabspace \tabspace[0.37cm]: $txstatus$     
	}
	\label{algo:trycommit}
%	\setlength{\multicolsep}{0pt}
%	\begin{multicols}{2}
		\begin{algorithmic}[1]
			\makeatletter\setcounter{ALG@line}{136}\makeatother
			\Function{STM\_tryC}{} \label{lin:tryc1}
			
			%\cmntwa{get the tx id}
			%		\State $t\_id$ $\gets$ getTS($t\_id$) \label{lin:tryc2};
			\cmntwa{get the txlog of the current transaction by t\_id}	
			\State $ll\_list$ $\gets$ \txgllist \label{lin:tryc3};
			\cmntwa{sort the local log in increasing order of keys and copy into ordered list}
			\State $ordered\_ll\_list$ $\gets$ \llsort{} \label{lin:tryc4};
			\cmntwa{identify the new preds and currs for all update methods of a tx and validate it}
			\While{$(\textbf{$le_i \gets \textup{next}(ordered\_ll\_list$}))$} \label{lin:tryc5}
			\State ($key, obj\_id$) $\gets$ \llgkeyobj{} \label{lin:tryc6};
			\cmntwa{search correct location for the operation over lsl and lock the corresponding \preds and \currs}
			\State \lsls{$TRYC \downarrow$} \label{lin:tryc7};
			\cmntwa{if lslSearch return op\_status as ABORT then method will return ABORT}
			\If{$(op\_status$ $=$ \textup{ABORT}$)$} \label{lin:tryc8}
			%		\State release all the locks
			\cmntwa{release local memory in case lslSearch returns abort}
			\State \handlea{} \label{lin:tryc9};
			\State return $\langle op\_status\rangle$;
			%\State return \label{lin:tryc10};
			
			\EndIf \label{lin:tryc11}
			\cmntwa{modify the log entry to help upcoming update method of same tx}
			\State \llspc{} \label{lin:tryc12};
			\EndWhile \label{lin:tryc13}
			\cmntwa{get each update method one by one and take the effect in underlying DS}
			\While{$(\textbf{$le_i \gets \textup{next}(ordered\_ll\_list$}))$} \label{lin:tryc14}
			\State ($key, obj\_id$) $\gets$ \llgkeyobj{} \label{lin:tryc15};
			\cmntwa{get the operation name to local log entry}
			\algstore{testcont} 
		\end{algorithmic}
	\end{algorithm}
	
	\begin{algorithm}[H]
		\begin{algorithmic}[1]
			\algrestore{testcont} 
			
			\State opn $\gets$ $le_i$.opn \label{lin:tryc16};
			\State intraTransValdation($le_i \downarrow, \preds \uparrow, \currs \uparrow$) \label{lin:tryc17};
			\cmntwa{if operation is insert then after successful completion of it node corresponding to the key should be part of \bn}
			\If{$($\textup{INSERT} $=$ \textup{opn}$)$} \label{lin:tryc18}
			\cmntwa{if node corresponding to the key is part of \bn}
			\If{$\texttt{read}(\bc.\textup{key}) = key)$} \label{lin:tryc19}
			\cmntwa{get the value from local log}
			%	\State $value$ $\gets$ $read(\bc$.value$)$ \label{lin:tryc20};
			
			\State $value$ $\gets$ \llgval{} \label{lin:tryc20};
			\cmntwa{update the value into underlying DS}	
			\State $\texttt{write}$($\bc$.value, $value$) \label{lin:tryc21};
			%\cmntwa{set the op\_status as OK in local log}
			%		\State \llsopst{$OK \downarrow$} \label{lin:tryc22};
			\cmntwa{update the max\_ts of insert for node corresponding to the key into underlying DS}
			\State $\texttt{write}$($\bc$.max\_ts.insert, TS($t\_id$)) \label{lin:tryc23}; 
			\cmntwa{if node corresponding to the key is part of \rn}
			\ElsIf{$(\texttt{read}($$\rc$.\textup{key}$)$ $=$ key$)$} \label{lin:tryc24}
			\cmntwa{connect the node corresponding to the key to \bn as well}
			\State \lslins{\textcolor{red}{$RL$}$\_$\textcolor{blue}{$BL$} $\downarrow$} \label{lin:tryc25};
			%\cmntwa{set the op\_status as OK in local log}
			
			%		\State \llsopst{$OK \downarrow$} \label{lin:tryc26};
			\cmntwa{update the max\_ts of insert for node corresponding to the key into underlying DS}
			\State $\texttt{write}$($\rc$.max\_ts.insert, TS($t\_id$)) \label{lin:tryc27};
			
			\Else \label{lin:tryc28}
			\cmntwa{if node corresponding to the key is not part of \bn as well as \rn then create the node with the help of lslIns() and add it into \bn}
			\State \lslins{\textcolor{blue}{$BL$} $\downarrow$} \label{lin:tryc29};
			%	\cmntwa{set the op\_status as OK in local log}
			%		\State \llsopst{$OK \downarrow$} \label{lin:tryc30};
			\cmntwa{update the max\_ts of insert for node corresponding to the key into underlying DS}
			\State $\texttt{write}$(node.max\_ts.insert, TS($t\_id$)) \label{lin:tryc31};
			\cmntwa{need to update the node field of log so that it can be released finally}
			\State $le_{i}$.node $\gets$ $\bp.\bn$ 
			
			\EndIf \label{lin:tryc32}

			\cmntwa{if operation is delete then after successful completion of it node corresponding to the key should not be part of \bn}
			\ElsIf{$($\textup{DELETE} $=$ \textup{opn}$)$} \label{lin:tryc33}
			\cmntwa{if node corresponding to the key is part of \bn}
			
			\If{$(\texttt{read}($$\bc$.\textup{key}$)$ $=$ key$)$} \label{lin:tryc34}
			\cmntwa{delete the node corresponding to the key from the \bn with the help of lslDel()}
			\State \lsldel{} \label{lin:tryc35};
			%		\cmntwa{set the op\_status as OK in local log}
			%		\State \llsopst{$OK \downarrow$} \label{lin:tryc36};
			\cmntwa{update the max\_ts of delete for node corresponding to the key into underlying DS}
			\State $\texttt{write}$($\bc$.max\_ts.delete, TS($t\_id$)) \label{lin:tryc37};
			
			%\Else \label{lin:tryc38}
			
			%\State \llsopst{$FAIL \downarrow$} \label{lin:tryc39};
			
			%\State $write$(\textcolor{red}{$currs[0]$}.max\_ts.delete, TS($t\_id$)) \label{lin:tryc40};
			
			\EndIf \label{lin:tryc41}
			
			\EndIf \label{lin:tryc42}
			\cmntwa{modify the preds and currs for the consecutive update methods which are working on overlapping zone in lazyskip-list}
%			\State intraTransValdation($le_i \downarrow, \preds \uparrow, \currs \uparrow$) \label{lin:tryc17};
			\EndWhile \label{lin:tryc43}
			\cmntwa{release all the locks}
			\State \rlsol{} \label{lin:tryc44};  
			\cmntwa{set the tx status as OK}
			\State $tx\_status$ $\gets$ OK \label{lin:tryc45};
			
			%		\State \txsetst{} \label{lin:tryc46};
			%\State    /*remove the tx from live set*/
			%\State removeFromLiveSet($t\_id$); 
			\State return $\langle tx\_status\rangle$\label{lin:tryc47};
			\EndFunction \label{lin:tryc48}
		\end{algorithmic}
%	\end{multicols}
\end{algorithm}

%%%%%%%%%%%%%%%%%%%%%%%%%%%%%%%%%%%%%%%%%%%%%%%%%%%%%%%%%%%%%%%%%%%%%%%%%%
%----------------------------lazyskiplist-insert --------------------------------
%%%%%%%%%%%%%%%%%%%%%%%%%%%%%%%%%%%%%%%%%%%%%%%%%%%%%%%%%%%%%%%%%%%%%%%%%%

%\vspace{-.5cm}
\begin{algorithm}[H]
%	\scriptsize
	\caption{\tabspace[0.2cm] rblIns($\preds \downarrow, \currs \downarrow, list\_type \downarrow$) : Inserts or overwrites a node in underlying hash table at location corresponding to $preds$ \& $currs$. %Color of preds \& currs depicts the red or blue node.
		%\emph{DESCP}\tabspace: Inserts or overwrites a node in underlying hash table at location corresponding to \tabspace[2.2cm] preds[] \& currs[]. Color of preds[] \& currs[] depicts the red or blue node.\\
		%\emph{IN}\tabspace \tabspace[0.72cm]: $preds[], currs[], list\_type$  \\
		%\emph{OUT}\tabspace \tabspace[0.37cm]:     
	}
	\label{algo:lslins}
%	\setlength{\multicolsep}{0pt}
%	\begin{multicols}{2}
		\begin{algorithmic}[1]
			\makeatletter\setcounter{ALG@line}{238}\makeatother
			\Function{rblIns}{} \label{lin:lslins1}
			%\makeatletter\setcounter{ALG@line}{246}\makeatother
			\cmntwa{inserting the node which is red list to bluelist}
			\If{$((list\_type)$ $=$ $($\textcolor{red}{$RL$}$\_$\textcolor{blue}{$BL$}$))$} \label{lin:lslins2}
			
			\State $\texttt{write}$($\rc$.marked, false) \label{lin:lslins3}; 
			\State $\texttt{write}$($\rc$.\bn, $\bc$) \label{lin:lslins4};
			\State $\texttt{write}$($\bp$.\bn, $\rc$) \label{lin:lslins5};
			\cmntwa{inserting the node into red list only}
			\ElsIf{$((list\_type$) $=$ \textcolor{red}{$RL$}$)$} \label{lin:lslins6}
			\State node = Create new node()
			\label{lin:lslins7};
			\State $\texttt{write}$(node.marked, True) \label{lin:lslins8};
			\State $\texttt{write}$(node.\rn, $\rc$) \label{lin:lslins9};
			\State $\texttt{write}$($\rp$.\rn, node) \label{lin:lslins10};
			
			\Else \label{lin:lslins11}
			\cmntwa{inserting the node into red as well as blue list}
			\State node = new node() \label{lin:lslins12}; %\Comment{default locked state for thread validation}
			\cmntwa{after creating the node acquiring the lock on it}
			\State node.lock();
			\State $\texttt{write}$(node.\rn, $\rc$) \label{lin:lslins13};
			\State $\texttt{write}$(node.\bn, $\bc$) \label{lin:lslins14};
			
			\State $\texttt{write}($$\rp$.\rn, node ) \label{lin:lslins15};
			
			\State $\texttt{write}$($\bp$.\bn, node) \label{lin:lslins16};
			\EndIf \label{lin:lslins17}
			\State return $\langle \rangle$;
			\EndFunction \label{lin:lslins18}
		\end{algorithmic}
%	\end{multicols}
\end{algorithm}

%\vspace{-.5cm}
\begin{figure}[H]
	\centering
	\captionsetup{justification=centering}
	\centerline{\scalebox{0.55}{\input{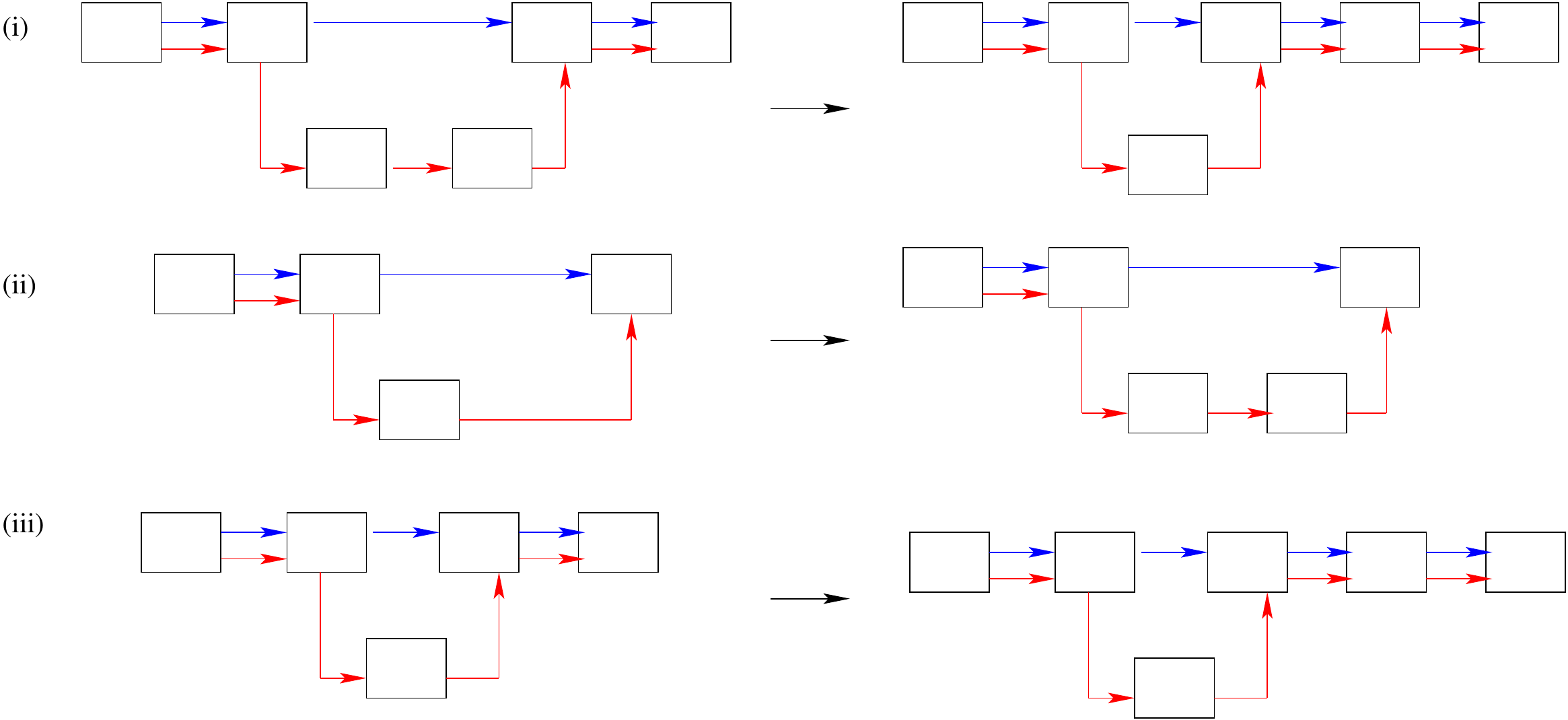_t}}}
	\caption{Execution of \nplslins{}: (i) key $k_5$ is present in $\rn$ and adding it into $\bn$, (ii) key $k_5$ is not present in $\rn$ as well as $\bn$ and adding it into $\rn$, (iii) key $k_5$ is not present in $\rn$ as well as $\bn$ and adding it into $\rn$ as well as $\bn$}
	\label{fig:nostm31}
\end{figure}
%\vspace{-.5cm}

\nplslins{} (\algoref{lslins}) adds a new node to the \lsl{} in the \tab{}. There can be following cases:
\textbf{if node is present in \rn{} and has to be inserted to \bn:} such a case implies that the \nplslins{} is invoked in \cp{} for the corresponding \npins{} in local log represented by the block from \Lineref{lslins2} to \Lineref{lslins5}. Here we first reset the \rc{} mark field and update the $\bn$ to the \bc{} and \bp{} $\bn$ to \rc{}. Thus the node is now reachable by $\bn$ also. \figref{nostm31}(i) represents the case.
\textbf{if node is meant to be inserted only in \rn:}
This implies that the node is not present at all in the \lsl{} and is to be inserted for the first time. Such a case can be invoked from \rvmt{} of \rvp{}, if \rvmt{} is the first \mth{} of its transaction. \Lineref{lslins6} to \Lineref{lslins10} depict such a case where a new $node$ is created and its $marked$ field is set, depicting that its a dead node meant to be reachable only via \rn{}. In \Lineref{lslins9} and \Lineref{lslins10} the \rn{} field of the $node$ is updated to \rc{} and \rn{} field of the \rp{} is modified to point to the $node$ respectively. \figref{nostm31}(ii) represents the case.
\textbf{if node is meant to be inserted in \bn:} In such a case it may happen that the node is already present in the \rn{} (already covered by \Lineref{lslins2} to \Lineref{lslins5}) or the node is not present at all. The later case is depicted in \Lineref{lslins11} to \Lineref{lslins16} which creates a new $node$ and add the node in both \rn{} and \bn{} note that order of insertion is important as the \lsl{} can be concurrently accessed by other transactions since traversal is lock free. \figref{nostm31}(iii) represents the case.

%%%%%%%%%%%%%%%%%%%%%%%%%%%%%%%%%%%%%%%%%%%%%%%%%%%%%%%%%%%%%%%%%%%%%%%%%%
%------------------- lazyskiplist-delete -----------------------
%%%%%%%%%%%%%%%%%%%%%%%%%%%%%%%%%%%%%%%%%%%%%%%%%%%%%%%%%%%%%%%%%%%%%%%%%%

\begin{algorithm}[H]
%	\scriptsize
	\caption{\tabspace[0.2cm] rblDel($\preds \downarrow, \currs \downarrow$) : Deletes a node from blue link in underlying hash table at location corresponding to $preds$ \& $currs$.
		%\emph{DESCP}\tabspace: Deletes a node from blue list in underlying hash table at location corresponding to \tabspace[2.2cm] preds[] \& currs[].\\
		%\emph{IN}\tabspace \tabspace[0.72cm]: $preds[], currs[]$  \\
		%\emph{OUT}\tabspace \tabspace[0.37cm]:     
	}
	\label{algo:lsldelete}
%	\setlength{\multicolsep}{0pt}
%	\begin{multicols}{2}
		\begin{algorithmic}[1]
			\makeatletter\setcounter{ALG@line}{262}\makeatother
			\Function{rblDel}{} \label{lin:lsldel1}
			
			\cmntwa{mark the node$ \langle obj\_id, key \rangle$ for deletion}
			\State $\texttt{write}$($\bc$.marked, True) \label{lin:lsldel2};
			\cmntwa{set the update the blue links}
			\State $\texttt{write}$($\bp$.\bn, $\bc$.\bn) \label{lin:lsldel3};
			\State return $\langle \rangle$;
			\EndFunction \label{lin:lsldel4}
		\end{algorithmic}
		
%	\end{multicols}
\end{algorithm}

\nplsldel{} removes a node from \bn{}. It can be invoked from \cp{} for corresponding \npdel{} in $txlog$. It simply sets the marked field of the node to be deleted (\bc{}) and changes the \bn{} of \rp{} to \rc{} as shown in \Lineref{lsldel2} and \Lineref{lsldel3} of \algoref{lsldelete} respectively. \figref{1nostm9} shows the deletion of node corresponding to $k_5$.
%\vspace{-.5cm}
\begin{figure}[H]
	\centering
	\captionsetup{justification=centering}
	\centerline{\scalebox{0.55}{\input{figs/nostm9.pdf_t}}}
	\caption{Execution of \nplsldel{}: (i) \lsl{} before $k_5$ is deleted, (ii) \lsl{} after $k_5$ is deleted from $\bn$}
	\label{fig:1nostm9}
\end{figure}
\rvmt{} and \upmt{} do the \emph{validation} in \rvp{} and \cp{} respectively. \emph{validation} invokes \npintv{} and then does the \nptov{} in the mentioned order. \npintv{} is the property of the method and \nptov{} is the property of the transaction. Thus validating the \mth{} before the transaction intuitively make sense.

%\vspace{-.5cm}

%%%%%%%%%%%%%%%%%%%%%%%%%%%%%%%%%%%%%%%%%%%%%%%%%%%%%%%%%%%%%%%%%%%%%%%%%%
%---------------------------------validation -------------------
%%%%%%%%%%%%%%%%%%%%%%%%%%%%%%%%%%%%%%%%%%%%%%%%%%%%%%%%%%%%%%%%%%%%%%%%%%

\begin{algorithm}[H]
%	\scriptsize
	\caption{\tabspace[0.2cm] validation($t\_id \downarrow, key \downarrow, \preds \downarrow, \currs \downarrow, val\_type \downarrow, op\_status \uparrow$)% : Method validation and transaction validation.
		%\emph{DESCP}\tabspace: double validation.\\
		%\emph{IN}\tabspace \tabspace[0.72cm]: $key$, $preds[]$, $currs[], val\_type$\\
		%\emph{OUT}\tabspace \tabspace[0.37cm]: $true$, $false$     
	}
	\label{algo:validation}
%	\setlength{\multicolsep}{0pt}
%	\begin{multicols}{2}
		\begin{algorithmic}[1]
			\makeatletter\setcounter{ALG@line}{269}\makeatother
			\Function{validation}{} \label{lin:validation1}
			\cmntwa{validate against concurrent updates}
			\State $op\_status$ $\gets$ methodValidation($\preds \downarrow$, $\currs \downarrow$)\label{lin:validation2};
			\cmntwa{on succesfull method validation validate of transactional ordering to ensure opacity}
			\If{$( RETRY \neq op\_status)$} \label{lin:validation3}
			
			\State $op\_status$ $\gets$ \toval{} \label{lin:validation4};
			\EndIf \label{lin:validation5}
			\State return $\langle op\_status\rangle$ \label{lin:validation6}; 
			\EndFunction \label{lin:validation7}
		\end{algorithmic}
		
%	\end{multicols}
\end{algorithm}

%%%%%%%%%%%%%%%%%%%%%%%%%%%%%%%%%%%%%%%%%%%%%%%%%%%%%%%%%%%%%%%%%%%%%%%%%%
%---------------------------------methodValidation -------------------
%%%%%%%%%%%%%%%%%%%%%%%%%%%%%%%%%%%%%%%%%%%%%%%%%%%%%%%%%%%%%%%%%%%%%%%%%%

\begin{algorithm}[H]
%	\scriptsize
	\caption{\tabspace[0.2cm] methodValidation($\preds \downarrow, \currs \downarrow$) %: method validation for conflicting concurrent methods.
		%\emph{DESCP}\tabspace: method validation for conflicting concurrent operations.\\
		%\emph{IN}\tabspace \tabspace[0.72cm]: $preds[]$, $currs[]$\\
		%\emph{OUT}\tabspace \tabspace[0.37cm]: $RETRY$, $OK$     
	}
	\label{algo:interferenceValidation}
	\begin{algorithmic}[1]
		\makeatletter\setcounter{ALG@line}{278}\makeatother
		\Function{methodValidation}{} \label{lin:iv1}
		\If{$(\texttt{read}(\bp.marked) || \texttt{read}(\bc.marked) ||\texttt{read}(\bp.\bn) \neq \bc || \texttt{read}(\rp.\rn) \neq {\rc})$} \label{lin:iv2}
		
		\State return $\langle RETRY\rangle$  \label{lin:iv3};
		
		\Else \label{lin:iv4}
		
		\State return $\langle OK\rangle$ \label{lin:iv5};
		
		\EndIf \label{lin:iv6}
		\EndFunction \label{lin:iv7}
	\end{algorithmic}
\end{algorithm}
%\end{multicols}
%\vspace{-.5cm}

In \nptov{} \rvmt{} always conflicts with the \upmt{} (as established in conflict notion \secref{conflicts}). If the node corresponding to the $key$ is present in the \lsl{} (\Lineref{tov5}) we compare with time-stamp of the transaction that last executed the conflicting \mth{} on same $key$. If the current \mth{} that invoked the \nptov{} is \rvmt{} then \Lineref{tov6} handles the case.
Otherwise, if the invoking \mth{} is \upmt{} then \Lineref{tov9} handles the case. \figref{1nostm25} and \figref{1nostm26} show the execution of \nptov{}. Here $Lu_1(ht, k_1)$ will return $Abort$ in  \figref{1nostm26} because $d_2((ht, k_1)$ of $T_2$ has already updated the time-stamp at the node corresponding to $k_1$. So, when $l_1(ht, k_1)$ does its \nptov{} at \Lineref{tov9}, $TS(t_1)$ $<$ $curr.max\_ts.delete(k)$ holds true (since, $T_1$ $<$ $T_2$) leading to $abort$ of $T_1$ at \Lineref{tov11}. This gives us a equivalent sequential schedule which can be shown \coop{}. \figref{1nostm25} shows the schedule where no sequential schedule is possible if \nptov{} is not applied as there is no way to recognize the time-order violation.

%\vspace{-.3cm}
\begin{figure}[H]
	\centering
	\begin{minipage}[b]{0.49\textwidth}
		\scalebox{.5}{\input{figs/nostm25.pdf_t}}
		\centering
		\captionsetup{justification=centering}
		% \vspace{-16cm}
		\captionof{figure}{non opaque history. Without time-stamp validation}
		\label{fig:1nostm25}
	\end{minipage}
	\hfill
	\begin{minipage}[b]{0.49\textwidth}
		\centering
		\captionsetup{justification=centering}
		\scalebox{.5}{\input{figs/nostm26.pdf_t}}
		% \vspace{-16cm}
		\captionof{figure}{opaque history H1. With time-stamp validation}
		\label{fig:1nostm26}
	\end{minipage}   
\end{figure}

%%%%%%%%%%%%%%%%%%%%%%%%%%%%%%%%%%%%%%%%%%%%%%%%%%%%%%%%%%%%%%%%%%%%%%%%%%
%-------------------------transValidation---------
%%%%%%%%%%%%%%%%%%%%%%%%%%%%%%%%%%%%%%%%%%%%%%%%%%%%%%%%%%%%%%%%%%%%%%%%%%

\begin{algorithm}[H]
%	\scriptsize
	\caption{\tabspace[0.2cm] transValidation($t\_id \downarrow, key \downarrow, \currs \downarrow, val\_type \downarrow, op\_status \uparrow$) : Time-order validation for each transaction.
		%\emph{DESCP}\tabspace: transvalidation of Tx.\\
		%\emph{IN}\tabspace \tabspace[0.72cm]: $key$, $currs[]$, $val\_type$\\
		%\emph{OUT}\tabspace \tabspace[0.37cm]: $op\_status$ - OK or ABORT     
	}
	\label{algo:tovalidation}
	\begin{algorithmic}[1]
		\makeatletter\setcounter{ALG@line}{285}\makeatother
		\Function{transValidation}{} \label{lin:tov1}
		%		\cmntwa{get the tx id}
		%		\State $t\_id$ $\gets$ getTS($t\_id$) \label{lin:tov2};
		\cmntwa{by default setting the op\_status as RETRY}
		\State STATUS $op\_status$ $\gets$ OK \label{lin:tov3};
		\cmntwa{get the appropriate $sh\_curr$ (red or blue) correspondinjg to key}
		\State \llgaptc{$\currs \downarrow$, $key\downarrow$, $sh\_curr \uparrow$} \label{lin:tov4};
		
		\cmntwa{if $sh\_curr$ is not NULL and node corresponding to the key is equal to $sh\_curr$.key then check for TS}
		\If{$(($\textup{$sh\_curr$} $\neq$ \textup{NULL}$)\land(($\textup{$sh\_curr$.key}$)$ $=$ \textup{key}$))$} \label{lin:tov5}
		\algstore{testcont} 
	\end{algorithmic}
\end{algorithm}

\begin{algorithm}[H]
	\begin{algorithmic}[1]
		\algrestore{testcont} 
		
		\cmntwa{if val\_type is RV then transaction validation for \rvmt{}}
		\If{$ ((val\_type = RV) \land ($\textup{TS}$(t\_id)$ $<$ $(\texttt{read}($\textup{$sh\_curr$.max\_ts.insert}$($\textup{k}$)))$ $||$ \\\hspace{2cm} $($\textup{TS}($t\_id$) $<$ $(\texttt{read}($\textup{$sh\_curr$.max\_ts.delete}$($\textup{k}$)))))$} \label{lin:tov6}
		\State $op\_status$ $\gets$ ABORT \label{lin:tov8};
		\cmntwa{transaction validation for \upmt{}}
		\ElsIf{$(($\textup{TS}$(t\_id)$ $<$ $(\texttt{read}($\textup{$sh\_curr$.max\_ts.insert}$($\textup{k}$)))$ $||$ \textup{TS}$(t\_id)$ $<$ $(\texttt{read}($\textup{$sh\_curr$.max\_ts.delete}$($\textup{k}$)))$ $||$\\ \hspace{2cm} \textup{TS}$(t\_id)$ $<$ $(\texttt{read}($\textup{$sh\_curr$.max\_ts.lookup}$($\textup{k}$))))$} \label{lin:tov9}
		
		\State $op\_status$ $\gets$ ABORT \label{lin:tov11};
		\EndIf \label{lin:tov12}
		
		\EndIf \label{lin:tov13}
		
		\State return $\langle op\_status\rangle$ \label{lin:tov14}; 
		
		\EndFunction \label{lin:tov15}
	\end{algorithmic}
\end{algorithm}

\nppoval{} handles the case where two consecutive updates within same transaction having overlapping $preds$ and $currs$ may overwrite the previous \mth{} such that only effect of the later \mth{} is visible. This happens because the previous \mth{} while updating, changes the \lsl{} causing the $preds$ \& $currs$ of the next \mth{} working on the consecutive key to become obsolete. Thus, \nppoval{} corrects this by finding the new $preds$ and $currs$ of the current \mth{} on the consecutive key. There might be two cases (i) if previous \mth{} is \npins{} or (ii) previous \mth is \npdel{}. For case(i) we find the \bp{} (at \Lineref{threadv4} to \Lineref{threadv5} using previous log entry) and for case(ii) we find \bp{} using previous log entry's \bp{} (\Lineref{threadv7}) and finally find the new \rp{} and \rc{} between the new found \bp{} and \bc{} at \Lineref{threadv10} to \Lineref{threadv11}.

%%%%%%%%%%%%%%%%%%%%%%%%%%%%%%%%%%%%%%%%%%%%%%%%%%%%%%%%%%%%%%%%%%%%%%%%%%
%%%%%%%%%%%%%%%%%%%%%%%%%%%%%%%%%Thread Validation%%%%%%%%%%%%%%%%%%%%%%%%
%%%%%%%%%%%%%%%%%%%%%%%%%%%%%%%%%%%%%%%%%%%%%%%%%%%%%%%%%%%%%%%%%%%%%%%%%%

\begin{algorithm}[H]
%	\scriptsize
	\caption{\tabspace[0.2cm] intraTransValidation($le \downarrow, \preds \uparrow, \currs \uparrow$)% : If two $\upmt{s}$ within same transaction have at least one shared node among its recorded preds and currs, in this case the previous $\upmt{}$ effect might be overwritten if the next $\upmt{}$ preds and currs are not updated according to the updates done by the previous $\upmt{}$. Thus to solve this we have program order validation after each $\upmt{}$ in $\nptc$, during commit phase.
		%\emph{DESCP}\tabspace: program order validation\\
		%\emph{IN}\tabspace \tabspace[0.72cm]: $le_i$  \\
		%\emph{OUT}\tabspace \tabspace[0.37cm]: $preds[], currs[]$
	}
	\label{algo:povalidation}
%	\setlength{\multicolsep}{0pt}
%	\begin{multicols}{2}
		\begin{algorithmic}[1]
			\makeatletter\setcounter{ALG@line}{304}\makeatother
			\Function{intraTransValidation}{} \label{lin:threadv1}
			\State $le.getAllPreds\&Currs(le$ $\downarrow$, $\preds$ $\uparrow$, $\currs$ $\uparrow$) \label{lin:threadv2};	
			%\State \llgallpc{} \label{lin:threadv2};
			\cmntwa{if $\bp$ is marked or $\bc$ is not reachable from $\bp.\bn$ then modify the next consecutive \upmt{} $\bp$ based on previous \upmt{}}
			\If{$((\texttt{read}($$\bp$.\textup{marked}$)) ||$ $(\texttt{read}($ $\bp$.\textup{\bn}$)$ != $\bc$$))$} \label{lin:threadv3}
			\cmntwa{find $k$ < $i$; such that $le_k$ contains previous update method on same bucket }
			\If{$(($$le_{k}$.\textup{opn}$)$ $=$ INSERT$)$} \label{lin:threadv4}
			
			\State $le_{i}$.$\bp$.$\texttt{unlock()}$ \label{lin:thredv4-5};
			\State $\bp$ $\gets$ ($le_{k}.\bp.\bn)$ \label{lin:threadv5};
			\State $le_{i}$.$\bp$.$\texttt{lock()}$ \label{lin:thredv5-5};	
			\algstore{testcont} 
		\end{algorithmic}
	\end{algorithm}
	
	\begin{algorithm}[H]
		\begin{algorithmic}[1]
			\algrestore{testcont} 
				
			\Else \label{lin:threadv6}
			\cmntwa{\upmt{} method $\bp$ will be previous method $\bp$}
			\State $le_{i}$.$\bp$.$\texttt{unlock()}$ \label{lin:thredv6-5};
			\State $\bp$ $\gets$ ($le_{k}$.$\bp$) \label{lin:threadv7};
			
			\State $le_{i}$.$\bp$.$\texttt{lock()}$ \label{lin:thredv7-5};
			\EndIf \label{lin:threadv8}
			
			\EndIf \label{lin:threadv9}
			\cmntwa{if $\rc$ \& $\rp$ is modified by prev operation then update them also}
			\If{$(\texttt{read}($$\rp$.\textup{\rn}$)$ != $\rc$$)$} \label{lin:threadv10}
			\State $le_{i}$.$\rp$.$\texttt{unlock()}$
			\State $\rp$ $\gets$ ($le_{k}$.$\rp.\rn$)  \label{lin:threadv11}; 
			\State $le_{i}$.$\rp$.$\texttt{lock()}$
			\EndIf \label{lin:threadv12}
			\State return $\langle \preds, \currs\rangle$;
			\EndFunction \label{lin:threadv13}
		\end{algorithmic}
%	\end{multicols}
\end{algorithm}

\emph{findInLL} is an utility \mth{} that returns true to the \mth{} that has invoked it, if the calling \mth{} is not the first method of the transaction on the $key$. This is done by linearly traversing the log and finding an entry corresponding to the $key$. If the calling \mth{} is the first \mth{} of the transaction for the $key$ then \emph{findInLL} return false as it would not find any entry in the log of the transaction corresponding to the $key$. Since we consider that their can be multiple objects (\tab{}) so we need to find unique $\langle obj\_id, key \rangle$ pair (refer \Lineref{findll5}).

\begin{algorithm}[H]
%	\scriptsize
	\caption{\tabspace[0.2cm] findInLL($t\_id \downarrow, obj\_id \downarrow, key \downarrow, le \uparrow$) : Checks whether any operation corresponding to $\left\langle obj\_id, key  \right\rangle$ is present in ll\_list.
		%\emph{DESCP}\tabspace: Checks weather any operation corresponding to $\left\langle obj\_id, key  \right\rangle$ is present in ll\_list.\\
		%\emph{IN}\tabspace \tabspace[0.72cm]: $\left\langle obj\_id, key  \right\rangle$  \\
		%\emph{OUT}\tabspace \tabspace[0.37cm]: $true, false$     
	}
	\label{algo:findInLL}
	\begin{algorithmic}[1]
		\makeatletter\setcounter{ALG@line}{328}\makeatother
		\Function{findInLL}{} \label{lin:findll1}
		%\cmntwa{get the tx id}
		%\State $t\_id$ $\gets$ getTS($t\_id$) \label{lin:findll2};
		
		\State ll\_list $\gets$ \txgllist{} \label{lin:findll3}; 
		\cmntwa{every method first identify the node corresponding to the key into local log}
		\While{$(le_i \gets next$($ll\_list$$))$} \label{lin:findll4}
		\If{$((le_i.first = obj\_id) \& (le_i.first = key))$} \label{lin:findll5}
		
		\State return $\langle TRUE, le\rangle$  \label{lin:findll6};
		\EndIf \label{lin:findll7}
		\EndWhile \label{lin:findll8}
		\State return $\langle FALSE, le = NULL\rangle$ \label{lin:findll9};
		\EndFunction \label{lin:findll10}
	\end{algorithmic}
\end{algorithm}

While executing the \nptov{} the time-stamp field of the corresponding $node$ has to be updated. Such a node can be either the marked (dead or \rc) or the unmarked (live \bc). \emph{get\_aptcurr} is the utility \mth{} which returns the appropriate $node$ corresponding to the $key$.

%%%%%%%%%%%%%%%%%%%%%%%%%%%%%%%%%%%%%%%%%%%%%%%%%%%%%%%%%%%%%%%%%%%%%%%%%%
%------------------- get_aptcurr -----------------------
%%%%%%%%%%%%%%%%%%%%%%%%%%%%%%%%%%%%%%%%%%%%%%%%%%%%%%%%%%%%%%%%%%%%%%%%%%

\begin{algorithm}[H]
%	\scriptsize
	\caption{\tabspace[0.2cm] get\_aptcurr($\currs \downarrow$, $key \downarrow, sh\_curr \uparrow$) : Returns a curr node from underlying DS which corresponds to the key of $le_i$.
		%\emph{DESCP}\tabspace: Returns a curr node from underlying DS which corresponds to the key of $le_i$.\\
		%\emph{IN}\tabspace \tabspace[0.72cm]: $currs[]$\\
		%\emph{OUT}\tabspace \tabspace[0.37cm]: $node curr$    
	}
	\label{algo:getaptcurr}
%	\setlength{\multicolsep}{0pt}
%	\begin{multicols}{2}
		%\begin{multicols}{2}
		\begin{algorithmic}[1]
			\makeatletter\setcounter{ALG@line}{338}\makeatother
			\Function{get\_aptcurr}{} \label{lin:aptcurr1}
			\cmntwa{by default set curr to NULL}
			\State sh\_curr $\gets$ NULL;
			\cmntwa{if node corresponding to the key is part of $\bn$ then curr is $\bc$}
			\If{$($$\bc$.\textup{key} = \textup{key}$)$} \label{lin:aptcurr2}
			\State sh\_curr $\gets$ $\bc$ \label{lin:aptcurr3};
			\cmntwa{if node corresponding to the key is part of $\rn$ then curr is $\rc$}
			\ElsIf{$($$\rc$.\textup{key} = \textup{key}$)$} \label{lin:aptcurr4}
			\State sh\_curr $\gets$ $\rc$ \label{lin:aptcurr5};
			
			\EndIf \label{lin:aptcurr6}
			
			\State return $\langle sh\_curr\rangle$ \label{lin:aptcurr7}; 
			
			\EndFunction \label{lin:aptcurr8}
		\end{algorithmic}
%	\end{multicols}
\end{algorithm}

\emph{release\_ordered\_locks} is an utility \mth{} to release the locks in order.

%%%%%%%%%%%%%%%%%%%%%%%%%%%%%%%%%%%%%%%%%%%%%%%%%%%%%%%%%%%%%%%%%%%%%%%%%%
%------------------- releaseorderedlocks -----------------------
%%%%%%%%%%%%%%%%%%%%%%%%%%%%%%%%%%%%%%%%%%%%%%%%%%%%%%%%%%%%%%%%%%%%%%%%%%

\begin{algorithm}[H]
%	\scriptsize
	\caption{\tabspace[0.2cm] release\_ordered\_locks($ordered\_ll\_list \downarrow$) : Release all locks taken during \nplsls{}.
		%\emph{DESCP}\tabspace: releases all locks taken during tryCommit\\
		%\emph{IN}\tabspace \tabspace[0.72cm]: $ordered\_ll\_list$\\
		%\emph{OUT}\tabspace \tabspace[0.37cm]:     
	}
	\label{algo:releaseorderedlocks}
%	\setlength{\multicolsep}{0pt}
%	\begin{multicols}{2}
		\begin{algorithmic}[1]
			\makeatletter\setcounter{ALG@line}{350}\makeatother
			\Function{release\_ordered\_locks}{} \label{lin:rlock1}
			\cmntwa{releasing all the locks on preds, currs and node}
			\While{($\textbf{$le_i \gets next(ordered\_ll\_list$})$)} \label{lin:rlock2}
			
			\State $le_i$.$\bp$.$\texttt{unlock()}$ \label{lin:rlock3};//$\Phi_{lp}$ 
			\State $le_i$.$\rp$.$\texttt{unlock()}$ \label{lin:rlock4};
			\If{$le_i$.$node$}
			\State{$le_i$.$node$.$\texttt{unlock()}$}
			\EndIf
			\State $le_i$.$\rc$.$\texttt{unlock()}$ \label{lin:rlock5};
			\State $le_i$.$\bc$.$\texttt{unlock()}$ \label{lin:rlock6}; 
			\EndWhile \label{lin:rlock7}
			\State return $\langle \rangle$;
			\EndFunction \label{lin:rlock8}
		\end{algorithmic}
		
%	\end{multicols}
\end{algorithm}
%\end{multicols}

%%%%%%%%%%%%%%%%%%%%%%%%%%%%%%%%%%%%%%%%%%%%%%%%%%%%%%%%%%%%%%%%%%%%%%%%%%
%------------------- acquirepreds&currs -----------------------
%%%%%%%%%%%%%%%%%%%%%%%%%%%%%%%%%%%%%%%%%%%%%%%%%%%%%%%%%%%%%%%%%%%%%%%%%%

\begin{algorithm}[H]
%	\scriptsize
	\caption{\tabspace[0.2cm] acquirePred\&CurrLocks($ \preds \downarrow$, $ \currs \downarrow$) : acquire all locks taken during \nplsls{}.
		%\emph{DESCP}\tabspace: releases all locks taken during tryCommit\\
		%\emph{IN}\tabspace \tabspace[0.72cm]: $ordered\_ll\_list$\\
		%\emph{OUT}\tabspace \tabspace[0.37cm]:     
	}
	\label{algo:acquirepreds&currs}
%	\setlength{\multicolsep}{0pt}
%	\begin{multicols}{2}
		\begin{algorithmic}[1]
			\makeatletter\setcounter{ALG@line}{360}\makeatother
			\Function{acquirePred\&CurrLocks}{}
			\State $\bp$.$\texttt{lock()}$;
			\State $\rp$.$\texttt{lock()}$;
			\State $\rc$.$\texttt{lock()}$;
			\State $\bc$.$\texttt{lock()}$;
			\State return $\langle \rangle$;        
			\EndFunction
		\end{algorithmic}
		
%	\end{multicols}
\end{algorithm}
%\end{multicols}

%%%%%%%%%%%%%%%%%%%%%%%%%%%%%%%%%%%%%%%%%%%%%%%%%%%%%%%%%%%%%%%%%%%%%%%%%%
%------------------- releasepreds&currs -----------------------
%%%%%%%%%%%%%%%%%%%%%%%%%%%%%%%%%%%%%%%%%%%%%%%%%%%%%%%%%%%%%%%%%%%%%%%%%%

\begin{algorithm}[H]
%	\scriptsize
	\caption{\tabspace[0.2cm] releasePred\&CurrLocks($ \preds \downarrow$, $ \currs \downarrow$) : Release all locks taken during \nplsls{}.
		%\emph{DESCP}\tabspace: releases all locks taken during tryCommit\\
		%\emph{IN}\tabspace \tabspace[0.72cm]: $ordered\_ll\_list$\\
		%\emph{OUT}\tabspace \tabspace[0.37cm]:     
	}
	\label{algo:releasepreds&currs}
%	\setlength{\multicolsep}{0pt}
%	\begin{multicols}{2}
		\begin{algorithmic}[1]
			\makeatletter\setcounter{ALG@line}{367}\makeatother
			\Function{releasePred\&CurrLocks}{}
			\State $\bp$.$\texttt{unlock()}$;//$\Phi_{lp}$ 
			\State $\rp$.$\texttt{unlock()}$;
			\State $\rc$.$\texttt{unlock()}$;
			\State $\bc$.$\texttt{unlock()}$;
			\State return $\langle \rangle$;        
			\EndFunction
		\end{algorithmic}
		
%	\end{multicols}
\end{algorithm}
%\end{multicols}
}%ignore end

\ignore{
\noindent Now, we have some theoretical results.
\begin{theorem}
	\label{thm:otm-acyclic}
	Consider a history $H$ generated by \otm. Then there exists a sequential \& legal history $H'$ equivalent to $H$ such that the conflict-graph of $H'$ is acyclic.
\end{theorem}
}
\subsection{\otm{} execution cycle} 
\begin{figure}[H]
	\centering
	\captionsetup{justification=centering}
	\centerline{\scalebox{0.48}{\input{figs/nostm24.pdf_t}}}
	\caption{Transaction lifecycle of \otm}
	\label{fig:nostm24}
\end{figure}

Through out its life an \otm{} transaction may execute \emph{STM\_begin()}, \npins{}, \npluk{}, \npdel{} and \nptc{} \mth{s} which are also exported to the user. A user can implement his/her applications using \otm{} which would provide efficient composability. Each transaction has a 1) \rvp: where \upmt{} \& \rvmt{} locally identify and logs the location to be worked upon and other meta information which would be needed for successful validation. Within \rvp{} \rvmt{s} do lock free traversal and then validate. And, \npins{} merely log its execution to be validated and updated during transaction commit. 2) \cp: where it validates the \upmt{} executed during its lifetime and validates whether the transaction will commit and finally make changes in \tab{} atomically or it will abort and flush its log. This phase is executed by \nptc{} \mth . \figref{nostm24} depicts the transaction life cycle.
%We ensure that \emph{local execution} phase is lock free by adopting the lazy approach\cite{Herlihy:ArtBook:2012}. %This is a high-level overview of the actual implementation. 

\textbf{Pseudocode convention:} In each algorithm $\downarrow$ represents the input parameter and $\uparrow$ shows the output parameter (or return value) of the corresponding methods (such in and out variables are italicized). Instructions in \emph{read()} and \emph{write()} with in each \mth{} denote that they touch the shared memory. The variable prefixed with $sh\_$ are shared memory variables and can be accessed by multiple transactions concurrently, for instance $sh\_preds[]$. \bp{} \& \bc{} depict the blue nodes accessible by blue links and \rp{} \& \rc{} depict the red nodes accessed by red links respectively. %Also in pseudocode we call methods of Table \ref{table:1} with $le$, this is simple to aid readers to understand that the method is called to manipulate the corresponding log entry in local log.

\textbf{\rvmt{} execution phase:}
Initially, in \emph{\rvmt{} execution} phase each transaction invokes \emph{STM\_begin()} of \algoref{begin} for getting unique transaction id and \emph{local log}. Then transaction may encounter the \upmt{} or \rvmt{}. \npins{} of \algoref{insert}, first looks for the node corresponding to the $key$ into the ll\_list (\Lineref{insert2}). If $key$ is not found then it will create the $le$ and store the value, operation name and status (\Lineref{insert3} to \Lineref{insert7}) into it which would be validated and realized in shared memory in \nptc{}.

\nptc{} and \rvmt{} of \otm{} uses \nplsls{} to find the location at the \lsl{} (thus the name) in lock free manner. \Lineref{lslsearch3} to \Lineref{lslsearch8} and \Lineref{lslsearch11} to \Lineref{lslsearch15} of \algoref{lslsearch} find the location at \lsl{} for $\bn$ and $\rn$ respectively. This is motivated by the search in lazylist \cite[section 9.7]{Herlihy:ArtBook:2012}. The $preds$ and $currs$ thus identified are subjected to \npintv{} of \algoref{interferenceValidation} and \nptov{} of \algoref{tovalidation} after acquiring locks on the $preds$ and $currs$ (\Lineref{lslsearch17} of \algoref{lslsearch}). If the validation succeeds \nplsls{} returns the correct location to the operation which invoked it, otherwise \nplsls{} retries (if concurrent interference detected) or aborts (if time order violated) post releasing locks (\Lineref{lslsearch22}).  

Interference validation helps detecting the execution where underlying data structure has been changed by second concurrent transaction while first was under execution without it realizing. This can be illustrated with \figref{nostm8}. Consider the history in \figref{nostm8}(iii) where two conflicting transactions $T_1$ and $T_2$ are trying to access key $k_5$, here $s_1$, $s_2$ and $s_3$ represent the state of the \lsl{} at that instant. Let at $s_1$ both the methods record the same $preds \langle k_1, k_3 \rangle $ and $currs \langle k_5, k_5 \rangle$ with the help of $\nplsls{}$ for key $k_5$ (refer \figref{nostm8}(i)). Now, let $d_1(k_5)$ acquire the lock on the $preds$ and $currs$ before the $l_2(k_5)$ and delete the node corresponding to the key $k_5$ from $\bn$ leading to state $s_2$ (in \figref{nostm8}(iii)) and commit. \figref{nostm8}(ii) shows the state $s_2$ where key $k_5$ is the part of $\rn$. Now, \npintv{} (in \algoref{interferenceValidation}) will identify that location of $l_2(k_5)$ is no more valid due to ($\bp.\bn$ $\neq$ $\bc$) at \Lineref{iv2} of \algoref{interferenceValidation}. Thus, $\nplsls{}$ will retry to find the updated location for $l_2(k_5)$ at state $s_3$ (in \figref{nostm8}(iii)) and eventually $T_2$ will commit.
%\vspace{-.5cm}
\begin{figure}[H]
	\centering
	\captionsetup{justification=centering}
	\centerline{\scalebox{0.5}{\input{figs/nostm8.pdf_t}}}
	\caption{Interference Validation for conflicting concurrent methods on key $k_5$}
	\label{fig:nostm8}
\end{figure}

\npluk{} \& \npdel{} behaves similarly during \emph{\rvmt{} execution} phase execept that \npdel{} is validated twice. First, in \emph{\rvmt{} execution} similar to \npluk{} and secondly in \emph{\upmt{} execution} (of \nptc{}) to ensure opacity\cite{GuerKap:2008:PPoPP}. We adopt lazy delete approach for \npdel{} \mth{}. Thus, nodes are marked for deletion and not physically deleted for \npdel{} \mth{}. In the current work we assume that a garbage collection mechanism is present and we donot worry about it.

\textbf{\upmt{} execution phase:} Finally a transaction after executing the designated operations reaches the \emph{\upmt{} execution} phase executed by the \nptc{} method. It starts with modifying the log to $ordered\_ll\_list$ which contains the log entries in sorted order of the keys (so that locks can be acquired in an order, refer \Lineref{tryc4} of \algoref{trycommit}) and contains only the \upmt{} (because we do not validate the lookup again for the reasons explained above for \figref{nostm19}). From \Lineref{tryc5} to \Lineref{tryc12} (in \algoref{trycommit}) we re-validate the modified log operation to ensure that the location for the operations has not changed since the point they were logged during \rvp{}. If the location for an operation has changed this block ensures that they are updated.

Now, \nptc{} enters the phase where it updates the shared memory using local data stored from \Lineref{tryc14} to \Lineref{tryc37} in \algoref{trycommit}. \figref{nostm10} \& \figref{nostm9} explain the execution of insert and delete in update phase of \nptc{} using \nplslins{} and \nplsldel{} respectively. \figref{nostm10}(i) represents the case when $k_5$ is neither present in $\bn$ and nor in $\rn$ (\Lineref{tryc28} to \Lineref{tryc31} in \algoref{trycommit}). It adds $k_5$ to \lsl{} at location $preds\langle k_3, k_4 \rangle $ and $currs\langle k_8, k_8 \rangle$. \figref{nostm10}(i)(a) is \lsl{} before addition of $k_5$ and \figref{nostm10}(i)(b) is \lsl{} state post addition. Similarly, \figref{nostm10}(ii) represents the case when $k_5$ is present in $\rn$ (\Lineref{tryc24} to \Lineref{tryc27} in \algoref{trycommit}). It adds $k_5$ to \lsl{} at location $pred \langle k_3, k_4 \rangle $ and $curr \langle k_5, k_8 \rangle$. \figref{nostm10}(i)(c) is \lsl{} before addition of $k_5$ into \bn{} and \figref{nostm10}(i)(d) is \lsl{} state post addition. In case of $d(k_5)$ from \lsl{} when $k_5$ is present in $\bn$ (\Lineref{tryc33} to \Lineref{tryc37} in \algoref{trycommit}) \figref{nostm9}(i) represent the \lsl{} state before $k_5$ is deleted at location $preds \langle k_1, k_3 \rangle $ and $currs \langle k_5, k_5 \rangle $ and \figref{nostm9}(ii) represents the \lsl{} state after deletion.       

\begin{figure}[H]
	\centering
	\captionsetup{justification=centering}
	\centerline{\scalebox{0.55}{\input{figs/nostm10.pdf_t}}}
	\caption{$i(k_5)$ using \nplslins{} in $\nptc{}$}
	\label{fig:nostm10}
\end{figure}

\cmnt{
Optimistically, \nptc{} of \algoref{trycommit}, find preds and currs for each $\upmt{}$ with the help of \nplsls{} method of \algoref{lslsearch}. Then \nplsls{} validates preds and currs for each $\upmt{}$. %Shared memory-   is accessed by acquiring ordered locks similar to 2PL mechanism\cite{WeiVoss:2002:Morg}. 
After successful validation of \nplsls{}, \nptc{} performs \nppoval{} of \algoref{povalidation}, to get the program order validation. If all the validations are successful then shared memory is modified aptly before locks will release.  Otherwise \nptc{} returns abort. We avoid the overhead of rollbacks in case of transaction aborts as transaction are not allowed to modify the shared memory. After successful validation of \nptc{}, the actual effect of \npins{} and \npdel{} takes place atomically.
}
\cmnt{
Consider \figref{nostm10} and \figref{nostm9} to illustrate more about \nptc{}. In \figref{nostm10}, we are taking two cases. First, last $\upmt{}$ of $\nptc{}$ wants to insert key $k_5$ in \figref{nostm10}(i), where $k_5$ is not the part of $\bn$ as well as $\rn$. Then it will create the node corresponding to the key $k_5$ and insert it into $\bn$ as well as $\rn$ with the help of $\nplslins{}$ of \algoref{lslins} (refer \figref{nostm10}(ii)). Second, last $\upmt{}$ of $\nptc{}$ wants to insert key $k_5$ in \figref{nostm10}(iii), where key $k_5$ is part of $\rn$ but not $\bn$. So after finding the location with the help of $\nplsls{}$ of \algoref{lslsearch}, $\nptc{}$ will add that node into $\bn$ as well by $\nplslins{}$ of \algoref{lslins} (refer \figref{nostm10}(iv)).  
}

\begin{figure}[tbph]
	\centering
	\captionsetup{justification=centering}
	\centerline{\scalebox{0.55}{\input{figs/nostm9.pdf_t}}}
	\caption{$d(k_5)$ using \nplsldel{} in $\nptc{}$}
	\label{fig:nostm9}
\end{figure}

\cmnt{
Last $\upmt{}$ of $\nptc{}$ wants to delete key $k_5$ in \figref{nostm9}(i), where $k_5$ is the part of $\bn$ as well as $\rn$. After finding the location with the help of \nplsls{} of \algoref{lslsearch}, $\nptc{}$ deletes the node corresponding to the key $k_5$ from the $\bn$ by $\nplsldel{}$ of \algoref{lsldelete} (refer \figref{nostm9}(ii)).
}
\cmnt{
We correct this through $\nppoval{}$ which is $\nppoval{}$ is invoked before every $\upmt{}$ over the \lsl{} in update phase of \nptc{}(\Lineref{tryc14} to \Lineref{tryc40} of \algoref{trycommit}). \figref{nostm16} represents the functionality of $\nppoval{}$ of \algoref{povalidation}. Here, If \nppoval{} fails for any \upmt{} then as a corrective measure the $preds$ and $currs$ of the \upmt{} under execution will be updated using the previous \upmt{}'s $preds$ and $currs$ with the help of its $ll\_entry$}
\cmnt{In \figref{nostm16}(i), same transaction is trying to insert key $k_5$ and $k_7$ in between the node $k_3$ and $k_8$. Before inserting the key $k_5$ both the operations recorded the same preds and currs (deferred write approach). But after successful insertion of key $k_5$ (refer \figref{nostm16}(ii)), preds for key $k_7$ has been changed. To solve this issue $\nppoval{}$ of \algoref{povalidation}, will check the preds and currs after every $\upmt{}$ of $\nptc{}$. If any changes occur then it updates the preds and currs with the help of previous $\upmt{}$. The same technique is used in the case of delete methods (\figref{nostm16}(iv)). After the successful insertion on key $k_5$ in \figref{nostm16}(v), it will search the new preds and currs with the help of $\nppoval{}$. New inserted node should be lock until the unlock of other preds and currs of same transaction. For example, all nodes ($k_5$ and $k_7$ in \figref{nostm16}(iii)) are locked so other transaction can't acquire the lock simultaneously. Please refer \secref{algorithmm} in appendix for more detail for pseudocode. 
}

\begin{figure}[H]
	\centering
	\captionsetup{justification=centering}
	\scalebox{.5}{\input{figs/nostm23.pdf_t}}
	% \vspace{-16cm}
	\captionof{figure}{Problem in execution without \nppoval{} ($i_1(k_5)$ and $i_1(k_7)$). (i) \lsl{} at state s. (ii) \lsl{} at state $s_1$. (iii) \lsl{} at state $s_2$.}
	\label{fig:nostm16}
\end{figure}

\noindent
In \cp{} two consecutive updates within same transaction having overlapping $preds$ and $currs$ may overwrite the previous \mth{} such that only effect of the later \mth{} is visible (\emph{lost update}). This happens because the previous \mth{} while updating, changes the \lsl{} causing the $preds$ \& $currs$ of the next \mth{} working on the consecutive key to become obsolete. \figref{nostm16} explains this lucidly. Suppose, $T_1$ is in update phase of \nptc{} at state $s$ where $i_1(k_5)$ and $i_1(k_7)$ are waiting to take effect over the \lsl{}. The \lsl{} at $s$ is as in \figref{nostm16}(i) also $i_1(k_5)$ and $i_1(k_7)$ have $preds \langle k_3, k_4 \rangle $ and $currs \langle k_8, k_8 \rangle $ as their location. Now, Lets say $i_1(k_5)$ adds $k_5$ between $k_3$ and $k_8$ and changes \lsl{} (as in \figref{nostm16}(ii)) at state $s_1$ in \figref{nostm16}(iv). But, at $s_1$ $\bn$ $preds$ and $currs$ of $i_1(k_7)$ are still $k_3$ and $k_8$ thus it wrongly adds $k_7$ between $k_3$ and $k_8$ overwriting $i_1(k_5)$ as shown in \figref{nostm16}(iii) with dotted links. We correct this through $\nppoval{}$ which updates current \upmt{}'s $preds$ and $currs$ with the help of its $le$. We discuss it in detail at \algoref{povalidation}. Next we elaborate each of the \mth{} exported by \otm{}.

\cmnt{
\textbf{\otm{} \mth{} execution flowchart}

\begin{figure}[H]
	\begin{multicols}{2}
		\includegraphics[scale=0.5]{figs/insertfc.png}\par\caption{flowchart for \npins}
		\includegraphics[scale=0.5]{figs/lookupfc.png}\par\caption{flowchart for \npluk}
	\end{multicols}	
			\includegraphics[scale=0.5]{figs/trycfc.png}\par\caption{flowchart for \nptc}
\end{figure}
}

	\section{\otm{} Pseudocode} \label{sec:pscode}
	%\vspace{-.6cm}
	%\includegraphics[scale=0.5]{figs/insertfc.png}
	%\includegraphics[scale=0.5]{figs/lookupfc.png}
	%\includegraphics[scale=0.5]{figs/trycfc.png}
	\cmnt{
		\begin{figure}[H]
			\centering
			\captionsetup{justification=centering}
			\centerline{\scalebox{0.48}{\input{figs/nostm24.pdf_t}}}
			\caption{Transaction lifecycle of \otm}
			\label{fig:nostm24}
		\end{figure}
		\vspace{-.5cm}
	}
	We now describe the implementation internals of the \otm{}. As discussed in life cycle of each transaction that every \otm{} transcation executes in two phases \rvmt \& \upmt{}. The \mth{s} executed in theses phases are \npbegin{()}, \npluk{}, \npins{}, \npdel{}, \nptc{}. We one by one explain each of the \mths{} in the ensuing text.
	  	
	\vspace{1mm}
	\noindent
	\textbf{\npbegin{}.} is the first function a transaction executes in its life cycle. It initiates the $txlog$ (local log) for the transaction (\Lineref{begin2}) and provides an unique id to the transaction (\Lineref{begin3}).

	%%%%%%%%%%%%%%%%%%%%%%%%%%%%%%%%%%%%%%%%%%%%%%%%%%%%%%%%%%%%%%%%%%%%%%%%%%%%%%%%%%%%%%%%%%%%%%%%%%%   Begin  %%%%%%%%%%%%%%%%%%%%%%%%%%%%%%%%%
	%%%%%%%%%%%%%%%%%%%%%%%%%%%%%%%%%%%%%%%%%%%%%%%%%%%%%%%%%%%%%%%%%%%%%%%
	%\setlength{\intextsep}{0pt}
	\begin{algorithm}[H]
		\caption{\tabspace[0.2cm]  STM\_begin($t\_id\uparrow$) : initiates local transaction log and return the transaction id.}     
		\label{algo:begin}
		\setlength{\multicolsep}{0pt}
		\begin{multicols}{2}
			
			\begin{algorithmic}[1]
				\scriptsize
				\makeatletter\setcounter{ALG@line}{0}\makeatother
				\Function{stm\_begin}{} 
				\label{lin:begin1}
				\cmntwa{ init the local log}
				\State txlog $\gets$ new txlog(); \label{lin:begin2}
				%\State    /*locking to get TS and populate Live Set atomically to ensure Min\_Tx increases*/
				%\State Lock(); 
				\cmntwa{atomic variable to assign transaction id i.e. TS initilized by OSTM as 0}
				\State t\_id$\gets$ $get\&inc(sh\_cntr \uparrow)$;//$\Phi_{lp}$ \label{lin:begin3}
				%\State txlog.t\_id $\gets$ t\_id; \label{lin:begin3}
				\State return $\langle t\_id\rangle$;\label{lin:begin4}
				%\State  /*Live Set is concurrent Min Heap*/
				%\State addToLiveSet($txlog.t\_id$); 
				%\State unlock();
				\EndFunction 
			\end{algorithmic}
			
		\end{multicols}
	\end{algorithm}

		 \noindent
		 \textbf{\npluk{}} in \algoref{lookup}. If this is the subsequent operation by a transaction $T_i$ for a particular key $k$ on hash-table $ht$ i.e. an operation on $k$ has already been scheduled with in the same transaction $T_i$, then this \npluk{} return the value from the $txlog$ and does not access shared memory (\Lineref{lookup3} to \Lineref{lookup10} in \algoref{lookup}). If the last operation was an \npins{} (or \npluk{}) on same key then the subsequent \npluk{} of the same transaction returns the previous value (\Lineref{lookup6} in \algoref{lookup}) inserted (or observed) without accessing shared memory, and if the last operation was an \npdel{} then \npluk{} returns the value NULL (\Lineref{lookup9} in \algoref{lookup}) and is said to have failed. Thus in this process subsequent \mth{s} also have same conflicts as the first \mth{} on same key within the same transaction (\emph{conflict inheritance}) as indicated by LR1 in \subsecref{legal}. 
	
	%%%%%%%%%%%%%%%%%%%%%%%%%%%%%%%%%%%%%%%%%%%%%%%%%%%%%%%%%%%%%%%%%%%%%%%%%%
	%-----------------------------------------LOOKUP-------------------------------------%------
	%----
	%%%%%%%%%%%%%%%%%%%%%%%%%%%%%%%%%%%%%%%%%%%%%%%%%%%%%%%%%%%%%%%%%%%%%%%%%%
	%\begin{spacing}{2}
	\begin{algorithm}[H]
		%\algsetup{linenosize=\tiny}
		\scriptsize
		\caption{\tabspace[0.2cm] STM\_lookup($t\_id \downarrow, obj\_id \downarrow, key \downarrow, value \uparrow, op\_status \uparrow$ ):If the transaction to which this operation belongs has locally done an operation on the same key then returns apt value and status(wrt the previous local operation). Else do the \nplsls{} to find the correct location of the key and validate it.}
		\label{algo:lookup}
		\setlength{\multicolsep}{0pt}
		\begin{multicols}{2}
			
			\begin{algorithmic}[1]
				\makeatletter\setcounter{ALG@line}{7}\makeatother
				\Function{STM\_lookup}{} \label{lin:lookup1}
				\State STATUS $op\_status$ $\gets$ RETRY \label{lin:lookup2};
				\State{\cmntwa{get the txlog of the current transaction by t\_id}};
				\State txlog $\gets$ getTxLog($t\_id \downarrow$);
				
				\cmntwa{If already in log update the le with the current operation}
				\If{$($\txlfind$)$} \label{lin:lookup3}
				%            \State    /*get the previous operation's name*/
				\State opn $\gets$ \llgopn{} \label{lin:lookup4}; %\Comment{$\Phi_{lp}$}
				
				\cmntwa{if previous operation is insert/lookup then current method would have value/op\_status same as previous log entry}
				\If{$(($\textup{INSERT} $=$ \textup{opn} $)||($ \textup{LOOKUP} $=$ \textup{opn}$))$} \label{lin:lookup5}
				
				\State $value$ $\gets$ \llgval{} \label{lin:lookup6};
				\State $op\_status$ $\gets$  $le.getOpStatus$($obj\_id \downarrow$, $key \downarrow$) \label{lin:lookup7};
				\cmntwa{if previous operation is delete then current method would have value as NULL and op\_status as FAIL}
				\ElsIf{$($\textup{DELETE} $=$ \textup{opn}$)$} \label{lin:lookup8}
				\State $value$ $\gets$ NULL \label{lin:lookup9}; 
				\State $op\_status$ $\gets$ FAIL \label{lin:lookup10}; 
				\EndIf \label{lin:lookup11}
				\Else \label{lin:lookup12}
				\cmntwa{common function for \rvmt{}, if node corresponding to the key is not the part of underlying DS}
				\State \cld{};\label{lin:lookupa20}
				\cmnt{            
					\cmntwa{le$\langle obj\_id,key \rangle$ is not in log, search correct location for the operation over lsl and lock the corresponding \preds and \currs.}
					\State \lsls{$RV \downarrow$} \label{lin:lookup13}; 
					\If{$(op\_status$ $=$ \textup{ABORT}$)$} \label{lin:lookup14}
					\cmntwa{release local memory in case lslSearch returns abort}
					\State \handlea{} \label{lin:lookup15};
					\State return $\langle op\_status\rangle$;
					\Else \label{lin:lookup16}
					\cmntwa{if node$ \langle obj\_id, key \rangle$ is present update its lookup timestamp}	\If{$(\texttt{read}($$\bc$.\textup{key}$)$ $=$ key$)$} \label{lin:lookup17}
					\cmntwa{$node<obj\_id, key>$ is part of blue list}
					\State $op\_status$ $\gets$ OK \label{lin:lookup18};
					%\State /*update the max\_ts of lookup for node corresponding to the key*/
					\State $\texttt{write}$($\bc$.max\_ts.lookup, TS($t\_id$)) \label{lin:lookup19};
					\State $value$ $\gets$ $\bc.value$ \label{lin:lookup20};
					\ElsIf{$(\texttt{read}($$\rc$.\textup{key}$)$ $=$ key$)$} \label{lin:lookup21}
					\cmntwa{$node<obj\_id, key>$ is part of red list}
					\State $op\_status$ $\gets$ FAIL \label{lin:lookup22};
					%\State    /*update the max\_ts of lookup for node corresponding to the key*/
					\State $\texttt{write}$($\rc$.max\_ts.lookup, TS($t\_id$)) \label{lin:lookup23};
					
					\State $value$ $\gets$ NULL \label{lin:lookup24};
					\Else \label{lin:lookup25}
					\cmntwa{if node$<obj\_id,key>$ is neither in blue or red list add the node in red list and update timestamp}
					\State \lslins{\textcolor{red}{$RL$} $\downarrow$} \label{lin:lookup26}; 
					\State $op\_status$ $\gets$ FAIL \label{lin:lookup27};
					\State $\texttt{write}$(\textcolor{red}{$sh\_node$}.max\_ts.lookup, TS($t\_id$)) \label{lin:lookup28};
					\State $value$ $\gets$ NULL \label{lin:lookup29};
					\EndIf \label{lin:lookup30}
					
					\cmntwa{release all the locks}
					\State{releasePred\&CurrLocks($\preds \downarrow$, $\currs \downarrow$);}

					\cmntwa{new log entry created to help upcoming method on the same key of the same tx}
					\State le $\gets$ new ll\_entry$\langle obj\_id \downarrow, key\downarrow\rangle$ \label{lin:lookup31}; 
					\State \llsval{$NULL \downarrow$};
					\State \llspc{} \label{lin:lookup32};
					
					\State \llsopn{$LOOKUP \downarrow$} \label{lin:lookup33};
					\EndIf \label{lin:lookup38}
				}					
				\EndIf \label{lin:lookup39}
				\cmntwa{update the local log}
				\State \llsopn{$LOOKUP \downarrow$} \label{lin:lookup33};
				\State \llsopst{$op\_status \downarrow$} \label{lin:lookup40};
				\State return $\langle value, op\_status\rangle$\label{lin:lookup41}; 
				
				\EndFunction \label{lin:lookup42}
			\end{algorithmic}
			
		\end{multicols}
		
	\end{algorithm}
	
	If \npluk{} is the first operation on a particular key then it has to do a wait free traversal (\Lineref{delete21} in \algoref{commonLu&Del}) with the help of $\nplsls$ (\algoref{lslsearch}) to identify the target node ($preds$ and $currs$) to be logged in $txlog$. These logged $preds$ \& $currs$ are utilized for subsequent \mth{s} in \rvp{} (discussed above for the case where \npluk{} is the subsequent \mth{}). The commonLu\&Del() algorithm is invoked at \Lineref{lookupa20} of \algoref{lookup}. If the node is present as blue (or red) node then it updates the operation status as OK (or FAIL) and returns the value respectively (\Lineref{delete25} to \Lineref{delete32} in \algoref{commonLu&Del}). If node corresponding to the key is not found  then it inserts that node (\Lineref{delete33} to \Lineref{delete37} in \algoref{commonLu&Del}) corresponding to the key into  $\rn$ of \lsl{}. The inserted node can be accessed only via red links. Hence, it will not visible to any subsequent \npluk{}. The node is inserted to take care of situations as  illustrated in \figref{nostm25} \& \figref{nostm26} . Finally, it updates the meta information in $txlog$ and releases the locks acquired inside $\nplsls$ (\Lineref{delete433} to \Lineref{delete41}).   %This also helps to show that the transactions take effect atomically.

		%%%%%%%%%%%%%%%%%%%%%%%%%%%%%%%%%%%%%%%%%%%%%%%%%%%%%%%%%%%%%%%%%%%%%%%%%%
	%-------------------------------------commonLu&Del()------------------------------------------------------
	%%%%%%%%%%%%%%%%%%%%%%%%%%%%%%%%%%%%%%%%%%%%%%%%%%%%%%%%%%%%%%%%%%%%%%%%%%
	\begin{algorithm}
		\scriptsize
		\caption{\tabspace[0.2cm] commonLu\&Del($t\_id \downarrow, obj\_id \downarrow, key \downarrow, value \uparrow, op\_status \uparrow$ ) %: If the transaction has locally done an operation on the same key then returns apt value and status. Else do the \nplsls{} to find the correct location of the key and validate it after that locally logs the method information to be revalidated and written in underlying data-structure during \nptc{}.
			%\emph{DESCP}\tabspace: If the Tx has locally done an operation returns apt value else validates and locally logs the \tabspace[2.2cm]operational information to be revalidated and written in underlying DS during tryCommit().\\
			%\emph{IN}\tabspace \tabspace[0.72cm]: $obj\_id$, $key$\\
			%\emph{OUT}\tabspace \tabspace[0.37cm]: $value$, $op\_status$     
		}
		\label{algo:commonLu&Del}
		\setlength{\multicolsep}{0pt}
		\begin{multicols}{2}
			\begin{algorithmic}[1]
				\makeatletter\setcounter{ALG@line}{33}\makeatother
				\Function{commonLu\&Del}{} 
				
				%		\Statex{    //by default setting the op\_status as RETRY}
				%		\State $op\_status$ $\gets$ RETRY \label{lin:delete2};
				
				\cmntwa{le$ \langle obj\_id,key \rangle$ is not in log, search correct location for the operation over lsl and lock the corresponding \preds and \currs.}
				\State \lsls{$RV \downarrow$} \label{lin:delete21};
				\If{$(op\_status$ $=$ \textup{ABORT}$)$} \label{lin:delete22}
				%		\State release all the locks
				\cmntwa{release local memory in case lslSearch returns abort}
				\State \handlea{} \label{lin:delete23}; 
				\State return $\langle op\_status\rangle$;

				\Else \label{lin:delete24}
				\cmntwa{if node$ \langle obj\_id, key \rangle$ is present update its lookup timestamp as delete in rv phase behaves as lookup}
				\If{$(\texttt{read}($$\bc$.\textup{key}$)$ $=$ key$)$} \label{lin:delete25}
				\cmntwa{$node \langle obj\_id, key \rangle$ is part of blue list}
				\State $op\_status$ $\gets$ OK \label{lin:delete26};
				\State $\texttt{write}$($\bc$.max\_ts.lookup, TS($t\_id$)) \label{lin:delete27};
				\State $value$ $\gets$ $\bc.value$ \label{lin:delete28};
				
				\ElsIf{$(\texttt{read}($$\rc$.\textup{key}$)$ $=$ key$)$} \label{lin:delete29}
				\cmntwa{$node \langle obj\_id, key \rangle$ is part of red list}		
				
				\State $op\_status$ $\gets$ FAIL \label{lin:delete30};
				%		\cmntwa{update the max\_ts of lookup for node corresponding to the key}
				
				\State $\texttt{write}$($\rc$.max\_ts.lookup, TS($t\_id$)) \label{lin:delete31};
				\State $value$ $\gets$ NULL \label{lin:delete32};
				\Else \label{lin:delete33}
				\cmntwa{if node$ \langle obj\_id,key \rangle$ is neither in blue or red list add the node in red list and update timestamp}
				\State \lslins{\textcolor{black}{$RL$} $\downarrow$} \label{lin:delete34};
				\State $op\_status$ $\gets$ FAIL \label{lin:delete35};
				
				\State $\texttt{write}$(\textcolor{black}{$sh\_node$}.max\_ts.lookup, TS($t\_id$)) \label{lin:delete36};
				
				\State $value$ $\gets$ NULL \label{lin:delete37};
				\EndIf \label{lin:delete38}
				\cmntwa{release all the locks}
				\State{releasePred\&CurrLocks($\preds \downarrow$, $\currs \downarrow$)}\label{lin:delete433};
				\cmntwa{create new log entry in log}
				
				\State le $\gets$ new le$\langle obj\_id \downarrow, key\downarrow\rangle$\label{lin:delete39};
				\State \llsval{$NULL \downarrow$} \label{lin:delete40};
				
				\State \llspc{} \label{lin:delete41};
				
				\EndIf\label{lin:delete47}

				\State return $\langle value, op\_status\rangle$
				\EndFunction 
			\end{algorithmic}
		\end{multicols}	
	\end{algorithm}
	
	We prefer \npluk{} to be validated instantly and is never validated again in \nptc{} as the design choice to aid performance. Let's consider \otm{} history in \figref{nostm19}(i), if we would have validated $l(ht, k_1, v_0)$ again during $tryC$, $T_1$ would abort due to time order violation\cite{WeiVoss:2002:Morg}, but we can see that this history is acceptable where $T_1$ can be serialized before $T_2$ (\figref{nostm19}(ii)). Thus, \otm{} prevents such unnecessary aborts. Another advantage for this design choice is that $T_1$ doesn't have to wait for $tryC$ to know that the transaction is bound to abort as can be seen in \figref{nostm19}(iii). Here $l(ht, k_1, Abort)$ instantly aborts as soon as it realizes that time order is violated and schedule can no more be ensured to be correct saving significant computations of $T_1$.  This gain becomes significant if the application is lookup intensive where it would be inefficient to wait till \nptc{} to validate the \npluk{} only to know that transaction has to abort.

	\begin{figure}[H]
		\centering
		\captionsetup{justification=centering}
		{\scalebox{0.47}{\input{figs/nostm19.pdf_t}}}
		\caption{Advantages of encounter time lookup validation.}
		\label{fig:nostm19}
	\end{figure}

	 \noindent
	 \textbf{\npdel{}} (\algoref{delete}) in \rvp{} executes as similar to \rvmt{} and in \cp{} executes as \upmt{}. In \rvp{}, the \npdel{} first checks if their is already a previous \mth{} on same $key$ with the help of $txlog$. In case their is already a \mth{} that executed on same $key$, \npdel{} does not need to touch shared memory and sees the effect of the previous \mth{} and returns accordingly (\Lineref{delete3} to \Lineref{delete18}). For example if previous executed \mth{} is an insert then the current \npdel{} \mth will return $OK$ (\Lineref{delete5} to \Lineref{delete9}). If the previous executed \mth{} is an \npdel{} then the current \npdel{} should return $FAIL$ (\Lineref{delete10} to \Lineref{delete13}). In case previous \mth{} was \npluk{} then current \npdel{} returns the status same as that of the previous \npluk{} \mth{} also overwriting the log for the $value$ and $opn$.	
	%%%%%%%%%%%%%%%%%%%%%%%%%%%%%%%%%%%%%%%%%%%%%%%%%%%%%%%%%%%%%%%%%%%%%%%%%%
	%-------------------------------------DELETE------------------------------------------------------
	%%%%%%%%%%%%%%%%%%%%%%%%%%%%%%%%%%%%%%%%%%%%%%%%%%%%%%%%%%%%%%%%%%%%%%%%%%
	\begin{algorithm}[H]
		\scriptsize
		\caption{\tabspace[0.2cm] STM\_delete($t\_id \downarrow, obj\_id \downarrow, key \downarrow, value \uparrow, op\_status \uparrow$ ) %: If the transaction has locally done an operation on the same key then returns apt value and status. Else do the \nplsls{} to find the correct location of the key and validate it after that locally logs the method information to be revalidated and written in underlying data-structure during \nptc{}.
			%\emph{DESCP}\tabspace: If the Tx has locally done an operation returns apt value else validates and locally logs the \tabspace[2.2cm]operational information to be revalidated and written in underlying DS during tryCommit().\\
			%\emph{IN}\tabspace \tabspace[0.72cm]: $obj\_id$, $key$\\
			%\emph{OUT}\tabspace \tabspace[0.37cm]: $value$, $op\_status$     
		}
		\label{algo:delete}
		\setlength{\multicolsep}{0pt}
		\begin{multicols}{2}
			\begin{algorithmic}[1]
				\makeatletter\setcounter{ALG@line}{68}\makeatother
				\Function{STM\_delete}{} \label{lin:delete1}
				
				%		\Statex{    //by default setting the op\_status as RETRY}
				%		\State $op\_status$ $\gets$ RETRY \label{lin:delete2};
				\State STATUS $op\_status$ $\gets$ RETRY;
				\cmntwa{get the txlog of the current transaction by t\_id}
				\State txlog $\gets$ getTxLog($t\_id \downarrow$);
				\cmntwa{If le$\langle obj\_id, key \rangle$ already in log, update the le with the current operation}
				\If{$($\txlfind$)$} \label{lin:delete3}
				%\State /*get the previous operation on the same key to local log*/
				\State opn $\gets$ \llgopn{} \label{lin:delete4}; %\Comment{$\Phi_{lp}$}
				\cmntwa{if previous local method is insert and current operation is delete then overall effect should be of delete, update log accordingly}
				\If{$($\textup{INSERT} $=$ \textup{opn}$)$} \label{lin:delete5}
				\State $value$ $\gets$ \llgval{} \label{lin:delete6};
				
				\State \llsval{$NULL \downarrow$} \label{lin:delete7};
				
				\State \llsopn{$DELETE \downarrow$} \label{lin:delete8};
				
				\State $op\_status$ $\gets$ OK \label{lin:delete9};
				\cmntwa{if previous local method is delete and current operation is delete then overall effect should be of delete, update log accordingly}
				\ElsIf{$($\textup{DELETE} $=$ \textup{opn}$)$} \label{lin:delete10}
				\State \llsval{$NULL \downarrow$} \label{lin:delete11};
				\State $value$ $\gets$ NULL \label{lin:delete12}; 
				\State $op\_status$ $\gets$ FAIL \label{lin:delete13}; 
				\Else \label{lin:delete14}
				\cmntwa{if previous local method is lookup and current operation is delete then overall effect should be of delete, update log accordingly}
				\State $value$ $\gets$ \llgval{} \label{lin:delete15}; 
				
				\State \llsval{$NULL \downarrow$} \label{lin:delete16};

				\State \llsopn{$DELETE \downarrow$} \label{lin:delete17};
				\State $op\_status$ $\gets$  $le.getOpStatus$($obj\_id \downarrow$, $key \downarrow$) \label{lin:delete18};
				
				\EndIf \label{lin:delete19}
				\Else \label{lin:delete20}
				\cmntwa{common function for \rvmt{}, if node corresponding to the key is not the part of underlying DS}
				\State \cld{}\label{lin:deletecld};
				\cmnt{		
					\cmntwa{le$ \langle obj\_id,key \rangle$ is not in log, search correct location for the operation over lsl and lock the corresponding \preds and \currs.}
					\State \lsls{$RV \downarrow$} \label{lin:delete21};
					\If{$(op\_status$ $=$ \textup{ABORT}$)$} \label{lin:delete22}
					%		\State release all the locks
					\cmntwa{release local memory in case lslSearch returns abort}
					\State \handlea{} \label{lin:delete23}; 
					\State return $\langle op\_status\rangle$;

					\Else \label{lin:delete24}
					\cmntwa{if node$ \langle obj\_id, key \rangle$ is present update its lookup timestamp as delete in rv phase behaves as lookup}
					\If{$(\texttt{read}($$\bc$.\textup{key}$)$ $=$ key$)$} \label{lin:delete25}
					\cmntwa{$node<obj\_id, key>$ is part of blue list}
					\State $op\_status$ $\gets$ OK \label{lin:delete26};
					\State $\texttt{write}$($\bc$.max\_ts.lookup, TS($t\_id$)) \label{lin:delete27};
					\State $value$ $\gets$ $\bc.value$ \label{lin:delete28};
					
					\ElsIf{$(\texttt{read}($$\rc$.\textup{key}$)$ $=$ key$)$} \label{lin:delete29}
					\cmntwa{$node<obj\_id, key>$ is part of red list}		
					
					\State $op\_status$ $\gets$ FAIL \label{lin:delete30};
					%		\cmntwa{update the max\_ts of lookup for node corresponding to the key}
					
					\State $\texttt{write}$($\rc$.max\_ts.lookup, TS($t\_id$)) \label{lin:delete31};
					\State $value$ $\gets$ NULL \label{lin:delete32};
					\Else \label{lin:delete33}
					\cmntwa{if node$<obj\_id,key>$ is neither in blue or red list add the node in red list and update timestamp}
					\State \lslins{\textcolor{red}{$RL$} $\downarrow$} \label{lin:delete34};
					\State $op\_status$ $\gets$ FAIL \label{lin:delete35};
					
					\State $\texttt{write}$(\textcolor{red}{$sh\_node$}.max\_ts.lookup, TS($t\_id$)) \label{lin:delete36};
					
					\State $value$ $\gets$ NULL \label{lin:delete37};
					\EndIf \label{lin:delete38}
					\cmntwa{release all the locks}
					\State{releasePred\&CurrLocks($\preds \downarrow$, $\currs \downarrow$)}\label{lin:delete433};
					\cmntwa{create new log entry in log}
					
					\State le $\gets$ new ll\_entry$\langle obj\_id \downarrow, key\downarrow\rangle$\label{lin:delete39};
					\State \llsval{$NULL \downarrow$} \label{lin:delete40};
					
					\State \llspc{} \label{lin:delete41};
					
					\State \llsopn{$DELETE \downarrow$} \label{lin:delete42};
					
					\EndIf\label{lin:delete47}
				}
				\EndIf \label{lin:delete48}
				\cmntwa{update the local log}
				\State \llsopn{$DELETE \downarrow$} \label{lin:delete42};
				
				\State \llsopst{$op\_status \downarrow$} \label{lin:delete49};
				\State return $\langle value, op\_status\rangle$\label{lin:delete50};
				\EndFunction \label{lin:delete51}
			\end{algorithmic}
		\end{multicols}	
	\end{algorithm}	
	In case the current \npdel{} is not the first \mth{} on $key$ then it touches the shared memory to identify the correct location over the \tab{} from \Lineref{delete20} to \Lineref{delete49} (this refers to implementing the LR1 \& LR2). In order to do this \nplsls{} gives the correct location for the current \npdel{} to take effect over the \tab{} in form of $preds$ and $currs$ (\Lineref{delete21} in \algoref{commonLu&Del}) along with the validation status which reveals whether the \npdel{} will succeed or abort. If the $op\_status$ is Abort, the \mth{} simply aborts the transaction. Otherwise, \npdel{} updates the local log and the time stamps of the corresponding nodes in the \lsl{} of the \tab{} from line \Lineref{delete24} to \Lineref{delete49}. 
	
	From \Lineref{delete25} to \Lineref{delete28}, \npdel{} observes that the node to be deleted is reachable from $\bn$ i.e. it is \bc{} thus it updates it's time-stamp field and returns $op\_status$ to $OK$ with the value of \bc{} (the update corresponding to this case takes place in \nptc{} as represented in \figref{1nostm9}). From \Lineref{delete29} to \Lineref{delete32}, \npdel{} observes that the node to be deleted is reachable by $\rn$ i.e. it is \rc{} thus it updates its time-stamp field and sets $op\_status$ to $FAIL$ (as the node is dead node or marked for deletion) and value returned is $NULL$. Otherwise, in \Lineref{delete33} to \Lineref{delete37} the node is not at all present in \lsl{}. Thus first \npdel{} adds a node in \rn{} and updates its time-stamp and returns the $value$ as $NULL$ and sets the $op\_status$ as $FAIL$ (\figref{0nostm} and \figref{1nostm1} represents the case). \Lineref{delete40}, \Lineref{delete41} and \Lineref{delete42} sets the $value$, location and $opn$ in local log respectively. At \Lineref{delete433} the locks acquired(in invoked \nplsls{}) to update shared memory time-stamps are released in order.

	\begin{figure}[H]
		\centering
		\begin{minipage}[b]{0.49\textwidth}
			\scalebox{.47}{\input{figs/nostm.pdf_t}}
			\centering
			\captionsetup{justification=centering}
			% \vspace{-16cm}
			\captionof{figure}{$k_{10}$ is not present in $\bn$ as well as $\rn$}
			\label{fig:0nostm}
		\end{minipage}
		\hfill
		\begin{minipage}[b]{0.49\textwidth}
			\centering
			\captionsetup{justification=centering}
			\scalebox{.47}{\input{figs/nostm1.pdf_t}}
			% \vspace{-16cm}
			\captionof{figure}{Adding $k_{10}$ into $\rn$}
			\label{fig:1nostm1}
		\end{minipage}   
	\end{figure}
	
			 \noindent
			 \textbf{\npins{}} method in \rvp{} simply checks if their is a previous \mth{} that executed on the same $key$. If their is already a previous \mth{} that has executed within the same transaction it simply updates the new $value$, $opn$ as insert and $op\_status$ to $OK$ (\Lineref{insert5}, \Lineref{insert6} and \Lineref{insert7} respectively). In case the \npins{} is the first \mth{} on $key$ it creates a new log entry for the $ll\_list$ of $txlog$ at \Lineref{insert3}. Finally the \npins{} gets to modify the underlying \tab{} using \nplslins{} at the \cp{} in \nptc.

		%%%%%%%%%%%%%%%%%%%%%%%%%%%%%%%%%%%%%%%%%%%%%%%%%%%%%%%%%%%%%%%%%%%%%%%%%%
		%%%%%%%%%%%%%%%%%%%%%%%% Insert      %%%%%%%%%%%%%%%%%%%%%%%%%%%%%%%%%%%%%
		%%%%%%%%%%%%%%%%%%%%%%%%%%%%%%%%%%%%%%%%%%%%%%%%%%%%%%%%%%%%%%%%%%%%%%%%%%
		
		\begin{algorithm}[H]
			\scriptsize
			\caption{\tabspace[0.2cm] STM\_insert ($t\_id \downarrow, obj\_id \downarrow, key\downarrow, value\downarrow, op\_status\uparrow$) : updates log entry and return op\_status locally.%: Optimistically defers the insert operation till the $\nptc{}$, stores the operational info in local log.
				%\emph{DESCP}\tabspace: Optimistically defers the insert operation till the tryCommit(), stores the operational info in \tabspace[2.2cm] local log.\\
				%\emph{IN}\tabspace \tabspace[0.72cm]: $obj\_id, key, value$  \\
				%\emph{OUT}\tabspace \tabspace[0.37cm]:     
			}
			\label{algo:insert}
			\setlength{\multicolsep}{0pt}
			\begin{multicols}{2}
				\begin{algorithmic}[1]
					\makeatletter\setcounter{ALG@line}{102}\makeatother
					\scriptsize
					\Function{STM\_insert}{} \label{lin:insert1}
					\State STATUS op\_status $\gets$ OK;
					\cmntwa{get the txlog of the current transaction by t\_id}
					\State txlog $\gets$ getTxLog($t\_id \downarrow$);
					\If{$(!$\txlfind$)$} \label{lin:insert2}
					\cmntwa{no le present for this $\langle obj\_id, key\rangle$, create one}
					
					\State le $\gets$ new le$\langle obj\_id \downarrow, key\downarrow\rangle$\label{lin:insert3}; 
					\EndIf \label{lin:insert4}
					
					\cmntwa{le present for $\langle obj\_id, key\rangle$, merely update the log}
					\State \llsval{$value \downarrow$} \label{lin:insert5}; //$\Phi_{lp}$
					\State \llsopn{$INSERT \downarrow$}
					\label{lin:insert6};
					
					\State \llsopst{$OK \downarrow$} \label{lin:insert7};
					\cmntwa{return op\_status to the transaction that invoked insert}
					\State return $\langle op\_status\rangle$;
					\EndFunction \label{lin:insert8}
					
				\end{algorithmic}
			\end{multicols}
		\end{algorithm}
		
		\cmnt{
			\begin{figure}[H]
				\centering
				\captionsetup{justification=centering}
				\includegraphics[width=5cm, height=4cm]{figs/insertfc.png}
				\caption{Flow of \npins{}}
				\label{fig:insertfc}
			\end{figure}
			%\end{multicols}
		}

	The \nptc{} in \algoref{trycommit} implements the \cp{}. When a transaction is ready to commit it executes \nptc{} and fetches its $txlog$ into $ll\_list$ (at \Lineref{tryc3}). Next, this list is sorted in increasing order of keys accessed by the transaction during its lifetime at \Lineref{tryc4}. This is done to ensure that locks are acquired in an order to ensure deadlock free execution. It may so happen that the $preds$ \& $currs$ recorded by the transaction may be obsolete, thus a need for recalculating the $preds$ \& $currs$ arises. This is done using \nplsls{} which recalculates the $preds$ \& $currs$ and decides the $op\_status$ (\Lineref{tryc5}-\Lineref{tryc12}). 
	
	Now, from \Lineref{tryc14} to \Lineref{tryc43} the shared memory data structure (underlying hash table) is changed. Each \upmt{} modifies the underlying \tab one by one. While the shared memory is updated the $preds$ \& $currs$ may get obsolete as explained in \figref{nostm16}. We handle this using \nppoval{} in \algoref{povalidation} invoked at \Lineref{tryc17}. The different cases for insert are handled at \Lineref{tryc18}- \Lineref{tryc32}. The case where the node to be inserted is already present (i.e. reachable by \bn) is handled in block from \Lineref{tryc19} to \Lineref{tryc23}. When key to be inserted is present in the \tab{} but corresponding node is marked (i.e. only reachable by the \bn), \Lineref{tryc24} to \Lineref{tryc27} insert it in \bn{} as well. The \Lineref{tryc28} to \Lineref{tryc32} take care of the case where node corresponding to the key to be inserted is not at all present in the \tab{}.
	
	When the \mth{} is \npdel{} and the node to be deleted is present in the \tab{} (i.e reachable by \bn{}), \Lineref{tryc33} to \Lineref{tryc41} set the node marked using \nplsldel{}. Finally, the acquired locks are released at \Lineref{tryc44} and the transaction status is returned.
	
	%%%%%%%%%%%%%%%%%%%%%%%%%%%%%%%%%%%%%%%%%%%%%%%%%%%%%%%%%%%%%%%%%%%%%%%%%%
	%-----------------------------------------TryCommit----------------------------------%---------
	%----
	%%%%%%%%%%%%%%%%%%%%%%%%%%%%%%%%%%%%%%%%%%%%%%%%%%%%%%%%%%%%%%%%%%%%%%%%%%
	
	\begin{algorithm}[H]
		\scriptsize
		\caption{\tabspace[0.2cm] STM\_tryC($t\_id \downarrow, tx\_status \uparrow$) %: All update methods will takes effect in underlying data-structure atomically if the validation will succeed. 
			%\emph{DESCP}\tabspace: All update methods of the transaction take ordered locks, validate and update underlying \tabspace[2.2cm]  data-structure atomically. \\
			%\emph{IN}\tabspace \tabspace[0.72cm]: \\
			%\emph{OUT}\tabspace \tabspace[0.37cm]: $txstatus$     
		}
		\label{algo:trycommit}
		\setlength{\multicolsep}{0pt}
		\begin{multicols}{2}
			\begin{algorithmic}[1]
				\makeatletter\setcounter{ALG@line}{117}\makeatother
				\Function{STM\_tryC}{} \label{lin:tryc1}
				
				%\cmntwa{get the tx id}
				%		\State $t\_id$ $\gets$ getTS($t\_id$) \label{lin:tryc2};
				\cmntwa{get the txlog of the current transaction by t\_id}	
				\State $ll\_list$ $\gets$ \txgllist \label{lin:tryc3};
				\cmntwa{sort the local log in increasing order of keys and copy into ordered list}
				\State $ordered\_ll\_list$ $\gets$ \llsort{} \label{lin:tryc4};
				\cmntwa{identify the new preds and currs for all update methods of a tx and validate it}
				\While{$(\textbf{$le_i \gets \textup{next}(ordered\_ll\_list$}))$} \label{lin:tryc5}
				\State ($key, obj\_id$) $\gets$ \llgkeyobj{} \label{lin:tryc6};
				\cmntwa{search correct location for the operation over lsl and lock the corresponding \preds and \currs}
				\State \lsls{$TRYC \downarrow$} \label{lin:tryc7};
				\cmntwa{if lslSearch return op\_status as ABORT then method will return ABORT}
				\If{$(op\_status$ $=$ \textup{ABORT}$)$} \label{lin:tryc8}
				%		\State release all the locks
				\cmntwa{release local memory in case lslSearch returns abort}
				\State \handlea{} \label{lin:tryc9};
				\State return $\langle op\_status\rangle$;
				%\State return \label{lin:tryc10};
				
				\EndIf \label{lin:tryc11}
				\cmntwa{modify the log entry to help upcoming update method of same tx}
				\State \llspc{} \label{lin:tryc12};
				\EndWhile \label{lin:tryc13}
				\cmntwa{get each update method one by one and take the effect in underlying DS}
				\While{$(\textbf{$le_i \gets \textup{next}(ordered\_ll\_list$}))$} \label{lin:tryc14}
				\State ($key, obj\_id$) $\gets$ \llgkeyobj{} \label{lin:tryc15};
				\cmntwa{get the operation name to local log entry}
				\State opn $\gets$ $le_i$.opn \label{lin:tryc16};
				\cmntwa{if operation is insert then after successful completion of it node corresponding to the key should be part of \bn}
				
				\cmntwa{modify the preds and currs for the consecutive update methods which are working on overlapping zone in lazyrb-list}
				\State intraTransValdation($le_i \downarrow, \preds \uparrow, \currs \uparrow$) \label{lin:tryc17};
				
				\If{$($\textup{INSERT} $=$ \textup{opn}$)$} \label{lin:tryc18}
				\cmntwa{if node corresponding to the key is part of \bn}
				\If{$\texttt{read}(\bc.\textup{key}) = key)$} \label{lin:tryc19}
				\cmntwa{get the value from local log}
				%	\State $value$ $\gets$ $read(\bc$.value$)$ \label{lin:tryc20};
				\State $value$ $\gets$ \llgval{} \label{lin:tryc20};
				\cmntwa{update the value into underlying DS}	
				\State $\texttt{write}$($\bc$.value, $value$) \label{lin:tryc21};
				%\cmntwa{set the op\_status as OK in local log}
				%		\State \llsopst{$OK \downarrow$} \label{lin:tryc22};
				\cmntwa{update the max\_ts of insert for node corresponding to the key into underlying DS}
				\State $\texttt{write}$($\bc$.max\_ts.insert, TS($t\_id$)) \label{lin:tryc23}; 
				\cmntwa{if node corresponding to the key is part of \rn}
				\ElsIf{$(\texttt{read}($$\rc$.\textup{key}$)$ $=$ key$)$} \label{lin:tryc24}
				\cmntwa{connect the node corresponding to the key to \bn as well}
				\State \lslins{\textcolor{black}{$RL$}$\_$\textcolor{black}{$BL$} $\downarrow$} \label{lin:tryc25};
				%\cmntwa{set the op\_status as OK in local log}
				
				%		\State \llsopst{$OK \downarrow$} \label{lin:tryc26};
				\cmntwa{update the max\_ts of insert for node corresponding to the key into underlying DS}
				\State $\texttt{write}$($\rc$.max\_ts.insert, TS($t\_id$)) \label{lin:tryc27};
				
				\Else \label{lin:tryc28}
				\cmntwa{if node corresponding to the key is not part of \bn as well as \rn then create the node with the help of lslIns() and add it into \bn}
				\State \lslins{\textcolor{black}{$BL$} $\downarrow$} \label{lin:tryc29};
				%	\cmntwa{set the op\_status as OK in local log}
				%		\State \llsopst{$OK \downarrow$} \label{lin:tryc30};
				\cmntwa{update the max\_ts of insert for node corresponding to the key into underlying DS}
				\State $\texttt{write}$(node.max\_ts.insert, TS($t\_id$)) \label{lin:tryc31};
				\cmntwa{need to update the node field of log so that it can be released finally}
				\State $le_{i}$.node $\gets$ $\bp.\bn$ 
				
				\EndIf \label{lin:tryc32}

				\cmntwa{if operation is delete then after successful completion of it node corresponding to the key should not be part of \bn}
				\ElsIf{$($\textup{DELETE} $=$ \textup{opn}$)$} \label{lin:tryc33}
				\cmntwa{if node corresponding to the key is part of \bn}
				
				\If{$(\texttt{read}($$\bc$.\textup{key}$)$ $=$ key$)$} \label{lin:tryc34}
				\cmntwa{delete the node corresponding to the key from the \bn with the help of lslDel()}
				\State \lsldel{} \label{lin:tryc35};
				%		\cmntwa{set the op\_status as OK in local log}
				%		\State \llsopst{$OK \downarrow$} \label{lin:tryc36};
				\cmntwa{update the max\_ts of delete for node corresponding to the key into underlying DS}
				\State $\texttt{write}$($\bc$.max\_ts.delete, TS($t\_id$)) \label{lin:tryc37};
				
				%\Else \label{lin:tryc38}
				
				%\State \llsopst{$FAIL \downarrow$} \label{lin:tryc39};
				
				%\State $write$(\textcolor{red}{$currs[0]$}.max\_ts.delete, TS($t\_id$)) \label{lin:tryc40};
				
				\EndIf \label{lin:tryc41}
				
				\EndIf \label{lin:tryc42}
%				\cmntwa{modify the preds and currs for the consecutive update methods which are working on overlapping zone in lazyskip-list}
%				\State intraTransValdation($le_i \downarrow, \preds \uparrow, \currs \uparrow$) \label{lin:tryc17};
				\EndWhile \label{lin:tryc43}
				\cmntwa{release all the locks}
				\State \rlsol{} \label{lin:tryc44};  
				\cmntwa{set the tx status as OK}
				\State $tx\_status$ $\gets$ OK \label{lin:tryc45};
				
				%		\State \txsetst{} \label{lin:tryc46};
				%\State    /*remove the tx from live set*/
				%\State removeFromLiveSet($t\_id$); 
				\State return $\langle tx\_status\rangle$\label{lin:tryc47};
				\EndFunction \label{lin:tryc48}
			\end{algorithmic}
		\end{multicols}
	\end{algorithm}

	%%%%%%%%%%%%%%%%%%%%%%%%%%%%%%%%%%%%%%%%%%%%%%%%%%%%%%%%%%%%%%%%%%%%%%%%%%
	%----------------- lazyskiplist-search ---------------------------
	%%%%%%%%%%%%%%%%%%%%%%%%%%%%%%%%%%%%%%%%%%%%%%%%%%%%%%%%%%%%%%%%%%%%%%%%%%
	%\begin{spacing}{0.8}
	\begin{algorithm}[H]
		\scriptsize
		%lslSearch($obj\_id \downarrow, key \downarrow, preds[] \uparrow, currs[] \uparrow, value \uparrow, val\_type \downarrow$)
		\caption{\look($t\_id \downarrow, obj\_id \downarrow, key \downarrow, val\_type \downarrow, \preds \uparrow, \currs \uparrow, op\_status \uparrow$) : finds location (\preds \& \currs) for given $ \langle obj\_id, key \rangle$ and returns them in locked state else returns ABORT. %: Find the location of the corresponding $\left\langle obj\_id, key  \right\rangle$ in underlying hash table. First it will search into the blue link then red link. After finding the preds and currs for the node corresponding to the key, it will acquire the locks on it and validate it.
			%\emph{DESCP}\tabspace: Finds location of the corresponding $\left\langle obj\_id, key  \right\rangle$ in underlying hash table. color of preds[] \tabspace[2.2cm] \& currs[] depicts the red or blue node.\\
			%\emph{IN}\tabspace \tabspace[0.72cm]: $\left\langle obj\_id, key  \right\rangle$, $val\_type$  \\
			%\emph{OUT}\tabspace \tabspace[0.37cm]: $preds[], currs[], value$     
		}\label{algo:lslsearch}
		\setlength{\multicolsep}{0pt}
		\begin{multicols}{2}
			\begin{algorithmic}[1]
				\makeatletter\setcounter{ALG@line}{184}\makeatother
				\Function{lslSearch}{} \label{lin:lslsearch1}
				
				%\Statex{    //by default setting the op\_status as RETRY}
				%   \State $op\_status$ $\gets$ RETRY; 
				\State STATUS $op\_status$ $\gets$ RETRY;
				%	 \State   /*identify the preds and currs for node corresponding to the key if op\_status is RETRY*/   
				\While{($op\_status$ = \textup{RETRY})} \label{lin:lslsearch2}
				\cmntwa{get the head of the bucket in hash-table}
				\State head $\gets$ \glslhead \label{lin:lslsearch3};
				\cmntwa{init $\bp$ to head}
				\State $\bp$ $\gets$ $\texttt{read}$($head$) \label{lin:lslsearch4}; 
				\cmntwa{init $\bc$ to $\bp.\bn$}
				\State $\bc$ $\gets$ $\texttt{read}$($\bp$.\bn) \label{lin:lslsearch5};
				\cmntwa{search node $ \langle obj\_id, key \rangle$ location in blue list}
				\While{$(\texttt{read}($$\bc$.\textup{key}$)$ $<$ $key)$} \label{lin:lslsearch6}
				\State $\bp$ $\gets$ $\bc$ \label{lin:lslsearch7};
				
				\State $\bc$ $\gets$ $\texttt{read}$($\bc$.\bn) \label{lin:lslsearch8};
				
				\EndWhile \label{lin:lslsearch9}
				%		\State    /*get the value*/
				%		\State $value$ $\gets$ $\texttt{read}$(\textcolor{blue}{$currs[1]$}.value) \label{lin:lslsearch10};
				\State    /*init $\rp$ to $\bp$*/
				\State $\rp$ $\gets$ $\bp$ \label{lin:lslsearch11};
				\State    /*init $\rc$ to $\bp.\rn$*/
				\State $\rc$ $\gets$ $\bp$.\rn \label{lin:lslsearch12};
				\State    /*search node $ \langle obj\_id, key \rangle$ location in red list between \bp \& \bc*/
				\While{$(\texttt{read}($$\rc$.\textup{key}$)$ $<$ $key)$} \label{lin:lslsearch13}
				
				\State $\rc$ $\gets$ $\rc$ \label{lin:lslsearch14};
				\State $\rc$ $\gets$ $\texttt{read}$($\rc$.\rn) \label{lin:lslsearch15};
				
				\EndWhile \label{lin:lslsearch16}
				\cmntwa{acquire the locks on increasing order of keys}
				\State acquirePred\&CurrLocks($ \preds \downarrow$, $ \currs \downarrow$)\label{lin:lslsearch17};
				%		\State  \label{lin:lslsearch18};
				
				%		\State  \label{lin:lslsearch19};
				
				%		\State \textcolor{blue}{$currs[1]$}.$\texttt{lock()}$ \label{lin:lslsearch20};
				\cmntwa{validate the location recorded in \preds \& \currs. Also verify if the transaction has to be aborted.}
				\State validation($t\_id \downarrow$, $key$ $\downarrow$, $\preds$ $\downarrow$, $\currs$ $\downarrow$, $val\_type$ $\downarrow$, $op\_status \uparrow$)\label{lin:lslsearch21};	
				%\State $op\_status$ $\gets$ \validation \label{lin:lslsearch21};
				\cmntwa{if validation returns op\_status as RETRY or ABORT then release all the locks}
				\If{(($op\_status$ = \textup{RETRY}) $\lor$ ($op\_status$ = \textup{ABORT}))} \label{lin:lslsearch22}
				\cmntwa{release all the locks}
				\State{releasePred\&CurrLocks($\preds \downarrow$, $\currs \downarrow$)}
				
				\EndIf \label{lin:lslsearch27}
				
				\EndWhile \label{lin:lslsearch28}
				
				\State return $\langle \preds, \currs, op\_status \rangle$ \label{lin:lslsearch29};
				
				\EndFunction \label{lin:lslsearch30}
			\end{algorithmic}
		\end{multicols}
	\end{algorithm}
	%\end{spacing}
	
	%%%%%%%%%%%%%%%%%%%%%%%%%%%%%%%%%%%%%%%%%%%%%%%%%%%%%%%%%%%%%%%%%%%%%%%%%%
	%----------------------------lazyskiplist-insert --------------------------------
	%%%%%%%%%%%%%%%%%%%%%%%%%%%%%%%%%%%%%%%%%%%%%%%%%%%%%%%%%%%%%%%%%%%%%%%%%%

	%\vspace{-.5cm}
	\begin{algorithm}[H]
		\scriptsize
		\caption{\tabspace[0.2cm] \ins($\preds \downarrow, \currs \downarrow, list\_type \downarrow$) : Inserts or overwrites a node in underlying hash table at location corresponding to $preds$ \& $currs$. %Color of preds \& currs depicts the red or blue node.
			%\emph{DESCP}\tabspace: Inserts or overwrites a node in underlying hash table at location corresponding to \tabspace[2.2cm] preds[] \& currs[]. Color of preds[] \& currs[] depicts the red or blue node.\\
			%\emph{IN}\tabspace \tabspace[0.72cm]: $preds[], currs[], list\_type$  \\
			%\emph{OUT}\tabspace \tabspace[0.37cm]:     
		}
		\label{algo:lslins}
		\setlength{\multicolsep}{0pt}
		\begin{multicols}{2}
			\begin{algorithmic}[1]
				\makeatletter\setcounter{ALG@line}{219}\makeatother
				\Function{lslIns}{} \label{lin:lslins1}
				%\makeatletter\setcounter{ALG@line}{246}\makeatother
				\cmntwa{inserting the node which is red list to bluelist}
				\If{$((list\_type)$ $=$ $($\textcolor{black}{$RL$}$\_$\textcolor{black}{$BL$}$))$} \label{lin:lslins2}
				
				\State $\texttt{write}$($\rc$.marked, false) \label{lin:lslins3}; 
				\State $\texttt{write}$($\rc$.\bn, $\bc$) \label{lin:lslins4};
				\State $\texttt{write}$($\bp$.\bn, $\rc$) \label{lin:lslins5};
				\cmntwa{inserting the node into red list only}
				\ElsIf{$((list\_type$) $=$ \textcolor{black}{$RL$}$)$} \label{lin:lslins6}
				\State node = Create new node()
				\label{lin:lslins7};
				\State $\texttt{write}$(node.marked, True) \label{lin:lslins8};
				\State $\texttt{write}$(node.\rn, $\rc$) \label{lin:lslins9};
				\State $\texttt{write}$($\rp$.\rn, node) \label{lin:lslins10};
				
				\Else \label{lin:lslins11}
				\cmntwa{inserting the node into red as well as blue list}
				\State node = new node() \label{lin:lslins12}; %\Comment{default locked state for thread validation}
				\cmntwa{after creating the node acquiring the lock on it}
				\State node.lock();
				\State $\texttt{write}$(node.\rn, $\rc$) \label{lin:lslins13};
				\State $\texttt{write}$(node.\bn, $\bc$) \label{lin:lslins14};
				
				\State $\texttt{write}($$\rp$.\rn, node ) \label{lin:lslins15};
				
				\State $\texttt{write}$($\bp$.\bn, node) \label{lin:lslins16};
				\EndIf \label{lin:lslins17}
				\State return $\langle \rangle$;
				\EndFunction \label{lin:lslins18}
			\end{algorithmic}
		\end{multicols}
	\end{algorithm}

	\begin{figure}[H]
		\centering
		\captionsetup{justification=centering}
		\centerline{\scalebox{0.55}{\input{figs/nostm31.pdf_t}}}
		\caption{Execution of \nplslins{}: (i) key $k_5$ is present in $\rn$ and adding it into $\bn$, (ii) key $k_5$ is not present in $\rn$ as well as $\bn$ and adding it into $\rn$, (iii) key $k_5$ is not present in $\rn$ as well as $\bn$ and adding it into $\rn$ as well as $\bn$.}
		\label{fig:nostm31}
	\end{figure}
	\noindent
	\textbf{\nplslins{}} (\algoref{lslins}) adds a new node to the \lsl{} in the \tab{}. There can be following cases:
	\textbf{If node is present in \rn{} and has to be inserted to \bn:} such a case implies that the \nplslins{} is invoked in \cp{} for the corresponding \npins{} in local log represented by the block from \Lineref{lslins2} to \Lineref{lslins5}. Here we first reset the \rc{} mark field and update the $\bn$ to the \bc{} and \bp{} $\bn$ to \rc{}. Thus the node is now reachable by $\bn$ also. \figref{nostm31}(i) represents the case.
	\textbf{If node is meant to be inserted only in \rn:}
	This implies that the node is not present at all in the \lsl{} and is to be inserted for the first time. Such a case can be invoked from \rvmt{} of \rvp{}, if \rvmt{} is the first \mth{} of its transaction. \Lineref{lslins6} to \Lineref{lslins10} depict such a case where a new $node$ is created and its $marked$ field is set, depicting that its a dead node meant to be reachable only via \rn{}. In \Lineref{lslins9} and \Lineref{lslins10} the \rn{} field of the $node$ is updated to \rc{} and \rn{} field of the \rp{} is modified to point to the $node$ respectively. \figref{nostm31}(ii) represents the case.
	\textbf{If node is meant to be inserted in \bn:} In such a case it may happen that the node is already present in the \rn{} (already covered by \Lineref{lslins2} to \Lineref{lslins5}) or the node is not present at all. The later case is depicted in \Lineref{lslins11} to \Lineref{lslins16} which creates a new $node$ and add the node in both \rn{} and \bn{} note that order of insertion is important as the \lsl{} can be concurrently accessed by other transactions since traversal is lock free. \figref{nostm31}(iii) represents the case.
	
	%%%%%%%%%%%%%%%%%%%%%%%%%%%%%%%%%%%%%%%%%%%%%%%%%%%%%%%%%%%%%%%%%%%%%%%%%%
	%------------------- lazyskiplist-delete -----------------------
	%%%%%%%%%%%%%%%%%%%%%%%%%%%%%%%%%%%%%%%%%%%%%%%%%%%%%%%%%%%%%%%%%%%%%%%%%%
	
	\begin{algorithm}[H]
		\scriptsize
		\caption{\tabspace[0.2cm] \del($\preds \downarrow, \currs \downarrow$) : Deletes a node from blue link in underlying hash table at location corresponding to $preds$ \& $currs$.
			%\emph{DESCP}\tabspace: Deletes a node from blue list in underlying hash table at location corresponding to \tabspace[2.2cm] preds[] \& currs[].\\
			%\emph{IN}\tabspace \tabspace[0.72cm]: $preds[], currs[]$  \\
			%\emph{OUT}\tabspace \tabspace[0.37cm]:     
		}
		\label{algo:lsldelete}
		\setlength{\multicolsep}{0pt}
		\begin{multicols}{2}
			\begin{algorithmic}[1]
				\makeatletter\setcounter{ALG@line}{243}\makeatother
				\Function{lslDel}{} \label{lin:lsldel1}
				
				\cmntwa{mark the node$ \langle obj\_id, key \rangle$ for deletion}
				\State $\texttt{write}$($\bc$.marked, True) \label{lin:lsldel2};
				\cmntwa{set the update the blue links}
				\State $\texttt{write}$($\bp$.\bn, $\bc$.\bn) \label{lin:lsldel3};
				\State return $\langle \rangle$;
				\EndFunction \label{lin:lsldel4}
			\end{algorithmic}
			
		\end{multicols}
	\end{algorithm}
	
	\noindent
	\textbf{\nplsldel{}} removes a node from \bn{}. It can be invoked from \cp{} for corresponding \npdel{} in $txlog$. It simply sets the marked field of the node to be deleted (\bc{}) and changes the \bn{} of \bp{} to \bc{} as shown in \Lineref{lsldel2} and \Lineref{lsldel3} of \algoref{lsldelete} respectively. \figref{1nostm9} shows the deletion of node corresponding to $k_5$.

	\begin{figure}[H]
		\centering
		\captionsetup{justification=centering}
		\centerline{\scalebox{0.55}{\input{figs/nostm9.pdf_t}}}
		\caption{Execution of \nplsldel{}: (i) \lsl{} before $k_5$ is deleted, (ii) \lsl{} after $k_5$ is deleted from $\bn$}
		\label{fig:1nostm9}
	\end{figure}
	
	\noindent
	\textbf{validation:}
	\rvmt{} and \upmt{} do the \emph{validation} in \rvp{} and \cp{} respectively. \emph{validation} invokes \npintv{} and then does the \nptov{} in the mentioned order. \npintv{} is the property of the method and \nptov{} is the property of the transaction. Thus validating the \mth{} before the transaction intuitively make sense.

	%\vspace{-.5cm}

	%%%%%%%%%%%%%%%%%%%%%%%%%%%%%%%%%%%%%%%%%%%%%%%%%%%%%%%%%%%%%%%%%%%%%%%%%%
	%---------------------------------validation -------------------
	%%%%%%%%%%%%%%%%%%%%%%%%%%%%%%%%%%%%%%%%%%%%%%%%%%%%%%%%%%%%%%%%%%%%%%%%%%

	\begin{algorithm}[H]
		\scriptsize
		\caption{\tabspace[0.2cm] validation($t\_id \downarrow, key \downarrow, \preds \downarrow, \currs \downarrow, val\_type \downarrow, op\_status \uparrow$)% : Method validation and transaction validation.
			%\emph{DESCP}\tabspace: double validation.\\
			%\emph{IN}\tabspace \tabspace[0.72cm]: $key$, $preds[]$, $currs[], val\_type$\\
			%\emph{OUT}\tabspace \tabspace[0.37cm]: $true$, $false$     
		}
		\label{algo:validation}
		\setlength{\multicolsep}{0pt}
		\begin{multicols}{2}
			\begin{algorithmic}[1]
				\makeatletter\setcounter{ALG@line}{250}\makeatother
				\Function{validation}{} \label{lin:validation1}
				\cmntwa{validate against concurrent updates}
				\State $op\_status$ $\gets$ methodValidation($\preds \downarrow$, $\currs \downarrow$)\label{lin:validation2};
				\cmntwa{on succesfull method validation validate of transactional ordering to ensure opacity}
				\If{$( RETRY \neq op\_status)$} \label{lin:validation3}
				
				\State $op\_status$ $\gets$ \toval{} \label{lin:validation4};
				\EndIf \label{lin:validation5}
				\State return $\langle op\_status\rangle$ \label{lin:validation6}; 
				\EndFunction \label{lin:validation7}
			\end{algorithmic}
			
		\end{multicols}
	\end{algorithm}
	
		\noindent
		In \textbf{\npintv{}} each transaction ensures that no other transaction has concurrently updated the same location in \lsl{} where it wants to perform the operation. This is done by checking that the \bp{} and \bc{} are not marked for deletion and next node of \bp{} and \rp{} is still the same as observed by lockfree traversal over the \lsl{}.
	%%%%%%%%%%%%%%%%%%%%%%%%%%%%%%%%%%%%%%%%%%%%%%%%%%%%%%%%%%%%%%%%%%%%%%%%%%
	%---------------------------------methodValidation -------------------
	%%%%%%%%%%%%%%%%%%%%%%%%%%%%%%%%%%%%%%%%%%%%%%%%%%%%%%%%%%%%%%%%%%%%%%%%%%
	
	\begin{algorithm}[H]
		\scriptsize
		\caption{\tabspace[0.2cm] methodValidation($\preds \downarrow, \currs \downarrow$) %: method validation for conflicting concurrent methods.
			%\emph{DESCP}\tabspace: method validation for conflicting concurrent operations.\\
			%\emph{IN}\tabspace \tabspace[0.72cm]: $preds[]$, $currs[]$\\
			%\emph{OUT}\tabspace \tabspace[0.37cm]: $RETRY$, $OK$     
		}
		\label{algo:interferenceValidation}
		\begin{algorithmic}[1]
			\makeatletter\setcounter{ALG@line}{259}\makeatother
			\Function{methodValidation}{} \label{lin:iv1}
			\If{$(\texttt{read}(\bp.marked) || \texttt{read}(\bc.marked) ||\texttt{read}(\bp.\bn) \neq \bc || \texttt{read}(\rp.\rn) \neq {\rc})$} \label{lin:iv2}
			
			\State return $\langle RETRY\rangle$  \label{lin:iv3};
			
			\Else \label{lin:iv4}
			
			\State return $\langle OK\rangle$ \label{lin:iv5};
			
			\EndIf \label{lin:iv6}
			\EndFunction \label{lin:iv7}
		\end{algorithmic}
	\end{algorithm}
	%\end{multicols}
	
	\noindent
	In \textbf{\nptov{}} \rvmt{} always conflicts with the \upmt{} (as established in conflict notion \secref{conflicts}). If the node corresponding to the $key$ is present in the \lsl{} (\Lineref{tov5}) we compare with time-stamp of the transaction that last executed the conflicting \mth{} on same $key$. If the current \mth{} that invoked the \nptov{} is \rvmt{} then \Lineref{tov6} handles the case.
	Otherwise, if the invoking \mth{} is \upmt{} then \Lineref{tov9} handles the case. \figref{1nostm25} and \figref{1nostm26} show the execution of \nptov{}. Here $l_1(ht, k_1)$ will return $Abort$ in  \figref{1nostm26} because $d_2((ht, k_1)$ of $T_2$ has already updated the time-stamp at the node corresponding to $k_1$. So, when $l_1(ht, k_1)$ does its \nptov{} at \Lineref{tov9}, $TS(t_1)$ $<$ $curr.max\_ts.delete(k)$ holds true (since, $T_1$ $<$ $T_2$) leading to $abort$ of $T_1$ at \Lineref{tov11}. This gives us a equivalent sequential schedule which can be shown \coop{}. \figref{1nostm25} shows the schedule where no sequential schedule is possible if \nptov{} is not applied as there is no way to recognize the time-order violation.
	
	\begin{figure}[H]
		\centering
		\begin{minipage}[b]{0.49\textwidth}
			\scalebox{.47}{\input{figs/nostm25.pdf_t}}
			\centering
			\captionsetup{justification=centering}
			% \vspace{-16cm}
			\captionof{figure}{Non opaque history. Without time-stamp validation in \nptov}
			\label{fig:1nostm25}
		\end{minipage}
		\hfill
		\begin{minipage}[b]{0.49\textwidth}
			\centering
			\captionsetup{justification=centering}
			\scalebox{.47}{\input{figs/nostm26.pdf_t}}
			% \vspace{-16cm}
			\captionof{figure}{Opaque history H1. With time-stamp validation in \nptov}
			\label{fig:1nostm26}
		\end{minipage}   
	\end{figure}
	
	%%%%%%%%%%%%%%%%%%%%%%%%%%%%%%%%%%%%%%%%%%%%%%%%%%%%%%%%%%%%%%%%%%%%%%%%%%
	%-------------------------transValidation---------
	%%%%%%%%%%%%%%%%%%%%%%%%%%%%%%%%%%%%%%%%%%%%%%%%%%%%%%%%%%%%%%%%%%%%%%%%%%
	
	\begin{algorithm}[H]
		\scriptsize
		\caption{\tabspace[0.2cm] transValidation($t\_id \downarrow, key \downarrow, \currs \downarrow, val\_type \downarrow, op\_status \uparrow$) : Time-order validation for each transaction.
			%\emph{DESCP}\tabspace: transvalidation of Tx.\\
			%\emph{IN}\tabspace \tabspace[0.72cm]: $key$, $currs[]$, $val\_type$\\
			%\emph{OUT}\tabspace \tabspace[0.37cm]: $op\_status$ - OK or ABORT     
		}
		\label{algo:tovalidation}
		\begin{algorithmic}[1]
			\makeatletter\setcounter{ALG@line}{266}\makeatother
			\Function{transValidation}{} \label{lin:tov1}
			%		\cmntwa{get the tx id}
			%		\State $t\_id$ $\gets$ getTS($t\_id$) \label{lin:tov2};
			\cmntwa{by default setting the op\_status as RETRY}
			\State STATUS $op\_status$ $\gets$ OK \label{lin:tov3};
			\cmntwa{get the appropriate $sh\_curr$ (red or blue) correspondinjg to key}
			\State \llgaptc{$\currs \downarrow$, $key\downarrow$, $sh\_curr \uparrow$} \label{lin:tov4};
			\cmntwa{if $sh\_curr$ is not NULL and node corresponding to the key is equal to $sh\_curr$.key then check for TS}
			\If{$(($\textup{$sh\_curr$} $\neq$ \textup{NULL}$)\land(($\textup{$sh\_curr$.key}$)$ $=$ \textup{key}$))$} \label{lin:tov5}
			\cmntwa{if val\_type is RV then transaction validation for \rvmt{}}
			\If{$ ((val\_type = RV) \land ($\textup{TS}$(t\_id)$ $<$ $(\texttt{read}($\textup{$sh\_curr$.max\_ts.insert}$($\textup{k}$)))$ $||$ \\\hspace{2cm} $($\textup{TS}($t\_id$) $<$ $(\texttt{read}($\textup{$sh\_curr$.max\_ts.delete}$($\textup{k}$)))))$} \label{lin:tov6}
			\State $op\_status$ $\gets$ ABORT \label{lin:tov8};
			\cmntwa{transaction validation for \upmt{}}
			\ElsIf{$(($\textup{TS}$(t\_id)$ $<$ $(\texttt{read}($\textup{$sh\_curr$.max\_ts.insert}$($\textup{k}$)))$ $||$ \textup{TS}$(t\_id)$ $<$ $(\texttt{read}($\textup{$sh\_curr$.max\_ts.delete}$($\textup{k}$)))$ $||$\\ \hspace{2cm} \textup{TS}$(t\_id)$ $<$ $(\texttt{read}($\textup{$sh\_curr$.max\_ts.lookup}$($\textup{k}$))))$} \label{lin:tov9}
			
			\State $op\_status$ $\gets$ ABORT \label{lin:tov11};
			\EndIf \label{lin:tov12}
			
			\EndIf \label{lin:tov13}
			
			\State return $\langle op\_status\rangle$ \label{lin:tov14}; 
			
			\EndFunction \label{lin:tov15}
		\end{algorithmic}
	\end{algorithm}
	
	    \noindent
		\textbf{\nppoval{}} handles the case where two consecutive updates within same transaction having overlapping $preds$ and $currs$ may overwrite the previous \mth{} such that only effect of the later \mth{} is visible. This happens because the previous \mth{} while updating, changes the \lsl{} causing the $preds$ \& $currs$ of the next \mth{} working on the consecutive key to become obsolete. Thus, \nppoval{} corrects this by finding the new $preds$ and $currs$ of the current \mth{} on the consecutive key. There might be two cases (i) if previous \mth{} is \npins{} or (ii) previous \mth is \npdel{}. For case(i) we find the \bp{} (at \Lineref{threadv4} to \Lineref{threadv5} using previous log entry) and for case(ii) we find \bp{} using previous log entry's \bp{} (\Lineref{threadv7}) and finally find the new \rp{} and \rc{} between the new found \bp{} and \bc{} at \Lineref{threadv10} to \Lineref{threadv11}.

	%%%%%%%%%%%%%%%%%%%%%%%%%%%%%%%%%%%%%%%%%%%%%%%%%%%%%%%%%%%%%%%%%%%%%%%%%%
	%%%%%%%%%%%%%%%%%%%%%%%%%%%%%%%%%Thread Validation%%%%%%%%%%%%%%%%%%%%%%%%
	%%%%%%%%%%%%%%%%%%%%%%%%%%%%%%%%%%%%%%%%%%%%%%%%%%%%%%%%%%%%%%%%%%%%%%%%%%
	
	\begin{algorithm}[H]
		\scriptsize
		\caption{\tabspace[0.2cm] intraTransValidation($le \downarrow, \preds \uparrow, \currs \uparrow$)% : If two $\upmt{s}$ within same transaction have at least one shared node among its recorded preds and currs, in this case the previous $\upmt{}$ effect might be overwritten if the next $\upmt{}$ preds and currs are not updated according to the updates done by the previous $\upmt{}$. Thus to solve this we have program order validation after each $\upmt{}$ in $\nptc$, during commit phase.
			%\emph{DESCP}\tabspace: program order validation\\
			%\emph{IN}\tabspace \tabspace[0.72cm]: $le_i$  \\
			%\emph{OUT}\tabspace \tabspace[0.37cm]: $preds[], currs[]$
		}
		\label{algo:povalidation}
		\setlength{\multicolsep}{0pt}
		\begin{multicols}{2}
			\begin{algorithmic}[1]
				\makeatletter\setcounter{ALG@line}{285}\makeatother
				\Function{intraTransValidation}{} \label{lin:threadv1}
				\State $le.getAllPreds\&Currs(le$ $\downarrow$, $\preds$ $\uparrow$, $\currs$ $\uparrow$) \label{lin:threadv2};	
				%\State \llgallpc{} \label{lin:threadv2};
				\cmntwa{if $\bp$ is marked or $\bc$ is not reachable from $\bp.\bn$ then modify the next consecutive \upmt{} $\bp$ based on previous \upmt{}}
				\If{$((\texttt{read}($$\bp$.\textup{marked}$)) ||$ $(\texttt{read}($ $\bp$.\textup{\bn}$)$ != $\bc$$))$} \label{lin:threadv3}
				\cmntwa{find $k$ < $i$; such that $le_k$ contains previous update method on same bucket }
				\If{$(($$le_{k}$.\textup{opn}$)$ $=$ INSERT$)$} \label{lin:threadv4}
				
				\State $le_{i}$.$\bp$.$\texttt{unlock()}$ \label{lin:thredv4-5};
				\State $\bp$ $\gets$ ($le_{k}.\bp.\bn)$ \label{lin:threadv5};
				\State $le_{i}$.$\bp$.$\texttt{lock()}$ \label{lin:thredv5-5};		
				\Else \label{lin:threadv6}
				\cmntwa{\upmt{} method $\bp$ will be previous method $\bp$}
				\State $le_{i}$.$\bp$.$\texttt{unlock()}$ \label{lin:thredv6-5};
				\State $\bp$ $\gets$ ($le_{k}$.$\bp$) \label{lin:threadv7};
				
				\State $le_{i}$.$\bp$.$\texttt{lock()}$ \label{lin:thredv7-5};
				\EndIf \label{lin:threadv8}
				
				\EndIf \label{lin:threadv9}
				\cmntwa{if $\rc$ \& $\rp$ is modified by prev operation then update them also}
				\If{$(\texttt{read}($$\rp$.\textup{\rn}$)$ != $\rc$$)$} \label{lin:threadv10}
				\State $le_{i}$.$\rp$.$\texttt{unlock()}$
				\State $\rp$ $\gets$ ($le_{k}$.$\rp.\rn$)  \label{lin:threadv11}; 
				\State $le_{i}$.$\rp$.$\texttt{lock()}$
				\EndIf \label{lin:threadv12}
				\State return $\langle \preds, \currs\rangle$;
				\EndFunction \label{lin:threadv13}
			\end{algorithmic}
		\end{multicols}
	\end{algorithm}
	
	\noindent
	\textbf{\emph{findInLL()}} is an utility \mth{} that returns true to the \mth{} that has invoked it, if the calling \mth{} is not the first method of the transaction on the $key$. This is done by linearly traversing the log and finding an entry corresponding to the $key$. If the calling \mth{} is the first \mth{} of the transaction for the $key$ then \emph{findInLL()} return false as it would not find any entry in the log of the transaction corresponding to the $key$. Since we consider that their can be multiple objects (\tab{}) so we need to find unique $\langle obj\_id, key \rangle$ pair (refer \Lineref{findll5}).	
	
		While executing the \nptov{} the time-stamp field of the corresponding $node$ has to be updated. Such a node can be either the marked (dead or \rc) or the unmarked (live \bc).

	\begin{algorithm}[H]
		\scriptsize
		\caption{\tabspace[0.2cm] findInLL($t\_id \downarrow, obj\_id \downarrow, key \downarrow, le \uparrow$) : Checks whether any operation corresponding to $\left\langle obj\_id, key  \right\rangle$ is present in ll\_list.
			%\emph{DESCP}\tabspace: Checks weather any operation corresponding to $\left\langle obj\_id, key  \right\rangle$ is present in ll\_list.\\
			%\emph{IN}\tabspace \tabspace[0.72cm]: $\left\langle obj\_id, key  \right\rangle$  \\
			%\emph{OUT}\tabspace \tabspace[0.37cm]: $true, false$     
		}
		\label{algo:findInLL}
		\begin{algorithmic}[1]
			\makeatletter\setcounter{ALG@line}{309}\makeatother
			\Function{findInLL}{} \label{lin:findll1}
			%\cmntwa{get the tx id}
			%\State $t\_id$ $\gets$ getTS($t\_id$) \label{lin:findll2};
			
			\State ll\_list $\gets$ \txgllist{} \label{lin:findll3}; 
			\cmntwa{every method first identify the node corresponding to the key into local log}
			\While{$(le_i \gets next$($ll\_list$$))$} \label{lin:findll4}
			\If{$((le_i.first = obj\_id) \& (le_i.first = key))$} \label{lin:findll5}
			
			\State return $\langle TRUE, le\rangle$  \label{lin:findll6};
			\EndIf \label{lin:findll7}
			\EndWhile \label{lin:findll8}
			\State return $\langle FALSE, le = NULL\rangle$ \label{lin:findll9};
			\EndFunction \label{lin:findll10}
		\end{algorithmic}
	\end{algorithm}
	
	\noindent
	\textbf{\emph{get\_aptcurr()}} in \algoref{getaptcurr} is the utility \mth{} which returns the appropriate $node$ corresponding to the $key$. 
	%%%%%%%%%%%%%%%%%%%%%%%%%%%%%%%%%%%%%%%%%%%%%%%%%%%%%%%%%%%%%%%%%%%%%%%%%%
	%------------------- get_aptcurr -----------------------
	%%%%%%%%%%%%%%%%%%%%%%%%%%%%%%%%%%%%%%%%%%%%%%%%%%%%%%%%%%%%%%%%%%%%%%%%%%
	
	\begin{algorithm}[H]
		\scriptsize
		\caption{\tabspace[0.2cm] get\_aptcurr($\currs \downarrow$, $key \downarrow, sh\_curr \uparrow$) : Returns a curr node from underlying DS which corresponds to the key of $le_i$.
			%\emph{DESCP}\tabspace: Returns a curr node from underlying DS which corresponds to the key of $le_i$.\\
			%\emph{IN}\tabspace \tabspace[0.72cm]: $currs[]$\\
			%\emph{OUT}\tabspace \tabspace[0.37cm]: $node curr$    
		}
		\label{algo:getaptcurr}
		\setlength{\multicolsep}{0pt}
		\begin{multicols}{2}
			%\begin{multicols}{2}
			\begin{algorithmic}[1]
				\makeatletter\setcounter{ALG@line}{319}\makeatother
				\Function{get\_aptcurr}{} \label{lin:aptcurr1}
				\cmntwa{by default set curr to NULL}
				\State sh\_curr $\gets$ NULL;
				\cmntwa{if node corresponding to the key is part of $\bn$ then curr is $\bc$}
				\If{$($$\bc$.\textup{key} = \textup{key}$)$} \label{lin:aptcurr2}
				\State sh\_curr $\gets$ $\bc$ \label{lin:aptcurr3};
				\cmntwa{if node corresponding to the key is part of $\rn$ then curr is $\rc$}
				\ElsIf{$($$\rc$.\textup{key} = \textup{key}$)$} \label{lin:aptcurr4}
				\State sh\_curr $\gets$ $\rc$ \label{lin:aptcurr5};
				
				\EndIf \label{lin:aptcurr6}
				
				\State return $\langle sh\_curr\rangle$ \label{lin:aptcurr7}; 
				
				\EndFunction \label{lin:aptcurr8}
			\end{algorithmic}
		\end{multicols}
	\end{algorithm}
	
	\noindent
	\textbf{\emph{release\_ordered\_locks()}} in \algoref{releaseorderedlocks} is an utility \mth{} to release the locks in order of the keys to avoid deadlock.
	
	%%%%%%%%%%%%%%%%%%%%%%%%%%%%%%%%%%%%%%%%%%%%%%%%%%%%%%%%%%%%%%%%%%%%%%%%%%
	%------------------- releaseorderedlocks -----------------------
	%%%%%%%%%%%%%%%%%%%%%%%%%%%%%%%%%%%%%%%%%%%%%%%%%%%%%%%%%%%%%%%%%%%%%%%%%%
	
	\begin{algorithm}[H]
		\scriptsize
		\caption{\tabspace[0.2cm] release\_ordered\_locks($ordered\_ll\_list \downarrow$) : Release all locks taken during \nplsls{}.
			%\emph{DESCP}\tabspace: releases all locks taken during tryCommit\\
			%\emph{IN}\tabspace \tabspace[0.72cm]: $ordered\_ll\_list$\\
			%\emph{OUT}\tabspace \tabspace[0.37cm]:     
		}
		\label{algo:releaseorderedlocks}
		\setlength{\multicolsep}{0pt}
		\begin{multicols}{2}
			\begin{algorithmic}[1]
				\makeatletter\setcounter{ALG@line}{331}\makeatother
				\Function{release\_ordered\_locks}{} \label{lin:rlock1}
				\cmntwa{releasing all the locks on preds, currs and node}
				\While{($\textbf{$le_i \gets next(ordered\_ll\_list$})$)} \label{lin:rlock2}
				
				\State $le_i$.$\bp$.$\texttt{unlock()}$ \label{lin:rlock3};//$\Phi_{lp}$ 
				\State $le_i$.$\rp$.$\texttt{unlock()}$ \label{lin:rlock4};
				\If{$le_i$.$node$}
				\State{$le_i$.$node$.$\texttt{unlock()}$}
				\EndIf
				\State $le_i$.$\rc$.$\texttt{unlock()}$ \label{lin:rlock5};
				\State $le_i$.$\bc$.$\texttt{unlock()}$ \label{lin:rlock6}; 
				\EndWhile \label{lin:rlock7}
				\State return $\langle \rangle$;
				\EndFunction \label{lin:rlock8}
			\end{algorithmic}
			
		\end{multicols}
	\end{algorithm}
	%\end{multicols}
	
	\noindent
	\textbf{\emph{acquirePred\&CurrLocks()}} in \algoref{acquirepreds&currs} \& \emph{releasePred\&CurrLocks} in \algoref{releasepreds&currs} do what their names denote. They are used as helping methods in \algoref{lslsearch}.
	%%%%%%%%%%%%%%%%%%%%%%%%%%%%%%%%%%%%%%%%%%%%%%%%%%%%%%%%%%%%%%%%%%%%%%%%%%
	%------------------- acquirepreds&currs -----------------------
	%%%%%%%%%%%%%%%%%%%%%%%%%%%%%%%%%%%%%%%%%%%%%%%%%%%%%%%%%%%%%%%%%%%%%%%%%%
	
	\begin{algorithm}[H]
		\scriptsize
		\caption{\tabspace[0.2cm] acquirePred\&CurrLocks($ \preds \downarrow$, $ \currs \downarrow$) : acquire all locks taken during \nplsls{}.
			%\emph{DESCP}\tabspace: releases all locks taken during tryCommit\\
			%\emph{IN}\tabspace \tabspace[0.72cm]: $ordered\_ll\_list$\\
			%\emph{OUT}\tabspace \tabspace[0.37cm]:     
		}
		\label{algo:acquirepreds&currs}
		\setlength{\multicolsep}{0pt}
		\begin{multicols}{2}
			\begin{algorithmic}[1]
				\makeatletter\setcounter{ALG@line}{344}\makeatother
				\Function{acquirePred\&CurrLocks}{}
				\State $\bp$.$\texttt{lock()}$;
				\State $\rp$.$\texttt{lock()}$;
				\State $\rc$.$\texttt{lock()}$;
				\State $\bc$.$\texttt{lock()}$;
				\State return $\langle \rangle$;        
				\EndFunction
			\end{algorithmic}
			
		\end{multicols}
	\end{algorithm}
	%\end{multicols}

	%%%%%%%%%%%%%%%%%%%%%%%%%%%%%%%%%%%%%%%%%%%%%%%%%%%%%%%%%%%%%%%%%%%%%%%%%%
	%------------------- releasepreds&currs -----------------------
	%%%%%%%%%%%%%%%%%%%%%%%%%%%%%%%%%%%%%%%%%%%%%%%%%%%%%%%%%%%%%%%%%%%%%%%%%%
	
	\begin{algorithm}[H]
		\scriptsize
		\caption{\tabspace[0.2cm] releasePred\&CurrLocks($ \preds \downarrow$, $ \currs \downarrow$) : Release all locks taken during \nplsls{}.
			%\emph{DESCP}\tabspace: releases all locks taken during tryCommit\\
			%\emph{IN}\tabspace \tabspace[0.72cm]: $ordered\_ll\_list$\\
			%\emph{OUT}\tabspace \tabspace[0.37cm]:     
		}
		\label{algo:releasepreds&currs}
		\setlength{\multicolsep}{0pt}
		\begin{multicols}{2}
			\begin{algorithmic}[1]
				\makeatletter\setcounter{ALG@line}{351}\makeatother
				\Function{releasePred\&CurrLocks}{}
				\State $\bp$.$\texttt{unlock()}$\label{lin:rpandc};//$\Phi_{lp}$ 
				\State $\rp$.$\texttt{unlock()}$;
				\State $\rc$.$\texttt{unlock()}$;
				\State $\bc$.$\texttt{unlock()}$;
				\State return $\langle \rangle$;        
				\EndFunction
			\end{algorithmic}
			
		\end{multicols}
	\end{algorithm}
	%\end{multicols}
	
	\newpage
\section{Optimizations} 
\label{sec:optm}

In case a \npdel{} method returns FAIL then it would just behave as a \npluk{} because it does not modify the underlying data structure. Thus, we do not need to revalidate such failed \npdel{} method in \upmt{} phase inside \nptc{}. This helps in saving extra computation and time spent during \upmt{} phase leading to speedup of the transaction. 

Furthermore, twice validating the failed \npdel{} also may lead to unnecessary aborts as shown with an example in \figref{nostm32}. The \figref{nostm32}(i) shows the schedule where $T_1$ validates $del_{1}(k_1)$ two times. During \nptc{} it aborts realizing during its validation that $T_2$ has scheduled a conflicting insert operation on same node. On the other hand, if would not have validated this failed delete in \nptc{} the schedule can be accepted hence saving an unnecessary abort as shown in \figref{nostm32}(ii).

\begin{figure}[H]
	\centering
	\captionsetup{justification=centering}
	%	\centering
	%\subfloat[H1 : Transactional tree histroy.]
	{\scalebox{0.5}{\input{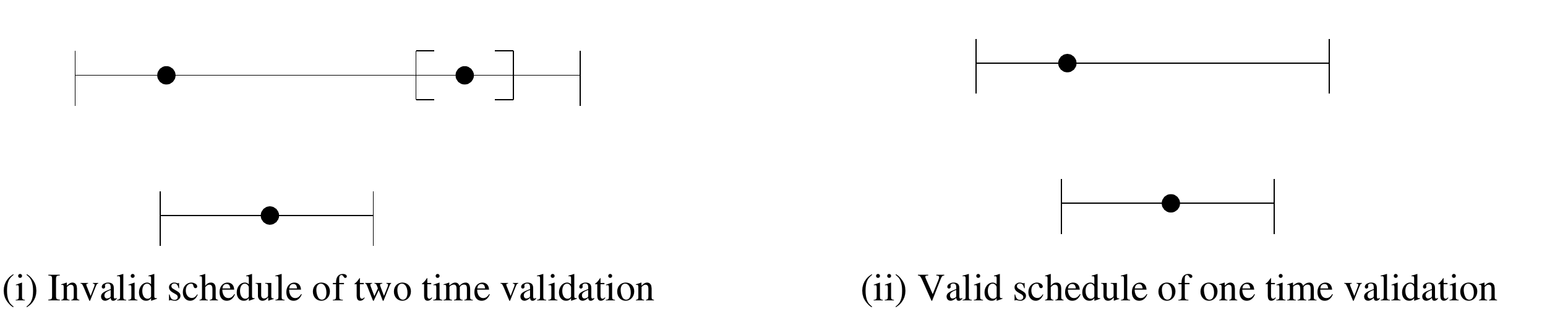_t}}}
	\caption{Advantage of validating \npdel{} once, if its returning FAIL in \emph{\rvmt{} execution} phase}
	\label{fig:nostm32}
\end{figure}

Second optimization could be that during \nplsls{} if node corresponding to the node is part of the underlying data structure and the corresponding \npintv{} returns a retry (unsuccessful) then instead of retrying again we can do a \nptov{} so that in case the transaction is doomed to abort we would avoid unnecessary computation in retrying a transaction that is bound to abort.

	\cmnt{
	\subsection{Garbage Collection}
	
	%%%%%%%%%%%%%%%%%%%%%%%%%%%%%%%%%%%%%%%%%%%%%%%%%%%%%%%%%%%%%%%%%%%%%%%%%%
	%-------------------------Garbage Collection---------
	%%%%%%%%%%%%%%%%%%%%%%%%%%%%%%%%%%%%%%%%%%%%%%%%%%%%%%%%%%%%%%%%%%%%%%%%%%
	
	\begin{algorithm}[H]
		\scriptsize
		\caption{\tabspace[0.2cm] gC($object\_list\downarrow$) : Garbage Collection.
			%\emph{DESCP}\tabspace: transvalidation of Tx.\\
			%\emph{IN}\tabspace \tabspace[0.72cm]: $key$, $currs[]$, $val\_type$\\
			%\emph{OUT}\tabspace \tabspace[0.37cm]: $op\_status$ - OK or ABORT     
		}
		\label{algo:gc}
		\setlength{\multicolsep}{0pt}
		\begin{multicols}{2}
			\begin{algorithmic}[1]
				\makeatletter\setcounter{ALG@line}{374}\makeatother
				\Function{gC}{} \label{lin:gc1}
				\cmntwa{to iterate over each hash-table object}
				\While{($obj\_{id}$ $\in$ $object\_list$)}
				\State j = 0;
				\cmntwa{to iterate over the each lazy skip list}
				\While{(j < size($obj\_{id}$))}
				\State head $\gets$ getLslHead($obj\_{id}$) \label{lin:gc2};
				\cmntwa{copy shared memory node}
				\State $pred$ $\gets$ \texttt{read(head)} \label{lin:gc3} ;
				\State $curr$ $\gets$ \texttt{read($pred.\rn$)}\label{lin:gc4};
				\While{($curr != tail$)}
				\cmntwa{live set is a concurrent Min Heap}
				\State $Min\_Tx$ $\gets$ \texttt{read}($getMin(LiveSet)$)\label{lin:gc5};
				\cmntwa{Min\_Tx monotonically increases}
				\If{((\texttt{read($curr.marked$)}) \& ($Min\_Tx$ $>=$ $(curr.max\_ts)))$}
				\State $pred.lock()$\label{lin:gc6};
				\State $curr.lock()$\label{lin:gc7};
				%\State $Min\_Tx.lock()$\label{lin:gc8}; \Comment{needed lock in case low TS Tx is added late to Live Set}
				\If{(validate\_gC($pred$, $curr$))}
				\State curr.deleted = true\label{lin:gc8.1};
				\State $pred.\rn$ $\gets$ $curr.\rn$ \label{lin:gc9};
				%  \State $pred.\bn$ $\gets$ $curr.\bn$ \label{lin:gc9}; \Comment{in case curr happens to be last marked node}
				\EndIf
				\State $pred.unlock()$\label{lin:gc10};
				\State $curr.unlock()$\label{lin:gc11};
				%  \State $Min\_Tx.unlock()$\label{lin:gc12};
				\EndIf
				\State $pred$ $\gets$ $curr$ \label{lin:gc13};
				\State $curr$ $\gets$ $curr.\rn$\label{lin:gc14};
				\EndWhile
				\State j++\label{lin:gc15};
				\EndWhile
				\EndWhile
				\State return $\langle \rangle$;
				\EndFunction 
			\end{algorithmic}
			
		\end{multicols}
	\end{algorithm}

	%%%%%%%%%%%%%%%%%%%%%%%%%%%%%%%%%%%%%%%%%%%%%%%%%%%%%%%%%%%%%%%%%%%%%%%%%%
	%-------------------------Validate Garbage Collection---------
	%%%%%%%%%%%%%%%%%%%%%%%%%%%%%%%%%%%%%%%%%%%%%%%%%%%%%%%%%%%%%%%%%%%%%%%%%%
	
	\begin{algorithm}[H]
		\scriptsize
		\caption{\tabspace[0.2cm] validate\_gC($pred\downarrow$, $curr\downarrow$) : Validate garbage collection.
			%\emph{DESCP}\tabspace: transvalidation of Tx.\\
			%\emph{IN}\tabspace \tabspace[0.72cm]: $key$, $next[]$, $val\_type$\\
			%\emph{OUT}\tabspace \tabspace[0.37cm]: $op\_status$ - OK or ABORT     
		}
		\label{algo:vgc}
		\begin{algorithmic}[1]
			\makeatletter\setcounter{ALG@line}{406}\makeatother
			\Function{validate\_gC}{} \label{lin:vgc1}
			\If{($read$($curr.marked$ \& $pred.\rn$ = $curr$)}
			\State return $\langle TRUE\rangle$ \label{lin:vgc2};
			\EndIf
			\State return $\langle FALSE\rangle$ \label{lin:vgc3};
			\EndFunction 
		\end{algorithmic}
	\end{algorithm}

	\begin{enumerate}
		\item Need to lock $Min\_Tx$, because though the min\_Tx is always expected to increase but in case when multiple transactions begin simultaneously and the Tx with lower time stamp is slow and reaches late to Live set. In this case the Min\_tx will decrease. Figure showing this.
		\item Ensuring no duplicates would be inserted.
		\item Is there a guarantee that $Min\_Tx$ would only increase.
		\item When a node is deleted from \rn{} its deleted field is set True. and we depend on underlying gc mechanism to physically remove this node.
	\end{enumerate}
}

\section{Proof Sketch of \otm{}}
\label{sec:pocapp}
%\label{apn:apndx2}
%%%%%%%%%%%%%%%%%%%%%%%%%%  Operational Level%%%%%%%%%%%%%%%%%%%%%%%%%%
%%%%%%%%%%%%%%%%%%%%%%%%%%%%%%%%%%%%%%%%%%%%%%%%%%%%%%%%%%%%%%%%%%%%%%%
\subsection{Method Level}
\label{sec:opnlevel}
For a global state, $S$, we denote $\evts{S}$ as all the events that has lead the system to global state $S$. We denote a state $S'$ to be in future of $S$ if $\evts{S} \subset \evts{S'}$. In this case, we denote $S \sqsubset S'$. We have the following definitions and lemmas:

\begin{definition}
	\label{def:glbls}
	\textit{PublicNodes:} Which is having a incoming $\rn$, except head node.%A PublicNode corresponding to a global state $S$, is a node reachable from head.    
\end{definition}

\begin{definition}
	\label{def:Abs}
	\textit{Abstract List (Abs):} At any global abstract state S, $S.Abs$ can be defined as set of all public nodes that are accessible from head via red links union of set of all unmarked public nodes that are accessible from head via blue links. Formally, $\langle S.Abs = S.Abs.\rn \bigcup S.Abs.\bn \rangle$, where,\\  
	$S.Abs.\rn := \{\forall n| (n \in S.PublicNodes) \land (S.Head \rightarrow^*_{\rn} S.n) \}$. \\
	$S.Abs.\bn = \{\forall n| (n \in S.PublicNodes) \land (\neg S.n.marked) \land (S.Head \rightarrow^*_{\bn} S.n) \}$\\
	
\end{definition}

\begin{observation}
	\label{obs:node-forever}
	Consider a global state $S$ which has a node $n$. Then in any future state $S'$ of $S$, $n$ is a node in $S'$ as well. Formally, $\langle \forall S, S': (n \in S.nodes) \land (S \sqsubset S') \Rightarrow (n \in S'.nodes) \rangle$.
\end{observation}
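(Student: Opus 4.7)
\textbf{Proof proposal for Observation~\ref{obs:node-forever}.} The plan is to perform an induction on the length of the event sequence $\evts{S'} \setminus \evts{S}$ that carries the system from $S$ to $S'$. The base case ($\evts{S'} = \evts{S}$, so $S = S'$) is immediate since $n \in S.nodes$ by hypothesis. For the inductive step, assume the claim holds for all intermediate states and consider a single event $e$ that transitions the system from a state $S_k$ (with $n \in S_k.nodes$ by the inductive hypothesis) to a state $S_{k+1}$. It suffices to show that no event $e$ can remove $n$ from the set of allocated nodes.

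To do this I would enumerate every event in the pseudocode that mutates the shared \lsl{}: the only writes into the list structure occur inside \nplslins{} (\algoref{lslins}) and \nplsldel{} (\algoref{lsldelete}), invoked either from \nptc{} or from \cld{} during an \rvmt{} that must install a dummy red node. A direct inspection shows that (i) \nplslins{} only creates fresh nodes and splices them into the $\rn$ and/or $\bn$ chains (\Lineref{lslins3}--\Lineref{lslins16}), hence it strictly grows $S.nodes$; and (ii) \nplsldel{} merely sets \bc.marked to $true$ and re-routes $\bp.\bn$ past \bc{} (\Lineref{lsldel2}--\Lineref{lsldel3}); the node \bc{} itself, together with its incoming red link from $\rp$, is left intact. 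No other \otm{} procedure writes to the link fields of an existing node in a way that could disconnect or deallocate $n$.

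The second ingredient is the definition of membership in $S.nodes$. I would make explicit that $S.nodes$ refers to the set of nodes allocated in the underlying memory (or equivalently, nodes reachable from $\texttt{head}$ via some sequence of $\rn$ pointers, which is the invariant maintained by our \lsl{}). Under either reading, the red-link chain is monotone: \nplslins{} only extends it and \nplsldel{} does not touch any $\rn$ field of any existing node. Hence if $n$ is reachable from $\texttt{head}$ through red links in $S_k$, it remains reachable in $S_{k+1}$, and in particular $n \in S_{k+1}.nodes$.

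The main obstacle I anticipate is a bookkeeping one rather than a conceptual one: I must carefully justify that the commented-out garbage-collection subsystem is excluded from the model considered here, since a GC event would be the only possible way to remove a node. Given the explicit assumption in \secref{htostm} that a separate garbage collector is presumed present but not analysed, restricting attention to the events generated by the five exported transactional \mth{s} (together with their internal helpers) is sufficient, and the case analysis above then closes the induction.
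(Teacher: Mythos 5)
Your argument is correct and matches the paper's own treatment: the paper states this as an unproved observation, justified only by the remark that garbage collection is ignored and that deletion is lazy (nodes are merely marked and unlinked from the blue chain, never deallocated or removed from the red chain). Your induction over the mutating events of \nplslins{} and \nplsldel{} is simply an explicit formalization of that same code-inspection reasoning, so there is nothing to correct.
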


\noindent With \obsref{node-forever} , we assume that nodes once created do not get deleted (ignoring garbage collection for now). 

\begin{observation}
	\label{obs:obs-nodekey}
	Consider a global state $S$ which has a node $n$, initialized with key $k$. Then in any future state $S'$ the key of $n$ does not change. Formally, $\langle \forall S, S': (n \in S.nodes) \land (S \sqsubset S') \Rightarrow (n \in S'.nodes) \land (S.n.key = S'.n.key) \rangle$. 
\end{observation}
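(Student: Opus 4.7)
The plan is to prove \obsref{obs-nodekey} by a direct structural argument on the events occurring between the states $S$ and $S'$. The intuition is simply that the \emph{key} field of a node is treated as immutable identifying data throughout the \otm{} pseudocode: it is assigned exactly once, at the moment the node is constructed, and is never subsequently written by any method. Thus once a node $n$ exists in some global state $S$, its key is frozen for all future states.

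First I would carry out an exhaustive inspection of the pseudocode to confirm that the only writes to $n.key$ occur at node creation time. The node creation sites are all contained in \algoref{lslins} (\nplslins{}), namely the two invocations of \emph{new node()} — one for the red-list-only insertion (\Lineref{lslins7}) and one for the red-and-blue insertion (\Lineref{lslins12}). All other update sites in \algoref{lslins}, \algoref{lsldelete}, \algoref{trycommit}, \algoref{commonLu&Del}, and the remaining methods write only to the mutable fields of a node: \emph{marked}, \emph{value}, \emph{\rn}, \emph{\bn}, the \emph{max\_ts} tuple, and the \emph{lock}. No code path ever re-assigns $n.key$ after the node has been created.

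Given this, I would then finish the argument by a straightforward induction on the number of events taking the system from $S$ to $S'$. In the base case $S = S'$ the claim is trivial. For the inductive step, let $S \sqsubset S'' \sqsubset S'$ differ by a single event $e$; by the inductive hypothesis $S.n.key = S''.n.key$. The event $e$ is either (i) a read, which modifies no shared field; (ii) a write to some field other than \emph{key} of some existing node; or (iii) an allocation of a fresh node $n'$, which by \obsref{node-forever} cannot coincide with $n$ since $n$ already exists in $S \sqsubseteq S''$. In each case $S''.n.key = S'.n.key$, establishing the claim.

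The only real obstacle is the bookkeeping step: one must verify by case analysis over every algorithm in \secref{pscode} that no helper (including \nppoval{}, \nptov{}, \npintv{}, and the various lock/unlock utilities) ever writes to the $key$ field. Once this enumeration is confirmed, the observation reduces to a one-line invariant: the \emph{key} field is write-once, hence monotone along any extension of states.
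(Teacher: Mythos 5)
Your proposal is correct and matches the justification the paper implicitly relies on: the paper states this as an unproved \emph{observation}, resting on the fact that the $key$ field is assigned only at node creation (the \emph{new node()} sites in \nplslins{}) and never rewritten by any other method. Your code inspection plus induction over events simply makes that implicit write-once argument explicit, so there is nothing to add beyond noting that the appeal to \obsref{node-forever} in case (iii) is not really needed --- freshness of allocation alone guarantees the new node is distinct from $n$.
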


\begin{observation}
	\label{obs:lslSearch}
	Consider a global state $S$ which is the post-state of return event of the function $\nplsls$ invoked in the $\npdel$ or $\nptc$ or $\npluk$ methods. Suppose the $\nplsls$ method returns (\textcolor{blue}{$preds[0]$}, \textcolor{red}{$preds[1]$}, \textcolor{red}{$currs[0]$}, \textcolor{blue}{$currs[1]$}). Then in the state $S$, we have, 
	\begin{enumerate}[label=\ref{obs:lslSearch}.\arabic*]
		\item \label{obs:lslSearch1}$ (\textcolor{blue}{preds[0]} \land \textcolor{red}{preds[1]} \land \textcolor{red}{currs[0]} \land \textcolor{blue}{currs[1]}) \in S.PublicNodes$ %\label{obs:lslSearch1}
		\item \label{obs:lslSearch2}$(S.\textcolor{blue}{preds[0]}.locked)$ $\land$ $(S.\textcolor{red}{preds[1]}.locked)$ $\land$ $(S.\textcolor{red}{currs[0]}.locked)$ $\land$ $(S.\textcolor{blue}{currs[1]}.locked)$ %\label{obs:lslSearch2} 
		\item \label{obs:lslSearch3}$(\neg S.\textcolor{blue}{preds[0]}.marked) \land (\neg S.\textcolor{blue}{currs[1]}.marked)$ $\land$ $(S.\textcolor{blue}{preds[0]}.$\bn$ = S.\textcolor{blue}{currs[1]}$) $\land$
		$(S.\textcolor{red}{preds[1]}\\.$\rn$ = S.\textcolor{red}{currs[0]})$  %\label{obs:lslSearch3}  
		
		% \item \label{obs:lslSearch4} Observing while loops of \nplsls{} at line 5 \& 11 we have \textcolor{blue}{$preds[0]$} \& \textcolor{blue}{$currs[1]$} are reachable by bluelinks, 
		%       \& (\textcolor{red}{$preds[1]$}, \textcolor{red}{$currs[0]$}) are reachable by redlinks respectively.
	\end{enumerate} 
\end{observation}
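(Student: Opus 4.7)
The plan is to prove each of the three sub-claims \ref{obs:lslSearch1}, \ref{obs:lslSearch2}, and \ref{obs:lslSearch3} separately, under the implicit understanding that the relevant return state $S$ is one where $op\_status = OK$ (the $RETRY$ case never returns, and the $ABORT$ case releases locks so part \ref{obs:lslSearch2} would fail there; thus the observation is really about successful returns). My proof for each part will focus on the final iteration of the outer while-loop at \Lineref{lslsearch2} of \algoref{lslsearch}, since that loop exits precisely when the validation at \Lineref{lslsearch21} succeeds.

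For \ref{obs:lslSearch2}, the locked-status claim is the most direct: \Lineref{lslsearch17} explicitly invokes acquirePred\&CurrLocks on all four returned references. These locks are released only inside the guard at \Lineref{lslsearch22}, which triggers exactly when $op\_status \in \{RETRY, ABORT\}$. Since we are in the $op\_status = OK$ branch, the locks are still held when the return event at \Lineref{lslsearch29} fires, and no intervening line of \algoref{lslsearch} unlocks them.

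For \ref{obs:lslSearch1}, I would establish by induction over the two inner while-loops at \Lineref{lslsearch6} and \Lineref{lslsearch13} that every node traversed is a PublicNode in the sense of \defref{glbls}. The red-list traversal is immediate: by \defref{glbls}, every node reachable through an \rn{} edge (other than the head) is a PublicNode, so $\rp$ and $\rc$ qualify. For the blue traversal, I will need a standing invariant that any non-head node reachable from the head along \bn{} edges is also reachable along \rn{} edges, i.e.\ the set of \bn{}-reachable nodes is a subset of the \rn{}-reachable nodes. This invariant follows from inspecting \nplslins{} (\algoref{lslins}), which in every branch attaches a new node to the \rn{} chain either before or simultaneously with attaching it to the \bn{} chain; I will verify this by a case split over the three branches at \Lineref{lslins2}, \Lineref{lslins6}, \Lineref{lslins11}.

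The main obstacle lies in \ref{obs:lslSearch3}. methodValidation (\algoref{interferenceValidation}) checks the four conjuncts $\neg \bp.marked$, $\neg \bc.marked$, $\bp.\bn = \bc$, $\rp.\rn = \rc$ at an instant strictly earlier than the return event, so I must argue no concurrent thread can invalidate them in the meantime. The plan is to establish the following supporting invariant by surveying every write to the fields $.\bn$, $.\rn$, $.marked$ across \algoref{lslins} and \algoref{lsldelete}: any write to $n.\bn$, $n.\rn$, or $n.marked$ is performed by a thread that holds $n.lock$ (or holds the lock on the unique predecessor from which $n$ becomes reachable, in the case of the initial hookup of a freshly created node which is itself locked on \Lineref{lslins12}). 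Given this invariant and part \ref{obs:lslSearch2}, the four conjuncts validated earlier are preserved through the return event because the current thread is the unique lock-holder on the relevant nodes. The labor here is in carefully auditing every shared-memory write in \algoref{lslins}, \algoref{lsldelete}, and \nppoval{} (which relocks predecessors on \Lineref{thredv4-5}--\Lineref{thredv7-5}) to confirm the invariant holds without exception.
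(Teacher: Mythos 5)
Your proposal is correct and takes essentially the same approach as the paper: the paper records this as an \emph{observation}, justifying it in a single sentence by noting that \nplsls{} returns only after the validation at \Lineref{lslsearch21} succeeds, with the four nodes locked at \Lineref{lslsearch17} and released only on RETRY/ABORT. Your more detailed plan --- restricting attention to successful returns, deriving publicness of the four nodes from reachability during traversal, and arguing stability of the validated conjuncts via the lock-holding discipline --- simply fills in the details the paper leaves implicit.
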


In \obsref{lslSearch}, $\nplsls$ method returns only if validation succeed at \Lineref{lslsearch21}.

\begin{lemma}
	\label{lem:lslSearch-ret}
	Consider a global state $S$ which is the post-state of return event of the function $\nplsls$ invoked in the $\npdel$ or $\nptc$ or $\npluk$ methods. Suppose the $\nplsls$ method returns (\textcolor{blue}{$preds[0]$}, \textcolor{red}{$preds[1]$}, \textcolor{red}{$currs[0]$}, \textcolor{blue}{$currs[1]$}). Then in the state $S$, we have, 
	\begin{enumerate}[label=\ref{lem:lslSearch-ret}.\arabic*]
		\item $((S.\textcolor{blue}{preds[0]}.key) < key \leq (S.\textcolor{blue}{currs[1]}.key))$. \label{lem:lslSearch-ret1}
		\item $((S.\textcolor{red}{preds[1]}.key) < key \leq (S.\textcolor{red}{currs[0]}.key))$. \label{lem:lslSearch-ret2} 
		
	\end{enumerate} 
\end{lemma}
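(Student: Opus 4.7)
The plan is to derive both inequalities from (i) the exit conditions of the two \texttt{while} loops inside \nplsls{} and (ii) the guarantees granted by the combined lock acquisition and \npintv{} at \Lineref{lslsearch17}--\Lineref{lslsearch21}, as summarized in \obsref{lslSearch}. The backbone of the argument is an invariant on \lsl{}: at every global state, the sequence of keys reachable from $head$ via $\bn$ is strictly increasing, and likewise for $\rn$; this is a standard lazy-list invariant, maintained because every insertion (in \nplslins) splices a new node between a $pred$ and $curr$ whose keys sandwich the new key, while deletion (in \nplsldel) only marks and unlinks, preserving monotonicity. I would first state this invariant as a separate observation and justify it by induction on the sequence of events that modify $\bn$/$\rn$ links, using \obsref{obs-nodekey} to fix node keys across states.

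Next, I would argue \ref{lem:lslSearch-ret1}. The blue loop at \Lineref{lslsearch6}--\Lineref{lslsearch9} advances $\bp \gets \bc$ and $\bc \gets \bp.\bn$ as long as the read of $\bc.\textup{key}$ is strictly less than $key$; hence at the iteration where the loop exits, $\bc.\textup{key} \geq key$ in some local state $S_1$. The final value of $\bp$ is either the $head$ (whose key is $-\infty$) or a node on whose $\bc.\textup{key} < key$ was observed in the previous iteration (call this local state $S_0$). Taken alone these observations hold only at $S_0, S_1$, which may predate $S$. The lift to state $S$ is the heart of the argument: at \Lineref{lslsearch17} the four nodes are locked, and at \Lineref{lslsearch21} \npintv{} succeeds, guaranteeing by \obsref{lslSearch3} that in $S$ we have $S.\bp.\bn = S.\bc$, $\neg S.\bp.marked$ and $\neg S.\bc.marked$. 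Combined with \obsref{obs-nodekey} (so $\bp.key$ and $\bc.key$ are the same in $S$ as in $S_0, S_1$) and the sortedness invariant (which forces consecutive unmarked $\bn$ nodes to have strictly increasing keys in $S$), we obtain $S.\bp.key < S.\bc.key$, and the previously observed $S_0.\bp.key < key \leq S_1.\bc.key$ transfers to $S$.

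I would then handle \ref{lem:lslSearch-ret2} analogously using the red loop at \Lineref{lslsearch13}--\Lineref{lslsearch16}: $\rp$ is initialized to $\bp$ and $\rc$ to $\bp.\rn$, then $\rp, \rc$ are advanced until $\rc.\textup{key} \geq key$, giving $\rp.key < key \leq \rc.key$ in some local state. The validation check $\texttt{read}(\rp.\rn) \neq \rc$ in \npintv{} guarantees $S.\rp.\rn = S.\rc$ in the return state $S$, and sortedness on $\rn$ plus \obsref{obs-nodekey} again lifts the inequalities to $S$.

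The main obstacle is precisely the lift from the local states $S_0, S_1$ at which the loop-exit facts are read to the global return state $S$: because the traversal is lock-free, concurrent \tins{}/\tdel{} can physically splice new nodes between $\bp$ and $\bc$ or mark $\bp$ between the loop exit and lock acquisition. The cleanest way to discharge this is to argue that \npintv{} at \Lineref{lslsearch21} is exactly the check that rules out all such interference: if any of those events had occurred, one of the four conjuncts in \Lineref{iv2} would have fired and forced \nplsls{} to release locks and retry (so this return event would not happen), hence at the return state $S$ the snapshot $\bp \to \bc$ on $\bn$ and $\rp \to \rc$ on $\rn$ is intact and, by the sortedness invariant, realizes the two claimed inequalities.
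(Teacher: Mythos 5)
Your core argument is the same as the paper's: read $\bp.key < key$ and $key \le \bc.key$ (and the red analogues) off the exit conditions of the two traversal loops, note that node keys are immutable (\obsref{obs-nodekey}) so these inequalities persist to the return state $S$, and use the locking and \npintv{} guarantees summarized in \obsref{lslSearch} to ensure the returned quadruple is still a locked, validated snapshot at $S$. The one thing to be careful about is that you declare the global sortedness invariant (keys strictly increasing along \bn{} and along \rn{}) to be the ``backbone'' and propose to establish it \emph{first}, by induction on link-mutating events. In the paper that invariant is \lemref{key-change} and \lemref{key-change-bl}, and the induction steps of those lemmas themselves invoke \lemref{lslSearch-ret} (to argue that a freshly spliced node's key lies strictly between its pred's and curr's keys); making sortedness a prerequisite of the present lemma therefore risks a circular dependency. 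Fortunately the claimed inequalities do not need sortedness at all: the loop tests directly give $\bp.key < key$ (or $\bp$ is the head with key $-\infty$) and $key \le \bc.key$ for exactly the nodes that are returned, and key immutability carries these facts to $S$. Sortedness is only relevant to the stronger adjacency and reachability properties, which the paper packages separately as \obsref{lslSearch3}. Strip out the sortedness detour and your lift-to-$S$ discussion, which is handled by immutability plus the validation observations, and your proof coincides with the paper's.
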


\begin{proof}
	\begin{enumerate}[label=\ref{lem:lslSearch-ret}.\arabic*]
		
		\item \textbf{ ($S.\textcolor{blue}{preds[0]}.key < key \leq S.\textcolor{blue}{currs[1]}.key)$ :}\\
		Line \ref{lin:lslsearch4} of \nplsls{} method of \algoref{lslsearch} initializes $S.\textcolor{blue}{preds[0]}$ to point head node. Also, $(S.\textcolor{blue}{currs[1]}$ = $S.\textcolor{blue}{preds[0]}.\bn)$ by line \ref{lin:lslsearch5}. As in penultimate execution of line \ref{lin:lslsearch6} $(S.\textcolor{blue}{currs[1]}.key < key)$ and at line \ref{lin:lslsearch7} $(S.\textcolor{blue}{preds[0]} = S.\textcolor{blue}{currs[1]})$ this implies,
		\begin{equation}
		\label{eq:pbllesstthenkey}
		(S.\textcolor{blue}{preds[0]}.key < key)
		\end{equation}
		
		The node key doesn't change as known by \obsref{obs-nodekey}. So, before executing of line \ref{lin:lslsearch11}, we know that,
		\begin{equation}
		\label{eq:beforeline8}
		(key \leq S.\textcolor{blue}{currs[1]}.key)
		\end{equation}
		From \Eqref{pbllesstthenkey} and \Eqref{beforeline8}, we get,
		\begin{equation}
		\label{eq:bluenodeorder}
		(S.\textcolor{blue}{preds[0]}.key < key \leq S.\textcolor{blue}{currs[1]}.key)
		\end{equation}
		From \obsref{lslSearch2} and \obsref{lslSearch3} we know that these nodes are locked and from \obsref{obs-nodekey}, we have that key is not changed for a node, so the lemma holds even when \nplsls{} method of \algoref{lslsearch} returns.

		\item \textbf{($S.\textcolor{red}{preds[1]}.key < key \leq S.\textcolor{red}{currs[0]}.key)$ :}\\ 
		
		Line \ref{lin:lslsearch11} of \nplsls{} method of \algoref{lslsearch} initializes $S.\textcolor{red}{preds[1]}$ to point $S.\textcolor{blue}{preds[0]}$. Also, $(S.\textcolor{red}{currs[0]}$ = $S.\textcolor{blue}{preds[0]}.\rn)$ by line \ref{lin:lslsearch12}. As in penultimate execution of line \ref{lin:lslsearch13} $(S.\textcolor{red}{currs[0]}.key < key)$ and at line \ref{lin:lslsearch14} $(S.\textcolor{red}{preds[1]} = S.\textcolor{red}{currs[0]})$ this implies,
		\begin{equation}
		\label{eq:prllessthenkey}
		(S.\textcolor{red}{preds[1]}.key < key)
		\end{equation}
		
		The node key doesn't change as known by \obsref{obs-nodekey}. So, before executing of line \ref{lin:lslsearch17}, we know that 
		\begin{equation}
		\label{eq:beforeline14}
		(key \leq S.\textcolor{red}{currs[0]}.key)
		\end{equation}
		From \Eqref{prllessthenkey} and \Eqref{beforeline14}, we get,
		\begin{equation}
		\label{eq:rednodeorder}
		(S.\textcolor{red}{preds[1]}.key < key \leq S.\textcolor{red}{currs[0]}.key)
		\end{equation}
		From \obsref{lslSearch2} and \obsref{lslSearch3} we know that these nodes are locked and from \obsref{obs-nodekey}, we have that key is not changed for a node, so the lemma holds even when \nplsls{} method of \algoref{lslsearch} returns.
	\end{enumerate}
	
\end{proof}

\begin{lemma}
	\label{lem:key-change}
	For a node $n$ in any global state $S$, we have that,$\langle \forall n \in S.nodes: (S.n.key < S.n.\rn.key) \rangle$.
\end{lemma}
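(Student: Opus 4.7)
I will prove the lemma by induction on the number of events executed in the system, viewing each global state $S$ as the result of applying a finite sequence of atomic events from the initial state $S_0$.

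\textbf{Base case.} In $S_0$ only the sentinels exist: $head$ with key $-\infty$ and $tail$ with key $+\infty$, and by construction $head.\rn = tail$. Hence $head.key = -\infty < +\infty = head.\rn.key$, and the invariant holds vacuously for $tail$ (we adopt the convention $tail.\rn = tail$ with key $+\infty$, or else restrict the quantifier to nodes that have a successor).

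\textbf{Inductive step.} Assume the invariant holds in state $S$ and let $e$ be the next event producing state $S'$. If $e$ does not modify any $n.\rn$ field, the invariant is preserved trivially. The only writes to \rn{} fields occur inside \nplslins{} (\algoref{lslins}), specifically at \Lineref{lslins9}, \Lineref{lslins10}, \Lineref{lslins13}, and \Lineref{lslins15}; the \textcolor{black}{$RL$}$\_$\textcolor{black}{$BL$} branch (\Lineref{lslins2}--\Lineref{lslins5}) only touches $\bn$ and $marked$ fields and hence is immediate. For the two remaining branches, a fresh node $n$ with some key $k$ is spliced between $\rp$ and $\rc$: first $n.\rn$ is set to $\rc$, then $\rp.\rn$ is set to $n$. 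By \obsref{lslSearch2} and \obsref{lslSearch3} (applied to the enclosing \nplsls{} call), $\rp$ and $\rc$ are both locked and satisfy $\rp.\rn = \rc$ in $S$, so no other thread can splice or modify $\rp.\rn$ concurrently; together with the inductive hypothesis this gives $\rp.key < \rc.key$ in $S$. Using \lemref{lslSearch-ret2} we have $\rp.key < k \leq \rc.key$. To upgrade this to strict inequality $k < \rc.key$, I would appeal to the call sites of \nplslins{}: the node-creating branches are only entered after the caller (\cld{} at \Lineref{delete33}, \nptc{} at \Lineref{tryc28}) has explicitly tested $\rc.key \neq k$ and $\bc.key \neq k$; since $\rc$ is held under lock and keys never change (\obsref{obs-nodekey}), this test remains valid through the splice. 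Thus the two new red edges $\rp \to n$ and $n \to \rc$ both respect strict key ordering, while every other red edge is unchanged and inherits its correctness from the inductive hypothesis.

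\textbf{Main obstacle.} The routine part is the splicing argument once we know $\rp$, $\rc$ are locked and $\rp.\rn = \rc$. The delicate part is arguing the strict inequality $k < \rc.key$: I need to trace through each caller of \nplslins{} in the creating branches and check that the caller guarantees the key is not already present among the currently visible nodes, and that this guarantee is preserved across the acquisition of locks inside \nplsls{}. This requires leaning on \obsref{lslSearch2} (locks held on $\rc$ at return) and \obsref{obs-nodekey} (keys are immutable), and on the fact that no concurrent thread can insert a node with key $k$ between $\rp$ and $\rc$ without first acquiring $\rp$'s or $\rc$'s lock. Once these interference-freedom arguments are in place, the inductive step closes and the lemma follows.
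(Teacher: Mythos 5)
Your proof is correct and follows essentially the same route as the paper: induction over the events that write an \rn{} field (the four lines of \nplslins{}), using \obsref{lslSearch2}, \obsref{lslSearch3}, \lemref{lslSearch-ret2}, and the call-site tests at \Lineref{delete33} and \Lineref{tryc28} to obtain the strict inequality. If anything, your treatment of the splice is marginally more complete than the paper's, since you explicitly check the new edge $n \to \rc$ rather than dismissing the $n.\rn$ assignment on the grounds that $n$ is not yet public.
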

\begin{proof}
	We prove by Induction on events that change the $\rn$ field of the node (as these affect reachability), which are Line \ref{lin:lslins9}, \ref{lin:lslins10}, \ref{lin:lslins13} \& \ref{lin:lslins15} of \nplslins{} method of \algoref{lslins} . It can be seen by observing the code that \nplsldel{} method of \algoref{lsldelete} do not have any update events of $\rn$.\\
	\textbf{\texttt{Base condition:}} Initially, before the first event that changes the $\rn$ field, we know the underlying \lsl{} has immutable $S.head$ and $S.tail$ nodes with $(S.head.\bn = S.tail)$ and $(S.head.\rn = S.tail)$. The relation between their keys is $(S.head.key < S.tail.key)$ $\land$ $(head, tail) \in S.nodes$.\\
	\textbf{\texttt{Induction Hypothesis:}} Say, upto k events that change the $\rn$ field of any node, $(\forall n \in S.nodes:$ $S.n.key$ $<$ $S.n.\rn.key)$.\\[0.1cm]
	\textbf{\texttt{Induction Step:}} 
	So, as seen from the code, the $(k+1)^{th}$ event which can change the $\rn$ field be only one of the following:
	
	\begin{enumerate}
		\item \textbf{\texttt{Line \ref{lin:lslins9} of \nplslins{} method:}} By observing the code, we notice that Line \ref{lin:lslins9} ($\rn{}$ field changing event) can be executed only after the \nplsls{} method of \algoref{lslsearch} returns. Line \ref{lin:lslins7} of the \nplslins{} method creates a new node, $node$ with $key$ and at line \ref{lin:lslins8} set the $(S.node.marked$ = $true)$ (because inserting the node only into the redlink). Line \ref{lin:lslins9} then sets $(S.node.\rn$ = $S.\textcolor{red}{currs[0]})$. Since this event doest not change the $\rn{}$ field of any node reachable from the head of the list (because $node \notin S.PublicNodes$), the lemma is not violated.
		
		\item \textbf{\texttt{Line \ref{lin:lslins10} of \nplslins{} method:}} By observing the code, we notice that Line \ref{lin:lslins10} ($\rn{}$ field changing event) can be executed only after the \nplsls{} method of \algoref{lslsearch} returns. From \lemref{lslSearch-ret2}, we know that when \nplsls{} method of \algoref{lslsearch} returns then,
		\begin{equation}
		\label{eq:line81}
		(S.\textcolor{red}{preds[1]}.key) < key \leq (S.\textcolor{red}{currs[0]}.key)
		\end{equation}
		To reach line \ref{lin:lslins10} of \nplslins{} method, line \ref{lin:delete33} of \cldd{} method of \algoref{commonLu&Del} should ensure that,
		\begin{equation}
		\label{eq:line82}
		(S.\textcolor{red}{currs[0]}.key \neq key) \xRightarrow[]{\Eqref{line81}} (S.\textcolor{red}{preds[1]}.key) < key < (S.\textcolor{red}{currs[0]}.key)
		\end{equation}
		
		From \obsref{lslSearch3}, we know that,
		\begin{equation}
		\label{eq:line84}
		(S.\textcolor{red}{preds[1]}.\rn = S.\textcolor{red}{currs[0]})
		\end{equation}
		Also, the atomic event at line \ref{lin:lslins10} of \nplslins{} sets, 
		\begin{equation}
		\begin{split}
		\label{eq:line85}
		(S.\textcolor{red}{preds[1]}.\rn = node) \xRightarrow[]{\Eqref{line82}} (S.\rp.key < node.key)\\ \Longrightarrow (S.\textcolor{red}{preds[1]}.key < S.\textcolor{red}{preds[1]}.\rn.key)
		\end{split}
		\end{equation}
		
		Where $(S.node.key = key)$. Since $(\textcolor{red}{preds[1]}, node) \in S.nodes$ and hence, $(S.\textcolor{red}{preds[1]}.key < S.\textcolor{red}{preds[1]}.\rn.key)$.
		
		\item \textbf{\texttt{Line \ref{lin:lslins13} of \nplslins{} method:}} By observing the code, we notice that Line \ref{lin:lslins13} ($\rn{}$ field changing event) can be executed only after the \nplsls{} method of \algoref{lslsearch} returns. Line \ref{lin:lslins12} of the \nplslins{} method creates a new node, $node$ with $key$. Line \ref{lin:lslins13} then sets $(S.node.\rn$ = $S.\textcolor{red}{currs[0]})$. Since this event doest not change the $\rn{}$ field of any node reachable from the head of the list (because $node \notin S.PublicNodes$), the lemma is not violated.
		
		\item \textbf{\texttt{Line \ref{lin:lslins15} of \nplslins{} method:}} By observing the code, we notice that Line \ref{lin:lslins15} ($\rn{}$ field changing event) can be executed only after the \nplsls{} \algoref{lslsearch} method returns. From \lemref{lslSearch-ret2}, we know that when \nplsls{} method of \algoref{lslsearch} returns then,
		\begin{equation}
		\label{eq:line131}
		(S.\textcolor{red}{preds[1]}.key) < key \leq (S.\textcolor{red}{currs[0]}.key)
		\end{equation}
		To reach line \ref{lin:lslins15} of \nplslins{} method, line \ref{lin:tryc28} of \nptc{} method of \algoref{trycommit} should ensure that,
		\begin{equation}
		\label{eq:line132}
		(S.\textcolor{red}{currs[0]}.key \neq key) \xRightarrow[]{\Eqref{line131}} (S.\textcolor{red}{preds[1]}.key) < key < (S.\textcolor{red}{currs[0]}.key)
		\end{equation}
		
		From \obsref{lslSearch3}, we know that,
		\begin{equation}
		\label{eq:line134}
		(S.\textcolor{red}{preds[1]}.\rn = S.\textcolor{red}{currs[0]})
		\end{equation}
		Also, the atomic event at line \ref{lin:lslins15} of \nplslins{} sets,
		\begin{equation}
		\begin{split}
		\label{eq:line135}
		(S.\textcolor{red}{preds[1]}.\rn = node) \xRightarrow[]{\Eqref{line132}} (S.\rp.key < node.key)\\ \Longrightarrow (S.\textcolor{red}{preds[1]}.key < S.\textcolor{red}{preds[1]}.\rn.key)
		\end{split}
		\end{equation}
		
		where $(S.node.key = key)$. Since $(\textcolor{red}{preds[1]}, node) \in S.nodes$ and hence, $(S.\textcolor{red}{preds[1]}.key < S.\textcolor{red}{preds[1]}.\rn.key)$.
	\end{enumerate}
\end{proof}

\begin{lemma}
	\label{lem:reach}
	In a global state $S$, any public node $n$ is reachable from $Head$ via red links. Formally, $\langle \forall S, n: n \in S.PublicNodes  \implies S.Head \rightarrow^*_{\rn} S.n \rangle$. 
\end{lemma}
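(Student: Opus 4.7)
The plan is to prove the invariant by induction on the sequence of events in the system execution that can affect the set $S.PublicNodes$ or the $\rn$ fields of nodes already in $S.PublicNodes$. By \obsref{node-forever} nodes are never removed, and by inspecting the pseudocode the only events that modify any $\rn$ pointer are Lines~\ref{lin:lslins9}, \ref{lin:lslins10}, \ref{lin:lslins13} and \ref{lin:lslins15} of \nplslins{} (\nplsldel{} only touches \bn{} and $marked$). Likewise, a node first becomes public exactly when some other node's $\rn$ is made to point to it, so the set of public nodes only grows through the same four lines.

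For the base case, in the initial state the only nodes are $head$ and $tail$ with $head.\rn = tail$. Thus $S.PublicNodes = \{tail\}$ and $tail$ is trivially reachable from $head$ via one red link. For the inductive step, assume the invariant holds in some state $S$, and let $e$ be the next event among the four listed above, producing state $S'$. Lines~\ref{lin:lslins9} and \ref{lin:lslins13} only set the $\rn$ field of the freshly allocated local $node$; at this instant $node \notin S.PublicNodes$ because no existing node's $\rn$ yet points to it, so the set of public nodes and all their $\rn$ pointers are unchanged. Hence $S'.PublicNodes = S.PublicNodes$ and the invariant carries over by the induction hypothesis.

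The substantive cases are Lines~\ref{lin:lslins10} and \ref{lin:lslins15}, which atomically perform the assignment $\textcolor{red}{preds[1]}.\rn \gets node$. By \obsref{lslSearch} these two lines are executed only after a successful return of \nplsls{}, so in the pre-state $S$ the nodes $\textcolor{red}{preds[1]}$ and $\textcolor{red}{currs[0]}$ are locked, $\textcolor{red}{preds[1]} \in S.PublicNodes \cup \{head\}$ when it is $head$, and $S.\textcolor{red}{preds[1]}.\rn = S.\textcolor{red}{currs[0]}$. Moreover, Lines~\ref{lin:lslins9}/\ref{lin:lslins13} executed just before guarantee $S.node.\rn = S.\textcolor{red}{currs[0]}$. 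After $e$, the new public set is $S.PublicNodes \cup \{node\}$. For any $m \in S.PublicNodes$ with $m \neq \textcolor{red}{currs[0]}$, the red path from $head$ to $m$ in $S$ does not traverse the edge $\textcolor{red}{preds[1]} \to \textcolor{red}{currs[0]}$ at its final step, and any earlier use of that edge can be replaced by $\textcolor{red}{preds[1]} \to node \to \textcolor{red}{currs[0]}$; for $m = \textcolor{red}{currs[0]}$ the new path is the old path up to $\textcolor{red}{preds[1]}$ followed by $node \to \textcolor{red}{currs[0]}$; and for $m = node$ itself, reachability follows by appending the edge $\textcolor{red}{preds[1]} \to node$ to the (inductively guaranteed) red path from $head$ to $\textcolor{red}{preds[1]}$. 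Thus every node in $S'.PublicNodes$ is reachable from $Head$ via $\rn$ in $S'$.

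The main obstacle I expect is the concurrent setting: the argument above relies crucially on $S.\textcolor{red}{preds[1]}.\rn = S.\textcolor{red}{currs[0]}$ holding in the very pre-state of the atomic write at Line~\ref{lin:lslins10}/\ref{lin:lslins15}, and on no other thread altering the $\rn$ field of $\textcolor{red}{preds[1]}$ or of nodes on the red path between the validation inside \nplsls{} and the update. This is exactly what the locks taken at Line~\ref{lin:lslsearch17} together with \obsref{lslSearch2}, \obsref{lslSearch3}, and \lemref{key-change} (ordering of keys along $\rn$) provide; I would make this precise by arguing that any interfering write would have required first acquiring a lock on $\textcolor{red}{preds[1]}$, which is impossible while the current thread holds it, so no state change can occur between the post-state of \nplsls{} and the linking events above. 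With that concurrency argument in place, the case analysis extends unchanged.
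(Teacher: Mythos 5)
Your proof is correct and follows essentially the same route as the paper's: induction over the four $\rn$-mutating events in \nplslins{} (Lines~\ref{lin:lslins9}, \ref{lin:lslins10}, \ref{lin:lslins13}, \ref{lin:lslins15}), dismissing the first and third because the freshly allocated node is not yet public, and using the induction hypothesis plus the post-conditions of \nplsls{} for the linking events. You are in fact slightly more careful than the paper in the inductive step, since you explicitly check that previously public nodes (including $\textcolor{red}{currs[0]}$ and everything beyond it) remain reachable after the edge $\textcolor{red}{preds[1]} \rightarrow \textcolor{red}{currs[0]}$ is rerouted through the new node, a point the paper leaves implicit.
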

\begin{proof}
	We prove by Induction on events that change the $\rn$ field of the node (as these affect reachability), which are Line \ref{lin:lslins9}, \ref{lin:lslins10}, \ref{lin:lslins13} \& \ref{lin:lslins15} of \nplslins{} method of \algoref{lslins} . It can be seen by observing the code that \nplsldel{} method of \algoref{lsldelete} do not have any update events of $\rn$.\\
	\textbf{\texttt{Base condition:}} Initially, before the first event that changes the $\rn$ field of any node, we know that $(head, tail)$ $\in$ $S.PublicNodes$ $\land$ $\neg$($S.head.marked$) $\land$ $\neg$($S.tail.marked$) $\land$ $(S.head$ $\rightarrow^*_{\rn}$ $S.tail)$.\\
	\textbf{\texttt{Induction Hypothesis:}} Say, upto k events that change the next field of any node, $(\forall n \in S.PublicNodes$, $(S.head$ $\rightarrow^*_{\rn}$ $S.n))$. \\[0.1cm]
	\textbf{\texttt{Induction Step:}} 
	So, as seen from the code, the $(k+1)^{th}$ event which can change the $\rn$ field be only one of the following:\\
	\begin{enumerate}
		\item \textbf{\texttt{Line \ref{lin:lslins9} of \nplslins{} method:}} Line \ref{lin:lslins7} of the \nplslins{} method creates a new node, $node$ with $key$ and at line \ref{lin:lslins8} set the $(S.node.marked$ = $true)$ (because inserting the node only into the redlink).  Line \ref{lin:lslins9} then sets $(S.node.\rn$ = $S.\textcolor{red}{currs[0]})$. Since this event doest not change the $\rn{}$ field of any node reachable from the head of the list (because $node \notin S.PublicNodes$), the lemma is not violated.
		
		\item \textbf{\texttt{Line \ref{lin:lslins10} of \nplslins{} method:}} By observing the code, we notice that Line \ref{lin:lslins10} ($\rn{}$ field changing event) can be executed only after the \nplsls{} method of \algoref{lslsearch} returns. From line \ref{lin:lslins9} \& \ref{lin:lslins10} of \nplslins{} method, $(S.node.\rn =S.\rc) \land (S.\rp.\rn = S.node) \land (node \in S.PublicNodes) \land (S.node.marked = true)$ (because inserting the node only into the redlink). It is to be noted that (from \obsref{lslSearch2}), $(\bp, \rp,\\ \rc, \bc)$ are locked, hence no other thread can change marked field of $S.\rp$ and $S.\rc$ simultaneously. Also, from \obsref{obs-nodekey}, a node's key field does not change after initialization. Before executing line \ref{lin:lslins10}, $\rp$ is reachable from head by $\rn{}$ (from induction hypothesis). After line \ref{lin:lslins10}, we know that from $\rp$, public marked node, $node$ is also reachable. Thus, we know that $node$ is also reachable from head. Formally, $(S.Head \rightarrow^*_{\rn} S.\rp) \land (S.\rp \rightarrow^*_{\rn} S.node) \Rightarrow (S.Head \rightarrow^*_{\rn} S.node)$.
		
		\item \textbf{\texttt{Line \ref{lin:lslins13} of \nplslins{} method:}} Line \ref{lin:lslins12} of the \nplslins{} method creates a new node, $node$ with $key$. Line \ref{lin:lslins13} then sets $(S.node.\rn$ = $S.\textcolor{red}{currs[0]})$. Since this event doest not change the $\rn{}$ field of any node reachable from the head of the list (because $node \notin S.PublicNodes$), the lemma is not violated.
		
		\item \textbf{\texttt{Line \ref{lin:lslins15} of \nplslins{} method:}} By observing the code, we notice that Line \ref{lin:lslins15} ($\rn{}$ field changing event) can be executed only after the \nplsls{} method of \algoref{lslsearch} returns. From line \ref{lin:lslins13} \& \ref{lin:lslins15} of \nplslins{} method, $(S.node.\rn =S.\rc) \land (S.\rp.\rn = S.node) \land (node \in S.PublicNodes) \land (node.marked = false)$ (because new node is created by default with unmarked field). It is to be noted that (from \obsref{lslSearch2}), $(\bp, \rp,\\ \rc, \bc)$ are locked, hence no other thread can change marked field of $S.\rp$ and $S.\rc$ simultaneously. Also, from \obsref{obs-nodekey}, a node's key field does not change after initialization. Before executing line \ref{lin:lslins15}, $\rp$ is reachable from head by $\rn{}$ (from induction hypothesis). After line \ref{lin:lslins15}, we know that from $\rp$, public unmarked node, $node$ is also reachable. Thus, we know that $node$ is also reachable from head. Formally, $(S.Head \rightarrow^*_{\rn} S.\rp) \land (S.\rp \rightarrow^*_{\rn} S.node) \Rightarrow (S.Head \rightarrow^*_{\rn} S.node)$.
		
	\end{enumerate}
\end{proof}

\begin{corollary}
	\label{cor:uniquenodeKey}
	Each node is associated with an unique key, i.e. at any given state S, their cannot be two nodes with same key.
\end{corollary}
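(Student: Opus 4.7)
\textbf{Proof plan for \Corref{uniquenodeKey}.} The plan is to derive uniqueness directly from the strict monotonicity of keys along red links (\Lemref{key-change}) together with the fact that every public node is reachable from $Head$ via $\rn$ (\Lemref{reach}). Concretely, I would argue by contradiction: fix any global state $S$ and suppose, toward a contradiction, that there exist two distinct public nodes $n_1, n_2 \in S.PublicNodes$ with $S.n_1.key = S.n_2.key = k$. By \Lemref{reach}, both $n_1$ and $n_2$ lie on the red-link chain starting at $Head$. Since red-link successors are uniquely determined by the $\rn$ field, these two occurrences sit at well-defined positions along the single $\rn$-path from $Head$; without loss of generality $n_1$ precedes $n_2$ on this path, i.e.\ $S.Head \rightarrow^*_{\rn} S.n_1 \rightarrow^+_{\rn} S.n_2$. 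Iterating \Lemref{key-change} along the path from $n_1$ to $n_2$ then yields $S.n_1.key < S.n_2.key$, contradicting $S.n_1.key = S.n_2.key$. Hence no two public nodes in $S$ share a key.

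The remaining obligation is to account for non-public nodes (those freshly allocated at \Lineref{lslins7} or \Lineref{lslins12} of \nplslins{} but not yet linked in via $\rn$) and to make sure that the transition from non-public to public never introduces a duplicate. For this I would do an induction on the events of the execution, with the invariant ``all public nodes have pairwise distinct keys''. The only events that can add a node to $S.PublicNodes$ are the $\rn$-updates at \Lineref{lslins10} and \Lineref{lslins15} of \algoref{lslins}, both of which are reached only after a successful \nplsls{} that returned with $\textcolor{red}{preds[1]}$ and $\textcolor{red}{currs[0]}$ locked. By \Lemref{lslSearch-ret2} together with the guarding check (\Lineref{delete33} in \algoref{commonLu&Del} and \Lineref{tryc28} in \algoref{trycommit}, which ensure $S.\textcolor{red}{currs[0]}.key \neq key$), we obtain the strict inequality $S.\textcolor{red}{preds[1]}.key < key < S.\textcolor{red}{currs[0]}.key$. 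Combined with the induction hypothesis and \Lemref{key-change} extended to the newly linked node, this shows that the inserted $node$ carries a key strictly greater than every key appearing on the red-link prefix up to $\textcolor{red}{preds[1]}$ and strictly smaller than every key appearing from $\textcolor{red}{currs[0]}$ onward, so it cannot duplicate any key already public.

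The main obstacle I anticipate is the usual concurrency subtlety: we need to rule out that between the return of \nplsls{} and the linking write, some other thread publishes a competing node with the same key at the same slot. This is handled by \Obsref{lslSearch2}, which guarantees that $\textcolor{red}{preds[1]}$ and $\textcolor{red}{currs[0]}$ remain locked at the moment of the update, so the invariant $S.\textcolor{red}{preds[1]}.\rn = S.\textcolor{red}{currs[0]}$ from \Obsref{lslSearch3} persists until the single linearizing write, making the linking atomic with respect to the check. Once this is argued carefully, the inductive step closes and the corollary follows immediately from the base case (the initial list contains only $Head$ and $Tail$, which have distinct sentinel keys $-\infty$ and $+\infty$).
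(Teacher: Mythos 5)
Your argument is correct and follows the same route as the paper, which justifies the corollary in one line by appealing to reachability of all public nodes via red links (\lemref{reach}), the strict key ordering along red links (\lemref{key-change}), and the immutability observations (\obsref{node-forever}, \obsref{obs-nodekey}). Your version merely works out the contradiction and the publication-event induction in more detail than the paper bothers to.
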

As every node is reachable by redlinks and has a strict ordering and from \obsref{node-forever} and \obsref{obs-nodekey} we get this.

\begin{corollary}
	\label{cor:key-abs}
	Consider the global state $S$ such that for any public node $n$, if there exists a key strictly greater than n.key and strictly smaller than $n.\rn.key$, then the node corresponding to the key does not belong to S.Abs. Formally, $\langle \forall S, n, key$ : $S.PublicNodes$ $\land$ $(S.n.key < key < S.n.\rn.key)$ $\implies$ $node(key)$ $\notin S.Abs \rangle$. 
\end{corollary}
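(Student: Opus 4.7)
I plan to prove the corollary by contradiction: assume some node $m$ with $S.m.key = key$ satisfies $m \in S.Abs$, and derive that $m$ must lie strictly beyond $S.n.\rn$ on the red chain, yielding $S.m.key > S.n.\rn.key > key$, which contradicts $S.m.key = key$. The backbone of the argument is the strict monotonicity of keys along red links (\lemref{key-change}) combined with the fact that every public node is reachable from $Head$ via red links (\lemref{reach}).

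First, from $m \in S.Abs = S.Abs.\rn \cup S.Abs.\bn$, and since by \defref{Abs} both sets contain only public nodes, I get $m \in S.PublicNodes$. Applying \lemref{reach} to both $n$ and $m$, both lie on the red chain emanating from $Head$: $S.Head \rightarrow^*_{\rn} S.n$ and $S.Head \rightarrow^*_{\rn} S.m$. Since each node has a single $\rn$ pointer, this chain is a simple path, so $n$ and $m$ have well-defined positions on it.

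Next, I would upgrade the one-step monotonicity of \lemref{key-change} to a transitive statement by a short induction on path length: if $S.u \rightarrow^+_{\rn} S.v$, then $S.u.key < S.v.key$. Together with \corref{uniquenodeKey}, this linearly orders the red chain by keys. Because $S.n.key < key = S.m.key$, the position of $m$ on the chain strictly follows that of $n$, so either $m = S.n.\rn$ or $S.n.\rn \rightarrow^+_{\rn} S.m$. In the first case, $S.m.key = S.n.\rn.key > key$ by the hypothesis of the corollary, contradicting $S.m.key = key$. In the second case, the transitive monotonicity gives $S.m.key > S.n.\rn.key > key$, again a contradiction. Hence no such $m$ exists, and $node(key) \notin S.Abs$.

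The main obstacle is the small but essential lifting of \lemref{key-change} from a one-step claim to a path-length claim, and ensuring the red chain is truly a simple path (so that ``position on the chain'' is well defined). Both should be routine: the path property follows from the fact that $\rn$ is a single pointer per node, and the lifting is an immediate induction on path length. Once that is in place, the remainder of the argument is pure bookkeeping on the strict order of keys along the red chain.
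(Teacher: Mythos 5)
Your argument is correct and follows exactly the route the paper intends: the paper states \corref{key-abs} without an explicit proof, presenting it as an immediate consequence of the strict key monotonicity along red links (\lemref{key-change}) and the red-link reachability of all public nodes (\lemref{reach}), which is precisely the reasoning you spell out. The only addition you make explicit -- lifting the one-step monotonicity to paths and noting the red chain is a simple path -- is routine and fills in what the paper leaves implicit.
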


\begin{observation}
	\label{obs:futurestate}
	Consider a global state $S$ which has a node $n$ is reachable from $head$ via $\rn$. Then in any future state $S'$ of $S$, node $n$ is also reachable from $head$ via $\rn$ in $S'$ as well. Formally, $\langle \forall S, S': (n \in S.nodes) \land (S \sqsubset S') \land\ (S.head \rightarrow^*_{\rn} S.n) \Rightarrow (n \in S'.nodes) \land\ (S'.head \rightarrow^*_{\rn} S'.n) \rangle$.
\end{observation}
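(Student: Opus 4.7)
\textbf{Proof proposal for Observation~\ref{obs:futurestate}.} The plan is to prove the statement by induction on the number of atomic events executed along the execution between $S$ and $S'$. Since $S \sqsubset S'$ means $\evts{S} \subset \evts{S'}$, we can order the additional events $e_1, e_2, \ldots, e_m$ that transition the system from $S$ to $S'$ via intermediate global states $S = S_0, S_1, \ldots, S_m = S'$. The inductive hypothesis will be: for every $i \in \{0, \ldots, m\}$, the node $n$ belongs to $S_i.nodes$ and $S_i.head \rightarrow^*_{\rn} S_i.n$. The base case $i = 0$ is immediate from the assumption on $S$. For the inductive step we need to show that no single event $e_{i+1}$ can destroy the red-link reachability of $n$ from $head$.

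The first step is to invoke Observation~\ref{obs:node-forever} to conclude that $n \in S_{i+1}.nodes$ whenever $n \in S_i.nodes$; this takes care of the first conjunct. For the reachability conjunct, I would enumerate every line of the pseudocode that writes to the $\rn$ field of any node, using the fact that \nplsldel{} (\algoref{lsldelete}) touches only blue links (\Lineref{lsldel2} and \Lineref{lsldel3}) and therefore cannot affect $\rightarrow^*_{\rn}$. The only writes to $\rn$ live in \nplslins{} (\algoref{lslins}) at \Lineref{lslins9}, \Lineref{lslins10}, \Lineref{lslins13}, and \Lineref{lslins15}. Writes at \Lineref{lslins9} and \Lineref{lslins13} set the $\rn$ field of a freshly created private node that is not yet in \textit{PublicNodes} and is not on any existing red-path, so the set of red-paths from $head$ to previously existing nodes is unchanged. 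Writes at \Lineref{lslins10} and \Lineref{lslins15} perform $\rp.\rn \leftarrow node$ where, by Observation~\ref{obs:lslSearch3}, just before the write we had $\rp.\rn = \rc$, and after the write the new node points (via its own $\rn$, set in the immediately preceding line under the lock acquired in \nplsls) to $\rc$; thus the red-chain $head \to^* \rp \to node \to \rc \to^* \ldots$ extends the previous chain $head \to^* \rp \to \rc \to^* \ldots$ rather than breaking it.

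Having established that each of the four $\rn$-writing events preserves red-reachability of every public node already reachable from $head$, I would combine these cases to conclude the inductive step. Concretely, for any node $m$ (in particular $m = n$) with $S_i.head \rightarrow^*_{\rn} S_i.m$, the same witnessing red-link path exists in $S_{i+1}$, possibly elongated by a freshly inserted intermediate node. Applying the induction up to $i = m$ yields the claim for $S' = S_m$.

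The main obstacle will be the bookkeeping at \Lineref{lslins10} and \Lineref{lslins15}: one has to argue that at the instant of the write, the locks held by \nplsls{} (Observation~\ref{obs:lslSearch2}) guarantee that no concurrent thread has modified $\rp.\rn$ away from $\rc$ in between the validation inside \npintv{} (\algoref{interferenceValidation}, \Lineref{iv2}) and the write itself, so that the spliced-in node genuinely extends, rather than detours around, the existing red-path. Once this is justified using the locking discipline and the validation check, the remainder of the argument is a routine case analysis over the finitely many $\rn$-mutating lines.
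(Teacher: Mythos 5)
Your proposal is correct and follows essentially the same route as the paper: the paper's own (much terser) proof simply invokes Observation~\ref{obs:node-forever} for node persistence and notes that red links are never removed (in particular, \nplsldel{} touches only blue links), which is exactly the content your event-by-event induction over the $\rn$-mutating lines of \nplslins{} makes explicit. Your added detail on the splice at \Lineref{lslins10} and \Lineref{lslins15} mirrors the case analysis the paper already carries out in \lemref{reach}, so nothing new is needed.
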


\begin{proof}
	From \Obsref{node-forever}, we have that for any node n,  $n$ $\in$ $S.nodes \Rightarrow n \in S'.nodes$. Also, we have that in absence of garbage collection no node is deleted from memory and the redlinks are preserved during delete update events (refer \nplsldel{} method of \algoref{lsldelete}). 
\end{proof}

\begin{lemma}
	\label{lem:key-change-bl}
	For a node $n$ in any global state $S$, we have that,$\langle \forall n \in S.nodes :  (S.n.key < S.n.\bn.key) \rangle$.
\end{lemma}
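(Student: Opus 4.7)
The statement is the blue-link analogue of \lemref{key-change}, so the plan is to mirror that proof by structural induction on events that modify the $\bn$ field of some node. By inspecting the pseudocode, the only lines that write to $\bn$ are \Lineref{lslins4}, \Lineref{lslins5}, \Lineref{lslins14}, and \Lineref{lslins16} of \nplslins{} (\algoref{lslins}), together with \Lineref{lsldel3} of \nplsldel{} (\algoref{lsldelete}). The base case is identical to \lemref{key-change}: initially $S.head.\bn = S.tail$ and by construction $S.head.key < S.tail.key$, with all other nodes having $\bn = \bot$ or being unreachable, so the invariant holds vacuously for them.

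For the inductive step, I would assume the invariant up to the $k$-th write of $\bn$ and argue case by case on the $(k+1)$-st. Each of the \nplslins{} cases begins after a successful return of \nplsls{}, so I can invoke \lemref{lslSearch-ret1} to obtain $S.\textcolor{blue}{preds[0]}.key < key \leq S.\textcolor{blue}{currs[1]}.key$; combined with the guards in the caller (\Lineref{tryc19}, \Lineref{tryc24}, \Lineref{tryc28}, etc.) which ensure that the key is \emph{not} already at $\bc$ before a fresh insertion, I get the strict inequality $S.\bp.key < key < S.\bc.key$ wherever a genuinely new node is linked in. For \Lineref{lslins14}, the write is to the private $node.\bn$ before it is made public, so it cannot yet violate the invariant for any node in $S.PublicNodes$; then \Lineref{lslins16} sets $S.\bp.\bn = node$, and strict ordering gives $S.\bp.key < S.\bp.\bn.key$. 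The RL-to-BL promotion at \Lineref{lslins4}--\Lineref{lslins5} is handled analogously: \Lineref{lslins4} sets $\rc.\bn = \bc$ where $\rc.key = key \leq \bc.key$ (with equality excluded by the guard in the caller), and \Lineref{lslins5} sets $\bp.\bn = \rc$ where $\bp.key < key = \rc.key$.

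The delete case at \Lineref{lsldel3} is the step where a small chaining argument is needed: \nplsldel{} writes $\bp.\bn \gets \bc.\bn$, so I must show $\bp.key < \bc.\bn.key$. By the induction hypothesis applied to $\bc$, $\bc.key < \bc.\bn.key$, and from \lemref{lslSearch-ret1} we have $\bp.key < \bc.key$; transitivity yields the desired inequality. The locks held by \nplsls{} (\obsref{lslSearch2}) guarantee that no concurrent thread alters $\bp$, $\bc$, or $\bc.\bn$ between the validation and the write, and \obsref{obs-nodekey} ensures that key fields are immutable throughout.

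The main obstacle is not any single case but making the case analysis airtight about \emph{which} $\bn$-writes actually affect publicly reachable nodes. The fresh-node writes at \Lineref{lslins9}, \Lineref{lslins13}, \Lineref{lslins14} operate on a node $n$ that is not yet in $S.PublicNodes$, so they preserve the invariant trivially; only the subsequent \Lineref{lslins10}, \Lineref{lslins15}, \Lineref{lslins16} actually splice the node in, and for those the strict inequality $S.\bp.key < node.key$ delivered by \lemref{lslSearch-ret1} plus the caller's guard is exactly what is needed. Once this bookkeeping is done, the lemma follows by the same pattern as \lemref{key-change}.
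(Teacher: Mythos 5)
Your proposal is correct and follows essentially the same route as the paper's proof: the same induction on the five $\bn$-writing events, the same use of \lemref{lslSearch-ret1} plus the caller's guards to sharpen the inequalities to strict ones, the same trivial dismissal of writes to not-yet-public nodes, and the same transitivity chaining through the induction hypothesis for the \nplsldel{} case. The only blemish is that in your closing paragraph you cite \Lineref{lslins9} and \Lineref{lslins13} (which are $\rn$-writes belonging to \lemref{key-change}) among the fresh-node $\bn$-writes, but this is a slip in the commentary rather than a gap in the argument.
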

\begin{proof}
	We prove by Induction on events that change the $\bn$ field of the node (as these affect reachability), which are Line \ref{lin:lslins4}, \ref{lin:lslins5}, \ref{lin:lslins14} \& \ref{lin:lslins16} of \nplslins{} method of \algoref{lslins} and Line \ref{lin:lsldel3} of \nplsldel{} method of \algoref{lsldelete} .\\
	\textbf{\texttt{Base condition:}} Initially, before the first event that changes the $\bn$ field, we know the underlying \lsl{} has immutable $S.head$ and $S.tail$ nodes with $(S.head.\bn = S.tail)$ and $(S.head.\rn = S.tail)$. The relation between their keys is $(S.head.key < S.tail.key)$ $\land$ $(head, tail) \in S.nodes$.\\ 
	\textbf{\texttt{Induction Hypothesis:}} Say, upto k events that change the $\bn{}$ field of any node, $(\forall n \in S.nodes : (S.n.key$ $<$ $S.n.\bn.key))$.\\
	\textbf{\texttt{Induction Step:}} 
	So, as seen from the code, the $(k+1)^{th}$ event which can change the $\bn$ field be only one of the following:
	\begin{enumerate}
		\item \textbf{\texttt{Line \ref{lin:lslins4} \& \ref{lin:lslins5} of \nplslins{} method:}} By observing the code, we notice that Line \ref{lin:lslins4} \& \ref{lin:lslins5} ($\bn{}$ field changing event) can be executed only after the \nplsls{} method of \algoref{lslsearch} returns. From \lemref{lslSearch-ret1} and \lemref{lslSearch-ret2}, we know that when \nplsls{} method of \algoref{lslsearch} returns then,
		\begin{equation}
		\begin{split}
		\label{eq:bline21}
		((S.\textcolor{blue}{preds[0]}.key) < key \leq (S.\textcolor{blue}{currs[1]}.key)) \land ((S.\textcolor{red}{preds[1]}.key) < key \leq (S.\textcolor{red}{currs[0]}.key)) 
		\end{split}
		\end{equation}
		To reach line \ref{lin:lslins4} of \nplslins{} method, line \ref{lin:tryc24} of \nptc{} method of \algoref{trycommit} should ensure that,
		\begin{equation}
		\begin{split}
		\label{eq:bline22}
		(S.\textcolor{blue}{currs[1]}.key \neq key) \land (S.\textcolor{red}{currs[0]}.key = key) \xRightarrow[]{\Eqref{bline21}}\\ ((S.\textcolor{blue}{preds[0]}.key) < key < (S.\textcolor{blue}{currs[1]}.key))\\ \land ((S.\textcolor{red}{preds[1]}.key) < (key = S.\textcolor{red}{currs[0]}.key))
		\end{split}
		\end{equation}
		
		From \obsref{lslSearch3}, we know that,
		\begin{equation}
		\label{eq:bline26}
		(S.\textcolor{blue}{preds[0]}.\bn = S.\textcolor{blue}{currs[1]}) \land (S.\textcolor{red}{preds[1]}.\rn = S.\textcolor{red}{currs[0]})
		\end{equation}
		The atomic event at line \ref{lin:lslins4} of \nplslins{} sets,
		\begin{equation}
		\begin{split}
		\label{eq:bline27}
		(S.\textcolor{red}{currs[0]}.\bn = S.\textcolor{blue}{currs[1]}) \xRightarrow[\lemref{key-change}]{\Eqref{bline22}, \lemref{reach}} (S.\textcolor{red}{currs[0]}.key) < (S.\textcolor{blue}{currs[1]}.key) \Longrightarrow\\ (S.\textcolor{red}{currs[0]}.key) < (S.\textcolor{red}{currs[0]}.\bn.key)
		\end{split}
		\end{equation}
		%Where $(S.\textcolor{red}{currs[0]}.key = key)$. 
		
		Also, the atomic event at line \ref{lin:lslins5} of \nplslins{} sets,
		\begin{equation}
		\begin{split}
		\label{eq:bline3ins7}
		(S.\textcolor{blue}{preds[0]}.\bn = S.\textcolor{red}{currs[0]}) \xRightarrow[]{\Eqref{bline22}} (S.\textcolor{blue}{preds[0]}.key) < (S.\textcolor{red}{currs[0]}.key) \Longrightarrow \\ (S.\textcolor{blue}{preds[0]}.key) < (S.\textcolor{blue}{preds[0]}.\bn.key).
		\end{split}
		\end{equation}
		
		Where $(S.\textcolor{red}{currs[0]}.key = key)$. Since $(\textcolor{blue}{preds[0]}, \rc) \in S.nodes$ and hence, $(S.\textcolor{blue}{preds[0]}.\\key < S.\textcolor{blue}{preds[0]}.\bn.key)$.
		
		\item \textbf{\texttt{Line \ref{lin:lslins14} of \nplslins{} method:}} By observing the code, we notice that Line \ref{lin:lslins14} ($\bn{}$ field changing event) can be executed only after the \nplsls{} method of \algoref{lslsearch} returns. Line \ref{lin:lslins12} of the \nplslins{} method creates a new node, $node$ with $key$. Line \ref{lin:lslins14} then sets $(S.node.\bn$ = $S.\textcolor{blue}{currs[1]})$. Since this event doest not change the $\bn{}$ field of any node reachable from the head of the list (because $node \notin S.PublicNodes$), the lemma is not violated.
		
		\item \textbf{\texttt{Line \ref{lin:lslins16} of \nplslins{} method:}} By observing the code, we notice that Line \ref{lin:lslins16} ($\bn{}$ field changing event) can be executed only after the \nplsls{} method of \algoref{lslsearch} returns. From \lemref{lslSearch-ret1} and \lemref{lslSearch-ret2}, we know that when \nplsls{} method of \algoref{lslsearch} returns then,
		\begin{equation}
		\begin{split}
		\label{eq:bline14ins1}
		(S.\textcolor{blue}{preds[0]}.key) < key \leq (S.\textcolor{blue}{currs[1]}.key) \land (S.\textcolor{red}{preds[1]}.key) < key \leq (S.\textcolor{red}{currs[0]}.key)
		\end{split}
		\end{equation}
		To reach line \ref{lin:lslins16} of \nplslins{} method, line \ref{lin:tryc28} of \nptc{} method of \algoref{trycommit} should ensure that,
		\begin{equation}
		\begin{split}
		\label{eq:bline14ins2}
		(S.\textcolor{red}{currs[0]}.key \neq key) \land (S.\textcolor{blue}{currs[1]}.key \neq key) \xRightarrow[]{\Eqref{bline14ins1}}\\ (S.\textcolor{blue}{preds[0]}.key) < key < (S.\textcolor{blue}{currs[1]}.key)\\ \land (S.\textcolor{red}{preds[1]}.key) < key < (S.\textcolor{red}{currs[0]}.key)
		\end{split}
		\end{equation}
		
		From \obsref{lslSearch3}, we know that,
		\begin{equation}
		\label{eq:bline14ins4}
		(S.\textcolor{blue}{preds[0]}.\bn = S.\textcolor{blue}{currs[1]})
		\end{equation}
		Also, the atomic event at line \ref{lin:lslins16} of \nplslins{} sets,
		\begin{equation}
		\begin{split}
		\label{eq:bline14ins5}
		(S.\textcolor{blue}{preds[0]}.\bn = S.node) \xRightarrow[]{\Eqref{bline14ins2}} (S.\textcolor{blue}{preds[0]}.key < S.node.key)\\ \Longrightarrow (S.\textcolor{blue}{preds[0]}.key < S.\textcolor{blue}{preds[0]}.\bn.key)
		\end{split}
		\end{equation}
		
		Where $(S.node.key = key)$. Since $(\textcolor{blue}{preds[0]}, node) \in S.nodes$ and hence, $(S.\textcolor{blue}{preds[0]}.key < S.\textcolor{blue}{preds[0]}.\bn.key)$.
		
		\item \textbf{\texttt{Line \ref{lin:lsldel3} of \nplsldel{} method:}} By observing the code, we notice that Line \ref{lin:lsldel3} ($\bn{}$ field changing event) can be executed only after the \nplsls{} method of \algoref{lslsearch} returns. From \lemref{lslSearch-ret1}, we know that when \nplsls{} method of \algoref{lslsearch} returns then,
		\begin{equation}
		\label{eq:bline2del1}
		(S.\textcolor{blue}{preds[0]}.key) < key \leq (S.\textcolor{blue}{currs[1]}.key)
		\end{equation}
		To reach line \ref{lin:lsldel3} of $\nplsldel$ method, line \ref{lin:tryc34} of $\nptc$ method of \algoref{trycommit} should ensure that,
		\begin{equation}
		\begin{split}
		\label{eq:bline2del2}
		(S.\textcolor{blue}{currs[1]}.key = key) \xRightarrow[]{\Eqref{bline2del1}} (S.\textcolor{blue}{preds[0]}.key) < (key = S.\textcolor{blue}{currs[1]}.key)
		\end{split}
		\end{equation}
		
		From \obsref{lslSearch3}, we know that,
		\begin{equation}
		\label{eq:bline2del4}
		(S.\textcolor{blue}{preds[0]}.\bn = S.\textcolor{blue}{currs[1]})
		\end{equation}
		We know from Induction hypothesis,
		\begin{equation}
		\label{eq:bline2del6}
		(\textcolor{blue}{currs[1]}.key < \textcolor{blue}{currs[1]}.\bn.key)
		\end{equation}
		Also, the atomic event at line \ref{lin:lsldel3} of \nplsldel{} sets,
		\begin{equation}
		\begin{split}
		\label{eq:bline2del5}
		(S.\textcolor{blue}{preds[0]}.\bn = S.\textcolor{blue}{currs[1]}.\bn) \xRightarrow[]{\Eqref{bline2del2}, \Eqref{bline2del6}} (S.\textcolor{blue}{preds[0]}.key < S.\textcolor{blue}{currs[1]}.\bn.key)\\ \Longrightarrow (S.\textcolor{blue}{preds[0]}.key < S.\textcolor{blue}{preds[0]}.\bn.key)
		\end{split}
		\end{equation}
		
		Where $(S.\textcolor{blue}{currs[1]}.key = key)$. Since $(\textcolor{blue}{preds[0]}, \textcolor{blue}{currs[1]}) \in S.nodes$ and hence, $(S.\textcolor{blue}{preds[0]}.key < S.\textcolor{blue}{preds[0]}.\bn.key)$ 
	\end{enumerate}
\end{proof}

\begin{lemma}
	\label{lem:reach-bl}
	In a global state $S$, any unmarked public node $n$ is reachable from $Head$ via blue links. Formally, $\langle \forall S, n: (S.PublicNodes) \land (\neg S.n.marked)  \implies (S.Head \rightarrow^*_{\bn} S.n) \rangle$. 
\end{lemma}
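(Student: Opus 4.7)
The plan is to mirror the structure of the proof of Lemma \ref{lem:reach} and induct on the sequence of events that modify a $\bn$ field of any node in the system. By inspecting the code, these are exactly Lines \ref{lin:lslins4}, \ref{lin:lslins5}, \ref{lin:lslins14}, \ref{lin:lslins16} of \nplslins{} (Algorithm \ref{algo:lslins}) together with Line \ref{lin:lsldel3} of \nplsldel{} (Algorithm \ref{algo:lsldelete}); no other line writes to $\bn$. The base case is trivial: before any such event occurs, the only public nodes are the immutable sentinels $head$ and $tail$, both unmarked, with $head.\bn = tail$, so both are $\bn$-reachable from $head$.

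For the inductive step I assume the claim for every prior state and analyze each event. Lines \ref{lin:lslins14} and \ref{lin:lslins16} handle the case when the key is absent from both lists: a freshly created $node$ is spliced between $\textcolor{blue}{preds[0]}$ and $\textcolor{blue}{currs[1]}$, using $S.\textcolor{blue}{preds[0]}.\bn = S.\textcolor{blue}{currs[1]}$ and $\neg S.\textcolor{blue}{preds[0]}.marked$ from Observation \ref{obs:lslSearch3} and the validation at Line \ref{lin:lslsearch21}, with both endpoints locked by Observation \ref{obs:lslSearch2}. Hence every previously $\bn$-reachable unmarked public node remains so, and the newly public unmarked $node$ becomes $\bn$-reachable through $\textcolor{blue}{preds[0]}$. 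Lines \ref{lin:lslins4}--\ref{lin:lslins5} handle the promotion case, where the key was already present as a marked red-only node $\textcolor{red}{currs[0]}$: Line \ref{lin:lslins3} first clears $\textcolor{red}{currs[0]}.marked$, then Lines \ref{lin:lslins4}--\ref{lin:lslins5} splice $\textcolor{red}{currs[0]}$ into the blue chain between $\textcolor{blue}{preds[0]}$ and $\textcolor{blue}{currs[1]}$, again preserving the induction hypothesis and making the newly unmarked $\textcolor{red}{currs[0]}$ reachable via $\bn$ through $\textcolor{blue}{preds[0]}$.

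The main obstacle is Line \ref{lin:lsldel3} of \nplsldel{}, which performs $\textcolor{blue}{preds[0]}.\bn := \textcolor{blue}{currs[1]}.\bn$ and thereby unlinks $\textcolor{blue}{currs[1]}$ from the blue chain. The delicate point is to argue that no \emph{unmarked} public node becomes $\bn$-unreachable. The key observation is that Line \ref{lin:lsldel2} executes immediately before Line \ref{lin:lsldel3} and sets $\textcolor{blue}{currs[1]}.marked := true$, while by Observation \ref{obs:lslSearch2} both $\textcolor{blue}{preds[0]}$ and $\textcolor{blue}{currs[1]}$ are locked and (by the validation in \nplsls{}) $\neg S.\textcolor{blue}{preds[0]}.marked$ and $S.\textcolor{blue}{preds[0]}.\bn = S.\textcolor{blue}{currs[1]}$ hold atomically with this update. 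Thus the only node removed from the $\bn$-chain is $\textcolor{blue}{currs[1]}$, which is marked in the post-state and hence falls outside the claim's hypothesis; Lemma \ref{lem:key-change-bl} ensures that $\textcolor{blue}{currs[1]}.\bn$ is a well-defined node with strictly larger key, so the spliced chain is still a simple path. Every other unmarked public node $n$ that was $\bn$-reachable in the prior state is reached either along a prefix of the chain ending before $\textcolor{blue}{preds[0]}$ (unaffected) or along a suffix beginning at $\textcolor{blue}{currs[1]}.\bn$ (still reachable, now through the new edge from $\textcolor{blue}{preds[0]}$). This closes the induction and establishes the lemma.
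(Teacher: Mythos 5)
Your proof is correct and follows the same inductive skeleton as the paper's: induction on the events that write a \bn{} field, with the base case on the sentinels and a case analysis over \Lineref{lslins4}--\Lineref{lslins5}, \Lineref{lslins14}, \Lineref{lslins16} of \nplslins{} and \Lineref{lsldel3} of \nplsldel{}, each discharged using \obsref{lslSearch2}, \obsref{lslSearch3} and \lemref{reach}. The one genuine difference is that you actually carry out the \nplsldel{} case: the paper lists \Lineref{lsldel3} among the \bn{}-mutating events but its enumerated case analysis silently stops after the three \nplslins{} cases and never argues why unlinking \bc{} preserves the invariant. Your treatment of that case is the right one --- \Lineref{lsldel2} marks \bc{} immediately before \Lineref{lsldel3} redirects $\bp.\bn$, both nodes being locked, so the only node removed from the blue chain is marked in the post-state and therefore outside the lemma's hypothesis, while every other unmarked public node stays reachable either on the prefix up to \bp{} or on the suffix starting at $\bc.\bn$. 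So your write-up is not merely equivalent to the paper's proof; it closes a case the paper leaves unargued. The only cosmetic quibble is that the appeal to \lemref{key-change-bl} for ``the chain is still a simple path'' is not needed for reachability and could be dropped.
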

\begin{proof}
	We prove by Induction on events that change the $\bn$ field of the node (as these affect reachability), which are Line \ref{lin:lslins4}, \ref{lin:lslins5}, \ref{lin:lslins14} \& \ref{lin:lslins16} of \nplslins{} method of \algoref{lslins} and line \ref{lin:lsldel3} of \nplsldel{} method of \algoref{lsldelete}.\\
	\textbf{\texttt{Base condition:}} Initially, before the first event that changes the $\bn$ field of any node, we know that $(head, tail)$ $\in$ $S.PublicNodes$ $\land$ $\neg$($S.head.marked$) $\land$ $\neg$($S.tail.marked$) $\land$ $(S.head$ $\rightarrow^*_{\bn}$ $S.tail)$.\\
	\textbf{\texttt{Induction Hypothesis:}} Say, upto k events that change the next field of any node, $\forall n \in S.PublicNodes$, $(\neg S.n.marked)$ $\land$ $(S.head$ $\rightarrow^*_{\bn}$ $S.n)$. \\[0.1cm]
	\textbf{\texttt{Induction Step:}} 
	So, as seen from the code, the $(k+1)^{th}$ event which can change the $\bn$ field be only one of the following:\\
	\begin{enumerate}
		\item \textbf{\texttt{Line \ref{lin:lslins4} \& \ref{lin:lslins5} of \nplslins{} method:}} By observing the code, we notice that Line \ref{lin:lslins4} \& \ref{lin:lslins5} ($\bn{}$ field changing event) can be executed only after the \nplsls{} method of \algoref{lslsearch} returns. It is to be noted that (from \obsref{lslSearch2}), $(\bp, \rp, \rc, \bc)$ are locked, hence no other thread can change $S.\bp.marked$ and $S.\bc.marked$ simultaneously. Also, from \obsref{obs-nodekey}, a node's key field does not change after initialization. Before executing line \ref{lin:lslins4}, from \obsref{lslSearch3} ,
		\begin{equation}
		\label{eq:reachablefrombl1}
		(S.\bp.marked = false) \land (S.\bc.marked = false)
		\end{equation}
		And from \lemref{reach} and induction hypothesis,
		\begin{equation}
		\label{eq:reachablefrombl2}
		(S.Head \rightarrow^*_{\rn} S.\rc) \land (S.Head \rightarrow^*_{\bn} S.\bc)
		\end{equation}
		After line \ref{lin:lslins4}, we know that from $\rc$, public unmarked node, $\bc$ is also reachable, implies that,
		\begin{equation}
		\label{eq:reachablefrombl3}
		(S.\rc \rightarrow^*_{\bn} S.\bc)
		\end{equation}
		Also, before executing line \ref{lin:lslins5}, from induction hypothesis and \lemref{reach} ,
		\begin{equation}
		\label{eq:reachablefrombl4}
		(S.Head \rightarrow^*_{\bn} S.\bp) \land (S.Head \rightarrow^*_{\rn} S.\rc)
		\end{equation}
		After line \ref{lin:lslins5}, we know that from $\bp$, public unmarked node (from line \ref{lin:lslins3} of \nplslins{} method), $\rc$ is also reachable via $\bn$, implies that,
		\begin{equation}
		\label{eq:reachablefrombl5}
		(S.\bp \rightarrow^*_{\bn} S.\rc) \land (S.\rc.marked = false)
		\end{equation}
		From \Eqref{reachablefrombl3} and \Eqref{reachablefrombl5},
		\begin{equation}
		\begin{split}
		\label{eq:reachablefrombl6}
		(S.\bp \rightarrow^*_{\bn} S.\rc) \land (S.\rc \rightarrow^*_{\bn} S.\bc) \land \\(S.\rc.marked = false)
		\end{split}
		\end{equation}
		Since $(\bp, \rc) \in S.PublicNode$ and hence, $(S.Head \rightarrow^*_{\bn} S.\bp) \land (S.\bp\\ \rightarrow^*_{\bn} S.\rc) \land (S.\rc.marked = false) \Rightarrow (S.Head \rightarrow^*_{\bn} S.\rc)$.
		
		\item \textbf{\texttt{Line \ref{lin:lslins14} of \nplslins{} method:}} Line \ref{lin:lslins12} of the \nplslins{} method creates a new node, $node$ with $key$. Line \ref{lin:lslins14} then sets $(S.node.\bn$ = $S.\textcolor{blue}{currs[1]})$. Since this event doest not change the $\bn{}$ field of any node reachable from the head of the list (because $node \notin S.PublicNodes$), the lemma is not violated.
		
		\item \textbf{\texttt{Line \ref{lin:lslins16} of \nplslins{} method:}} By observing the code, we notice that Line \ref{lin:lslins16} ($\bn{}$ field changing event) can be executed only after the \nplsls{} method of \algoref{lslsearch} returns. It is to be noted that (from \obsref{lslSearch2}), $(\bp, \rp, \rc, \bc)$ are locked, hence no other thread can change $S.\bp.marked$ and $S.\bc.marked$ simultaneously. Also, from \obsref{obs-nodekey}, a node's key field does not change after initialization. Before executing line \ref{lin:lslins14}, from \obsref{lslSearch3} ,
		\begin{equation}
		\label{eq:reachablefrombl11}
		(S.\bp.marked = false) \land (S.\bc.marked = false)
		\end{equation}
		And from induction hypothesis,
		\begin{equation}
		\label{eq:reachablefrombl21}
		(S.Head \rightarrow^*_{\bn} S.\bc)
		\end{equation}
		After line \ref{lin:lslins14}, we know that from $node$, public unmarked node, $\bc$ is also reachable via $\bn$, implies that,
		\begin{equation}
		\label{eq:reachablefrombl31}
		(S.node \rightarrow^*_{\bn} S.\bc)
		\end{equation}
		Also, before executing line \ref{lin:lslins16}, from induction hypothesis,
		\begin{equation}
		\label{eq:reachablefrombl41}
		(S.Head \rightarrow^*_{\bn} S.\bp)
		\end{equation}
		After line \ref{lin:lslins16}, we know that from $\bp$, public unmarked node (because new node is created by default with unmarked field), $node$ is also reachable via $\bn$, implies that,
		\begin{equation}
		\label{eq:reachablefrombl51}
		(S.\bp \rightarrow^*_{\bn} S.node) \land (S.node.marked = false)
		\end{equation}
		From \Eqref{reachablefrombl31} and \Eqref{reachablefrombl51},
		\begin{equation}
		\begin{split}
		\label{eq:reachablefrombl61}
		(S.\bp \rightarrow^*_{\bn} S.node) \land (S.node \rightarrow^*_{\bn} S.\bc) \land (S.node.marked = false)
		\end{split}
		\end{equation}
		Since $(\bp, node) \in S.PublicNode$ and hence, $(S.Head \rightarrow^*_{\bn} S.\bp) \land (S.\bp \rightarrow^*_{\bn} S.node) \land (S.node.marked = false) \Rightarrow (S.Head \rightarrow^*_{\bn} S.node)$.
		
	\end{enumerate}
\end{proof}

\begin{corollary}
	\label{cor:BlsubsetRL}
	All public node $n$, is reachable from $head$ via bluelist is $subset$ of all public node $n$, is reachable from $head$ via redlist. Formally, $\langle \forall S, n: (n \in S.nodes) \land (S.head \rightarrow^*_{\bn} S.n) \subseteq (S.head \rightarrow^*_{\rn} S.n) \rangle$. 
	
\end{corollary}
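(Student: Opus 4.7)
\textbf{Proof Plan for Corollary \ref{cor:BlsubsetRL}.} The plan is to reduce the claim to the two reachability lemmas already established, \lemref{reach} and \lemref{reach-bl}, by arguing that every node reachable from $head$ via blue links is necessarily a \emph{public node} (in the sense of \defref{glbls}), at which point \lemref{reach} supplies red reachability directly.

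First I would fix a global state $S$ and an arbitrary node $n$ with $S.head \rightarrow^*_{\bn} S.n$. The goal is to derive $n \in S.PublicNodes$, i.e.\ $n$ has an incoming $\rn$ edge (or $n=head$, which is the trivial base case since $head \rightarrow^*_{\rn} head$). For this I would run an induction on the sequence of events that can make $n$ become reachable via blue links. Inspecting the code, the only events that can extend blue-reachability to a fresh node are the $BL$-type insertion at Lines \ref{lin:lslins12}--\ref{lin:lslins16} of \nplslins{}, the $RL\_BL$-type insertion at Lines \ref{lin:lslins2}--\ref{lin:lslins5} of \nplslins{}, and the redirection at Line \ref{lin:lsldel3} of \nplsldel{}. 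In each case I would check that the newly blue-reachable node is already red-reachable from $head$:
\begin{itemize}
\item In the $BL$-type insertion, the freshly created $node$ is linked into $\rn$ at Line \ref{lin:lslins15} \emph{before} it is linked into $\bn$ at Line \ref{lin:lslins16}, so at the post-state $node \in S.PublicNodes$, and \lemref{reach} applies.
\item In the $RL\_BL$-type insertion, the target node equals $S.\textcolor{red}{currs[0]}$, which by \obsref{lslSearch} is already a public node reachable from $head$ via $\rn$.
\item In the \nplsldel{} case, Line \ref{lin:lsldel3} does not make any previously blue-unreachable node blue-reachable; it only bypasses $S.\textcolor{blue}{currs[1]}$. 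Every node that remains blue-reachable after this event was blue-reachable before, so by the induction hypothesis it is also red-reachable, and then \obsref{futurestate} guarantees red reachability is preserved into $S$.
\end{itemize}

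Having established $n \in S.PublicNodes$ for every $n$ with $S.head \rightarrow^*_{\bn} S.n$, I would invoke \lemref{reach} to conclude $S.head \rightarrow^*_{\rn} S.n$, which is precisely the desired set inclusion. The main (and really only) obstacle is the bookkeeping in the $BL$-type insertion case: one has to check carefully that the red linking at Line \ref{lin:lslins15} commits before the blue linking at Line \ref{lin:lslins16}, so that at no intermediate global state does blue reachability exceed red reachability. Once that ordering is pinned down, the corollary follows immediately from \lemref{reach} and \obsref{futurestate}.
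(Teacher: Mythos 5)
Your proposal is correct, but it is structured differently from the paper's own argument, and the difference is worth noting. The paper disposes of \corref{BlsubsetRL} in two sentences: it cites \lemref{reach} (every public node is red-reachable) and \lemref{reach-bl} (every unmarked public node is blue-reachable), and then concludes from ``unmarked public nodes are a subset of all public nodes.'' That chain actually runs in the wrong direction at one point --- \lemref{reach-bl} gives $\{\text{unmarked public}\} \subseteq \{\text{blue-reachable}\}$, whereas the corollary needs $\{\text{blue-reachable}\} \subseteq \{\text{red-reachable}\}$, so the paper is implicitly assuming the converse of \lemref{reach-bl}, namely that nothing becomes blue-reachable without first being a public node. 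Your proof supplies exactly that missing containment: you induct over the three code events that can extend blue reachability (the $BL$ and $RL\_BL$ branches of \nplslins{} and the bypass in \nplsldel{}) to show that any node with $S.head \rightarrow^*_{\bn} S.n$ satisfies $n \in S.PublicNodes$ in the sense of \defref{glbls}, and only then invoke \lemref{reach} together with \obsref{futurestate}. Your observation that the red link at the line setting $\rp.\rn$ is installed before the blue link in the $BL$ branch is precisely the ordering fact the paper's shortcut silently depends on. In short, your route is longer but self-contained and closes a small logical gap; the paper's route is shorter but leans on an unproved converse of \lemref{reach-bl}.
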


\begin{proof}
	From \lemref{reach} , we know that all public nodes either marked or unmarked are reachable from head by $\rn{}$, also from \lemref{reach-bl} we have that all unmarked public nodes are reachable by $\bn{}$. Unmarked public nodes are subset of all public nodes thus the corollary.
	
\end{proof}

\begin{lemma}
	\label{lem:con-spec-lsl}
	Consider a concurrent history, $E^H$, for any successful method which is call by transaction $T_i$, after the post-state of $LP$ event of the method, node corresponding to the key should be part of $\rn$ and $max\_ts$ of that node should be equal to method transaction time-stamp. Formally, $\langle (node(key) \in ([E^H.Post(\lin{m_{i}})].Abs.\rn)) \land (node.max\_ts = TS(T_i)) \rangle$. 
\end{lemma}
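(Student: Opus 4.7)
\textbf{Proof proposal for \lemref{con-spec-lsl}.}
The plan is to perform a case analysis on the type of the successful method $m_i$, identify its linearization point (\lin{m_i}) from the pseudocode, and then in each case invoke the structural invariants already established (in particular \lemref{reach}, \lemref{reach-bl}, \corref{BlsubsetRL}, and \obsref{futurestate}) to conclude that the target node sits in $Abs.\rn$ and carries the right timestamp. The natural split is (a) successful \npluk{} or \npdel{} executed in the \rvp{} and returning a non-abort status, (b) successful \npins{} or \npdel{} reaching its commit-time effect in \nptc{}, and (c) the purely local sub-case where the method finds a prior log entry via $findInLL$ (LR1 in \subsecref{legal}) --- in which case the effect on shared memory, and hence the LP, is inherited from an earlier method of the same transaction and the claim reduces to the inductive case already handled.

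First I would pin down \lin{m_i} for each shared-memory access. For \cldd{} invoked from \npluk{}/\npdel{}, the LP is the timestamp-write inside the critical section (e.g.\ \Lineref{delete27}, \Lineref{delete31} or \Lineref{delete36} of \algoref{commonLu&Del}), which is guarded by the locks acquired inside \nplsls{} (\obsref{lslSearch2}); at that moment either (i) the node already lies on \bn{} as \bc{}, so by \lemref{reach-bl} it is in $Abs.\bn \subseteq Abs.\rn$ (\corref{BlsubsetRL}), (ii) the node already lies on \rn{} as \rc{}, so by \lemref{reach} it is in $Abs.\rn$, or (iii) the node was freshly inserted on \rn{} by \nplslins{} with \emph{list\_type} $= RL$, and the reachability argument I sketched for \lemref{reach} case~2 shows the newly attached node joins $Abs.\rn$ at \Lineref{lslins10}. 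For \npins{} and \npdel{} in \nptc{}, the LP is the update event in the underlying list (\Lineref{tryc21}/\Lineref{tryc25}/\Lineref{tryc29} for insert, \Lineref{lsldel2} for delete combined with the timestamp write at \Lineref{tryc37}); reachability is preserved because \obsref{futurestate} guarantees that a node once reachable via \rn{} stays so, and for the fresh-insert sub-case \lemref{reach} again applies. The $max\_ts$ claim in each case is immediate: the timestamp write is a single atomic $\texttt{write}$ whose value is $TS(t\_id)$, and after the LP no other transaction can overwrite it before the lock is released (the locks are still held at \lin{m_i}).

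The intra-transaction complication handled by \nppoval{} (\algoref{povalidation}) is the only place where the $preds/currs$ captured by the method may be stale at the moment of the update; I would show separately that after \nppoval{} the pair $(\bp,\bc)$ (resp.\ $(\rp,\rc)$) still satisfies \obsref{lslSearch3} with respect to the current global state, so the inductive step in \lemref{reach}/\lemref{reach-bl} still applies to the write at the LP. For the local-log sub-case (LR1 of \subsecref{legal}) the ``successful'' behaviour is defined by inheritance from a prior method $m'_i$ of the same transaction; induction on the position of $m_i$ in $T_i$ together with \obsref{node-forever} and \obsref{futurestate} closes this case: the node placed in $Abs.\rn$ by $m'_i$ remains there, and although the stored timestamp is $TS(T_i)$ from the earlier write, $TS(T_i)$ is still the transaction's timestamp.

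The main obstacle will be the \nptc{} case when the logged $preds$/$currs$ have been invalidated by both concurrent conflicting transactions (resolved by the re-execution of \nplsls{} at \Lineref{tryc7} together with \npintv{}) \emph{and} by earlier updates of the same transaction (resolved by \nppoval{}). I will need a short auxiliary lemma stating that after the loop at \Lineref{tryc5}--\Lineref{tryc13} followed by \nppoval{}, the $(\bp,\bc,\rp,\rc)$ used at the LP still satisfy all three clauses of \obsref{lslSearch} in the current global state; this is the one spot where the combination of concurrency control (locks + interference validation) and the ``lost update'' fix must be argued carefully, and it is the crux on which the reachability conclusion of the lemma ultimately rests.
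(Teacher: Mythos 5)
Your proposal is correct and follows essentially the same route as the paper's proof: a case split on \rvmt{} versus \upmt{}, the postconditions of \nplsls{} (\obsref{lslSearch}, \lemref{lslSearch-ret}), the reachability lemmas (\lemref{reach}, \lemref{reach-bl}), and the persistence observations (\obsref{node-forever}, \obsref{futurestate}, \obsref{obs-nodekey}) to place the node in $Abs.\rn$ with the method's transaction timestamp. Two small remarks: the paper takes the LP to be the first unlock event (\defref{fLP}) rather than the timestamp write, which changes nothing here since both lie inside the same locked critical section; and the auxiliary claim you flag as the crux --- that \nppoval{} preserves the \nplsls{} postconditions --- is exactly what the paper asserts (without proof) as \obsref{poorder-validation}.
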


\begin{proof}
	\begin{enumerate}
		\item \textbf{\texttt{For \rvmt{} method:}} By observing the code, each $\rvmt{}$ first invokes \nplsls{} method of \algoref{lslsearch} (line \ref{lin:delete21} of \cldd{} method of \algoref{commonLu&Del}). From \lemref{key-change} \& \lemref{key-change-bl} we have that the nodes in the underlying data-structure are in increasing order of their keys, thus the key on which the method is working has a unique location in underlying data-structure from \corref{uniquenodeKey} . So, when the \nplsls{} is invoked from a \mth{}, it returns correct location $(\bp, \rp,$ $\rc, \bc)$ of corresponding $key$ as observed from \obsref{lslSearch} \& \lemref{lslSearch-ret} and all are locked, hence no other thread can change simultaneously (from \obsref{lslSearch2}). 
		%\nplsls{} method of \algoref{lslsearch} returns the correct location of the underlying data-structure from \lemref{lslSearch-ret} and \lemref{lslcorrect} because underlying list are in sorted order from \lemref{key-change} \& \lemref{key-change-bl} . 
		
		In the pre-state of $LP$ event of $\rvmt{}$ , if $(node.key \in S.Abs.\rn)$, means $key$ is already there in $\rn$ and time-stamp of that node is less then the $\rvmt{}$ transactions time-stamp, from \nptov{} method of \algoref{tovalidation} , then in the post-state of $LP$ event of $\rvmt{}$, $node.key$ should be the part of $\rn$ from \obsref{futurestate} and $key$ can't be change from \obsref{obs-nodekey} and it just update the $max\_ts$ field for corresponding node $key$ by method transaction time-stamp else abort. 
		%and from \nptov{} method of \algoref{tovalidation} , time-stamp of that node is less then the $\rvmt{}$ transactions time-stamp then, it just update the $max\_ts$ field for corresponding node $key$ by method transaction time-stamp and return the $value$ else abort. 
		
		In the pre-state of $LP$ event of $\rvmt{}$ , if $(node.key \notin S.Abs.\rn)$, means $key$ is not there in $\rn$ then, in the post-state of $LP$ event of $\rvmt{}$, insert the $node$ corresponding to the $key$ into $\rn$ by using \nplslins{} method of \algoref{lslins} and update the $max\_ts$ field for corresponding node $key$ by method transaction time-stamp. Since, $node.key$ should be the part of $\rn$ from \obsref{futurestate} and $key$ can't be change from \obsref{obs-nodekey} , in post-state of $LP$ event of $\rvmt{}$.

		%By observing the pre-state of $LP$ event of $\rvmt{}$, if $(node.key \in S.Abs.\rn)$, means $key$ is already there in $\rn$ and from \nptov{} method of \algoref{tovalidation} , time-stamp of that node is less then the $\rvmt{}$ transactions time-stamp then, it just update the $max\_ts$ field for corresponding node $key$ by method transaction time-stamp and return the $value$ else abort. In the pre-state of $LP$ event of $\rvmt{}$, if $(node.key \notin S.Abs.\rn)$, means $key$ is not there in $\rn$ then $\rvmt{}$ insert the $node.key$ into $\rn$ and then update the time-stamp field and set $value$ to be $null$. Now from line 20 to 35 of \npdel{} method of \algoref{delete} \& line 15 to 27 of \npluk{} method of \algoref{lookup}, we see that if node corresponding to key is present then either it is $\bc$ or $\rc$, else we create a new node with corresponding key within the underlying data-structure, set the time-stamp and $value$. This can be seen at line 24 and 32, \nplslins{} method of \algoref{lslins} is invoked from \npdel{} method of \algoref{delete} \& \npluk{} method of \algoref{lookup} respectively. Once a node is created it will never get deleted from \obsref{futurestate} and node corresponding to a key can't be modified from \obsref{obs-nodekey} . Thus, node corresponding to the $key$ should be part of $\rn$ after the $LP$ event of every $\rvmt{}$ and $max\_ts$ field of that node corresponding to $key$ should be equal to $\rvmt{}$ transaction time-stamp and $node.val$ is equal to $v$.\\
		%\\
		\item \textbf{\texttt{For \upmt{} method:}} By observing the code, each $\upmt{}$ also first invokes \nplsls{} method of \algoref{lslsearch} (line \ref{lin:tryc7} of \nptc{} method of \algoref{trycommit} ).
		From \lemref{key-change} \& \lemref{key-change-bl} we have that the nodes in the underlying data-structure are in increasing order of their keys, thus the key on which the method is working has a unique location in underlying data-structure from \corref{uniquenodeKey} . So, when the \nplsls{} is invoked from a \mth{}, it returns correct location $(\bp, \rp,$ $\rc, \bc)$ of corresponding $key$ as observed from \obsref{lslSearch} \& \lemref{lslSearch-ret} and all are locked, hence no other thread can change simultaneously (from \obsref{lslSearch2}). 
		%\nplsls{} method of \algoref{lslsearch} returns the correct location of the underlying data-structure from \lemref{lslSearch-ret} and \lemref{lslcorrect} because underlying list are in sorted order from \lemref{key-change} \& \lemref{key-change-bl} . 
		
		\begin{enumerate}
			\item \textbf{\texttt{If \upmt{} is insert:}} In the pre-state of $LP$ event of $\upmt{}$, if $(node.key \in S.Abs.\rn)$, means $key$ is already there in $\rn$ and time-stamp of that node is less then the $\upmt{}$ transactions time-stamp, from \nptov{} method of \algoref{tovalidation} , then in the post-state of $LP$ event of $\upmt{}$, $node.key$ should be the part of $\rn$ and it just update the $max\_ts$ field for corresponding node $key$ by method transaction time-stamp else abort.
			
			In the pre-state of $LP$ event of $\upmt{}$, if $(node.key \notin S.Abs.\rn)$, means $key$ is not there in $\rn$ then in the post-state of $LP$ event of $\upmt{}$, it will insert the $node$ corresponding to the $key$ into the $\rn$ as well as $\bn$, from $\nplslins{}$ method of \algoref{lslins} at line \ref{lin:tryc33} of $\nptc{}$ method of \algoref{trycommit} and update the $max\_ts$ field for corresponding node $key$ by method transaction time-stamp.
			Once a node is created it will never get deleted from \obsref{futurestate} and node corresponding to a key can't be modified from \obsref{obs-nodekey}.
			
			\item \textbf{\texttt{If \upmt{} is delete:}} In the pre-state of $LP$ event of $\upmt{}$, if $(node.key \in S.Abs.\rn)$, means $key$ is already there in $\rn$ and time-stamp of that node is less then the $\upmt{}$ transactions time-stamp, from \nptov{} method of \algoref{tovalidation} , then in the post-state of $LP$ event of $\upmt{}$, $node.key$ should be the part of $\rn$, from $\nplsldel{}$ method of \algoref{lsldelete} at line \ref{lin:tryc37} of $\nptc{}$ method of \algoref{trycommit} and it just update the $max\_ts$ field for corresponding node $key$ by method transaction time-stamp else abort.
			
			In the pre-state of $LP$ event of $\upmt{}$, $(node.key \notin S.Abs.\rn)$ this should not be happen because execution of \npdel{} method of \algoref{delete} must have already inserted a node in the underlying data-structure prior to $\nptc$ method of \algoref{trycommit} . Thus, $(node.key \in S.Abs.\rn)$ and update the $max\_ts$ field for corresponding node $key$ by method transaction time-stamp else abort. 
			
			%Once a node is created it will never get deleted from \obsref{futurestate} and node corresponding to a key can't be modified from \obsref{obs-nodekey}.
		\end{enumerate}
	\end{enumerate}
\end{proof}
In \otm{} we have a \emph{\upmt{} execution} phase where all buffered $\upmt{}$ take effect together after successful validation of each of them.
Following problem may arise if two $\upmt{}$ within same transaction have at least one shared node amongst its recorded $(\bp,\rp,\\ \rc, \bc )$, in this case the previous $\upmt{}$ effect might be overwritten if the next $\upmt{}$ preds and currs are not updated according to the updates done by the previous $\upmt{}$. Thus program order might get violated. Thus to solve this we have intra trans validation after each $\upmt{}$ in $\nptc$, during \emph{\upmt{} execution} phase.

\begin{lemma}
	\label{lem:poval}
	\nppoval{} preserve the program order within a transaction.
\end{lemma}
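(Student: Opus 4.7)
The plan is to argue by induction on the position $i$ of an update method $m_i$ in the sorted list $ordered\_ll\_list$ processed during \cp, showing that after \nppoval{} is invoked at \Lineref{tryc17} and before $m_i$'s shared-memory update, the adjusted tuple $(\bp, \rp, \rc, \bc)$ for $m_i$ is a valid location for its key in the current state of the \lsl. By a valid location we mean one satisfying the ordering and reachability invariants of \obsref{lslSearch} and \lemref{lslSearch-ret} with respect to the post-state produced by the preceding updates $m_1, \ldots, m_{i-1}$ of the same transaction. Establishing this suffices, because each individual update (\nplslins{} or \nplsldel) preserves its own effect when applied at a valid location, so the cumulative post-state of $m_1 \ldots m_n$ realises every buffered update of $T$ rather than losing any of them.

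The base case $i=1$ is immediate: the first update's $(\bp, \rp, \rc, \bc)$ were refreshed by the \nplsls{} loop at \Lineref{tryc5}--\Lineref{tryc12}, so by \obsref{lslSearch} and \lemref{lslSearch-ret} they are valid with respect to the pre-commit state, and no earlier $m_k$ of $T$ has modified the list. For the inductive step, fix $i > 1$ and let $m_k$ (with $k < i$) be the most recent previously applied update of $T$ whose recorded location overlaps with that of $m_i$ (so $le_k.\bp$ or $le_k.\rp$ coincides with a node in $m_i$'s recorded tuple). If no such $m_k$ exists, the test at \Lineref{threadv3} is false, \nppoval{} leaves $m_i$'s tuple untouched, and the inductive hypothesis applied at step $k$ together with \obsref{futurestate} and \obsref{obs-nodekey} gives validity. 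Otherwise I do a case split on the operation name of $m_k$:

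\textbf{Case A (\textsc{Insert}).} $m_k$ inserted a fresh node $n' = le_k.\bp.\bn$ between $le_k.\bp$ and $le_k.\bc$ via \Lineref{tryc29} or \Lineref{tryc25}. Because the sort is by key and $m_k$'s key is smaller than $m_i$'s, $n'.key < key_{m_i}$, so $n'$ is the correct new blue predecessor. \nppoval{} detects the staleness through $(\bp.\bn \neq \bc)$ (\Lineref{threadv3}) and reassigns $\bp \gets le_k.\bp.\bn$ at \Lineref{threadv5}, re-acquiring the lock at \Lineref{thredv5-5}. Using \lemref{key-change-bl}, \lemref{reach-bl} and \corref{uniquenodeKey}, I would then verify that the updated tuple satisfies $\bp.key < key_{m_i} \leq \bc.key$ and $\bp.\bn = \bc$. \textbf{Case B (\textsc{Delete}).} $m_k$ marked its $\bc$ and redirected $le_k.\bp.\bn$ at \Lineref{tryc35}; the staleness is again caught by \Lineref{threadv3} (either $\bp$ is marked or $\bp.\bn$ has shifted), and \nppoval{} at \Lineref{threadv7} reassigns $\bp \gets le_k.\bp$, which by induction was unmarked and precedes $key_{m_i}$. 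A symmetric argument on the red chain, handled at \Lineref{threadv10}--\Lineref{threadv11} via $\rp \gets le_k.\rp.\rn$, repairs $(\rp, \rc)$ using \lemref{key-change} and \lemref{reach}.

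The main obstacle I anticipate is the red-chain repair: when $m_k$ is an insert that also spliced a node onto the red list, the red predecessor of $m_i$ may have shifted even if the blue chain was unaffected (for example when $m_i$'s $\bp = m_k$'s $\bp$ but the new red node lies strictly between $\rp$ and $\rc$). I would need to check that the single assignment at \Lineref{threadv11} suffices in every such subcase, in particular when several previously applied $m_{k'}$'s (with $k' < k$) have each threaded nodes into the same red segment. The cleanest way to discharge this is to strengthen the inductive hypothesis to say that $le_{k}.\rp.\rn$ always points to the next public red node on the way to $key_{m_i}$, and then argue by \lemref{key-change}, \corref{uniquenodeKey} and \obsref{futurestate} that this remains the correct predecessor after every intervening update by $T$. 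Combining all cases, after \nppoval{} returns, the location logged for $m_i$ is valid, so the ensuing call to \nplslins{} or \nplsldel{} realises $m_i$'s effect without overwriting any prior $m_k$, establishing that the program-order semantics of $T$'s update methods are preserved.
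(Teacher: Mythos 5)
Your argument is essentially sound but takes a genuinely different route from the paper. The paper's proof is a short proof by contradiction that is really just a walkthrough of \algoref{povalidation}: it assumes two consecutive \upmt{s} sharing a node would overwrite each other, then observes that the checks at \Lineref{threadv3} and \Lineref{threadv10} detect the stale $\bp$/$\rp$ and the reassignments at \Lineref{threadv5}, \Lineref{threadv7} and \Lineref{threadv11} repair them, concluding that without these updates the first \upmt{}'s effect would be lost. You instead set up an induction on the position of the update in $ordered\_ll\_list$, with an explicit invariant (validity of the adjusted $(\bp,\rp,\rc,\bc)$ tuple in the sense of \obsref{lslSearch} and \lemref{lslSearch-ret} relative to the state after $m_1,\ldots,m_{i-1}$), and the same insert/delete case split that the paper reads off the code. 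Your formulation is stronger and more checkable: it makes precise what ``the location is still correct'' means and why the key-sorted order guarantees the previous entry's node is the right new predecessor, neither of which the paper states. The obstacle you flag about the red-chain repair --- whether the single assignment $\rp \gets le_k.\rp.\rn$ at \Lineref{threadv11} suffices when several earlier updates have threaded nodes into the same red segment --- is a real concern that the paper's proof does not address at all; it simply asserts the reassignment is adequate. Your proposed fix (strengthening the inductive hypothesis so that $le_k.\rp.\rn$ is always the next public red node toward $key_{m_i}$) is the right way to close that, so if you discharge that step your proof would actually be more complete than the one in the paper.
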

\begin{proof}
	We are taking contradiction that $\nppoval{}$ is not preserving program order means two consecutive $\upmt{}$ of same transaction which are having at least one shared node amongst its recorded($\bp, \rp, \rc, \bc$) then effect of first $\upmt{}$ will be overwritten by the next $\upmt{}$.
	
	By observing the code at line \ref{lin:tryc17} of $\nptc{}$ method of \algoref{trycommit}, current $\upmt{}$ will go for $\nppoval{}$ and at line \ref{lin:threadv3} of $\nppoval{}$ method of \algoref{povalidation} , current $\upmt{}$ will validate its $(\bp.marked)$ and $(\bp.\bn != \bc)$. If any condition is true then, at line \ref{lin:threadv4} of $\nppoval{}$ method of \algoref{povalidation}, will check for previous $\upmt{}$. If the previous $\upmt{}$ is insert then the current $\upmt{}$ update its $\bp$ to previous $\upmt{}$, $node.key$ else set current $\upmt{}$ $\bp$ to previous $\upmt{}$ $\bp$.
	
	After that at line \ref{lin:threadv10} of $\nppoval{}$ method of \algoref{povalidation} , current $\upmt{}$ validate its $(\rp.\rn != \rc)$. If condition is true then current $\upmt{}$ set its $\rp$ to previous $\upmt{}$, $node.key$.    
	
	If we will not update the current method preds and currs using $\nppoval{}$ then effect of first $\upmt{}$ will be overwritten by the next $\upmt{}$.
	
\end{proof}

\begin{observation}
	\label{obs:poorder-validation}
	For any global state S, the \nppoval{} in \nptc{} preserves the properties of \nplsls{} as proved in \obsref{lslSearch} \& \lemref{lslSearch-ret} .
\end{observation}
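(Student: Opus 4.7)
The plan is to show that after \nppoval{} rewrites $\bp$ (resp.\ $\rp$) in the update phase of \nptc{}, the four conditions of \obsref{lslSearch} (publicness, locked state, unmarked endpoints with consistent next-pointers) and the two key-order bounds of \lemref{lslSearch-ret} still hold for the refreshed $\langle \bp, \rp, \rc, \bc\rangle$. I will do a case analysis on the three reassignment branches of \algoref{povalidation}: (a) \Lineref{threadv5}, when the previous update $le_k$ is an insert; (b) \Lineref{threadv7}, when $le_k$ is a delete; and (c) \Lineref{threadv11}, the red-link repair. In each case the key observation is that all locks acquired by earlier update methods of the same transaction during their \nplsls{} calls are held continuously until \rlsol{} at \Lineref{tryc44} of \nptc{}, so the chain of predecessors touched by prior updates is stable against concurrent interference.

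For the locked-state clause of \obsref{lslSearch2}, note that in every branch \nppoval{} unlocks the old pointer (\Lineref{thredv4-5}, \Lineref{thredv6-5}) and immediately locks its replacement (\Lineref{thredv5-5}, \Lineref{thredv7-5}); hence $\{\bp,\rp,\rc,\bc\}$ remain locked when \nppoval{} returns. Publicness (\obsref{lslSearch1}) follows because in branch (a) the new $\bp = le_k.\bp.\bn$ is exactly the node inserted by \nplslins{} (\algoref{lslins}), which is made public in the same step and remains public by \obsref{node-forever}; in branch (b) the new $\bp = le_k.\bp$ was already public when $le_k$ validated; and in branch (c) $le_k.\rp.\rn$ is reachable via red links, hence public by \lemref{reach}. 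For the marked/next-pointer consistency of \obsref{lslSearch3}, branch (a) gives an unmarked $\bp$ because \nplslins{} at \Lineref{lslins16} links in an unmarked node; branch (b) inherits $\neg\bp.\text{marked}$ from the precondition of $le_k$; and in both cases $\bp.\bn = \bc$ because the shared node that triggered \nppoval{} (the one whose marked-bit or next-pointer failed at \Lineref{threadv3}) is precisely the previous update's $\bc$, so the newly patched $\bn$-edge lands on the current $\bc$ by construction. The symmetric claim $\rp.\rn = \rc$ is re-established directly by \Lineref{threadv11}.

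For the key-order bounds of \lemref{lslSearch-ret}, I will invoke \lemref{key-change} and \lemref{key-change-bl}, which give strict monotonicity of keys along red and blue links, together with \corref{uniquenodeKey} to rule out duplicate keys. In branch (a), the new $\bp$ is the blue successor of the freshly inserted node of $le_k$, whose key lies strictly between $le_k.\bp.key$ and the current $key$; by monotonicity this gives $\bp.key < key \le \bc.key$. In branch (b), the deleted $le_k.\bc$ has been unlinked from $\bn$, so $le_k.\bp.\bn$ now points past the deleted key to some node whose key is still less than the current $key$ (since $ordered\_ll\_list$ is sorted and $le_k$ processes a strictly smaller key than $le_i$). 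Branch (c) is analogous for the red side. Combining these with the unchanged bounds on $\bc$ and $\rc$, which were set by the immediately preceding \nplsls{} call at \Lineref{tryc7}, yields both inequalities of \lemref{lslSearch-ret}.

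The main obstacle I expect is identifying the right $le_k$: it must be the most recent update entry in $ordered\_ll\_list$ whose operating range still contains the current $key$, and establishing that such an entry exists whenever the check at \Lineref{threadv3} (resp.\ \Lineref{threadv10}) fires. I will handle this by induction on the position of $le_i$ in the sorted log: the base case is the first update, where the test at \Lineref{threadv3} cannot fire because the \nplsls{} invocation at \Lineref{tryc7} just established the invariant; for the inductive step, any invalidation of $\bp$ or $\rp$ must be caused by an earlier $le_k$ in the same transaction (no other transaction can intervene, since its locks are held), and the sortedness of $ordered\_ll\_list$ together with the strict key monotonicity along $\bn,\rn$ ensures $le_k$'s freshly linked or unlinked node sits in exactly the position needed to restore the invariant.
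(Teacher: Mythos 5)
The paper states this as a bare observation and gives no proof at all; the only nearby support is \lemref{poval}, which argues informally that \nppoval{} preserves program order, not that it re-establishes the postconditions of \nplsls{}. Your case analysis over the three reassignment branches of \algoref{povalidation}, anchored on the fact that every lock acquired during the first while-loop of \nptc{} is held until \emph{releaseOrderedLocks()} at \Lineref{tryc44} (so only same-transaction updates can invalidate a logged location), is therefore doing work the paper skips. Its skeleton --- the unlock/lock hand-off for \obsref{lslSearch2}, publicness of the replacement predecessor for \obsref{lslSearch1}, and the induction over the sorted log to identify the correct $le_k$ --- is the right one and is what the surrounding lemmas implicitly rely on.

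Two points need tightening before this is a complete proof. First, in the delete branch your claim that $le_k.\bp.\bn$ "points past the deleted key to some node whose key is still less than the current $key$" is backwards: the reassignment fires only when $le_i$'s $\bp$ is the very node $le_k$ unlinked, so after \nplsldel{} the new $\bp = le_k.\bp$ satisfies $\bp.key < key$ while $\bp.\bn$ lands on the old victim's successor, which is $le_i$'s $\bc$ and hence has key $\geq key$; as written your inequality would contradict \lemref{lslSearch-ret1} rather than establish it. Second, both branches read $\bp$ from the \emph{logged} entry $le_k$, so the argument tacitly assumes that $le_k$'s log entry reflects the preds $le_k$ actually used after its own pass through \nppoval{}; \algoref{povalidation} returns the corrected location but never visibly writes it back into the log, and your induction needs that write-back (or an explicit argument that the stale logged value still yields the correct node) once three or more updates share an overlapping region, e.g.\ consecutive deletes of adjacent keys. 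Neither issue breaks the approach, but both sit exactly where the paper's silence hides the difficulty.
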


\begin{lemma}
	\label{lem:tryc-con-lsl}
	Consider a concurrent history, $E^H$, after the post-state of $LP$ event of successful $\nptc$ method, where each key belonging to the last $\upmt{}$ of that transaction, then,%for successful \upmt{} taking effect in $\nptc$, let S and P be the pre and post-state of LP event of the method respectively,  
	\begin{enumerate}[label=\ref{lem:tryc-con-lsl}.\arabic*]
		\item \label{lem:tryc-con-lsl-in} If $\upmt{}$ is insert, then node corresponding to the key should be part of $\bn$ and node.val should be equal to v. Formally, $\langle (node(key) \in ([E^H.Post(\lin{m_{i}})].Abs.$\bn$) \land (node.val = v)\rangle$. 				
		
		\item \label{lem:tryc-con-lsl-dl} If $\upmt{}$ is delete, then node corresponding to the key should not be part of $\bn$. Formally, $\langle (node(key) \notin ([E^H.Post(\lin{m_{i}})].Abs.$\bn$) \rangle$.
	\end{enumerate}
	
\end{lemma}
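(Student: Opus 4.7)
The plan is to case-split on the type of the final $\upmt{}$ being committed — insert or delete — and within each case, analyze the three sub-branches of \nptc{} determined by whether the target key is currently reachable via $\bn$, only via $\rn$, or absent from the underlying \lsl{}. Before splitting, I would establish a common set-up: the \nptc{} method invokes \nplsls{} at \Lineref{tryc7} to obtain the correct $(\bp, \rp, \rc, \bc)$ for the target key, which by \obsref{lslSearch} and \lemref{lslSearch-ret} bracket the key and are all locked in the post-state of \nplsls. Subsequently, \nppoval{} at \Lineref{tryc17} adjusts these preds/currs to account for earlier \upmt{s} of the same transaction; by \lemref{poval} together with \obsref{poorder-validation}, the bracketing and locking properties continue to hold for the corrected preds/currs.

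For part \ref{lem:tryc-con-lsl-in} (insert), I would split on the branches at Lines \ref{lin:tryc19}, \ref{lin:tryc24}, \ref{lin:tryc28} of \algoref{trycommit}. (a) If $\bc.key = key$, the node is already in $\bn$ by \lemref{reach-bl}; \Lineref{tryc21} atomically writes $v$ into $\bc.value$, and since $\bp, \bc$ remain locked and unmarked, $\bc$ stays reachable from $head$ via $\bn$. (b) If $\rc.key = key$, \nplslins{} invoked with flag \textcolor{black}{$RL\_BL$} (\Lineref{lslins2}–\Lineref{lslins5}) clears the $marked$ bit of $\rc$ and links it into $\bn$ between $\bp$ and $\bc$; by \lemref{reach-bl} $\rc$ then lies in $Abs.\bn$, and $\rc.value$ is overwritten with $v$ via the same write path. (c) Otherwise, \nplslins{} with flag \textcolor{black}{$BL$} creates a fresh node with the given key and value, and inserts it into both $\rn$ and $\bn$ between the appropriate preds/currs; by \lemref{reach} and \lemref{reach-bl} it is reachable in $Abs.\bn$. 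In every sub-case, $node(key) \in Abs.\bn$ with $node.val = v$.

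For part \ref{lem:tryc-con-lsl-dl} (delete), I would first use \lemref{con-spec-lsl} applied to the earlier execution of \npdel{} in the \rvp{} (which goes through \cldd{}) to guarantee that some node with key $key$ already lives in $Abs.\rn$ before \nptc{} processes the delete entry. Then I case-split again: if $\bc.key = key$ (\Lineref{tryc34}), \nplsldel{} (\Lineref{lsldel2}–\Lineref{lsldel3}) atomically sets $\bc.marked$ to true and redirects $\bp.\bn$ to $\bc.\bn$, which by \lemref{reach-bl} removes $\bc$ from $Abs.\bn$; if $\bc.key \neq key$, then by the key-ordering of $\bn$ (\lemref{key-change-bl}) and \corref{uniquenodeKey} no node with key $key$ can be in $\bn$ at this moment, and since \nptc{} performs no write on $\bn$ in this branch, $node(key) \notin Abs.\bn$ holds trivially.

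The main obstacle will be ruling out interference by concurrent transactions between the conclusion of \nplsls{}/\nppoval{} and the atomic update event that realizes the $LP$. Two concerns arise: another transaction could (i) relink or mark a node in the locked window, or (ii) install a second node with the same $key$. Concern (i) is discharged by the lock invariants of \obsref{lslSearch2} (all of $\bp, \rp, \rc, \bc$ are held) and the fact that every code path that modifies $\bn$ or $\rn$ goes through \nplsls{} and so must first acquire overlapping locks; concern (ii) is discharged by \corref{uniquenodeKey} together with the key-order lemmas \lemref{key-change} and \lemref{key-change-bl}. The remaining subtlety is the delete sub-case where the node was inserted into $\rn$ during \rvp{} but another transaction commits an insert on the same key in between; here I would appeal to the time-order validation in \nptov{} (\algoref{tovalidation}) invoked inside the \nplsls{} call at \Lineref{tryc7} to argue that such an interleaving forces the committing transaction to abort, contradicting the hypothesis of a successful \nptc{}.
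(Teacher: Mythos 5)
Your proposal is correct and follows essentially the same route as the paper's proof: a common set-up via \nplsls{} returning the locked, correctly bracketed $(\bp,\rp,\rc,\bc)$, a case split on insert versus delete over the branches of \algoref{trycommit} (Lines \ref{lin:tryc19}, \ref{lin:tryc24}, \ref{lin:tryc28}, \ref{lin:tryc34}), and an appeal to the node already having been placed in $\rn$ during the \rvp{} to rule out the impossible delete sub-case. You are somewhat more explicit than the paper about \nppoval{}, the reachability lemmas, and concurrent-interference concerns, but these are elaborations of the same argument rather than a different one.
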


\begin{proof}
	By observing the code, each $\upmt{}$ also first invokes \nplsls{} method of \algoref{lslsearch} (line \ref{lin:tryc7} of \nptc{} method of \algoref{trycommit} ).
	From \lemref{key-change} \& \lemref{key-change-bl} we have that the nodes in the underlying data-structure are in increasing order of their keys, thus the key on which the method is working has a unique location in underlying data-structure from \corref{uniquenodeKey} . So, when the \nplsls{} is invoked from a \mth{}, it returns correct location $(\bp, \rp,$ $\rc, \bc)$ of corresponding $key$ as observed from \obsref{lslSearch} \& \lemref{lslSearch-ret} and all are locked, hence no other thread can change simultaneously (from \obsref{lslSearch2}). 
	%\nplsls{} method of \algoref{lslsearch} returns the correct location of the underlying data-structure from \lemref{lslSearch-ret} and \lemref{lslcorrect} because underlying list are in sorted order from \lemref{key-change} \& \lemref{key-change-bl} . 
	\begin{enumerate}[label=\ref{lem:tryc-con-lsl}.\arabic*]
		\item \textbf{\texttt{If \upmt{} is insert:}} In the pre-state of $LP$ event of $\upmt{}$ at Line \ref{lin:tryc19}, \ref{lin:tryc24} of $\nptc{}$ method of \algoref{trycommit}, if $(node.key \in S.Abs.\rn)$, means $key$ is already there in $\rn$ and time-stamp of that node is less then the $\upmt{}$ transactions time-stamp, from \nptov{} method of \algoref{tovalidation}, then in the post-state of $LP$ event of $\upmt{}$, $node.key$ should be the part of $\bn$ and it will update the $value$ as $v$.
		
		In the pre-state of $LP$ event of $\upmt{}$ at Line \ref{lin:tryc28} of $\nptc{}$ method of \algoref{trycommit} , if $(node.key \notin S.Abs.\rn)$, means $key$ is not there in $\rn$ then in the post-state of $LP$ event of $\upmt{}$, it will insert the $node$ corresponding to the $key$ into the $\bn$, from $\nplslins{}$ method of \algoref{lslins} at line \ref{lin:tryc29} of $\nptc{}$ method of \algoref{trycommit} and update the $value$ as $v$.
		Once a node is created it will never get deleted from \obsref{futurestate} and node corresponding to a key can't be modified from \obsref{obs-nodekey}.
		
		\item \textbf{\texttt{If \upmt{} is delete:}} In the pre-state of $LP$ event of $\upmt{}$ at Line \ref{lin:tryc34} of $\nptc{}$ method of \algoref{trycommit} , if $(node.key \in S.Abs.\bn)$, means $key$ is already there in $\bn$ and time-stamp of that node is less then the $\upmt{}$ transactions time-stamp, from \nptov{} method of \algoref{tovalidation} , then in the post-state of $LP$ event of $\upmt{}$, $node.key$ should not be the part of $\bn$, from $\nplsldel{}$ method of \algoref{lsldelete} at line \ref{lin:tryc34} of $\nptc{}$ method of \algoref{trycommit} .
		
		In the pre-state of $LP$ event of $\upmt{}$, $(node.key \notin S.Abs.\rn)$ this should not be happen because execution of \npdel{} method of \algoref{delete} must have already inserted a node in the underlying data-structure prior to $\nptc$ method of \algoref{trycommit} . 
		
		%Once a node is created it will never get deleted from \obsref{futurestate} and node corresponding to a key can't be modified from \obsref{obs-nodekey}.
	\end{enumerate}
	
\end{proof}

\begin{lemma}
	\label{lem:lookupT-conc}
	Consider a concurrent history, $E^H$, where S be the pre-state of $LP$ event of successful $rvm$ method, in that, if node corresponding to the key is the part of $\bn$ and node.val is equal to v then, $\rvmt{}$ return $OK$ and value v. Formally, $\langle (node(key) \in ([E^H.Pre(\lin{m_{i}})].Abs.\bn)) \land (S.node.val = v ) \Longrightarrow \rvm(key, OK, v) \rangle$. 
\end{lemma}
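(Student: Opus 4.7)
The plan is to trace the execution of the $\rvm$ method at its $LP$ and argue that, under the hypothesis, the control flow must take the "found in blue link" branch of \cldd{}. Since the lemma assumes the method is successful, it is not aborted either at the \nplsls{} time-order check nor at the method validation step.

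First I would narrow down to the shared-memory execution. By inspection of \npluk{} (\algoref{lookup}) and \npdel{} (\algoref{delete}), any $\rvmt{}$ whose $LP$ touches shared memory goes through \cldd{} (\algoref{commonLu&Del}), which invokes \nplsls. By \obsref{lslSearch} and \lemref{lslSearch-ret}, \nplsls{} returns a locked quadruple $(\bp,\rp,\rc,\bc)$ satisfying $\bp.key < key \le \bc.key$, $\rp.key < key \le \rc.key$, $\bp.\bn = \bc$, $\rp.\rn = \rc$, and $\neg \bp.marked \land \neg \bc.marked$, all in the post-state $S$ of the return event of \nplsls.

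Second, I would pin down that $\bc$ is exactly the node corresponding to $key$. The hypothesis gives $node(key) \in S.Abs.\bn$, so by \defref{Abs} and \lemref{reach-bl}, $node(key)$ is an unmarked public node reachable from $Head$ via $\bn$. By \corref{uniquenodeKey} and \lemref{key-change-bl} (strict ordering along blue links), the unique blue path from $Head$ must pass through $node(key)$. Since $\bp$ lies on this blue path (again by \lemref{reach-bl}, as $\bp$ is unmarked and public) with $\bp.key < key$, and since $\bp.\bn = \bc$ with $key \le \bc.key$, the ordering forces $\bc = node(key)$ and therefore $\bc.key = key$. Because all four returned nodes are locked (\obsref{lslSearch2}) and node keys are immutable (\obsref{obs-nodekey}), this identification persists until the $\rvmt{}$ releases the locks.

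Third, I would follow the branching in \cldd. Because $\bc.key = key$, line~\ref{lin:delete25} evaluates to true, so the method takes the branch at lines~\ref{lin:delete25}--\ref{lin:delete28}: it sets $op\_status \gets OK$, writes $TS(t\_id)$ into $\bc.max\_ts.lookup$, and reads $value \gets \bc.val$. By the hypothesis $S.\bc.val = v$ and the fact that $\bc$ is locked (so $\bc.val$ cannot be overwritten by a concurrent \upmt{} between the \nplsls{} return and this read), we get $value = v$. The caller (\npluk{} or \npdel{}) then propagates $(value, op\_status) = (v, OK)$ to its return, yielding $\rvm(key, OK, v)$.

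The main obstacle is the second step: formally equating $\bc$ with $node(key)$ despite the lock-free blue traversal in \nplsls. The argument rests on combining strict blue-key ordering (\lemref{key-change-bl}), uniqueness of keys (\corref{uniquenodeKey}), and reachability of unmarked public nodes (\lemref{reach-bl}), which together rule out any alternative landing point for $\bc$. A secondary subtlety is justifying that the lemma's "successful" qualifier rules out abort in \nptov{} (\algoref{tovalidation})---but this is precisely what the hypothesis of success grants us, since $op\_status \neq ABORT$ on the returning path of \cldd{} is equivalent to the $RV$-branch of transValidation not firing.
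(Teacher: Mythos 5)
Your proof is correct and follows the same overall skeleton as the paper's: route the \rvmt{} through \cldd{}, invoke the post-conditions of \nplsls{} (\obsref{lslSearch}, \lemref{lslSearch-ret}), land in the blue branch at the test $\bc.key = key$, and read off $op\_status = OK$ and $value = \bc.val$. Where you genuinely diverge is in the middle step and in the direction of the argument. The paper's proof essentially assumes the blue branch is taken and then observes that $\bc \in S.Abs.\bn$, finally appealing to \lemref{tryc-con-lsl-in} (a prior committed insert) to justify $\bc.val = v$; read literally, it derives the lemma's hypothesis from its conclusion rather than the reverse. You instead argue forward: from $node(key) \in S.Abs.\bn$ you force $\bc = node(key)$ by combining strict blue-key ordering (\lemref{key-change-bl}), key uniqueness (\corref{uniquenodeKey}), and reachability of unmarked public nodes (\lemref{reach-bl}) with the sandwich $\bp.key < key \le \bc.key$ and $\bp.\bn = \bc$ --- if $node(key)$ lay strictly beyond $\bc$ on the unique blue path its key would exceed $key$, a contradiction. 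This is the step the paper leaves implicit, and making it explicit is what actually discharges the lemma in the stated direction; you also get $value = v$ directly from the hypothesis plus the locks on $\bc$, rather than via \lemref{tryc-con-lsl-in}. Both routes are sound, but yours is the cleaner derivation of the implication as written.
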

\begin{proof}
	Let the $\rvmt{}$ is \npluk{} method of \algoref{lookup} and it is the first key method of the transaction, we ignore the abort case for simplicity.\\
	From line \ref{lin:delete21} of \cldd{} method of \algoref{commonLu&Del} , when \nplsls{} method of \algoref{lslsearch} returns
	we have $(\textcolor{blue}{preds[0]}, \textcolor{red}{preds[1]}, \textcolor{red}{currs[0]}, \textcolor{blue}{currs[1]}$ $\in$ $S.PublicNodes)$ and are locked(from \obsref{lslSearch1} \& \obsref{lslSearch2}) until \npluk{} method of \algoref{lookup} return. Also, from \lemref{lslSearch-ret1} ,
	\begin{equation}
	\label{eq:rvmdel1}
	(S.\textcolor{blue}{preds[0]}.key < key \leq S.\textcolor{blue}{currs[1]}.key)
	\end{equation}
	To return OK, $S.\textcolor{blue}{currs[1]}$ should be reachable from the head via bluelist from \defref{Abs} , in the pre-state of $LP$ of $\rvmt{}$. And after observing code, at line \ref{lin:delete25} of \cldd{} method of \algoref{commonLu&Del},
	\begin{equation}
	\label{eq:rvmdel2}
	(S.\textcolor{blue}{currs[1]}.key = key) \xRightarrow[]{\Eqref{rvmdel1}} (S.\textcolor{blue}{preds[0]}.key < (key = S.\textcolor{blue}{currs[1]}.key))
	\end{equation}
	Also, from \obsref{lslSearch3} ,
	\begin{equation}
	\label{eq:rvmdel4}
	(S.\textcolor{blue}{preds[0]}.\bn = S.\textcolor{blue}{currs[1]})
	\end{equation}
	And $(\textcolor{blue}{currs[1]} \in S.nodes)$, we know $(\textcolor{blue}{currs[1]} \in S.Abs.\bn)$ where S is the pre-state of the LP event of the method. %and $(S.\bc.marked = false)$ from \defref{Abs} . 
	From \lemref{tryc-con-lsl-in} , there should be a prior $\upmt{}$ which have to be $insert$ and $\bc.val$ is equal to $v$. Since \obsref{obs-nodekey} tells, no node changes its $key$ value after initialization. Hence 
	$(node(key) \in ([E^H.Pre(\lin{m_{i}})].Abs.\bn) \land (S.node.val = v))$.
	\\
	\\\texttt{*Same argument can be extended to \npdel{} \mth{}.}
	
\end{proof}

\begin{lemma}
	\label{lem:lookupF-conc}
	Consider a concurrent history, $E^H$, where S be the pre-state of $LP$ event of successful $\rvmt{}$, in that, if node corresponding to the key is not the part of $\bn$ then, $\rvmt{}$ return $FAIL$. Formally, $\langle (node(key) \notin ([E^H.Pre(\lin{m_{i}})].Abs.\bn)) \Longrightarrow \rvm(key, FAIL) \rangle$. 
	%	If some \rvm method returns FAIL in $E^H$ then ($node(key)$ $\notin$ $S.Abs.\bn$) or is marked where $S$ is the pre-state of $LP$ event of the successful method. Formally, $\langle \rvm(key, FAIL) \Longrightarrow (node(key) \notin ([E^H.Pre(\lin{m_{i}})].Abs.\bn) \rangle$. 
\end{lemma}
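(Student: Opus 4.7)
The plan is to prove Lemma~\ref{lem:lookupF-conc} as the natural complement of Lemma~\ref{lem:lookupT-conc}, by reading off the return-value logic in \cldd{} once the structural facts about $\nplsls$ are imported. Since the rv-method $m_i$ is successful (not aborted), its execution must proceed through \cldd{} via \Lineref{delete21} and one of the branches at \Lineref{delete25}, \Lineref{delete29}, or \Lineref{delete33}. The plan is to rule out the $OK$-producing branch and then observe that every remaining branch sets $op\_status \gets FAIL$.

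First, I would invoke \obsref{lslSearch} together with \Lemref{lslSearch-ret} to fix the returned quadruple $(\bp,\rp,\rc,\bc)$: all four are locked public nodes with $S.\bp.\bn = S.\bc$, $S.\rp.\rn = S.\rc$, $\neg S.\bp.marked$, $\neg S.\bc.marked$, and $S.\bp.key < key \le S.\bc.key$. The central step is then to argue by contradiction that $S.\bc.key \neq key$. Suppose instead $S.\bc.key = key$; combined with $\neg S.\bc.marked$ and \Lemref{reach-bl} (which guarantees any unmarked public node is blue-reachable from the head), this would give $node(key) = \bc \in S.Abs.\bn$, contradicting the hypothesis $node(key) \notin [E^H.Pre(\lin{m_i})].Abs.\bn$. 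Hence $S.\bc.key > key$, and control bypasses the $OK$ branch at \Lineref{delete25}--\Lineref{delete28}.

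Next, I split on whether $S.\rc.key = key$. If yes, the branch at \Lineref{delete29}--\Lineref{delete32} executes and explicitly sets $op\_status \gets FAIL$. If no (i.e., $S.\rc.key > key$ by \Lemref{lslSearch-ret2}), then by \Corref{key-abs} the node for $key$ lies neither in $S.Abs.\rn$ nor in $S.Abs.\bn$, so the fall-through at \Lineref{delete33}--\Lineref{delete37} is taken, creating a marked red-list node and again setting $op\_status \gets FAIL$. In either sub-case the value written back to the log at \Lineref{delete41} (and propagated by \algoref{lookup}/\algoref{delete}) carries $FAIL$, and the atomicity of the $LP$ window is preserved because $(\bp,\rp,\rc,\bc)$ remain locked until \Lineref{delete433}, so no concurrent event can alter blue-reachability of $key$ between \Pre{} and \Post{} of $\lin{m_i}$.

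The main obstacle is the $\rc.key \neq key$ sub-case, because here the method itself \emph{modifies} the shared list (inserting a new marked node into $\rn$ via \nplslins{}) before returning. I must argue this modification is compatible with the stated $LP$ placement: the newly inserted node is marked, so by \Lemref{reach-bl} it never enters $S'.Abs.\bn$ for any post-state $S'$, which keeps the hypothesis $node(key) \notin [E^H.Pre(\lin{m_i})].Abs.\bn$ consistent with the returned status $FAIL$ and with the subsequent \Lemref{con-spec-lsl} guarantee that $node(key) \in [E^H.Post(\lin{m_i})].Abs.\rn$. A symmetric remark covers $\npdel{}$ in its \rvp{}, since in that phase it behaves identically to $\npluk{}$ through \cldd{}, so the same case analysis applies verbatim.
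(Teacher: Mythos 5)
Your proposal is correct and follows essentially the same route as the paper's proof: both reduce the claim to the three-way branch in \cldd{} after $\nplsls$ returns, using \obsref{lslSearch}, \lemref{lslSearch-ret}, \defref{Abs}, \lemref{reach-bl} and \corref{key-abs} to identify which branch fires. The only difference is one of presentation — you rule out the $OK$ branch by contradiction with the hypothesis and conclude $FAIL$ from the remaining branches, whereas the paper verifies in each $FAIL$ branch that the node is outside $Abs.\bn$; your direction matches the stated implication more directly, but the underlying case analysis is identical.
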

\begin{proof}
	Let the $\rvmt{}$ is \npluk{} method of \algoref{lookup} and it is the first key method of the transaction, we ignore the abort case for simplicity.
	\begin{enumerate}
		\item From line \ref{lin:delete21} of \cldd{} method of \algoref{commonLu&Del}, when \nplsls{} method of \algoref{lslsearch} returns we have $(\textcolor{blue}{preds[0]}, \textcolor{red}{preds[1]}, \textcolor{red}{currs[0]}, \textcolor{blue}{currs[1]}$ $\in$ $S.PublicNodes)$ and are locked(from \obsref{lslSearch1} \& \obsref{lslSearch2}) until \npluk{} method of \algoref{lookup} return. Also, from \lemref{lslSearch-ret2} ,
		\begin{equation}
		\label{eq:rvmdel11}
		(S.\textcolor{red}{preds[1]}.key < key \leq S.\textcolor{red}{currs[0]}.key)
		\end{equation}
		To return FAIL, $S.\textcolor{red}{currs[0]}$ should not be reachable from the head via bluelist from \defref{Abs} , in the pre-state of $LP$ of $\rvmt{}$. And after observing code, at line \ref{lin:delete29} of \cldd{} method of \algoref{commonLu&Del} ,
		\begin{equation}
		\label{eq:rvmdel12}
		(S.\textcolor{red}{currs[0]}.key = key) \xRightarrow[]{\Eqref{rvmdel11}} (S.\textcolor{red}{preds[1]}.key < (key = S.\textcolor{red}{currs[0]}.key))
		\end{equation}
		Also, from \obsref{lslSearch3} ,
		\begin{equation}
		\label{eq:rvmdel14}
		(S.\textcolor{red}{preds[1]}.\rn = S.\textcolor{red}{currs[0]})
		\end{equation}
		And $(\textcolor{red}{currs[0]} \in S.nodes)$, we know $(\textcolor{red}{currs[0]} \in S.Abs.\rn)$ where S is the pre-state of the LP event of the method and $(S.\rc.marked = true)$. Thus, $(\rc \notin S.Abs.\bn)$ from \defref{Abs} . %From \corref{nodeinabs} , there should be a prior $\upmt{}$ which is $delete$ and $\rc.val$ equal to $null$. 
		%Since \obsref{obs-nodekey} tells, no node changes its key value after initialization. 
		Hence $(node(key) \notin ([E^H.Pre(\lin{m_{i}})].Abs.\bn)$% \land (S.node.marked = true) \land (S.node.val = NULL))$.
		
		\item From line \ref{lin:delete21} of \cldd{} method of \algoref{commonLu&Del}, when \nplsls{} method of \algoref{lslsearch} returns we have $(\textcolor{blue}{preds[0]}, \textcolor{red}{preds[1]}, \textcolor{red}{currs[0]}, \textcolor{blue}{currs[1]}$ $\in$ $S.PublicNodes)$ and are locked(from \obsref{lslSearch1} \& \obsref{lslSearch2}) until \npluk{} method of \algoref{lookup} return. Also, from \lemref{lslSearch-ret2} ,
		\begin{equation}
		\label{eq:rvmdel111}
		(S.\textcolor{red}{preds[1]}.key < key \leq S.\textcolor{red}{currs[0]}.key)
		\end{equation}
		And after observing code, at line \ref{lin:delete33} of \cldd{} method of \algoref{commonLu&Del} ,
		\begin{equation}
		\begin{split}
		\label{eq:rvmdel121}
		(S.\textcolor{blue}{currs[1]}.key \neq key) \land (S.\textcolor{red}{currs[0]}.key \neq key) \xRightarrow[]{\Eqref{rvmdel111}} \\(S.\textcolor{red}{preds[1]}.key < key < S.\textcolor{red}{currs[0]}.key)
		\end{split}
		\end{equation}
		Also, from \obsref{lslSearch3} ,
		\begin{equation}
		\label{eq:rvmdel141}
		(S.\textcolor{red}{preds[1]}.\rn = S.\textcolor{red}{currs[0]})
		\end{equation}
		From \Eqref{rvmdel121}, we can say that, $(node(key) \notin S.Abs)$ and from \Corref{key-abs}, we conclude that $node(key)$ not in the state after $\nplsls$ returns.  Since \obsref{obs-nodekey} tells, no node changes its key value after initialization. Hence 
		$(node(key) \notin ([E^H.Pre(\lin{m_{i}})].Abs.\bn))$.
		\\
		\\\texttt{*Same argument can be extended to \npdel{} \mth{}.}
	\end{enumerate}

\end{proof}

\begin{observation}
	\label{obs:tryupbl}	
	%If a transaction is having a successful $\nptc{}$ method which is performing a last $\upmt{}$ on the key $k$, can only change the $S.Abs\bn$.
	Only the successful $\nptc{}$ method working on the key k can update the Abs.$\bn$.
\end{observation}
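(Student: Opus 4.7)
The plan is to track every shared-memory event that can alter the set $\text{Abs}.\bn$ (namely, that can insert a node into it or remove one from it), and show that each such event lies inside the update block of some successful $\nptc{}$ invocation on the key in question. By \defref{Abs}, $S.\text{Abs}.\bn$ is determined by three things: (i) the set of public nodes, (ii) the \emph{marked} field of each public node, and (iii) the $\bn$ links reachable from $head$. Hence I would split the argument by which of (i)--(iii) an event touches.

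First, I would enumerate all code lines that write to a $\bn$ pointer, a \emph{marked} field, or create a fresh node and splice it into $\rn$ (thereby making it public). From the pseudocode, these are exactly: the writes in $\nplslins{}$ (\Lineref{lslins3}--\Lineref{lslins5}, \Lineref{lslins8}, \Lineref{lslins14}, \Lineref{lslins16}), and the writes in $\nplsldel{}$ (\Lineref{lsldel2}--\Lineref{lsldel3}). No other \otm{} routine touches $\bn$ or \emph{marked}. For each, I would check which caller invokes it and with which $list\_type$:
\begin{itemize}
  \item $\nplslins{}$ with $list\_type = \textcolor{red}{RL}$ is called only from \cldd{} (\Lineref{delete34}) during a \rvmt{} execution; this path executes \Lineref{lslins7}--\Lineref{lslins10}, which sets $marked = \text{true}$ on the new node and splices it only into $\rn$. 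Thus the node is public but marked, so by \defref{Abs} it does not belong to $\text{Abs}.\bn$, and no $\bn$ link reachable from $head$ is modified. Hence this event does not update $\text{Abs}.\bn$.
  \item $\nplslins{}$ with $list\_type \in \{\textcolor{red}{RL}\_\textcolor{blue}{BL},\textcolor{blue}{BL}\}$ and $\nplsldel{}$ are called only from $\nptc{}$ (\Lineref{tryc25}, \Lineref{tryc29}, \Lineref{tryc35}).
\end{itemize}
This enumeration is the bulk of the verification; it is mechanical but needs to be exhaustive.

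Second, for the $\nptc{}$ paths I would argue that control only reaches \Lineref{tryc25}/\Lineref{tryc29}/\Lineref{tryc35} if every prior call to $\nplsls{}$ in the sorted loop (\Lineref{tryc5}--\Lineref{tryc13}) returned an $op\_status \neq \textup{ABORT}$, i.e.\ $\npintv{}$ and $\nptov{}$ both succeeded for every key in the transaction's $ordered\_ll\_list$. If any of them had failed, \Lineref{tryc8}--\Lineref{tryc9} would abort the transaction and release its local log before the shared-memory phase. Consequently, the $\bn$/\emph{marked} writes in the update block execute only on the code path that ends with $tx\_status \gets \textup{OK}$ at \Lineref{tryc45}, which is precisely the definition of a successful $\nptc{}$. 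Moreover, the writes at \Lineref{tryc25}, \Lineref{tryc29}, \Lineref{tryc35} are all guarded by $key$ equality tests on $\bc.key$ / $\rc.key$, so the node actually modified is the one for key $k$ associated with $le_i$.

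The step I expect to require most care is the $\textcolor{red}{RL}$ case of $\nplslins{}$ invoked from a \rvmt{}: I must rule out that a new marked node spliced into $\rn$ can inadvertently enter $\text{Abs}.\bn$. For this, I would appeal to \obsref{lslSearch3} (which gives $\neg \bp.marked \land \bp.\bn = \bc$ at the splice point) together with \lemref{reach-bl}: since only unmarked public nodes reachable via $\bn$ belong to $\text{Abs}.\bn$, and the new node is created with $marked = \text{true}$ at \Lineref{lslins8} and is never linked into $\bn$, it cannot be in $\text{Abs}.\bn$. Combining the two parts yields that every event modifying $\text{Abs}.\bn$ occurs inside the update block of a successful $\nptc{}$ acting on the corresponding key $k$, which is the statement.
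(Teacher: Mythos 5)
Your proof is correct and follows essentially the same approach as the paper, which justifies the observation by direct code inspection (noting that no line in \npluk{} or \npdel{} changes $\bn$, and only a successful \nptc{} does). Your version is simply a more exhaustive and careful enumeration of the same code-inspection argument, including the $\textcolor{red}{RL}$-only splice from \cldd{}, which the paper leaves implicit.
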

By observing the code, only the successful $\nptc{}$ method of \algoref{trycommit} is changing the $\bn$. There is no line which is changing the $\bn$ in $\npdel{}$ method of \algoref{delete} and $\npluk{}$ method of \algoref{lookup} . Such that $\rvmt{}$ is not changing the $\bn$.
\begin{observation}
	\label{obs:lockacq}
	If $\nptc{}$ and $\rvmt{}$ wants to update $Abs$ on the key k, then first it has to acquire the lock on the node corresponding to the key k.
\end{observation}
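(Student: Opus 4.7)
\textbf{Proof plan for \obsref{lockacq}.} The goal is to show that every update to the abstract set $Abs$ on key $k$ (where $Abs$ is defined through reachability via $\rn$/$\bn$ links and the $marked$ field, per \defref{Abs}) is preceded by the acquisition of a lock on the node associated with $k$. The plan is to proceed by case analysis over all code locations that can change $Abs$, and in each case identify the lock acquisition that must precede that line.

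First, I would enumerate every shared-memory write that can alter $Abs$. By inspection of the pseudocode, the only writes that affect (i) reachability from $head$ via $\rn$, (ii) reachability from $head$ via $\bn$, or (iii) the $marked$ flag of a public node, occur inside \nplslins{} (\algoref{lslins}, \Lineref{lslins3}--\Lineref{lslins16}) and \nplsldel{} (\algoref{lsldelete}, \Lineref{lsldel2}--\Lineref{lsldel3}). All other writes touch only $max\_ts$ fields or transaction-local structures and thus cannot change $Abs$. So it suffices to show that both \nplslins{} and \nplsldel{} are invoked only after the node corresponding to $k$ has been locked by the calling thread.

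Next, I would trace the call sites. Both \nplslins{} and \nplsldel{} are invoked only from \nptc{} (\Lineref{tryc25}, \Lineref{tryc29}, \Lineref{tryc35}) and from \cldd{} (\Lineref{delete34}), which itself is called from \npluk{} and \npdel{} in their \rvp. In each call site, the invocation is immediately preceded (within the same \mth) by a successful return of \nplsls{} on key $k$. By \obsref{lslSearch} (in particular \obsref{lslSearch2}), the post-state of the return of \nplsls{} guarantees that $\textcolor{blue}{preds[0]}$, $\textcolor{red}{preds[1]}$, $\textcolor{red}{currs[0]}$, $\textcolor{blue}{currs[1]}$ are all held in the locked state by the invoking thread, and these locks are not released until \emph{releasePred\&CurrLocks} (\Lineref{rpandc}) or \emph{release\_ordered\_locks} runs after the $Abs$-modifying write. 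Combined with \lemref{lslSearch-ret} (which places $k$ in the interval $(\textcolor{blue}{preds[0]}.key, \textcolor{blue}{currs[1]}.key]$ and $(\textcolor{red}{preds[1]}.key, \textcolor{red}{currs[0]}.key]$), this yields that at the moment of the write the node representing $k$ is either $\textcolor{blue}{currs[1]}$, $\textcolor{red}{currs[0]}$, or, when $k$ is fresh, a newly created node; in the fresh case \nplslins{} locks the new node at its creation site (\Lineref{lslins12}) before linking it into $Abs$. Either way the node for $k$ is held locked by the thread performing the update.

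The main obstacle I foresee is the fresh-node case, because the newly created node is not among the preds/currs returned by \nplsls{}, so the locking guarantee from \obsref{lslSearch2} does not directly cover it. The key observation to settle this is that the freshly created $node$ is not yet a public node (there are no incoming links to it) and hence is not in $Abs$ until the \texttt{write} that links it in fires; the lock taken at its creation (explicit in \Lineref{lslins12}, and implicit via non-sharing for the $\rn$-only path since no other thread can reach it) is therefore held at the precise instant $Abs$ changes. Formalising ``not yet public'' requires induction on the events that create incoming links, but this is exactly the invariant already used in the base cases of \lemref{reach} and \lemref{reach-bl}, so the argument reduces to citing those inductive invariants at the appropriate step and concluding the case analysis.
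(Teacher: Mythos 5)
Your plan is correct and follows essentially the same route as the paper: the observation is justified there by code inspection, noting that the locks on \preds{} and \currs{} returned by \nplsls{} (\obsref{lslSearch2}) cover updates to an existing node, and that a freshly created node must be locked before it is linked into $Abs$ --- which is the one case the paper's own one-sentence justification explicitly addresses. Your elaboration of the fresh-node case via the ``not yet public'' invariant from \lemref{reach} and \lemref{reach-bl} is a sound and more careful filling-in of the same argument, so there is no gap to report.
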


If node corresponding to the key $k$ is not the part of $Abs$ then $\nptc{}$ and $\rvmt{}$ have to create the node corresponding to the key $k$ and before adding it into the shared memory($Abs$), it has to acquire the lock on the particular node corresponding to the key $k$.  

\begin{definition}
	\label{def:fLP}
	%For each successful method, $LP$ is the first unlocking point of that method. 
	First unlocking point of each successful method is the $LP$.
\end{definition}
\textbf{Linearization Points:} Here, we list the linearization points (LPs) of each method. Note that each method of the list can return either $OK$, $FAIL$ or $ABORT$. So, we define the LP for all the methods:

\begin{enumerate}
	\item \emph{STM\_begin():} $get\&inc(sh\_cntr\uparrow)$ at \Lineref{begin3} of \emph{STM\_begin()}.
	\item \emph{STM\_insert(ht, k, OK/FAIL/ABORT):} \llsval{$value \downarrow$} at \Lineref{insert5} of \emph{STM\_insert()}.
	%\item \emph{STM\_insert(ht, k, FAIL):} Linearization point for the \npins{} follows the LPs of the \nptc{}.
	%\item \emph{STM\_insert(ht, k, ABORT):} Linearization point for the \npins{} follows the LPs of the \nptc{}.
	\item \emph{STM\_delete(ht, k, OK/FAIL/ABORT):} $\bp$.$\texttt{unlock()}$ at \Lineref{rpandc} of \emph{releasePred\&CurrLocks()} (\algoref{releasepreds&currs}) is the LP of \npdel{}. Which is called from \Lineref{delete433} of \cldd{} (\algoref{commonLu&Del}) at \Lineref{deletecld} of \npdel{}. 
	\item \emph{STM\_lookup(ht, k, OK/FAIL/ABORT):} $\bp$.$\texttt{unlock()}$ at \Lineref{rpandc} of \emph{releasePred\&CurrLocks()} (\algoref{releasepreds&currs}) is the LP of \npluk{}. Which is called from \Lineref{delete433} of \cldd{} (\algoref{commonLu&Del}) at \Lineref{lookupa20} of \npluk{}. 
	%\item \emph{STM\_delete(ht, k, FAIL):} \textcolor{blue}{$preds[0]$}.$\texttt{unlock()}$ at \Lineref{delete433} of \npdel{}.
	%\item \emph{STM\_delete(ht, k, ABORT):} \textcolor{blue}{$preds[0]$}.$\texttt{unlock()}$ at \Lineref{delete433} of \npdel{}.
	\item \emph{STM\_tryC(ht, k, OK/FAIL/ABORT):} $le_i$.$\bp$.$\texttt{unlock()}$ at \Lineref{rlock3} of \emph{releaseOrderedLocks()} (\algoref{releaseorderedlocks}). Which is called at \Lineref{tryc44} of \nptc{}.
	%\item \emph{STM\_tryC(ht, k, FAIL/FAIL/ABORT):} \rlsol{} at \Lineref{tryc44} of \nptc{}.
	%\item \emph{STM\_tryC(ht, k, ABORT/FAIL/ABORT):} \rlsol{} at \Lineref{tryc44} of \nptc{}.
	
\end{enumerate}

\begin{observation}
	\label{obs:locksimu}
	Two concurrent conflicting methods of different transaction can't acquire the lock on the same node corresponding to the key $k$ simultaneously.
\end{observation}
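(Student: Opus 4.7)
\textbf{Proof proposal for Observation~\ref{obs:locksimu}.} The plan is to reduce the claim to the mutual-exclusion guarantee of the underlying \texttt{lock()} primitive, and then argue that every method that wishes to update the shared state at a node with key $k$ must pass through exactly such a lock acquisition on that node. First I would enumerate, by inspection of the pseudocode, every point at which a \mth{} of a transaction touches a public node corresponding to key $k$ in a way that could conflict with another transaction's \mth{} on $k$. These are exactly: (i) \nplsls{} via \emph{acquirePred\&CurrLocks} (\algoref{acquirepreds&currs}), which is called from \npluk{}, \npdel{} (through \cldd{}) and from \nptc{}; and (ii) the node-local \texttt{lock()} taken immediately after node creation inside \nplslins{} (\Lineref{lslins12}--\Lineref{lslins16}) before the new node is spliced into \bn{} or \rn{}. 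In both cases, the relevant node becomes part of $\bp$, $\rp$, $\rc$, or $\bc$ of the invoking \mth{} only after its lock is held.

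Second, I would invoke the assumed correctness of the underlying \texttt{lock()} / \texttt{unlock()} primitives: for any single node $n$, between a successful \texttt{n.lock()} and the matching \texttt{n.unlock()}, no other thread's \texttt{n.lock()} call may return. This is a standard property of the atomic lock object and is the only fact we need about it. Combined with \obsref{lockacq}, which already asserts that any modification of $Abs$ at key $k$ by either \nptc{} or an \rvmt{} requires holding the lock on the node for $k$, we get that two conflicting methods $m_1$ and $m_2$ from distinct transactions $T_i \neq T_j$ that operate on the same key $k$ both require holding the lock on the (unique, by \corref{uniquenodeKey}) node $n(k)$ or on the corresponding $\bp/\rp/\rc/\bc$ of $k$ at the moment they update shared state.

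Third, I would assemble the contradiction. Suppose, for contradiction, that at some global state $S$ both $m_1$ and $m_2$ have acquired the lock on $n(k)$ simultaneously. Then there exist two invocations of \texttt{$n(k)$.lock()}, one by the thread executing $m_1$ and one by the thread executing $m_2$, such that neither is separated from the other by an intervening \texttt{$n(k)$.unlock()} by the same thread. This directly violates the mutual-exclusion property of \texttt{lock()}, giving the required contradiction. The only subtlety I anticipate is handling the case in which one of the conflicting methods is still in its lock-free traversal phase inside \nplsls{} (\Lineref{lslsearch3}--\Lineref{lslsearch16}); here no lock is held, but then that method has not yet reached its conflicting shared-memory access and cannot interfere until it calls \emph{acquirePred\&CurrLocks} at \Lineref{lslsearch17}, at which point the above mutual-exclusion argument applies.

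The main obstacle, and the part that needs the most care, is arguing that the set of nodes on which conflicting methods must synchronize truly coincides with the set of nodes whose locks are acquired in \emph{acquirePred\&CurrLocks} and in \nplslins{}. Concretely, one must rule out a scenario in which two transactions conflict on $k$ yet one of them locks a stale predecessor while the other locks the current predecessor, and the two lock sets are disjoint. This is precluded by the combination of \obsref{lslSearch} (the returned $\preds,\currs$ are public and locked), \obsref{lslSearch3} ($\bp.\bn=\bc$ and $\rp.\rn=\rc$ at return time), \npintv{} (\algoref{interferenceValidation}), and \lemref{key-change}/\lemref{key-change-bl} together with \corref{uniquenodeKey}, which jointly force both methods to converge on the same $\langle\bp,\rp,\rc,\bc\rangle$ window around the unique node for key $k$; once that is established, the mutual-exclusion argument above closes the proof.
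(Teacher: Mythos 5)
Your argument is correct and coincides with what the paper (implicitly) relies on: the paper states \obsref{locksimu} without any proof, treating it as an immediate consequence of the mutual-exclusion semantics of the per-node \texttt{lock()} primitive, which is exactly the core of your second and third paragraphs. Your final paragraph's concern about disjoint lock sets, while relevant to how the observation is later used in \obsref{LPorder}, is not needed for the statement itself, which is already conditioned on both methods acquiring the lock on the \emph{same} node for key $k$.
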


\begin{observation}
	\label{obs:LPorder}
	Consider two concurrent conflicting method of different transactions say $m_i$ of $T_i$ and $m_j$ of $T_j$ working on the same key k, then, if $ul(m_i(k))$ happen before the $l(m_j(k))$ then $LP(m_i)$ happen before $LP(m_j)$. Formally, $\langle (ul(m_i(k)) \prec l(m_j(k))) \Rightarrow (LP(m_i) \prec LP(m_j)) \rangle$
\end{observation}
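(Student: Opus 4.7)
The plan is to chain three order relations — one from the definition of the linearization point, one from the given premise, and one from the lock-then-unlock structure of every \mth{} — to conclude that $LP(m_i) \prec LP(m_j)$. Throughout, I will treat events as nodes in the total event order $<_H$ and use $\preceq$ for the reflexive closure of $\prec$.

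First, I would invoke \defref{fLP}, which says that the $LP$ of a successful method $m$ is its \emph{first} unlock event. Since $m_i$ acquires the lock on the node of key $k$ (\obsref{lockacq}) and therefore must eventually release it via $ul(m_i(k))$, and since $LP(m_i)$ is by definition no later than any unlock of $m_i$, I obtain $LP(m_i) \preceq ul(m_i(k))$. This step is essentially syntactic once we have read off Definition~\ref{def:fLP} carefully, and it handles the corner case in which $LP(m_i)$ \emph{is} $ul(m_i(k))$ (equality is allowed).

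Next, I use the hypothesis of the observation, $ul(m_i(k)) \prec l(m_j(k))$, to cross the boundary from $m_i$'s activity to $m_j$'s activity. Then, for the third link, I would argue that $l(m_j(k)) \prec LP(m_j)$. This follows from the general structure of every \otm{} \mth{} in the pseudocode: each \mth{} first acquires its locks (via \emph{acquirePred\&CurrLocks()} inside \nplsls{}, or via lock acquisition during node creation), performs its work and validation, and only then enters the unlock phase (\emph{releasePred\&CurrLocks()} or \emph{releaseOrderedLocks()}). In particular, no unlock event of $m_j$ can precede the acquisition of any lock that $m_j$ holds; since the node for $k$ must be locked by $m_j$ (\obsref{lockacq}) and $LP(m_j)$ is an unlock event of $m_j$, we conclude $l(m_j(k)) \prec LP(m_j)$. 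Concatenating the three inequalities yields $LP(m_i) \preceq ul(m_i(k)) \prec l(m_j(k)) \prec LP(m_j)$, hence $LP(m_i) \prec LP(m_j)$.

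\textbf{Main obstacle.} The only subtle step is justifying $l(m_j(k)) \prec LP(m_j)$ in full generality — i.e.\ confirming, by inspection of all code paths that reach an unlock, that no unlock of $m_j$ is ever emitted before the lock on the key-$k$ node is held. For \npluk{}/\npdel{}/\nptc{} that obtain their location through \nplsls{} this is immediate from \obsref{lslSearch2} (which asserts all four of $\bp,\rp,\rc,\bc$ are locked in the post-state of \nplsls{}'s return, well before any unlock executes). The potentially delicate case is when $m_j$ first \emph{creates} a fresh node for $k$ inside \nplslins{} (\Lineref{lslins12}-\Lineref{lslins13} of \algoref{lslins}): here I would point to the explicit \texttt{node.lock()} call immediately after allocation, which guarantees the new node is locked before it becomes reachable and, a fortiori, before any unlock of $m_j$ can fire. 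Combined with \obsref{locksimu}, which rules out overlapping critical sections on the same node for conflicting \mth{s} of different transactions, this closes the argument.
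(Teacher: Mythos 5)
Your proposal is correct and follows essentially the same route as the paper's own justification: both rest on \obsref{lockacq}, \obsref{locksimu}, and \defref{fLP} (the LP being the first unlock point), concluding that an unlock of $m_i$ preceding a lock of $m_j$ forces $LP(m_i) \prec LP(m_j)$. Your version merely makes the three-link chain $LP(m_i) \preceq ul(m_i(k)) \prec l(m_j(k)) \prec LP(m_j)$ explicit, which the paper leaves implicit.
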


If two concurrent conflicting methods are working on the same key k and want to update $Abs$ then they have to acquire the lock on the node corresponding to the key $k$ from \obsref{lockacq} and one of them succeed from \obsref{locksimu} . If $ul(m_i(k))$ happen before the $l(m_j(k))$ then from \defref{fLP} , $LP(m_i)$ happen before the $LP(m_j)$.

\begin{lemma}
	\label{lem:intertryC}
	Consider two state, $S_1$, $S_2$ s.t. $S_1$ $\sqsubset$ $S_2$ and $S_1.\bn.value(k)$ $\neq$ $S_2.\bn.value(k)$ then there exist $S'$ s.t. $S'$ $\sqsubset$ $S_2$ and $S'$ contain the $\nptc{}$ method on the same key k.
	Formally, $\langle (S_1.\bn.value(k) \neq (S_2.\bn.value(k)) \Rightarrow \exists (S' s.t., S_1.\bn \prec S'.LP(\tryc) \prec S_2.\bn) \rangle$.
	Where $S_1$ is the post-state of LP event of $\nptc{}$ method and $S_2$ is the pre-state of LP event of $\rvmt{}$. 
\end{lemma}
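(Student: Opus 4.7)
The plan is to invoke \obsref{tryupbl} together with a direct code-inspection argument. The hypothesis $S_1.\bn.value(k) \neq S_2.\bn.value(k)$ says that between $S_1$ and $S_2$ some event must have modified either (i) the \bn{} successor chain through which the node carrying key $k$ is reached from the head, or (ii) the $value$ field of the unmarked public node whose key is $k$. In either case the modification is an update to $Abs.\bn$ at key $k$, so \obsref{tryupbl} tells us that the responsible event belongs to a successful invocation of \nptc{}, and no other method can be responsible.

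First I would enumerate the write events in the pseudocode that can possibly alter $Abs.\bn.value(k)$. A scan of \algoref{lookup}, \algoref{delete}, \algoref{commonLu&Del} and \algoref{lslins} shows that the only write of a $value$ field into a node already reachable by \bn{} happens at \Lineref{tryc21} of \nptc{} (the $\texttt{write}(\bc.value,value)$ inside the \npins{} branch), and the only event that makes a previously marked (hence not in $Abs.\bn$) node reachable by \bn{}, or removes an unmarked node from $Abs.\bn$, occurs inside \nplslins{} and \nplsldel{}, both of which are invoked only from the \emph{\upmt{} execution} block of \nptc{} (\Lineref{tryc25}, \Lineref{tryc29}, \Lineref{tryc35} of \algoref{trycommit}). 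This matches \obsref{tryupbl}.

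Next I would pinpoint the exact event $e^*$ that witnesses the change. Because $<_H$ is a total order on events and $S_1\sqsubset S_2$, there is a first event strictly after $S_1$ (and at or before $S_2$) after which $\bn.value(k)$ differs from its value in $S_1$. By the enumeration above, $e^*$ is an event of some \nptc{} invocation, say by transaction $T_p$; call it $\tryc_p$. Since $e^*$ occurs strictly between the post-state $S_1$ and the pre-state $S_2$, the linearization point $LP(\tryc_p)$, which by \defref{fLP} is the first unlock event of this successful \nptc{} invocation and which comes after every write event of the \cp{} by the ordering established by \emph{release\_ordered\_locks()}, also lies in the open interval between $S_1$ and $S_2$. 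Taking $S'$ to be the pre-state of $LP(\tryc_p)$ gives $S_1.\bn \prec S'.LP(\tryc) \prec S_2.\bn$ as required.

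The main obstacle I anticipate is carefully justifying that $LP(\tryc_p)$ (rather than merely the write $e^*$) falls inside the interval $(S_1,S_2)$. If $e^*$ is one of the \cp{} writes but the corresponding first-unlock has already fired before $S_1$, one must rule this out; this is handled by noting that the locks on $\bp,\rp,\rc,\bc$ taken in \nplsls{} (\obsref{lslSearch2}) are held continuously through the \cp{} writes and released only in \emph{release\_ordered\_locks()}, so the shared-memory effect at key $k$ becomes visible strictly before the LP by \defref{fLP}, forcing $LP(\tryc_p)$ into $(S_1,S_2)$. The remaining verification is routine case analysis over the three \upmt{} branches of \nptc{} to confirm that each possible kind of change to $Abs.\bn.value(k)$ is attributable to exactly one such $\tryc_p$.
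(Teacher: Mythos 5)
Your proposal is correct and follows essentially the same route as the paper: both rest on \obsref{tryupbl} to attribute any change of $Abs.\bn$ at key $k$ to a successful \nptc{}, and then use the locking discipline (locks on the node held through the commit-phase writes and released only at the first-unlock LP, \obsref{lockacq}, \obsref{locksimu}, \obsref{LPorder}) to place that \nptc{}'s LP strictly between the two given LPs. Your added code-level enumeration and ``first modifying event'' argument only make explicit what the paper leaves implicit.
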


\begin{proof}
	In the state $S_1$ and $S_2$, if the $value$ corresponding to the key $k$ is not same then from \obsref{tryupbl} , we know that only the successful $\nptc{}$ method working on the same key $k$ can update the Abs.$\bn$. For updating the Abs on the key $k$ it has to acquire the lock on the node corresponding to the key $k$ from \obsref{lockacq}. Such that, $l(\tryc(k))$ happen before the $l(S_2(k))$ from \obsref{locksimu} , then, $ul(\tryc(k))$ happen before the $l(S_2(k))$ then $LP(\tryc)$ happen before the $LP(S_2)$ from \obsref{LPorder} .
\end{proof}

\begin{lemma}
	\label{lem:provingLegalityy}
	Consider a concurrent history, $E^H$, let there be a successfull $\nptc{}$ method of a transaction $T_i$ which last updated the node corresponding to $k$. Now, Consider a successful $\rvmt{}$ of a transaction $T_j$ on key $k$ then,
	\begin{enumerate}[label=\ref{lem:provingLegalityy}.\arabic*]
		\item \label{lem:provingLegalityy-in} If in the the pre-state of $LP$ event of the $\rvmt{}$ , node corresponding to the key $k$ is part of $\bn$ and value is $v$. Then the last \upmt{} of $\nptc{}$ would be insert on same key $k$ and value $v$ and it should be the previous closest to the \rvmt{}.
		
		\item \label{lem:provingLegalityy-del} If in the the pre-state of $LP$ event of the $\rvmt{}$ , node corresponding to the key $k$ is not part of the $\bn$. Then the last \upmt{} in $\nptc{}$ would be delete on same key $k$ and it should be the previous closest to the \rvmt{}.
	\end{enumerate}
\end{lemma}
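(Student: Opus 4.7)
The plan is to dispatch each of the two sub-claims by contrapositive/contradiction, using three ingredients already established in the excerpt: (i) \obsref{tryupbl}, that only a successful \nptc{} mutates $S.Abs.\bn$; (ii) \lemref{tryc-con-lsl}, that after a successful \nptc{} whose last \upmt{} on $k$ is an insert (resp.\ delete) the node for $k$ lies in (resp.\ is absent from) $\bn$ with the correct value; and (iii) \lemref{intertryC}, which converts any observed change in $\bn$'s contents at $k$ between two states into the existence of an intervening \nptc{} LP on $k$. Let $S$ denote the pre-state of $\lin{\rvm_j}$; the whole argument is about what must have happened on $k$ strictly before $S$ in the linearization order given by the LPs of \secref{pocapp}.

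For \textbf{\ref{lem:provingLegalityy-in}} (the LR2 case), assume $node(k)\in S.Abs.\bn$ and $S.node.val=v$. By \obsref{tryupbl} there must exist a most-recent successful \nptc{}, call it $\tryc_p$, whose last \upmt{} on $k$ linearized before $S$; otherwise $node(k)$ could never have entered $\bn$. I would then argue that $\tryc_p$'s last \upmt{} on $k$ must be an insert with value $v$: if it were a delete, \lemref{tryc-con-lsl-dl} would give $node(k)\notin ([E^H.Post(\lin{\tryc_p})].Abs.\bn)$, and by \obsref{tryupbl} no subsequent event can re-insert it into $\bn$ before $S$ (since $\tryc_p$ is the most recent \nptc{} on $k$), contradicting $node(k)\in S.Abs.\bn$. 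A symmetric appeal to \lemref{tryc-con-lsl-in} pins the stored value to $v$. Finally, ``previous closest'' follows from maximality of $\tryc_p$ together with \lemref{intertryC}: any intervening successful \nptc{} on $k$ between $\tryc_p$ and $S$ would change $\bn$'s state at $k$ and so would itself be an \upmt{} on $k$ linearized after $\tryc_p$, contradicting maximality.

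For \textbf{\ref{lem:provingLegalityy-del}} (the LR3 case), assume $node(k)\notin S.Abs.\bn$. I would split on whether any successful \nptc{} on $k$ has linearized before $S$. If none has, then the base transaction $T_0$ (which, per the setup in \subsecref{legal}, invokes \tdel{} on every key) supplies the required previous-closest delete. Otherwise, let $\tryc_p$ be the most recent such committed transaction. I would rule out $\tryc_p$'s last \upmt{} being an insert by \lemref{tryc-con-lsl-in}: an insert would leave $node(k)\in [E^H.Post(\lin{\tryc_p})].Abs.\bn$, and by \obsref{tryupbl} plus the maximality of $\tryc_p$ nothing could remove it before $S$, contradicting $node(k)\notin S.Abs.\bn$. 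Hence the last \upmt{} on $k$ in $\tryc_p$ is a delete, and ``previous closest'' again follows from maximality combined with \lemref{intertryC}.

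The main obstacle, and the step I would spend the most care on, is justifying the ``previous closest'' clause rather than merely the ``exists'' clause. The risk is a subtle interleaving in which a committed updater $T_x$ slips a \tryc_x on $k$ between $\tryc_p$'s LP and $\lin{\rvm_j}$ yet leaves $\bn$'s slot at $k$ in a state consistent with $\tryc_p$'s effect (e.g., insert-then-insert with the same value). I would close this loophole by appealing to \lemref{intertryC} in its existential form together with the LP-ordering \obsref{LPorder}: any such $\tryc_x$ is itself a successful \nptc{} whose last \upmt{} operates on $k$, so by maximality $\tryc_x$ precedes $\tryc_p$, contradicting the assumed interleaving. Once this is in place, the two cases together yield exactly rules LR2 and LR3 of \subsecref{legal}.
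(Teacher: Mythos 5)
Your proposal is correct and follows essentially the same route as the paper's proof: a contradiction argument that combines \lemref{tryc-con-lsl} (the post-state of a committed \nptc{} fixes membership of $k$ in $\bn$), \lemref{intertryC} (any change in $\bn$ at $k$ forces an intervening \nptc{} LP), and \obsref{tryupbl} (only \nptc{} mutates $\bn$), with maximality of the chosen $\tryc_p$ delivering the ``previous closest'' clause. Your treatment is in fact slightly more complete than the paper's — you explicitly rule out the delete alternative in case \ref{lem:provingLegalityy-in} and handle the no-prior-updater case via $T_0$ in case \ref{lem:provingLegalityy-del} — but these are refinements of the same argument rather than a different one.
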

\begin{proof}
	
	\begin{enumerate}[label=\ref{lem:provingLegalityy}.\arabic*]
		\item For proving this we are taking a contradiction that in the pre-state of $\rvmt{}$, node corresponding to the key $k$ is the part of $\bn$ and value as $v$, for that, there exist a previous closest successful $\tryc{}$ method should having the last $\upmt{}$ as insert on the same key $k$ from \corref{uniquenodeKey} , node corresponding to the key $k$ is unique and value is $v'$. If the $value$ of the node corresponding to the key $k$ is different for both the methods then from \lemref{intertryC} , there should be some other transaction $\tryc{}$ method working on the same key $k$ and its $LP$ should lies in between these two methods $LP$. Therefore that intermediate $\tryc{}$ should be the previous closest method for the $\rvmt{}$ and it will return the same value as previous closest method inserted.
		
		\item For proving this we are taking contradiction that previous closest successful $\tryc{}$ method should having the last $\upmt{}$ as insert on the same key $k$. If the last $\upmt{}$ is insert on the same key $k$ then after the post-state of successful $\tryc{}$ method, node corresponding to the key $k$ should be the part of $\bn$ from \lemref{tryc-con-lsl-in} . But we know that in the pre-state of $\rvmt{}$, node corresponding to the key $k$ is not the part of $\bn$. Such that previous closest successful $\tryc{}$ method should not having last $\upmt{}$ as insert on the same key $k$.
		Hence contradiction. 	
	\end{enumerate}
\end{proof}

\begin{theorem}
	The sequential history generated by \otm{} at method level is legal.
\end{theorem}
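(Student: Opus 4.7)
The plan is to exhibit, for every \rvmt{} appearing in the linearization (sequential history) built from the LPs listed in \Defref{fLP}, a witnessing last-update event that satisfies one of the three legality rules LR1, LR2, LR3. The case split is driven by whether the \rvmt{} $m_{ij}$ of transaction $T_i$ on key $\langle ht, k\rangle$ is the first \mth{} of $T_i$ on that key, and, when it is, by the value it returns.

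First I would handle LR1 (the \emph{local log} case), where $m_{ij}$ is preceded in $T_i$ by another \mth{} on $\langle ht, k\rangle$. Here \txlfind{} returns true, so by inspecting \algoref{lookup} (lines \Linereff{lookup3}--\Linereff{lookup11}) and \algoref{delete} (lines \Linereff{delete3}--\Linereff{delete19}) one verifies directly that the value returned is read from the \emph{local log} and matches the value written by the previous \mth{} on $\langle ht, k \rangle$ of $T_i$ exactly as required by LR1(a)--(c). Since the local log is private to $T_i$ and is updated in place by each \mth{}, no other transaction can interfere, so this case is essentially a code inspection.

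Next I would dispatch LR2 and LR3 (the \emph{first-access} cases), where \txlfind{} fails and control falls into \cldd{} (via \Linereff{lookupa20} or \Linereff{deletecld}). These are exactly the cases covered by \lemref{provingLegalityy}: if $m_{ij}$ returns $v \neq nil$, then in the pre-state of its LP the key $k$ is present in $\bn$ with value $v$ (by \lemref{lookupT-conc}), so \lemref{provingLegalityy-in} supplies a previous closest committed \tryc{} whose last \upmt{} on $\langle ht, k\rangle$ is $\tins_{pq}(ht, k, v)$; this gives LR2(a). Symmetrically, if $m_{ij}$ returns $nil$, then by \lemref{lookupF-conc} the key is absent from $\bn$ at the pre-state, and \lemref{provingLegalityy-del} yields a previous closest committed \tryc{} whose last \upmt{} on $\langle ht, k\rangle$ is a $\tdel$, giving LR3(a) (with the base case of $T_0$ handled by the assumed initial deletion of all keys).

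The main obstacle, and the step I would spend most care on, is the \emph{closeness} clauses LR2(b) and LR3(b), which require that no other committed updating transaction $T_x$ commits between the witnessing $T_p$ and $m_{ij}$. The way I would discharge this is by appealing to \lemref{intertryC}: if any such intermediate committing \nptc{} on $\langle ht, k\rangle$ existed, its LP would order strictly between the LP of $T_p$'s \nptc{} and the LP of $m_{ij}$, and by the definition of \bc{}.value updates in \algoref{trycommit} (lines \Linereff{tryc18}--\Linereff{tryc42}) this intermediate \nptc{} would become the \emph{actual} previous closest \upmt{}, contradicting the choice of $T_p$ supplied by \lemref{provingLegalityy}. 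A symmetric argument rules out an intermediate \tdel{}-completing \nptc{} in the LR3 case, using the mark-bit update at \Linereff{lsldel2}. Combining the three cases, every \rvmt{} in the linearized history satisfies the corresponding legality rule, and since \tins{} \mth{s} are legal by definition, the entire sequential history is legal.
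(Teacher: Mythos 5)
Your proposal is correct and follows essentially the route the paper intends: the paper leaves this theorem without an explicit proof body but states earlier that legality is established via \lemref{provingLegalityy}, and your assembly of LR1 by local-log code inspection, LR2/LR3 via \lemref{lookupT-conc}, \lemref{lookupF-conc} and \lemref{provingLegalityy}, and the closeness clauses via \lemref{intertryC} is exactly the combination of the paper's own supporting lemmas. No gaps.
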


\begin{theorem}
	The legal sequential history generated by \otm{} at method level is Linearizable.
\end{theorem}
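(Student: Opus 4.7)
The plan is to exhibit, for any concurrent execution $E^H$ produced by \otm{}, an equivalent sequential history $S$ by totally ordering all complete methods according to their linearization points, and then verify the two defining conditions of \lbty{}\cite{HerlWing:1990:TPLS}: (i) $S$ respects the real-time order of $E^H$, and (ii) $S$ satisfies the sequential specification of the \otm{} object, which here is precisely the notion of \legality{} established in the preceding theorem.

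First I would fix the linearization points listed just before the theorem: the atomic $get\&inc$ event for \emph{STM\_begin}, the local \emph{setValue} event for \npins{}, the first release event in \emph{releasePred\&CurrLocks} for \rvmt{s} (both \npluk{} and \npdel{}), and the first release event in \emph{releaseOrderedLocks} for a successful \nptc. For an aborting method, the LP is the event that causes the transaction to observe the abort (the appropriate failed validation or release). Each such event is a single atomic instruction executed between the \inv{} and \rsp{} events of the method, so every complete method of $E^H$ has a well-defined LP that lies strictly inside its execution interval. This already gives condition (i): if $\rsp(m_{ij}) <_{E^H} \inv(m_{pq})$, then $\lin{m_{ij}}$ precedes $\inv(m_{pq})$ which in turn precedes $\lin{m_{pq}}$, so the LP-order is a refinement of $\prec^{\mr}_{E^H}$. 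For incomplete methods I would use the standard completion construction: either include an unaborted method whose LP has already occurred (placing it at its LP) or drop it otherwise, as is standard for linearizable objects.

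Next, I would form $S$ by ordering all selected methods in the order of their LPs; since LPs are atomic events totally ordered by $<_{E^H}$, this yields a sequential history with the same set of events as the completion $\overline{E^H}$, hence $S$ is equivalent to $\overline{E^H}$. Condition (ii) requires showing that $S$ is \legal{}, i.e., that every \rvmt{} in $S$ satisfies LR1--LR3. This is exactly the content of the previous theorem (via \lemref{provingLegalityy}), which shows that a successful \rvm of $T_j$ returning value $v$ on key $k$ is preceded in $S$ by a \nptc{} of some $T_i$ whose last \upmt{} on $k$ is consistent with $v$, and no other committing update on $k$ intervenes. Combined with the handling of same-transaction \mth{s} via local log look-ups (LR1) already enforced in \algoref{lookup} and \algoref{delete}, this establishes \legality{} of $S$.

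The main obstacle, and the step I would treat most carefully, is justifying that the chosen LPs are well-defined in all cases---in particular, that for conflicting methods of distinct transactions the LP order actually coincides with the order in which the underlying shared \lsl{} is visibly modified. For \nptc{} this is delicate because the commit is not a single shared-memory write but a sequence of updates under locks; \obsref{tupbl}, \obsref{lockacq}, \obsref{locksimu}, \obsref{LPorder}, and \defref{fLP} are precisely what rule out an interleaving that would violate \lemref{intertryC}. I would invoke these observations to argue that placing the LP at the first unlock faithfully captures the moment the effects of \nptc{} become visible to any subsequent \rvmt{} operating on the same key, so that the method-level legality arguments carried out in the sequential setting lift verbatim to $S$.
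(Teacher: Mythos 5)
Your proposal is correct and follows exactly the route the paper intends: the paper itself states this theorem without a written proof, offering only the list of linearization points and the one-line remark about constructing the sequential history by executing methods in $LP$ order, together with the supporting lemmas (\lemref{provingLegalityy}, \obsref{LPorder}, \lemref{intertryC}) that you invoke. Your write-up simply makes explicit the standard LP-based argument (LPs lie inside method intervals, hence real-time order is preserved; legality of the LP-ordered history is the content of the preceding theorem), so it matches the paper's approach while being more complete than what the paper actually records.
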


\textbf{Construction of sequential history} based on the $LP$ of concurrent methods of a concurrent history, $E^H$, and execute them in their $LP$ order for returning the same $return$ $value$.

\begin{lemma}
	\label{lem:provingLegalityysec}
	Let there be a successfull $\nptc{}$ method of a transaction $T_i$ which last updated the node corresponding to $k$. Now, consider a successful $\rvmt{}$ of a transaction $T_j$ on key $k$ then,
	\begin{enumerate}[label=\ref{lem:provingLegalityysec}.\arabic*]
		\item \label{lem:provingLegalityysec-in} If in the the pre-state of $\rvmt{}$ , node corresponding to the key $k$ is part of $\bn$ and value is $v$. Then the last \upmt{} of $\nptc{}$ would be insert on same key $k$ and value $v$ and it should be the previous closest to the \rvmt{}.
		
		\item \label{lem:provingLegalityysec-del} If in the the pre-state of $\rvmt{}$ , node corresponding to the key $k$ is not part of the $\bn$. Then the last \upmt{} in $\nptc{}$ would be delete on same key $k$ and it should be the previous closest to the \rvmt{}.
	\end{enumerate}
\end{lemma}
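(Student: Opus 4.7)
My plan is to reduce this lemma to its concurrent counterpart, Lemma~\ref{lem:provingLegalityy}, by exploiting the fact that the sequential history $H$ is constructed by ordering the completed methods of $E^H$ according to their linearization points (as described in the ``Construction of sequential history'' paragraph immediately preceding the lemma). Since the LPs in our implementation are all chosen as events that mutate or read the shared state $Abs$, the pre-state of a method $\rvm_j$ in $H$ coincides, on the portion of the shared structure relevant to key $k$, with $E^H.Pre(\lin{\rvm_j})$ in the concurrent history.

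First, I would fix the successful $\rvm_j$ of $T_j$ on key $k$ and the successful $\tryc_i$ of $T_i$ claimed to be the previous $\tryc$ in $H$ whose last \upmt{} touches $k$. By the construction of $H$, we have $\lin{\tryc_i} \prec_{E^H} \lin{\rvm_j}$, and for every other $\tryc_x$ whose last \upmt{} operates on $k$, either $\lin{\tryc_x} \prec_{E^H} \lin{\tryc_i}$ or $\lin{\tryc_j} \prec_{E^H} \lin{\tryc_x}$. I would split into the two cases of the statement:

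\emph{Case 1:} $node(k) \in Abs.\bn$ with value $v$ in the pre-state of $\rvm_j$ in $H$. Translating to $E^H$, this says $node(k) \in ([E^H.Pre(\lin{\rvm_j})].Abs.\bn)$ with value $v$. By Lemma~\ref{lem:provingLegalityy-in}, the previous closest successful $\tryc$ to $\rvm_j$ in $E^H$ whose last \upmt{} touches $k$ must be an insert$(k,v)$. I must then show that ``previous closest in $E^H$ (by LP order)'' matches ``previous closest in $H$''. This is immediate because $H$ linearises methods in the order of their LPs, so the $\tryc_i$ that is previous-closest in $H$ is exactly the $\tryc_i$ that is previous-closest in $E^H$. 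An application of Lemma~\ref{lem:intertryC} rules out any other $\tryc$ whose LP lies strictly between $\lin{\tryc_i}$ and $\lin{\rvm_j}$ and whose last \upmt{} on $k$ would have changed $Abs.\bn(k)$'s value to something other than $v$.

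\emph{Case 2:} $node(k) \notin Abs.\bn$ in the pre-state of $\rvm_j$ in $H$. Again this translates to $node(k) \notin ([E^H.Pre(\lin{\rvm_j})].Abs.\bn)$, and Lemma~\ref{lem:provingLegalityy-del} yields that the previous closest $\tryc$ must have delete$(k)$ as its last \upmt{} on $k$; the same LP-ordering argument transports ``previous closest'' from $E^H$ to $H$. The main obstacle I anticipate is the bookkeeping around \emph{closest}: I need a clean statement that between $\lin{\tryc_i}$ and $\lin{\rvm_j}$ no other $\tryc_x$ whose last \upmt{} on $k$ is of update type can linearise, otherwise that $\tryc_x$ (and not $\tryc_i$) would be the witness in both histories. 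This requires combining Observation~\ref{obs:tryupbl} (only successful $\tryc$ mutates $Abs.\bn$), Observation~\ref{obs:LPorder} (unlock order implies LP order on conflicting methods), and Lemma~\ref{lem:intertryC} to rule out any such intervening $\tryc_x$ consistently with the observed $Abs.\bn$ state at $\rvm_j$'s pre-state. Once this is in place, the lemma follows directly from its concurrent analogue.
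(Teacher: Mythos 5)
Your proposal is correct, but it takes a genuinely different route from the paper. The paper does not reduce the sequential lemma to its concurrent counterpart: it simply re-runs, essentially verbatim, the same proof by contradiction it gave for Lemma~\ref{lem:provingLegalityy} --- for part~1 it assumes the previous closest successful $\tryc{}$ inserted a different value $v'$ and invokes \lemref{intertryC} to exhibit an intervening $\tryc{}$ whose LP lies between the two, contradicting ``previous closest''; for part~2 it assumes the last \upmt{} was an insert and derives from \lemref{tryc-con-lsl-in} that $node(k)$ would then be in $\bn$ in the post-state, contradicting the hypothesis. Your reduction-via-LP-ordering avoids this duplication and makes explicit the bridge that the paper leaves implicit, namely that the pre-state of a method in $H$ agrees (on the part of $Abs$ relevant to $k$) with $E^H.Pre(\lin{\cdot})$, and that ``previous closest'' is the same relation in both histories because $H$ is by construction the LP order. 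What your approach buys is a single source of truth for the argument; what it costs is that the state-correspondence claim is the crux and you only assert it --- to make it airtight you would lean on \obsref{lslSearch2} and \defref{fLP} (the relevant $preds$/$currs$ for $k$ are locked from the return of \nplsls{} until the LP, which is the first unlock), so no conflicting method can linearise on $k$ between the observation of the pre-state and the LP, which is exactly the content of \obsref{LPorder} that you already cite. Since the paper's own ``Construction of sequential history'' paragraph assumes this same correspondence without proof, your argument is no less rigorous than the paper's; it is just organised differently.
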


\begin{proof}
	
	\begin{enumerate}[label=\ref{lem:provingLegalityysec}.\arabic*]
		\item For proving this we are taking a contradiction that in the pre-state of $\rvmt{}$, node corresponding to the key $k$ is the part of $\bn$ and value as $v$, for that, there exist a previous closest successful $\tryc{}$ method should having the last $\upmt{}$ as insert on the same key $k$ from \corref{uniquenodeKey} , node corresponding to the key $k$ is unique and value is $v'$. If the $value$ of the node corresponding to the key $k$ is different for both the methods then from \lemref{intertryC} , there should be some other transaction $\tryc{}$ method working on the same key $k$ and its $LP$ should lies in between these two methods $LP$. Therefore that intermediate $\tryc{}$ should be the previous closest method for the $\rvmt{}$ and it will return the same value as previous closest method inserted.
		
		\item For proving this we are taking contradiction that previous closest successful $\tryc{}$ method should having the last $\upmt{}$ as insert on the same key $k$. If the last $\upmt{}$ is insert on the same key $k$ then after the post-state of successful $\tryc{}$ method, node corresponding to the key $k$ should be the part of $\bn$ from \lemref{tryc-con-lsl-in} . But we know that in the pre-state of $\rvmt{}$, node corresponding to the key $k$ is not the part of $\bn$. Such that previous closest successful $\tryc{}$ method should not having last $\upmt{}$ as insert on the same key $k$.
		Hence contradiction. 	
	\end{enumerate}
\end{proof}

\begin{lemma}
	\label{lem:seq-spec-lsl}
	Consider a sequential history, $E^S$, for any successful method which is call by transaction $T_i$, after the post-state of the method, node corresponding to the key should be part of $\rn$ and $max\_ts$ of that node should be equal to method transaction time-stamp. Formally, $\langle (node(key) \in (P.Abs.\rn)) \land (P.node.max\_ts = TS(T_i)) \rangle$. Where P is the post-state of the method.
	%	Consider a sequential history, $E^S$, for any successful method which is call by transaction $T_i$, let S and P be the pre and post-state of the method respectively, then, \\ $\langle \forall S,P: (node(key) \in (S.Abs.$\rn$) \lor (node(key) \notin (S.Abs.$\rn$) \Rightarrow (node(key) \in (P.Abs.$\rn$) \land (P.node.max\_ts = TS(T_i)) \rangle$. 
\end{lemma}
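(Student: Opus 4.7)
The plan is to reduce this sequential statement to the already-proved concurrent analogue (\lemref{con-spec-lsl}) and then strip away the parts of that proof that handled interference. First I would observe that any sequential history $E^S$ is a special case of a concurrent history $E^H$ in which every method $m$ of every transaction is isolated, i.e.\ $\rsp(m')<_H \inv(m)$ or $\rsp(m)<_H \inv(m')$ for every other method $m'$. Consequently the pre-state and post-state of $m$ coincide with the pre-state and post-state of its $LP$ event, so the conclusion of \lemref{con-spec-lsl} specialized to $E^S$ is exactly the conclusion required here.

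Next I would carry out the case analysis in the same order as \lemref{con-spec-lsl}, but simplified by sequentiality. \textbf{Case 1 (\rvmt).} By the pseudocode, the method first invokes \nplsls{} at \Lineref{delete21} of \cldd{}; by \obsref{lslSearch}, \lemref{lslSearch-ret}, \lemref{key-change}, \lemref{key-change-bl} and \corref{uniquenodeKey}, the returned tuple $(\bp,\rp,\rc,\bc)$ is the unique location of $key$ in the list. If $node(key)\in P.Abs.\rn$ in the pre-state $P$, then after \nptov{} succeeds, by \obsref{futurestate} and \obsref{obs-nodekey} the node persists and its key is unchanged; the method then writes $TS(T_i)$ into $node.max\_ts$. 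If $node(key)\notin P.Abs.\rn$, the method invokes \nplslins{} (\Lineref{delete34} of \cldd) with $RL$, which by \lemref{reach} and \lemref{key-change} inserts the new marked node reachable from the head via red links, and updates $max\_ts$. \textbf{Case 2 (\upmt).} For \npins{}, depending on whether $node(key)$ is in $\rn$ only, in both $\bn$ and $\rn$, or absent, \nptc{} routes through \Lineref{tryc19}, \Lineref{tryc24}, or \Lineref{tryc28} of \algoref{trycommit}; in every subcase \nplslins{} places (or leaves) the node on the red chain and the subsequent write sets $max\_ts.insert=TS(T_i)$. For \npdel{}, the prior \rvmt execution in \emph{rvPhase} has already ensured $node(key)\in P.Abs.\rn$, so only the \nplsldel{} path of \Lineref{tryc34} applies, and \Lineref{tryc37} writes $max\_ts.delete=TS(T_i)$; the red reachability is preserved because \nplsldel{} only modifies $\bn$.

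Finally I would argue that no concurrent interference can invalidate these writes in the sequential setting: in $E^S$ no other event of any other transaction can occur between $\inv(m)$ and $\rsp(m)$, so the locks acquired by \nplsls{} (\obsref{lslSearch2}) are trivially uncontested, \npintv{} (\algoref{interferenceValidation}) cannot fire a retry, and \nppoval{} (\algoref{povalidation}) leaves the located $(\bp,\rp,\rc,\bc)$ unchanged. Hence the post-state $P$ of the method satisfies both $node(key)\in P.Abs.\rn$ (by \lemref{reach} applied to $P$) and $P.node.max\_ts=TS(T_i)$.

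\textbf{Main obstacle.} The only delicate point I expect is the \npdel{} subcase of \upmt: the assertion that the corresponding node is already present in the red chain at the start of the \nptc{} step relies on the fact that a prior \npdel{} in the \emph{rvPhase} of the same transaction logged a red-chain insertion (via \Lineref{delete34} of \cldd). In a sequential history this chain of reasoning must be made explicit by referring to the $txlog$ entry created during the \rvp{}, and then invoking \obsref{node-forever} and \obsref{futurestate} to transport that red-reachability across any intervening committed transactions from $E^S$. Once this bookkeeping is made precise, the remainder of the proof reduces cleanly to the arguments already developed for \lemref{con-spec-lsl}.
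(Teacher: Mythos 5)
Your proposal is correct and follows essentially the same route as the paper: the paper's own proof of this lemma is a near-verbatim restatement of the proof of \lemref{con-spec-lsl} with the $LP$-event pre/post-states replaced by the method's pre/post-states, performing exactly the case analysis on \rvmt{} and \upmt{} (insert/delete) that you outline, with the same appeals to \obsref{lslSearch}, \lemref{lslSearch-ret}, \lemref{key-change}, \lemref{key-change-bl}, \corref{uniquenodeKey}, \obsref{futurestate}, \obsref{obs-nodekey} and \nptov{}. Your explicit framing of the sequential history as a degenerate concurrent one, and your handling of the \npdel{} subcase via the red-node inserted during the \rvp{}, match the paper's treatment.
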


\begin{proof}
	\begin{enumerate}
		\item \textbf{\texttt{For \rvmt{} method:}} By observing the code, each $\rvmt{}$ first invokes \nplsls{} method of \algoref{lslsearch} (line \ref{lin:delete21} of \cldd{} method of \algoref{commonLu&Del}). From \lemref{key-change} \& \lemref{key-change-bl} we have that the nodes in the underlying data-structure are in increasing order of their keys, thus the key on which the method is working has a unique location in underlying data-structure from \corref{uniquenodeKey} . So, when the \nplsls{} is invoked from a \mth{}, it returns correct location $(\bp, \rp,$ $\rc, \bc)$ of corresponding $key$ as observed from \obsref{lslSearch} \& \lemref{lslSearch-ret} and all are locked, hence no other thread can change simultaneously (from \obsref{lslSearch2}). 
		%\nplsls{} method of \algoref{lslsearch} returns the correct location of the underlying data-structure from \lemref{lslSearch-ret} and \lemref{lslcorrect} because underlying list are in sorted order from \lemref{key-change} \& \lemref{key-change-bl} . 
		
		In the pre-state of $\rvmt{}$ , if $(node.key \in S.Abs.\rn)$, means $key$ is already there in $\rn$ and time-stamp of that node is less then the $\rvmt{}$ transactions time-stamp, from \nptov{} method of \algoref{tovalidation} , then in the post-state of $\rvmt{}$, $node.key$ should be the part of $\rn$ from \obsref{futurestate} and $key$ can't be change from \obsref{obs-nodekey} and it just update the $max\_ts$ field for corresponding node $key$ by method transaction time-stamp else abort. 
		%and from \nptov{} method of \algoref{tovalidation} , time-stamp of that node is less then the $\rvmt{}$ transactions time-stamp then, it just update the $max\_ts$ field for corresponding node $key$ by method transaction time-stamp and return the $value$ else abort. 
		
		In the pre-state of $\rvmt{}$ , if $(node.key \notin S.Abs.\rn)$, means $key$ is not there in $\rn$ then, in the post-state of $\rvmt{}$, insert the $node$ corresponding to the $key$ into $\rn$ by using \nplslins{} method of \algoref{lslins} and update the $max\_ts$ field for corresponding node $key$ by method transaction time-stamp. Since, $node.key$ should be the part of $\rn$ from \obsref{futurestate} and $key$ can't be change from \obsref{obs-nodekey} , in post-state of $\rvmt{}$.

		%By observing the pre-state of $LP$ event of $\rvmt{}$, if $(node.key \in S.Abs.\rn)$, means $key$ is already there in $\rn$ and from \nptov{} method of \algoref{tovalidation} , time-stamp of that node is less then the $\rvmt{}$ transactions time-stamp then, it just update the $max\_ts$ field for corresponding node $key$ by method transaction time-stamp and return the $value$ else abort. In the pre-state of $LP$ event of $\rvmt{}$, if $(node.key \notin S.Abs.\rn)$, means $key$ is not there in $\rn$ then $\rvmt{}$ insert the $node.key$ into $\rn$ and then update the time-stamp field and set $value$ to be $null$. Now from line 20 to 35 of \npdel{} method of \algoref{delete} \& line 15 to 27 of \npluk{} method of \algoref{lookup}, we see that if node corresponding to key is present then either it is $\bc$ or $\rc$, else we create a new node with corresponding key within the underlying data-structure, set the time-stamp and $value$. This can be seen at line 24 and 32, \nplslins{} method of \algoref{lslins} is invoked from \npdel{} method of \algoref{delete} \& \npluk{} method of \algoref{lookup} respectively. Once a node is created it will never get deleted from \obsref{futurestate} and node corresponding to a key can't be modified from \obsref{obs-nodekey} . Thus, node corresponding to the $key$ should be part of $\rn$ after the $LP$ event of every $\rvmt{}$ and $max\_ts$ field of that node corresponding to $key$ should be equal to $\rvmt{}$ transaction time-stamp and $node.val$ is equal to $v$.\\
		%\\
		\item \textbf{\texttt{For \upmt{} method:}} By observing the code, each $\upmt{}$ also first invokes \nplsls{} method of \algoref{lslsearch} (line \ref{lin:tryc7} of \nptc{} method of \algoref{trycommit} ).
		From \lemref{key-change} \& \lemref{key-change-bl} we have that the nodes in the underlying data-structure are in increasing order of their keys, thus the key on which the method is working has a unique location in underlying data-structure from \corref{uniquenodeKey} . So, when the \nplsls{} is invoked from a \mth{}, it returns correct location $(\bp, \rp,$ $\rc, \bc)$ of corresponding $key$ as observed from \obsref{lslSearch} \& \lemref{lslSearch-ret} and all are locked, hence no other thread can change simultaneously (from \obsref{lslSearch2}). 
		%\nplsls{} method of \algoref{lslsearch} returns the correct location of the underlying data-structure from \lemref{lslSearch-ret} and \lemref{lslcorrect} because underlying list are in sorted order from \lemref{key-change} \& \lemref{key-change-bl} . 
		
		\begin{enumerate}
			\item \textbf{\texttt{If \upmt{} is insert:}} In the pre-state of $\upmt{}$, if $(node.key \in S.Abs.\rn)$, means $key$ is already there in $\rn$ and time-stamp of that node is less then the $\upmt{}$ transactions time-stamp, from \nptov{} method of \algoref{tovalidation} , then in the post-state of $\upmt{}$, $node.key$ should be the part of $\rn$ and it just update the $max\_ts$ field for corresponding node $key$ by method transaction time-stamp else abort.
			
			In the pre-state of $\upmt{}$, if $(node.key \notin S.Abs.\rn)$, means $key$ is not there in $\rn$ then in the post-state of $\upmt{}$, it will insert the $node$ corresponding to the $key$ into the $\rn$ as well as $\bn$, from $\nplslins{}$ method of \algoref{lslins} at line \ref{lin:tryc31} of $\nptc{}$ method of \algoref{trycommit} and update the $max\_ts$ field for corresponding node $key$ by method transaction time-stamp.
			Once a node is created it will never get deleted from \obsref{futurestate} and node corresponding to a key can't be modified from \obsref{obs-nodekey}.
			
			\item \textbf{\texttt{If \upmt{} is delete:}} In the pre-state of $\upmt{}$, if $(node.key \in S.Abs.\rn)$, means $key$ is already there in $\rn$ and time-stamp of that node is less then the $\upmt{}$ transactions time-stamp, from \nptov{} method of \algoref{tovalidation} , then in the post-state of $\upmt{}$, $node.key$ should be the part of $\rn$, from $\nplsldel{}$ method of \algoref{lsldelete} at line \ref{lin:tryc37} of $\nptc{}$ method of \algoref{trycommit} and it just update the $max\_ts$ field for corresponding node $key$ by method transaction time-stamp else abort.
			
			In the pre-state of $\upmt{}$, $(node.key \notin S.Abs.\rn)$ this should not be happen because execution of \npdel{} method of \algoref{delete} must have already inserted a node in the underlying data-structure prior to $\nptc$ method of \algoref{trycommit} . Thus, $(node.key \in S.Abs.\rn)$ and update the $max\_ts$ field for corresponding node $key$ by method transaction time-stamp else abort. 
			
			%Once a node is created it will never get deleted from \obsref{futurestate} and node corresponding to a key can't be modified from \obsref{obs-nodekey}.
		\end{enumerate}

	\end{enumerate}
\end{proof}

\begin{corollary}
	\label{cor:nodeinabs}
	After the post-state of any successful method on a key ensures that underlying \rn{} contains a unique node corresponding to the key and $max\_ts$ field is updated by methods transactions time-stamp. 
\end{corollary}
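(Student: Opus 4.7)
The plan is to derive Corollary~\ref{cor:nodeinabs} as a direct consequence of Lemma~\ref{lem:seq-spec-lsl} combined with the uniqueness guarantee of Corollary~\ref{cor:uniquenodeKey}. First I would observe that Lemma~\ref{lem:seq-spec-lsl} already discharges the existential part: for any successful method $m_i$ of a transaction $T_i$ on key $k$, if $P$ denotes the post-state of $m_i$, then $node(key) \in P.Abs.\rn$ and $P.node.max\_ts = TS(T_i)$. The corollary strengthens this in two ways: (i) it asserts uniqueness of the node associated with $key$ in $\rn$, and (ii) it asserts that this uniqueness persists specifically in the post-state $P$.

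Next, I would handle the uniqueness claim. Corollary~\ref{cor:uniquenodeKey} states that in any global state, no two nodes carry the same key, and this corollary itself follows from Lemma~\ref{lem:key-change} (strict monotonicity of keys along $\rn$-links) together with Lemma~\ref{lem:reach} (every public node is reachable from head via $\rn$) and Observations~\ref{obs:node-forever} and~\ref{obs:obs-nodekey}. Applying Corollary~\ref{cor:uniquenodeKey} at the post-state $P$ immediately gives that the node in $P.Abs.\rn$ with key equal to $key$ is unique.

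For the $max\_ts$ assertion I would simply invoke the second conjunct of Lemma~\ref{lem:seq-spec-lsl}, which was established by a case analysis on whether the invoking method is an \rvmt{} or an \upmt{} (insert/delete) and in each sub-case traces the write to $max\_ts$ performed by the method via the corresponding line in the pseudocode (e.g., Lines~\ref{lin:tryc23}, \ref{lin:tryc27}, \ref{lin:tryc31}, \ref{lin:tryc37} of \algoref{trycommit} for \upmt{}, and the writes inside \algoref{commonLu&Del} for \rvmt{}), using \obsref{obs-nodekey} to argue that no subsequent event in the same method re-overwrites this timestamp with a different value. Since the method holds the locks on $(\bp,\rp,\rc,\bc)$ throughout (by \obsref{lslSearch2}) until its unlock-LP, no concurrent update can interfere before the post-state is reached; in the sequential setting $E^S$ considered here this is even more immediate.

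I do not anticipate any significant obstacle: the corollary is essentially a packaging of Lemma~\ref{lem:seq-spec-lsl} with Corollary~\ref{cor:uniquenodeKey}. The only subtle point worth spelling out is that ``unique node corresponding to the key'' refers to uniqueness within $P.Abs.\rn$ rather than in the heap at large; this is exactly what Corollary~\ref{cor:uniquenodeKey} delivers once restricted to $P.PublicNodes$ via Lemma~\ref{lem:reach}. Thus the proof reduces to citing these two results and remarking that both hold in the post-state $P$.
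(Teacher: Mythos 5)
Your proposal is correct and follows essentially the same route as the paper: the paper states Corollary~\ref{cor:nodeinabs} without an explicit proof immediately after Lemma~\ref{lem:seq-spec-lsl}, treating it as a direct packaging of that lemma (membership in $\rn$ and the $max\_ts$ update) with Corollary~\ref{cor:uniquenodeKey} (uniqueness of the node for a given key), which is exactly the derivation you give.
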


%	\newpage
%%%%%%%%%%%%%%%%%%%%%%%%%%  Transactional Level%%%%%%%%%%%%%%%%%%%%%%%%%%
%%%%%%%%%%%%%%%%%%%%%%%%%%%%%%%%%%%%%%%%%%%%%%%%%%%%%%%%%%%%%%%%%%%%%%%

\subsection{Transactional Level}
\label{sec:txlevel}

From \Secref{opnlevel} we are guaranteed to have a \seq{} history or in other terms we have a linearizable history. Now we shall prove that such linearizable history obtained from \otm{} is \opq.

\begin{observation}{H}
	\label{obs:defH}
	is a sequential history obtained from \otm{}, as shown at method level using LP.
\end{observation}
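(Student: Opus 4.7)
\textbf{Proof proposal for Observation~\ref{obs:defH}.} The statement asserts that the history $H$ obtained from \otm{} is sequential, where the construction uses the linearization points (LPs) listed just above the observation. My plan is to assemble this as a direct corollary of the method-level results already established in Section~\ref{sec:opnlevel}, rather than re-deriving anything from the code. First I would fix a concrete concurrent execution $E^H$ of \otm{} and, for every complete method $m$ in $E^H$, read off its LP event from the explicit list (e.g., $get\&inc$ for \npbegin, the $\texttt{unlock}$ inside \emph{releasePred\&CurrLocks}() for \npluk/\npdel, the first $\bp$.\texttt{unlock} inside \emph{releaseOrderedLocks}() for \nptc, etc.). Each such LP is a single atomic event, and since the runtime orders events totally via $<_H$, the LPs induce a total order $<_{\text{LP}}$ on $\met{E^H}$.

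Next I would build $H$ by placing, for each complete method $m$, its $\inv(m)$ immediately followed by its $\rsp(m)$ at the position of $\lin{m}$ in $<_{\text{LP}}$, thereby making every method isolated in the sense of the definition of sequential history given in Section~\ref{sec:model}. The two key properties to check are (i) well-formedness, which follows because each transaction invokes its next method only after receiving the response of the previous one in $E^H$, and this is inherited by $H$; and (ii) preservation of return values, which is exactly the content of Lemmas~\ref{lem:lookupT-conc}, \ref{lem:lookupF-conc}, \ref{lem:tryc-con-lsl}, and \ref{lem:con-spec-lsl}: each of these shows that the state observed at the pre-/post-state of the LP matches the value actually returned by the method in $E^H$. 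For incomplete methods or live transactions I would handle them exactly as in the completion $\overline{H}$ construction of Section~\ref{sec:model}, inserting $\trya(\mathcal{A})$ after the last method of any still-live transaction.

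Finally I would invoke the two theorems stated immediately before Observation~\ref{obs:defH}: the first asserts that the sequential history so constructed is legal (via Lemma~\ref{lem:provingLegalityy} combined with \corref{nodeinabs}), and the second asserts it is linearizable, i.e.\ equivalent to $E^H$ and respects the real-time order $\prec_{E^H}^{\mr}$. Since $<_{\text{LP}}$ is a total order consistent with $\prec_{E^H}^{\mr}$ (because if $\rsp(m_1) <_{E^H} \inv(m_2)$ then $\lin{m_1} <_{E^H} \lin{m_2}$ as both LPs lie within their respective method intervals), $H$ meets the definition of a sequential history in Section~\ref{sec:model}.

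The main obstacle I anticipate is purely notational rather than mathematical: ensuring that the LP chosen for each method (particularly \nptc, whose LP is the first unlock inside \emph{releaseOrderedLocks}()) actually falls inside $[\inv(m), \rsp(m)]$ in every execution path, including the abort paths that jump out of the method early. I would address this by a short case analysis on the return statements of each method in the pseudocode, verifying that the chosen LP event is executed on exactly the successful paths and that on abort paths an alternative abort-LP (e.g.\ the event setting $op\_status \gets \textup{ABORT}$ in \algoref{tovalidation}) lies in the method interval. Once this bookkeeping is done, Observation~\ref{obs:defH} is essentially a restatement of the linearizability theorem in the terminology of Section~\ref{sec:model}.
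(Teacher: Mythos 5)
Your proposal is correct and follows essentially the same route the paper takes: the paper offers no separate argument for this observation, treating it as an immediate consequence of the LP-based construction sketched just above it (``execute the methods in their LP order'') together with the two method-level theorems on legality and linearizability, which is exactly the skeleton you flesh out. Your additional bookkeeping --- checking that each LP lies inside its method's invocation--response interval (including abort paths) and handling incomplete transactions via the completion $\overline{H}$ --- is a faithful, more careful elaboration of the same argument rather than a different approach.
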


\begin{definition}{$CG(H)$}
	is a conflict graph of H.
\end{definition}

\begin{lemma}
	\label{lem:serialIsAcyclic}
	Conflict graph of a serial history is acyclic.
\end{lemma}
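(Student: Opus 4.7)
The plan is to exploit the fact that in a serial history the transactional real-time order $\prec_H^{\tr}$ is a \emph{total} order on $\txns{H}$, and then to show that every edge of $CG(H)$ points ``forward'' with respect to this total order. Since a directed graph whose edges are all consistent with a fixed total order on its vertices cannot contain a cycle, acyclicity of $CG(H)$ will follow immediately.

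First I would unpack the definition of serial: by the definition given just before \thmref{coop-graph}, if $H$ is serial then for every pair of distinct transactions $T_i, T_p \in \txns{H}$ we have either $T_i \prec_H^{\tr} T_p$ or $T_p \prec_H^{\tr} T_i$, and moreover $\levt{T_i} <_H \fevt{T_p}$ in the first case. In particular, every event of $T_i$ precedes every event of $T_p$ in $<_H$ whenever $T_i \prec_H^{\tr} T_p$. This is the one structural fact I will use throughout.

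Next, I would take an arbitrary edge $(T_i, T_p)$ of $CG(H)$ and verify that it is consistent with $\prec_H^{\tr}$, by case analysis on the edge type. For a real-time edge this is immediate from the definition: such an edge exists only when $T_i \prec_H^{\tr} T_p$. For a conflict edge, I would check each of the three conflict types (tryC--tryC, tryC--rv, rv--tryC) from \secref{conflicts}. In every case the definition of the conflict explicitly requires $\rsp(m_i) <_H \inv(m_p)$ for a method $m_i$ of $T_i$ and a method $m_p$ of $T_p$. Combined with seriality, this forces $\levt{T_i} <_H \fevt{T_p}$, i.e., $T_i \prec_H^{\tr} T_p$. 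Hence the conflict edge is also oriented in the direction of the total order.

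Finally, I would conclude: suppose for contradiction that $CG(H)$ has a cycle $T_{i_1} \to T_{i_2} \to \cdots \to T_{i_k} \to T_{i_1}$. By the previous paragraph each arrow implies $T_{i_j} \prec_H^{\tr} T_{i_{j+1}}$, and transitivity of $\prec_H^{\tr}$ yields $T_{i_1} \prec_H^{\tr} T_{i_1}$, contradicting irreflexivity of a real-time order. Thus $CG(H)$ is acyclic. The only place that requires a little care is the case analysis on conflict types; I do not anticipate any genuine obstacle, because each conflict is defined with an explicit precedence of responses before invocations, which is exactly what seriality guarantees.
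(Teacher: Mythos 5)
Your proposal is correct and follows essentially the same route as the paper: every edge of $CG(H)$ (real-time or conflict) is shown to be oriented consistently with the total order $\prec_H^{\tr}$ that seriality provides, and a cycle would then transitively force a transaction to precede itself, which is impossible. The only difference is that you make the per-edge-type case analysis explicit where the paper simply asserts that a conflict edge $(T_1,T_2)$ implies $\levt{T_1} <_H \fevt{T_2}$; this is a presentational refinement, not a different argument.
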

\begin{proof}
	If conflict graph of serial history contains an conflict edge  ( $T_1$, $T_2$ ), then $\levt{T_1} \prec_H \fevt{T_2}$. Now, assume that conflict graph of a serial history is cyclic, then their exist a cycle path in the form ($T_1$, $T_2$ $\cdots$ $T_k$, $T_1$), (k $\geq$ 1).  So, transitively,
	\begin{equation}
	\begin{split}
	\label{eq:serialIsAcycliceq}
	((\levt{T_1} \prec_H \fevt{T_k}) \land (\levt{T_k} \prec_H \fevt{T_1})) \Rightarrow \\(\levt{T_1} \prec_H \fevt{T_1})
	%lastMethod{(T_1)} \prec_H \npbegin{(T_k)}
	\end{split}
	\end{equation}
	
	This contradict our assumption as \Eqref{serialIsAcycliceq} is impossible, from definition of program order of a transaction. Thus, cycle is not possible in serial history.
	
\end{proof}

\begin{observation}
	\label{obs:topsort}
	$H_2$ is an history generated by applying topological sort on $CG(H_1)$. \end{observation}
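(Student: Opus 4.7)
\textbf{Proof Proposal for Observation \ref{obs:topsort}.} The statement is really a well-definedness claim: given the history $H_1$ from Observation \ref{obs:defH}, the topological sort on $CG(H_1)$ must actually exist, and the resulting ordering must correspond to a genuine history $H_2$ whose events are a permutation of those of $H_1$. My plan is therefore to (a) verify that $CG(H_1)$ admits a topological ordering, and (b) describe how this ordering is turned into a history. I will not attempt to prove equivalence or legality of $H_2$ here — those will be the content of subsequent lemmas that use this observation; the present observation is purely the existence of $H_2$.

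First, I would invoke the results already proved at the method level in Section \ref{sec:opnlevel}, in particular Lemma \ref{lem:provingLegalityysec} (together with Lemma \ref{lem:seq-spec-lsl} and Corollary \ref{cor:nodeinabs}), to conclude that $H_1$ is both sequential and legal. Next, I would argue that $CG(H_1)$ is acyclic. The edges of $CG(H_1)$ are of two kinds: real-time edges (induced by $\prec^{\tr}_{H_1}$) and conflict edges (the \emph{tryC-tryC}, \emph{tryC-rv}, \emph{rv-tryC} edges of Section \ref{sec:conflicts}). Each such edge $(T_i, T_j)$ corresponds to an ordering between unlock/LP events of $T_i$ and lock/LP events of $T_j$ on a shared node, as formalized by Observations \ref{obs:lockacq}–\ref{obs:LPorder}. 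Consequently, every edge of $CG(H_1)$ respects the real-time order $<_{H_1}$ of its endpoints' linearization points. A cycle in $CG(H_1)$ would therefore produce a cyclic chain of $<_{H_1}$ orderings among LP events, contradicting the fact that $<_{H_1}$ is a total order on events. (This is essentially the argument behind the direction $H_1 \text{ co-opaque} \Rightarrow CG(H_1) \text{ acyclic}$ of Theorem \ref{thm:coop-graph}, but used here as a structural fact about the LP-based history.)

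Since $CG(H_1)$ is a finite DAG, a topological sort $T_{\pi(1)}, T_{\pi(2)}, \ldots, T_{\pi(n)}$ of its vertices exists by standard graph-theoretic argument (e.g.\ Kahn's algorithm). I would then \emph{define} $H_2$ as the serial history obtained by concatenating the event sequences of the transactions in this topological order: for each $k$, place all events $\evts{T_{\pi(k)}}$ of transaction $T_{\pi(k)}$ consecutively, preserving their internal $<_{T_{\pi(k)}}$ order, with $T_{\pi(k)}$'s events appearing entirely before those of $T_{\pi(k+1)}$. Since each $T_{\pi(k)} \in \txns{H_1}$ is already terminated in $H_1$ (adding $\trya_k(\mathcal{A})$ for any live transaction as in the definition of $\overline{H_1}$), $H_2$ is serial in the sense of Section \ref{sec:model}, and $\evts{H_2} = \evts{\overline{H_1}}$.

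The main (minor) obstacle I anticipate is the acyclicity step: even though the earlier observations/lemmas about LP-ordering give it quickly, one must be careful that all three conflict types (tryC--tryC, tryC--rv, rv--tryC) really do respect the LP order — the \emph{rv-tryC} case in particular requires that a successful \rvmt{} method's LP lies strictly before the conflicting $\nptc$'s LP, which follows from the timestamp validation at Lines \ref{lin:tov6} and \ref{lin:tov9} of Algorithm \ref{algo:tovalidation}. Beyond that, the observation is essentially a bookkeeping step that sets up $H_2$ so that the forthcoming transactional-level argument can compare it against $H_1$ (showing equivalence, legality, and preservation of $\prec^{\tr}$ and $\prec^{CO}$) to conclude co-opacity.
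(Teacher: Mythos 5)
Your reading of the observation is right as far as it goes: the paper offers no proof here at all, treating the statement purely as the \emph{definition} of $H_2$, so the only substantive content you could add is well-definedness, i.e.\ acyclicity of $CG(H_1)$. The paper does discharge that burden, but later and by a different route: \lemref{conflictTO} shows that every edge $(T_i,T_j)$ of the conflict graph forces $TS(T_i) < TS(T_j)$ (because \nptov{} aborts any method that encounters a node stamped by a higher-timestamped transaction), \lemref{increasingTO} chains this inequality along paths, and \thmref{CG} concludes that a cycle would yield $TS(T_1) < TS(T_1)$. The key point is that the timestamp is a \emph{per-transaction} quantity, so the inequality composes transitively around any path.

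Your acyclicity argument does not compose this way, and this is a genuine gap. You claim that since each edge of $CG(H_1)$ respects the real-time order of the LP events of the two conflicting methods, a cycle would produce a cyclic chain of $<_{H_1}$ orderings. That inference fails because consecutive edges of a cycle enter and leave an intermediate transaction through \emph{different} methods, and those two methods' LPs need not be ordered in the direction transitivity requires. Concretely, take $T_1$ executing $rvm_1(k_1)$ followed by $\tryc_1$ updating $k_2$, and $T_2$ executing $rvm_2(k_2)$ followed by $\tryc_2$ updating $k_1$, interleaved as $rvm_1(k_1)\, rvm_2(k_2)\, \tryc_1\, \tryc_2$: the two \emph{rv-tryC} edges $T_1 \to T_2$ (on $k_1$) and $T_2 \to T_1$ (on $k_2$) each respect the LP order of their own endpoints, yet the graph is cyclic. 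This is exactly the standard reason conflict graphs of arbitrary histories can be cyclic even though every individual conflict respects real time. What actually excludes such histories from \otm{} is the timestamp validation you mention only in passing at the end; it, not LP precedence, is the load-bearing ingredient, and the argument must be routed through $TS(\cdot)$ (or some other per-transaction invariant) as the paper does. The remainder of your construction --- completing live transactions as in $\overline{H_1}$ and concatenating the transactions' event sequences in topological order to obtain a serial $H_2$ with $\evts{H_2}=\evts{\overline{H_1}}$ --- is fine and matches what the paper implicitly intends.
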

\begin{observation}
	\label{obs:TSorder}
	Topological sort maintains conflict-order and real-time order of the original history $H_1$. 
\end{observation}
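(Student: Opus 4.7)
The plan is to unwind the definitions: show that conflict-order and real-time order are both encoded as edges of $CG(H_1)$ by construction, and then invoke the defining property of topological sort (an edge $(u,v)$ forces $u$ to precede $v$ in the linear extension). First I would note that the observation is only meaningful when $CG(H_1)$ is acyclic, since otherwise no topological sort exists; I would state this as a tacit hypothesis (or reference \lemref{serialIsAcyclic} together with the context in which the observation is used, namely when $H_1$ has already been shown acyclic).

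Next I would separately address the two halves of the claim. For the real-time half: by the definition of $CG(H_1)$ given earlier in \secref{conflicts}, whenever $T_i \prec_{H_1}^{\tr} T_j$ there is a real-time edge $(T_i, T_j) \in E(CG(H_1))$. A topological sort on a DAG produces a total order $<_{H_2}$ such that for every edge $(u,v)$, $u <_{H_2} v$. Hence $T_i \prec_{H_2}^{\tr} T_j$, so $\prec_{H_1}^{\tr} \subseteq \prec_{H_2}^{\tr}$. For the conflict half: by the definition of the conflict relation, whenever $(T_i, T_j) \in \conf{H_1}$ there is a conflict edge $(T_i, T_j) \in E(CG(H_1))$; applying the same topological-sort property yields $T_i \prec_{H_2}^{CO} T_j$, so $\prec_{H_1}^{CO} \subseteq \prec_{H_2}^{CO}$.

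Finally I would briefly justify that $H_2$ is well-defined as a sequential history on $\txns{H_1}$: the vertex set of $CG(H_1)$ is exactly $\txns{H_1}$, so every transaction appears exactly once in the topological order, and expanding each transaction's event sequence in that order produces a serial history in the sense of \secref{model}. Since all methods of a transaction are ordered by $<_{T_i}$, this expansion is unambiguous.

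The argument is essentially a bookkeeping exercise, and I do not expect any deep obstacle; the only subtlety is the implicit acyclicity hypothesis. If the observation is meant to apply to arbitrary $H_1$, I would either strengthen the hypothesis to ``$CG(H_1)$ is acyclic'' or point out that the enclosing usage (e.g.\ in the proof of \thmref{coop-graph}) only invokes this observation once acyclicity is already established. Either way, once acyclicity is on the table, the rest is immediate from the two defining properties: (i) the edge set of $CG(H_1)$ contains both conflict and real-time edges, and (ii) topological sort respects every edge of its input DAG.
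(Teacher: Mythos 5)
Your proposal is correct and takes the same route the paper implicitly relies on: the paper states this as an unproved observation, and the justification is exactly what you give --- conflict order and real-time order are encoded as edges of $CG(H_1)$ by construction, and a topological sort of a DAG places the tail of every edge before its head. Your caveat about the tacit acyclicity hypothesis is also consistent with the paper, which only invokes the observation under the explicit assumption that $CG(H_1)$ is acyclic (in \lemref{propConflictPreserv}).
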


\begin{definition}{\conf{H}}
	\label{def:confh}
	is a set of ordered pair ($T_i$, $T_j$), such that their exists conflicting methods $m_i$, $m_j$ in $T_i$ \& $T_j$ respectively, such that $m_i$ $\prec_{H}^{\mr}$ $m_j$. And it is represented as  $\prec$$^{CO}_{H}$.
\end{definition}

\begin{lemma}
	\label{lem:propConflictPreserv}
	$H_1$ is legal \& $CG(H_1)$ is acyclic. then,
	\begin{enumerate}[label=\ref{lem:propConflictPreserv}.\arabic*]
		\item $H_1$ is equivalent to $H_2$ $\Rightarrow$ ($\met{H1} = \met{H2})$. \label{lem:propConflictPreserv1}
		\item $\prec$$^{CO}_{H1}$ $\subseteq$ $\prec$$^{CO}_{H2}$. i.e. $H_1$ preserves the conflicts of $H_2$\label{lem:propConflictPreserv2}
	\end{enumerate}
	%H1 is a legal history and H2 is equivalent history obtained from topological sort of CG(H1) then $\prec$$^{CO}_{H1}$ $\subseteq$ $\prec$$^{CO}_{H2}$.
\end{lemma}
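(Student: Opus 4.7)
The plan is to exploit the fact that $H_2$ is a topological ordering of the vertex-labeled graph $CG(H_1)$, whose vertices are exactly the transactions of $H_1$ and whose edges record (a) every pair in $\conf{H_1}$ and (b) every real-time precedence among transactions of $H_1$. I will dispatch the two parts separately, in the order given.

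\textbf{Part (1): Equivalence.} A topological sort of $CG(H_1)$ produces a permutation of its vertex set; it neither adds nor deletes vertices. Since the vertex set of $CG(H_1)$ is $\txns{H_1}$, we have $\txns{H_2} = \txns{H_1}$. Each transaction $T_i$ carries a fixed set of methods $\met{T_i}$ (determined by $\evts{T_i}$ and the tree structure defined in the model section), and since $H_2$ is obtained by concatenating the same transactions in a new serial order, $\met{H_2} = \bigcup_{T_i \in \txns{H_2}} \met{T_i} = \bigcup_{T_i \in \txns{H_1}} \met{T_i} = \met{H_1}$. That gives (1).

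\textbf{Part (2): Conflict preservation.} Suppose $(T_i,T_j) \in \prec^{CO}_{H_1}$; by \defref{confh} there exist conflicting methods $m_i \in T_i$ and $m_j \in T_j$ with $m_i \prec_{H_1}^{\mr} m_j$. By the construction of $CG(H_1)$ in \Secref{conflicts}, this inserts the directed conflict edge $(T_i, T_j)$ into $CG(H_1)$. Since $H_2$ is a topological sort of $CG(H_1)$ (\obsref{topsort}) and $CG(H_1)$ is acyclic by hypothesis, every directed edge of $CG(H_1)$ is respected, so $T_i$ precedes $T_j$ in the serial order underlying $H_2$. Because $H_2$ is serial, $\levt{T_i} <_{H_2} \fevt{T_j}$, and hence $m_i \prec_{H_2}^{\mr} m_j$. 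Since $m_i, m_j$ are still conflicting (they are the same methods of the same transactions by Part (1)), we conclude $(T_i,T_j) \in \prec^{CO}_{H_2}$.

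\textbf{Main obstacle.} The two parts are essentially bookkeeping once one accepts \obsref{topsort}, so the real content sits in that observation, i.e.\ in verifying that applying a topological sort to $CG(H_1)$ actually yields a well-formed serial history $H_2$ rather than a mere permutation of transaction identifiers. The subtle point to check is that concatenating the events of each transaction in topological order gives a sequential (and in fact serial) history whose completion $\overline{H_2}$ remains legal and whose real-time order subsumes $\prec_{H_1}^{\tr}$; legality is inherited because $H_1$ was legal and no conflicting pair has been reordered (by Part (2)), and the real-time edges of $CG(H_1)$ ensure that $\prec_{H_1}^{\tr} \subseteq \prec_{H_2}^{\tr}$. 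I would formalize this as a short side-lemma before stating \lemref{propConflictPreserv} and then cite it at the start of the argument above.
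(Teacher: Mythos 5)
Your proposal is correct and follows essentially the same route as the paper: both arguments reduce to the fact that $H_2$ is a topological sort of $CG(H_1)$ (\obsref{topsort}, \obsref{TSorder}), so every conflict edge of $H_1$ is respected in the serial order of $H_2$. You merely unpack what the paper delegates to \obsref{TSorder} (why respecting a directed edge yields $m_i \prec_{H_2}^{\mr} m_j$ in a serial history) and supply the bookkeeping for part (1), which the paper's proof omits entirely; neither addition changes the approach.
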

\begin{proof}
	\lemref{propConflictPreserv2}\\
	We should show that $\forall$( $T_i$, $T_j$ ), such that ( ( $T_i$, $T_j$ ) $\in$ $\prec$$^{CO}_{H1}$  $\Rightarrow$ ( ( $T_i$, $T_j$ ) $\in$ $\prec$$^{CO}_{H2}$ ).\\
	\\
	Lets assume that their exists a conflict $(T_i, T_j)$ in $\prec$$^{CO}_{H1}$ but not in $\prec$$^{CO}_{H2}$. But, from \obsref{topsort} \& \obsref{TSorder} we know that $(T_i, T_j)$ $\in$ $\prec$$^{CO}_{H2}$. Thus, $\prec$$^{CO}_{H1}$ $\subseteq$ $\prec$$^{CO}_{H2}$.\\
	\\
	The relation is of improper subset because topological sort may introduce new real-time orders in $H_2$ which might not be present in $H_1$.
\end{proof}

\begin{lemma}
	\label{lem:genericLegalityLemma}
	Let $H_1$ and $H_2$ be \emph{equivalent} histories such that $\prec$$^{CO}_{H_1}$ $\subseteq$ $\prec$$^{CO}_{H_2}$. Then, $H_1$ is legal $\Longrightarrow$ $H_2$ is legal.
\end{lemma}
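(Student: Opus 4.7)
The plan is to verify legality of $H_2$ by picking an arbitrary $\rvmt{}$ in $H_2$ and showing that exactly the same legality rule (LR1, LR2 or LR3) that applied to it in $H_1$ still applies in $H_2$. Since $H_1$ and $H_2$ are equivalent we have $\evts{H_1} = \evts{H_2}$ and hence $\met{H_1} = \met{H_2}$ and $\txns{H_1} = \txns{H_2}$; in particular every $\rvmt{}$ of $H_2$ is also an $\rvmt{}$ of $H_1$ with the same return value and the same enclosing transaction, and the methods of each single transaction sit in the same internal order in both histories. So fix $\rvm_{ij}(ht, k, v)$ in $H_2$; legality of $H_1$ gives a witness structure which I would transport to $H_2$.

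\emph{Case LR1.} If $\rvm_{ij}$ is not the first method of $T_i$ on $\langle ht, k\rangle$ in $H_1$, then it is not the first such method in $H_2$ either, because the program order $<_{T_i}$ is determined by the events of $T_i$ alone, which are identical in the two histories. The previous method $m_{ix}$ on $\langle ht, k\rangle$ inside $T_i$ is therefore the same in both histories, and LR1 only constrains $v$ in terms of $m_{ix}$'s return value, so it survives verbatim.

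\emph{Case LR2 (and, symmetrically, LR3).} Here I need a committed $\tins_{pq}(ht,k,v)$ such that $\tryc_p \prec^{\mr}_{H_2} \rvm_{ij}$ and no other update on $\langle ht, k\rangle$ commits in between. The transaction $T_p$ furnished by legality of $H_1$ is the natural candidate; since $T_p$ updates $k$ and $\rvm_{ij}$ is first on $\langle ht, k\rangle$ in $T_i$, the pair $(T_p, T_i)$ is a \texttt{tryC-rv} conflict in $H_1$, so $(T_p, T_i)\in\prec^{CO}_{H_1}\subseteq\prec^{CO}_{H_2}$, giving $\tryc_p \prec^{\mr}_{H_2} \rvm_{ij}$. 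The nontrivial part is the ``no other update in between'' clause in $H_2$, and this is where I expect the bulk of the argument to live.

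\emph{The hard part.} Suppose for contradiction that some $up_{xy}$ of a committed transaction $T_x$ on $\langle ht, k\rangle$ satisfies $\tryc_p \prec^{\mr}_{H_2} \tryc_x \prec^{\mr}_{H_2} \rvm_{ij}$. This gives two conflicts $(T_p,T_x)$ and $(T_x,T_i)$ in $\prec^{CO}_{H_2}$, but a priori not in $\prec^{CO}_{H_1}$, so the subset hypothesis does not directly contradict them; I will have to exploit that conflicts in a sequential history are \emph{total} on any pair of transactions that both touch the same key, so in $H_1$ the transaction $T_x$ must have committed somewhere with respect to $T_p$ and $\rvm_{ij}$. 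A case split on the position of $\tryc_x$ in $H_1$ will then be pushed back to $H_2$ via $\prec^{CO}_{H_1}\subseteq\prec^{CO}_{H_2}$: (i) if $\tryc_x \prec^{\mr}_{H_1}\tryc_p$ then $(T_x,T_p)\in\prec^{CO}_{H_1}\subseteq\prec^{CO}_{H_2}$, clashing with $(T_p,T_x)\in\prec^{CO}_{H_2}$ since $H_2$ is sequential and hence totally orders any two conflicting methods; (ii) if $\tryc_p \prec^{\mr}_{H_1}\tryc_x\prec^{\mr}_{H_1}\rvm_{ij}$, this directly violates LR2 for $\rvm_{ij}$ in $H_1$; (iii) if $\rvm_{ij}\prec^{\mr}_{H_1}\tryc_x$ then $(T_i,T_x)\in\prec^{CO}_{H_1}\subseteq\prec^{CO}_{H_2}$, clashing with $(T_x,T_i)\in\prec^{CO}_{H_2}$ for the same sequentiality reason. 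All three cases yield a contradiction, so LR2 transfers to $H_2$. LR3 is handled by the identical argument with $\tins_{pq}$ replaced by $\tdel_{pq}$ and the initial transaction $T_0$ playing its usual role when no earlier delete exists. The anticipated obstacle is precisely this three-way case analysis and verifying carefully that the ``sequentiality gives total order on conflicts'' step is formally justified from the conflict-graph definitions of Section~\ref{sec:conflicts}.
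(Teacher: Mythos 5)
Your proposal is correct and follows essentially the same route as the paper's proof: both transfer the ordering $\tryc_p \prec_{H_2}^{\mr} \rvm_{ij}$ via $\prec^{CO}_{H_1}\subseteq\prec^{CO}_{H_2}$ and then rule out an intervening committed update on $\langle ht,k\rangle$ by contradiction, combining the legality of $H_1$ with the totality of the method order in a sequential history. Your explicit three-way case split on the position of $\tryc_x$ in $H_1$ is exactly the paper's two displayed cases plus the middle case it dismisses implicitly via $H_1$'s legality, and your separate (trivial) treatment of LR1 is the part the paper leaves unstated.
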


\begin{proof}
	We know $H_1$ is legal, \emph{wlog} let us say $(\rv_{j}(ht, k, v)$ $\in$ \met{$H_1$}$)$, such that $(up_{p}(ht, k, v_p) = \lupdt{\rv_{j}(ht, k, v)}{H_1})$ where,
	$(v = v_p \neq nill)$, if $(up_{p}(ht, k, v_p) = \tins_{p}(ht, k, v_p))$ or\\ 
	$(v=nill)$, if $(up_{p}(ht, k, v_p) = \tdel_{p}(ht, k, v_p))$.
	From the \emph{conflict-notion} \conf{$H_1$} has,
	\begin{equation}
	\label{eq:conflict1}
	\up_{p}(ht, k, v_p) \prec_{H_1}^{MR} \rv_{j}(ht, k, v)
	\end{equation}
	Let us assume $H_2$ is not \legal. Since, $H_1$ is equivalent to $H_2$ from \lemref{propConflictPreserv1} such that ($\rv_{j}(ht, k, v)$ $\in$ \met{$H_2$}). Since $H_2$ is not \legal{}, there exist a ($\up_{r}(ht, k, v_r)$ $\in$ \met{$H_2$}) such that ($up_{r}(ht, k, v_r) = \lupdt{\rv_{j}(ht, k, v)}{H_2}$). So \conf{$H_2$} has,
	\begin{equation}
	\label{eq:conflict2}
	\up_{r}(ht, k, v_r) \prec_{H_2}^{MR} \rv_{j}(ht, k, v)
	\end{equation}
	We know, ($\prec$$^{CO}_{H_1}$ $\subseteq$ $\prec$$^{CO}_{H_2}$) so,
	\begin{equation}
	\label{eq:conflict3}
	\up_{p}(ht, k, v_p) \prec_{H_2}^{MR} \rv_{j}(ht, k, v)
	\end{equation}
	From \lemref{propConflictPreserv1} ($\up_{r}(ht, k, v_r)$ $\in$ \met{$H_1$}). Since $H_1$ is \legal{} $\up_{r}(ht, k, v_r)$ can occur only in one of following \emph{conflicts},
	\begin{equation}
	\label{eq:conflict4}
	\up_{r}(ht, k, v_r) \prec_{H_1}^{MR} up_{p}(ht, k, v_p)
	\end{equation}
	\begin{center}
		or
	\end{center}
	\begin{equation}
	\label{eq:conflict5}
	\rv_{j}(ht, k, v) \prec_{H_1}^{MR} \up_{r}(ht, k, v_r) 
	\end{equation}
	In $H_1$ \Eqref{conflict5} is not possible, because if (\Eqref{conflict5} $\in$ \conf{$H_1$}) implies (\Eqref{conflict5} $\in$ \conf{$H_2$}) from ($\prec$$^{CO}_{H_1}$ $\subseteq$ $\prec$$^{CO}_{H_2}$) and in $H_2$ \Eqref{conflict2} and \Eqref{conflict5} cannot occur together. Thus only possible way $\up_{r}(ht, k, v_r)$ can occur in $H_1$ is via \Eqref{conflict4}. From \Eqref{conflict4} we have,
	\begin{equation}
	\label{eq:conflict6}
	\up_{r}(ht, k, v_r) \prec_{H_2}^{MR} up_{p}(ht, k, v_p)
	\end{equation}
	From \Eqref{conflict2}, \Eqref{conflict3} and \Eqref{conflict6} we have,
	\begin{center}
		$\up_{r}(ht, k, v_r) \prec_{H_2}^{MR} up_{p}(ht, k, v_p) \prec_{H_2}^{MR} \rv_{j}(ht, k, v)$
	\end{center}
	This contradicts that $H_2$ is not legal. Thus if $H_1$ is legal $\longrightarrow$ $H_2$ is legal.
\end{proof}
%	\newpage

\begin{observation}
	\label{obs:txTS}
	Each transaction is assigned a unique time-stamp in \npbegin{()} method using a shared counter which always increases atomically.
\end{observation}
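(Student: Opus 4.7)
The plan is to verify the two assertions folded into \obsref{txTS}: (i) the counter \emph{sh\_cntr} is modified only by an atomic increment, and (ii) the timestamps handed out are pairwise distinct. First I would pin down the single code site that touches \emph{sh\_cntr}. A syntactic scan of the pseudocode shows that \emph{sh\_cntr} is initialized once by the OSTM to $0$ and is read-modified only at \Lineref{begin3} of \emph{STM\_begin()} via $get\&inc(sh\_cntr\uparrow)$; no other algorithm (\emph{STM\_insert}, \emph{STM\_delete}, \emph{STM\_lookup}, \emph{STM\_tryC}, \emph{rblSearch}, \emph{rblIns}, \emph{rblDel}, or any helper in \algoref{validation}--\algoref{releasepreds&currs}) references it. Hence every event that can change the counter is a $get\&inc$, and each such event is atomic by the standard semantics of the primitive (implementable directly by hardware fetch-and-add or a CAS loop).

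Next I would prove uniqueness by a short contradiction argument. Suppose two distinct transactions $T_i$ and $T_j$ are assigned the same $t\_id = v$. Then the corresponding $get\&inc$ events $g_i$ and $g_j$ both returned $v$. By atomicity of $get\&inc$, the two events are totally ordered in the execution; wlog $g_i$ precedes $g_j$. Immediately after $g_i$ completes, \emph{sh\_cntr} holds $v+1$, and by the previous paragraph no event between $g_i$ and $g_j$ can decrease it (only $get\&inc$ writes to it, and $get\&inc$ strictly increments). Therefore $g_j$ reads a value $\ge v+1$ and returns $\ge v+1 \ne v$, a contradiction. This also yields monotonicity: the sequence of returned values along the linearization of $get\&inc$ events is $0, 1, 2, \ldots$, strictly increasing.

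Since the argument reduces to standard properties of an atomic fetch-and-add, there is essentially no hard step; the one thing to be careful about is ruling out any hidden modification of \emph{sh\_cntr}. That check is mechanical but unavoidable, because the uniqueness claim would fail if, for instance, an abort handler reset the counter or a retry path re-invoked $get\&inc$ within the same transaction. I would conclude the proof by noting the corollary useful for later sections: the timestamp assignment defines a total order on $\txns{H}$ that is consistent with the real-time order of \emph{STM\_begin()} invocations, so whenever a transaction $T_j$ begins strictly after $T_i$'s begin returns, $TS(T_j) > TS(T_i)$, which is what the downstream \nptov{} logic (\algoref{tovalidation}) relies on.
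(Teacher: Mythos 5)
Your proposal is correct and follows the same reasoning the paper relies on: the observation is justified purely by code inspection of \emph{STM\_begin()}, where the only access to \emph{sh\_cntr} is the atomic $get\&inc$ at \Lineref{begin3}, so uniqueness and monotonic increase follow from the standard semantics of fetch-and-increment (the paper states this as an observation without further proof). Your closing remark about consistency with the real-time order of \emph{STM\_begin()} invocations is essentially \lemref{increasingTS}, which the paper proves separately from this observation.
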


\begin{observation}
	\label{obs:opTS}
	Each successful method of a transaction is assigned the time-stamp of its own transaction.
\end{observation}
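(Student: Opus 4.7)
\textbf{Proof Proposal for Observation~\ref{obs:opTS}.}

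The plan is to establish this observation by a direct code-inspection argument, since it is a statement about a concrete invariant of how the pseudocode writes into the \emph{max\_ts} field of shared nodes. First I would fix notation: let $T_i$ be an arbitrary transaction, with unique identifier $t\_id$ assigned at \Lineref{begin3} of \npbegin{} (\algoref{begin}) via the atomic $get\&inc(sh\_cntr)$, and let $TS(T_i)$ denote this identifier. By \obsref{txTS}, $TS(T_i)$ is uniquely fixed once \npbegin{} returns, and is stored inside the $txlog$ of $T_i$. Every subsequent \mth{} of $T_i$ receives $t\_id$ as its first input parameter and only reads it from the $txlog$ of $T_i$.

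Next I would enumerate the successful \mth{s} and pinpoint the exact lines at which the $max\_ts$ field of a shared node is updated. For a successful \rvmt{} (\npluk{} or \npdel{} in the \rvp{} path that invokes \cldd{}), the relevant writes are \Lineref{delete27}, \Lineref{delete31}, and \Lineref{delete36} of \algoref{commonLu&Del}; each of them writes $TS(t\_id)$ into the appropriate $max\_ts.lookup$ field. For a successful \upmt{} executed through \nptc{} (\algoref{trycommit}), the writes occur at \Lineref{tryc23}, \Lineref{tryc27}, \Lineref{tryc31} (inserts) and \Lineref{tryc37} (delete); again each writes $TS(t\_id)$. In every one of these lines the argument supplied to \texttt{write} is literally $TS(t\_id)$, where $t\_id$ is the $t\_id$ parameter of the enclosing \mth{} call, which by construction equals the $t\_id$ of $T_i$.

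I would then argue that no other lines in the pseudocode write into the $max\_ts$ fields. This is a brief but essential check: scanning \algoref{begin}, \algoref{lookup}, \algoref{commonLu&Del}, \algoref{delete}, \algoref{insert}, \algoref{trycommit}, \algoref{lslsearch}, \algoref{lslins}, \algoref{lsldelete}, \algoref{validation}, \algoref{interferenceValidation}, \algoref{tovalidation}, \algoref{povalidation}, and the utility algorithms, the only assignments to $max\_ts.\{insert,delete,lookup\}$ are the ones enumerated above. Combined with the fact (from \obsref{txTS}) that $t\_id$ is immutable for the lifetime of $T_i$, this closes the argument: whenever a \mth{} of $T_i$ terminates successfully and causes an update to a shared $max\_ts$ field, the value written equals $TS(T_i)$.

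The only mild obstacle I anticipate is making precise what ``assigned the time-stamp of its own transaction'' means for \mth{s} that do not touch shared memory at all, namely \npins{} in \rvp{}, or a subsequent \rvmt{}/\npdel{} that finds an entry in the local $txlog$ and returns without calling \cldd{}. For these \mth{s} there is no update to any shared $max\_ts$, so the statement is vacuously true in the sense that any logical $t\_id$ attributed to the \mth{} comes from the $txlog$ of $T_i$ and therefore equals $TS(T_i)$. I would state this explicitly as a small case split so that the observation holds uniformly over all \mth{s}, rather than only over those that access the shared \lsl{}.
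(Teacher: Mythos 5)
Your proposal is correct: the paper states this as a bare observation with no accompanying proof, treating it as evident from code inspection, and your argument simply makes that inspection explicit. Your enumeration of the $max\_ts$-writing sites (\Lineref{delete27}, \Lineref{delete31}, \Lineref{delete36} of \algoref{commonLu&Del} and \Lineref{tryc23}, \Lineref{tryc27}, \Lineref{tryc31}, \Lineref{tryc37} of \algoref{trycommit}) coincides exactly with the list the paper itself relies on in the proof of \lemref{maxtsbehav}, so the approach is the same one the authors implicitly take; the added case split for \mth{s} that never touch shared memory is a harmless and reasonable clarification.
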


\begin{lemma}
	\label{lem:maxtsbehav}
	Consider a global state $S$ which has a node $n$, initialized with $max\_ts$. Then in any future state $S'$ the $max\_ts$ of $n$ should be greater then or equal to $S$. Formally, $\langle \forall S, S': (n \in S.Abs) \land (S \sqsubset S') \Rightarrow (n \in S'.Abs) \land (S.n.max\_ts \leq S'.n.max\_ts) \rangle$. 
\end{lemma}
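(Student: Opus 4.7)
\textbf{Proof plan for Lemma \ref{lem:maxtsbehav}.} The plan is to split the statement into two independent claims: (i) persistence, namely $n \in S'.Abs$ whenever $n \in S.Abs$ and $S \sqsubset S'$; and (ii) monotonicity of the three timestamp components $max\_ts.insert$, $max\_ts.delete$, $max\_ts.lookup$. Claim (i) follows immediately by reuse: Observation~\ref{obs:futurestate} already shows that a node reachable from $Head$ via $\rn$ in $S$ is still so reachable in any future state, and Definition~\ref{def:Abs} builds $Abs$ from exactly these nodes, so $n\in S.Abs \Rightarrow n\in S'.Abs$. Combined with Observation~\ref{obs:node-forever} and Observation~\ref{obs:obs-nodekey} (no deletion of memory, keys immutable), the node persists together with its identity.

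For claim (ii), I would proceed by induction on the number $k$ of shared-memory write events to the field $n.max\_ts.*$ occurring strictly between $S$ and $S'$. The base case $k=0$ is immediate. For the inductive step, I would enumerate, by code inspection, every line that writes to some component of $max\_ts$; these are exactly the lines in \cldd{} (Lines~\ref{lin:delete27}, \ref{lin:delete31}, \ref{lin:delete36} of Algorithm~\ref{algo:commonLu&Del}) and in \nptc{} (Lines~\ref{lin:tryc23}, \ref{lin:tryc27}, \ref{lin:tryc31}, \ref{lin:tryc37} of Algorithm~\ref{algo:trycommit}). Every such write has the form $\texttt{write}(n.max\_ts.*, TS(t\_id))$ and is executed on the path leading out of a successful invocation of \nplsls{}; in that path \nptov{} (Algorithm~\ref{algo:tovalidation}) has returned a non-$\mathcal{A}$ status. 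Inspecting Lines~\ref{lin:tov6}--\ref{lin:tov11} of \nptov{}, a non-abort return guarantees $TS(t\_id) \ge n.max\_ts.*$ for the relevant components. Hence the new value written is no smaller than the old one, giving the inductive step.

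The one subtlety, and the main obstacle, is ruling out the race where between the \nptov{} check and the corresponding $\texttt{write}$ some other concurrent transaction bumps $n.max\_ts.*$ to a strictly larger value and then our transaction overwrites it with a smaller $TS(t\_id)$. Here I would appeal to the locking discipline enforced by \nplsls{}: by Observation~\ref{obs:lslSearch2} the nodes in $preds$ and $currs$ (which include the target node $n$ on whose $max\_ts$ we write) are held in a locked state by the calling thread throughout the window covering both the \nptov{} check and the subsequent $\texttt{write}$; the locks are released only later by \emph{releasePred\&CurrLocks} or \emph{releaseOrderedLocks}. Combined with Observation~\ref{obs:locksimu} (mutual exclusion on a node) and Observation~\ref{obs:LPorder} (linearization-point ordering follows lock ordering), no conflicting update to $n.max\_ts$ can interleave between the validation and the write. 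Thus the invariant $TS(t\_id) \ge n.max\_ts.*$ established by \nptov{} is still valid at the moment of the write, completing the inductive step and hence the lemma.
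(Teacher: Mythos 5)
Your proof follows essentially the same route as the paper's: an induction over the events that write $max\_ts$ (the very lines of \cldd{} and \nptc{} the paper enumerates), with the inductive step secured by the timestamp comparison performed in \nptov{} before any such write. You are in fact slightly more thorough than the paper, which neither separates out the persistence claim ($n \in S'.Abs$, which you discharge via Observation~\ref{obs:futurestate}) nor explicitly closes the check-then-write race that you handle with the locking discipline of \nplsls{}.
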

\begin{proof}
	We prove by Induction on events that change the $max\_ts$ field of a node associated with a key, which are Line \ref{lin:delete27}, \ref{lin:delete31} \& \ref{lin:delete36} of \cldd{} method of \algoref{commonLu&Del} and Line \ref{lin:tryc23}, \ref{lin:tryc27}, \ref{lin:tryc31} \& \ref{lin:tryc37} of \nptc{} method of \algoref{trycommit}.\\
	\textbf{\texttt{Base condition:}} Initially, before the first event that changes the $max\_ts$ field of a node associated with a key, we know the underlying \lsl{} has immutable $S.head$ and $S.tail$ nodes with $(S.head.\bn = S.tail)$ and $(S.head.\rn = S.tail)$. 
	
	Lets assume, a node corresponding to the key is already the part of underlying $\rn$ which is having a time-stamp of $m_1$ as $T_1$ from \obsref{opTS} . Let say $m_{2}$ of $T_{2}$ wants to perform on that node, by observing the code at line 6 of $\nptov$ method of \algoref{tovalidation} , if TS($T_{2}$) $<$ curr.max\_ts.$m_1$(), $T_{2}$ will return abort, else to  succeed, TS($T_{2}$) $>$ curr.max\_ts.$m_1$() should evaluate to true. Thus, for successful completion of $m_{2}$ of $T_{2}$, TS($T_{2}$) should be greater then the TS($T_{1}$). Hence, node corresponding to the key, $max\_ts$ field should be updated in increasing order of TS values.\\
	%The relation between their keys is $(S.head.key < S.tail.key)$ $\land$ $(head, tail) \in S.nodes$ and $max\_ts$ field for $head$ and $tail$ nodes are 0.\\ 
	\textbf{\texttt{Induction Hypothesis:}} Say, upto k events that change the $max\_ts$ field of a node associated with a key always in increasing TS value.\\
	\textbf{\texttt{Induction Step:}} 
	So, as seen from the code, the $(k+1)^{th}$ event which can change the $max\_ts$ field be only one of the following:
	\begin{enumerate}
		\item \textbf{\texttt{Line \ref{lin:delete27}, \ref{lin:delete31} \& \ref{lin:delete36} of \cldd{} method of \algoref{commonLu&Del} :}} By observing the code, line \ref{lin:delete18} of $\cldd{}$ method of \algoref{commonLu&Del} first invokes \nplsls{} method of \algoref{lslsearch} for finding the node corresponding to the key. Inside the $\nplsls{}$ method of \algoref{lslsearch} , it will do the $\nptov$ method of \algoref{tovalidation} , if $(curr.key = key)$.
		
		From induction hypothesis, node corresponding to the key is already the part of underlying $\rn$ which is having a time-stamp of $m_k$ of $T_k$ from \obsref{opTS}. Let say $m_{k+1}$ of $T_{k+1}$ wants to perform on that node, by observing the code at line 6 of $\nptov$ method of \algoref{tovalidation} , if TS($T_{k+1}$) $<$ curr.max\_ts.$m_k$(), $T_{k+1}$ will return abort, else to  succeed, TS($T_{k+1}$) $>$ curr.max\_ts.$m_k$() should evaluate to true. Thus, for successful completion of $m_{k+1}$ of $T_{k+1}$, TS($T_{k+1}$) should be greater then the TS($T_{k}$). Hence, node corresponding to the key, $max\_ts$ field should be updated in increasing order of TS values.
		
		%\item \textbf{\texttt{Line \ref{lin:lookup19}, \ref{lin:lookup23} \& \ref{lin:lookup28} of \npluk{} method of \algoref{lookup} :}} By observing the code, line \ref{lin:lookup13} of $\npluk{}$ method of \algoref{lookup} first invokes \nplsls{} method of \algoref{lslsearch} for finding the node corresponding to the key. Inside the $\nplsls{}$ method of \algoref{lslsearch} , it will do the $\nptov$ method of \algoref{tovalidation} , if $(curr.key = key)$.
		
		%From induction hypothesis, node corresponding to the key is already the part of underlying $\rn$ which is having a time-stamp of $m_k$ as $T_k$ from \obsref{opTS} . Let say $m_{k+1}$ of $T_{k+1}$ wants to perform on that node, by observing the code at line 6 of $\nptov$ method of \algoref{tovalidation} , if TS($T_{k+1}$) $<$ curr.max\_ts.$m_k$(), $T_{k+1}$ will return abort, else to  succeed, TS($T_{k+1}$) $>$ curr.max\_ts.$m_k$() should evaluate to true. Thus, for successful completion of $m_{k+1}$ of $T_{k+1}$, TS($T_{k+1}$) should be greater then the TS($T_{k}$). Hence, node corresponding to the key, $max\_ts$ field should be updated in increasing order of TS values.
		
		\item \textbf{\texttt{Line \ref{lin:tryc23}, \ref{lin:tryc27}, \ref{lin:tryc31} \& \ref{lin:tryc37} of \nptc{} method of \algoref{trycommit} :}} By observing the code, line \ref{lin:tryc7} of $\nptc{}$ method of \algoref{trycommit} first invokes \nplsls{} method of \algoref{lslsearch} for finding the node corresponding to the key. Inside the $\nplsls{}$ method of \algoref{lslsearch} , it will do the $\nptov$ method of \algoref{tovalidation} , if $(curr.key = key)$.
		
		From induction hypothesis, node corresponding to the key is already the part of underlying $\rn$ which is having a time-stamp of $m_k$ as $T_k$ from \obsref{opTS} . Let say $m_{k+1}$ of $T_{k+1}$ wants to perform on that node, by observing the code at line 6 of $\nptov$ method of \algoref{tovalidation} , if TS($T_{k+1}$) $<$ curr.max\_ts.$m_k$(), $T_{k+1}$ will return abort, else to  succeed, TS($T_{k+1}$) $>$ curr.max\_ts.$m_k$() should evaluate to true. Thus, for successful completion of $m_{k+1}$ of $T_{k+1}$, TS($T_{k+1}$) should be greater then the TS($T_{k}$). Hence, node corresponding to the key, $max\_ts$ field should be updated in increasing order of TS values.
	\end{enumerate}
	
	%As each \upmt{} before updating any field of a node executes \nplsls{}, where validation takes place, within which TOvalidation is executed. Focusing on toValidation at line 5 \& 8, the current \mth{} under execution is aborted if the node corresponding to the key has already been updated by a transaction with larger TS than its own. Thus each node must always have its TS field updated in increasing order.
\end{proof}

\begin{corollary}
	\label{cor:increasingmaxTS}
	Every successful methods update the $max\_ts$ field of a $node$ associated with a $key$ always in increasing TS values. 
\end{corollary}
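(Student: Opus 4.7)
\textbf{Proof Proposal for \corref{increasingmaxTS}.}

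The plan is to derive this corollary as a direct consequence of \lemref{maxtsbehav} combined with \obsref{opTS} and \obsref{txTS}, together with the validation logic in \nptov{} (\algoref{tovalidation}) that was examined in the proof of the lemma. The key observation is that the lemma has already done the heavy lifting: it established, by induction on the events that modify the $max\_ts$ field at \Lineref{delete27}, \Lineref{delete31}, \Lineref{delete36} of \cldd{} and \Lineref{tryc23}, \Lineref{tryc27}, \Lineref{tryc31}, \Lineref{tryc37} of \nptc{}, that any successful update to $n.max\_ts$ produces a value no smaller than the previous one. The corollary merely restates this in terms of the TS values that label the updating methods.

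First I would fix two successful methods $m_k$ of $T_k$ and $m_{k+1}$ of $T_{k+1}$ (the next successful method to update the same node $n$ associated with key $k$), letting $S$ and $S'$ be the post-states of their respective LP events (as pinned down by \defref{fLP}), so that $S \sqsubset S'$. From \obsref{opTS}, the writes performed at the update sites listed above stamp $S.n.max\_ts$ with $\textsc{TS}(T_k)$ and $S'.n.max\_ts$ with $\textsc{TS}(T_{k+1})$. By \lemref{maxtsbehav} applied between states $S$ and $S'$, we get $S.n.max\_ts \leq S'.n.max\_ts$, i.e., $\textsc{TS}(T_k) \leq \textsc{TS}(T_{k+1})$. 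Strictness then follows from \obsref{txTS}: transaction timestamps are handed out by an atomic get-and-increment on a shared counter in \npbegin{}, so distinct transactions get distinct values and hence $\textsc{TS}(T_k) < \textsc{TS}(T_{k+1})$; if both updates come from the same transaction, equality is exactly what we want.

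Second, I would rule out the degenerate scenario where $m_{k+1}$ could sneak in with a smaller TS. This is handled by the \nptov{} check (\algoref{tovalidation}, \Lineref{tov6} and \Lineref{tov9}): any method whose transaction timestamp is strictly smaller than the current $max\_ts.\{insert, delete, lookup\}$ of the target node returns $\textsc{ABORT}$, and hence is not a \emph{successful} method. Thus every successful writer to $n.max\_ts$ necessarily carries a TS that is at least as large as the previously installed one, and the chain of successful updates induces a non-decreasing (and across distinct transactions, strictly increasing) sequence of TS values. This directly yields the stated conclusion of the corollary.

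The main obstacle, such as it is, is essentially bookkeeping rather than mathematical: one has to be careful that the comparison in \nptov{} considers the three separate max-timestamp fields (\emph{insert}, \emph{delete}, \emph{lookup}) consistently with how they are written at the lines enumerated above, and that the "successful" qualifier is taken to mean "LP event occurred and the method did not return $\mathcal{A}$". Given \lemref{maxtsbehav} is already proved, the corollary reduces to (i) identifying TS values with the values being written (via \obsref{opTS}), (ii) lifting monotonicity to strict increase across distinct transactions (via \obsref{txTS}), and (iii) invoking the abort clause of \nptov{} to ensure the ordering is respected even for concurrent contenders. No further case analysis over the seven update sites is needed, since that analysis has already been absorbed into the lemma.
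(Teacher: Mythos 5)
Your proposal is correct and follows essentially the same route as the paper: the paper offers no separate proof of \corref{increasingmaxTS}, treating it as an immediate restatement of the conclusion reached inside the induction step of \lemref{maxtsbehav}, where the \nptov{} check is shown to abort any method whose transaction timestamp is not larger than the node's current $max\_ts$, so that every successful update installs a strictly larger TS. Your added appeal to \obsref{txTS} to upgrade $\leq$ to $<$ across distinct transactions is a harmless sharpening of the same argument.
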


\begin{lemma}
	\label{lem:increasingTS}
	If \npbegin$(T_{i})$ occurs before \npbegin$(T_{j})$ then $TS(T_{i})$ preceds $TS(T_{j})$. Formally, $\langle \forall T \in H: (\npbegin(T_{i}) \prec \npbegin(T_{j})) \Leftrightarrow (TS(T_{i}) < TS(T_{j})) \rangle$. 
\end{lemma}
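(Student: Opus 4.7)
The plan is to reduce the lemma to the atomicity of the fetch-and-increment operation on the shared counter $sh\_cntr$ that is invoked inside \npbegin. By \obsref{txTS} together with \Lineref{begin3} of \algoref{begin}, every transaction $T_i$ obtains its timestamp $TS(T_i)$ via exactly one atomic $get\&inc(sh\_cntr)$ event, which we denote $e_i$. Since $e_i$ is the only event inside \npbegin that reads and mutates $sh\_cntr$, and since these events are totally ordered by $<_H$ (events are atomic by the system model), the real-time order among the \npbegin invocations induces a total order on the corresponding $e_i$ events.

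First I would prove the forward direction ($\npbegin(T_i) \prec \npbegin(T_j) \Rightarrow TS(T_i) < TS(T_j)$). Because $e_i$ lies inside \npbegin$(T_i)$ and $e_j$ lies inside \npbegin$(T_j)$, the assumption $\npbegin(T_i) \prec \npbegin(T_j)$ means $\rsp(\npbegin(T_i)) <_H \inv(\npbegin(T_j))$, so in particular $e_i <_H e_j$. By atomicity of $get\&inc$, if $e_i$ returns value $v_i$ then immediately after $e_i$ the counter holds $v_i+1$, and since $get\&inc$ is the only operation mutating $sh\_cntr$ and it is strictly increasing, $e_j$ must return a value $v_j \geq v_i+1$. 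Hence $TS(T_i) = v_i < v_i+1 \leq v_j = TS(T_j)$, giving the desired strict inequality.

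For the reverse direction ($TS(T_i) < TS(T_j) \Rightarrow \npbegin(T_i) \prec \npbegin(T_j)$), I would argue by contrapositive. Since the events $e_i$ and $e_j$ are distinct and totally ordered, either $e_i <_H e_j$ or $e_j <_H e_i$; the two \npbegin calls cannot overlap in a way that makes them incomparable with respect to $e_i, e_j$. If $\npbegin(T_j) \prec \npbegin(T_i)$, then applying the forward direction (with the roles of $i$ and $j$ swapped) yields $TS(T_j) < TS(T_i)$, contradicting the hypothesis $TS(T_i) < TS(T_j)$. The only remaining possibility under the negation of $\npbegin(T_i) \prec \npbegin(T_j)$ is that the two \npbegin calls overlap; but even then the order of $e_i$ and $e_j$ alone determines the timestamps, and the same forward argument applied to whichever of $e_i, e_j$ occurs first produces the same contradiction. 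Hence $\npbegin(T_i) \prec \npbegin(T_j)$.

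The only genuine subtlety—and thus the main obstacle—is justifying the strictness and well-definedness of the ordering between $e_i$ and $e_j$ when the two \npbegin invocations overlap in real time (so that $\npbegin(T_i) \prec \npbegin(T_j)$ is false but the timestamps are still comparable). This is handled by appealing to the system-model assumption (\Secref{model}) that all events are totally ordered by $<_H$ and that $get\&inc$ is atomic, so no two $e_i$ events coincide and no third event can be interleaved inside one of them. Once this bookkeeping is in place, both directions reduce to a one-line consequence of the monotonic, strictly increasing behavior of $sh\_cntr$ guaranteed by \obsref{txTS}.
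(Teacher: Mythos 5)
Your proof follows the same route as the paper's: both directions are reduced, via \obsref{txTS}, to the fact that timestamps are drawn from a shared counter that is atomically fetched-and-incremented at \Lineref{begin3} of \algoref{begin}. In fact your write-up is considerably more explicit than the paper's, which simply cites \obsref{txTS} in both directions and derives an immediate contradiction without unpacking why an earlier $get\&inc$ event must return a strictly smaller value; your forward direction supplies exactly the missing monotonicity argument, and that is a genuine improvement in rigor.

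One caveat on the reverse direction. You correctly identify the overlap case as the only delicate point, but your resolution of it does not actually go through: if $\npbegin(T_i)$ and $\npbegin(T_j)$ overlap with $e_i <_H e_j$, then $TS(T_i) < TS(T_j)$ holds while $\npbegin(T_i) \prec \npbegin(T_j)$ is false, so no contradiction arises and the biconditional, read literally with $\prec$ as method real-time order, fails. The fix is to observe that the paper designates $get\&inc$ as the linearization point of \npbegin, so the order $\prec$ used in this lemma is the order of the events $e_i$ themselves, under which two begins can never be incomparable and the overlap case disappears. The paper's own proof silently assumes this (its case analysis admits only $\npbegin(T_i) \prec \npbegin(T_j)$ or $\npbegin(T_j) \prec \npbegin(T_i)$), so this is a defect of the lemma's statement rather than of your argument; still, you should state the dichotomy explicitly rather than claim a contradiction that does not materialize.
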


\begin{proof}
	$(Only$ $if)$ If $(\npbegin(T_{i}) \prec \npbegin(T_{j}))$ then $(TS(T_{i}) < TS(T_{j}))$. Lets assume $(TS(T_{j}) < TS(T_{i})$. From \obsref{txTS} ,
	\begin{equation}
	\label{eq:increasingTS1}
	\npbegin(T_{j}) \prec_H \npbegin(T_{i})
	\end{equation}
	but we know that,
	\begin{equation}
	\label{eq:increasingTS2}
	\npbegin(T_{j}) \succ_H \npbegin(T_{i})
	\end{equation}
	Which is a contradiction thus, $(TS(T_{i}) < TS(T_{j}))$.\\
	\\
	$(if)$ If $(TS(T_{i}) < TS(T_{j}))$ then $(\npbegin(T_{i}) \prec \npbegin(T_{j}))$. Let us assume $(\npbegin(T_{j}) \prec \npbegin(T_{i}))$. From \obsref{txTS} ,
	\begin{equation}
	\label{eq:increasingTS3}
	TS(T_{j}) < TS(T_{i})
	\end{equation}
	but we know that,
	\begin{equation}
	\label{eq:increasingTS4}
	TS(T_{j}) > TS(T_{i})
	\end{equation}
	Again, a contradiction.
\end{proof}

\begin{lemma}
	\label{lem:conflictTO}
	If ($T_i$, $T_j$) $\in$ \conf{H} $\Rightarrow$ TS($T_i$) $<$  TS($T_j$).
\end{lemma}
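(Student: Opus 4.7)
The plan is to do a case analysis on the three kinds of conflict spelled out in \secref{conflicts} (\emph{tryC-tryC}, \emph{tryC-rv}, and \emph{rv-tryC}) and, in every case, to exhibit a shared node $n$ on which both conflicting methods operate so that the timestamp-order check performed inside \nptov{} (\algoref{tovalidation}) forces $TS(T_i) < TS(T_j)$. The key machinery I will invoke is \corref{increasingmaxTS} (the $max\_ts$ fields of any node can only increase across successful methods) together with \lemref{maxtsbehav} and \obsref{opTS} (a successful method stamps the node with the timestamp of its own transaction).

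First I would fix $(T_i, T_j) \in \conf{H}$ and, using \defref{confh}, pick the pair of conflicting methods $m_i \in T_i$ and $m_j \in T_j$ with $m_i \prec_H^{\mr} m_j$. Both methods operate on the same $\langle ht, k \rangle$, so by \lemref{con-spec-lsl} they touch the unique node $n$ for key $k$ in the \lsl{}. After the LP of $m_i$ the relevant $max\_ts$ component of $n$ (insert/delete/lookup, depending on the flavor of $m_i$) equals $TS(T_i)$ by \obsref{opTS}. Since $m_i \prec_H^{\mr} m_j$, the LP of $m_j$ is strictly later in real time, so by \lemref{maxtsbehav} that $max\_ts$ component is still $\geq TS(T_i)$ when $m_j$ invokes \nptov{}.

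Next, for each of the three sub-cases I would check which branch of \nptov{} is triggered. In the \emph{tryC-tryC} case $m_j$ is an \upmt{} inside \nptc{}, so \Lineref{tov9} compares $TS(T_j)$ to $max\_ts.insert$ / $delete$ of $n$; since $T_j$ commits (hence $m_j$ did not abort), the guard $TS(T_j) < max\_ts$ must be false, yielding $TS(T_j) > TS(T_i)$. In the \emph{tryC-rv} case $m_j$ is the first \rvmt{} on $\langle ht,k\rangle$, and \Lineref{tov6} checks $TS(T_j)$ against $max\_ts.insert/delete$; the same argument applies. In the \emph{rv-tryC} case $m_i$ is a \rvmt{} and $m_j$ an \upmt{}; $m_i$ writes $TS(T_i)$ into $max\_ts.lookup$ of $n$ and \Lineref{tov9} of $m_j$ compares $TS(T_j)$ with $max\_ts.lookup$, so again an unaborted $T_j$ forces $TS(T_j) > TS(T_i)$. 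In every case strict inequality follows from the fact that distinct transactions receive distinct timestamps (\obsref{txTS}).

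The main obstacle I anticipate is the bookkeeping around \emph{which} $max\_ts$ component is stamped by $m_i$ and \emph{which} one is inspected by $m_j$: the code keeps three separate counters ($insert$, $delete$, $lookup$), and \nptov{} inspects different subsets depending on whether the validating caller is an \rvmt{} (\Lineref{tov6}) or an \upmt{} (\Lineref{tov9}). I will therefore tabulate the six $(m_i, m_j)$-flavor combinations allowed by the conflict definition and check that in every row the component stamped by $m_i$ is among the components inspected by $m_j$'s \nptov{}; a secondary care-point is to argue (from \obsref{lslSearch} and \obsref{locksimu}) that $m_j$'s \nptov{} indeed resolves $\currs$ to the same node $n$ that $m_i$ stamped, so that \Lineref{tov4}'s \emph{get\_aptcurr} returns a non-$NULL$ node with $sh\_curr.key = k$ and the guarded comparison is executed rather than skipped.
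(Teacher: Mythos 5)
Your core argument for the method-conflict case is essentially the paper's own: both proofs reduce the claim to the timestamp guard inside \nptov{} (\algoref{tovalidation}), using the facts that the two conflicting methods resolve to the same unique node for $\langle ht,k\rangle$ (\corref{uniquenodeKey}, \lemref{con-spec-lsl}) and that the node's $max\_ts$ fields are monotonically nondecreasing and stamped with the acting transaction's timestamp (\corref{increasingmaxTS}, \obsref{opTS}). The paper phrases the step as a contradiction (assume $TS(T_j)<TS(T_i)$, then $m_j$'s validation would abort it), whereas you argue the contrapositive directly (unaborted $\Rightarrow$ guard false $\Rightarrow$ $TS(T_j)$ exceeds the stamped value); that difference is cosmetic. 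Your explicit tabulation of which $max\_ts$ component is written by $m_i$ versus inspected by $m_j$ is in fact more careful than the paper, which compresses this into ``$curr.max\_ts.m_i()$''; the table does close in all three cases (\emph{tryC-tryC} and \emph{tryC-rv} stamp $insert$/$delete$, inspected at \Lineref{tov9} and \Lineref{tov6} respectively; \emph{rv-tryC} stamps $lookup$, inspected at \Lineref{tov9}).

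The one substantive thing you omit relative to the paper is a second top-level case: the paper's proof also treats the pair $(T_i,T_j)$ being a \emph{real-time} edge, dispatched separately via \lemref{increasingTS} (\npbegin{} order implies timestamp order). On a literal reading of \defref{confh} that case is not part of \conf{H}, so your proof does cover the statement as written; but the lemma is subsequently applied in \lemref{increasingTO} to \emph{every} edge of $CG(H)$, including rt edges, for which no shared node need exist and your node-based argument simply does not apply. If the lemma is to feed that downstream use, you need to add this (easy) case. A minor further point, shared with the paper's proof, is that both arguments tacitly assume the later method $m_j$ completes without aborting so that the \nptov{} guard is known to evaluate to false; for the \emph{tryC-rv} conflict the definition does not literally require the rv-method of $T_j$ to succeed, so you should state that assumption explicitly.
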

\begin{proof}($T_i$, $T_j$) can have two kinds of conflicts from our conflict notion.
	\begin{enumerate}
		\item \textbf{\texttt{If ($T_i$, $T_j$) is an real-time edge:}} 
		Since, $T_i$ \& $T_j$ are real time ordered. Therefore,
		
		\begin{equation}
		\label{eq:conflictTOR}
		\levt{T_i} \prec_H \fevt{T_j}
		%lastMethod{(T_i)} \prec_H \npbegin{(T_j)}
		\end{equation}
		And from program order of $T_i$,
		\begin{equation}
		\label{eq:conflictTOR1}
		\fevt{T_i} \prec_H \levt{T_i} \Rightarrow \npbegin{(T_i)} \prec_H \levt{T_i}
		%\npbegin{(T_i)} \prec_H lastMethod{(T_i)}
		\end{equation}
		From \Eqref{conflictTOR} and \Eqref{conflictTOR1} implies that,
		\begin{equation}
		\begin{split}
		\label{eq:conflictTOR2}
		\fevt{T_i} \prec_H \fevt{T_j} \Rightarrow \npbegin{(T_i)} \prec_H \npbegin{(T_j)} \\\xRightarrow[]{\lemref{increasingTS}} TS(T_i) < TS(T_j)
		\end{split}
		\end{equation}
		%Thus, from \lemref{increasingTS} \& \Eqref{conflictTOR2} , TS($T_i$) $<$ TS($T_j$).
		
		\item \textbf{\texttt{If ($T_i$, $T_j$) is a conflict edge:}}
		We prove this case by contradiction, lets assume ($T_i$, $T_j$) $\in$ \conf{H} \& TS($T_j$) $<$  TS($T_i$). Given that ($T_i$, $T_j$) $\in$ \conf{H} and from \defref{confh} we get, $m_i$ $\prec_{H}^{\mr}$ $m_j$.
		
		$m_i$ can be $\rvmt{s}$ or $\upmt{s}$ (which are taking the effects in \nptc{} method of \algoref{trycommit} ) and we know that after the $LP$ of $m_i$ of $T_i$, $node$ corresponding to the $key$ should be there in $\rn{}$ (from \corref{nodeinabs} \& \defref{Abs} ) and the time-stamp of that $node$ corresponding to $key$ should be equal to time-stamp of this method transaction time-stamp from \corref{nodeinabs} \& \obsref{opTS} . 
		
		From \lemref{key-change} \& \lemref{key-change-bl} we have that the nodes in the underlying data-structure are in increasing order of their keys, thus the key on which the operation is working has a unique location in underlying data-structure from \corref{uniquenodeKey} . So, when the \nplsls{} is invoked from a \mth{} $m_j$ of $T_j$, it returns correct location $(\bp, \rp, \rc, \bc)$ of corresponding $key$ as observed from \obsref{lslSearch} \& \lemref{lslSearch-ret} .
		
		%$T_j$ finds correct location for $m_j$ to take effect with help of \nplsls{} method of \algoref{lslsearch} , as underlying list is in sorted order, refer \lemref{lslcorrect} .
		
		Now, $m_j$ similar to $m_i$ take effect on the same node represented by key $k$ (from \obsref{obs-nodekey} \& \corref{uniquenodeKey} ) \& from \obsref{futurestate} we know that the $node$ corresponding to the key $k$ is still reachable via $\rn$. Thus, we know that $T_i$ \& $T_j$ will work on same node with key $k$.
		
		By observing the code at line 6 \& 9 of $\nptov$ method of \algoref{tovalidation} , we know since, TS($T_j$) $<$ curr.max\_ts.$m_i$(), $T_{j}$ will return abort from \corref{increasingmaxTS} . %\lemref{tsinterference} guarantees that time-stamp comparisons are made against the most recent conflicting method of $H$. 
		In \algoref{tovalidation} for \nptov{} to succeed, TS($T_j$) $>$ curr.max\_ts.$m_i$() should evaluate to true from \corref{increasingmaxTS} . %From \obsref{opTS} , curr.max\_ts.$m_i$() = TS($T_i$). 
		Thus, TS($T_j$) $<$  TS($T_i$), a contradiction. Hence, If ($T_i$, $T_j$) $\in$ \conf{H} $\Rightarrow$ TS($T_i$) $<$  TS($T_j$).
	\end{enumerate}
\end{proof}

\begin{lemma}
	If $($ $T_1$, $T_2$ $\cdots$ $T_n$ $)$ is a path in $CG(H)$, this implies that $($TS$(T_1)$ $<$  TS$(T_2)$ $<$ $\cdots$  $<$ TS$(T_n))$.
	\label{lem:increasingTO}
\end{lemma}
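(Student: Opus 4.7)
The plan is a straightforward induction on the length $n$ of the path in $CG(H)$, leveraging \lemref{conflictTO} as the single-edge base fact, and using transitivity of the strict order $<$ on timestamps to chain the edges.

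For the base case $n=2$, the path $(T_1, T_2)$ is exactly a single edge in $CG(H)$. By the definition of $CG(H)$, every edge is either a conflict edge (some pair $\langle \textit{tryC-tryC}, \textit{tryC-rv}, \textit{rv-tryC}\rangle$ in $\conf{H}$) or a real-time edge, and in either case \lemref{conflictTO} directly yields $TS(T_1) < TS(T_2)$.

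For the inductive step, assume the lemma holds for all paths of length $k$, and consider a path $(T_1, T_2, \dots, T_k, T_{k+1})$ of length $k+1$. The prefix $(T_1, \dots, T_k)$ is itself a path of length $k$ in $CG(H)$, so by the induction hypothesis I get $TS(T_1) < TS(T_2) < \cdots < TS(T_k)$. The final segment $(T_k, T_{k+1})$ is a single edge of $CG(H)$, so \lemref{conflictTO} applied to that edge gives $TS(T_k) < TS(T_{k+1})$. Concatenating and using transitivity of $<$ on the integers (timestamps are drawn from a monotonically increasing shared counter, as per \obsref{txTS}) yields $TS(T_1) < TS(T_2) < \cdots < TS(T_k) < TS(T_{k+1})$, completing the induction.

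I do not anticipate any significant obstacle here: all of the heavy lifting (distinguishing real-time edges from conflict edges, handling the $\upmt$/$\rvmt$ cases, invoking the timestamp validation in \nptov{} and \corref{increasingmaxTS}) was already performed inside the proof of \lemref{conflictTO}. The present lemma is really just the observation that ``strictly increasing along each edge'' lifts to ``strictly increasing along any path'' by induction plus transitivity, so the proof should be only a few lines long.
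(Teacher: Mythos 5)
Your proof is correct and follows essentially the same route as the paper: induction on the path length, with \lemref{conflictTO} supplying the strict timestamp inequality for each edge and transitivity chaining them together. No gaps.
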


\begin{proof}
	The proof goes by induction on length of a path in $CG(H)$.\\
	\\
	\textbf{\texttt{Base Step:}} Assume $($ $T_1$, $T_2$ $)$ be a path of length 1. Then, from \lemref{conflictTO} $($TS$(T_1)$ $<$  TS$(T_2))$.\\
	\\
	\textbf{\texttt{Induction Hypothesis:}} The claim holds for a path of length $(n-1)$. That is,
	\begin{equation}
	\label{eq:increasingTO1}
	TS(T_1) < TS(T_2) < \cdots < TS(T_{n-1})
	\end{equation}
	\textbf{\texttt{Induction Step:}} 
	Let $T_{n}$ is a transaction in a path of length $n$. Then, ($T_{n-1}$, $T_{n}$) is path in $CG(H)$. Thus, it follows from \lemref{conflictTO} that,
	\begin{equation}
	\label{eq:increasingTO2}
	TS(T_{n-1}) < TS(T_{n}) \xRightarrow[]{\Eqref{increasingTO1}} (TS(T_1) <  TS(T_2) < \cdots  < TS(T_n))
	\end{equation}
	Hence, the lemma.
	%From \Eqref{increasingTO1} and \Eqref{increasingTO2}, $($TS$(T_1)$ $<$  TS$(T_2)$ $<$ $\cdots$  $<$ TS$(T_n))$. 
\end{proof}

\begin{theorem}
	\label{thm:CG}
	%$CG(H)$ is acyclic.
	Consider a history $H$ generated by \otm. Then there exists a sequential \& legal history $H'$ equivalent to $H$ such that the conflict-graph of $H'(CG(H'))$ is acyclic.
\end{theorem}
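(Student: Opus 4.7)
\textbf{Proof plan for \thmref{CG}.}
My plan is to exhibit $H'$ directly by linearizing $H$ at its LPs, and then show its conflict graph is acyclic via the timestamp machinery already developed. First I would invoke the operational-level results of \Secref{opnlevel}: each method exported by \otm{} has a well-defined linearization point (the first unlocking event of a successful method, as listed after \defref{fLP}), and ordering all methods in $H$ by their LPs produces a sequential history $H'$ with $\evts{H'} = \evts{\overline{H}}$. The legality of $H'$ at the method level follows from \lemref{provingLegalityy} (and its sequential counterpart \lemref{provingLegalityysec}), which shows that whenever a \rvmt{} returns $(k,v)$, the previous closest committed \upmt{} on $k$ is consistent with $v$ in the sense required by LR1--LR3. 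Thus $H'$ is sequential, equivalent to $H$, and legal.

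Next I would build $CG(H')$ as in \subsecref{legal}, with an edge $(T_i,T_j)$ whenever the pair lies in $\conf{H'}$ or whenever $T_i \prec_{H'}^{\tr} T_j$. The crux is showing $CG(H')$ is acyclic. For every edge $(T_i,T_j)$ I would apply \lemref{conflictTO}, which already handles both edge types: for real-time edges it uses \lemref{increasingTS} (begin-events in program order imply increasing $TS$), and for conflict edges it uses \corref{increasingmaxTS} together with the \nptov{} abort rule to deduce $TS(T_i)<TS(T_j)$. Extending this by induction along paths is exactly \lemref{increasingTO}. Consequently a hypothetical cycle $T_1 \to T_2 \to \cdots \to T_n \to T_1$ would yield $TS(T_1) < TS(T_1)$, a contradiction; hence $CG(H')$ is acyclic.

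The main obstacle I anticipate is bridging the operational-level legality (stated in terms of ``node reachability in $\bn$ at pre-states of LPs'') with the transactional-level legality (LR1--LR3, phrased in terms of previous-closest update methods across the whole history). The bridge rests on \corref{nodeinabs} and \lemref{intertryC}: any change to $\bn.\mathit{value}(k)$ between two LP states must be caused by an intervening successful \nptc{} whose LP lies in between, so the ``previous closest'' \upmt{} on $k$ in $H'$ coincides with the committing transaction identified by the operational-level legality proof. Once this correspondence is articulated, LR1 is discharged by the local-log behaviour of \npluk{}/\npdel{}/\npins{} (\Lineref{lookup3}--\Lineref{lookup11} and analogues), while LR2 and LR3 are discharged by the \nplsls{}-based first-access path, using \lemref{lookupT-conc} and \lemref{lookupF-conc} respectively. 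After this calibration, the proof of acyclicity is purely a timestamp-monotonicity argument and is short.

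Finally, I would note as a corollary that since $H'$ is sequential, legal, equivalent to $H$, respects real-time order (by construction of edges from $\prec_{H'}^{\tr}$), and preserves conflicts, \Thmref{coop-graph} yields that $H$ is \coop{}, and therefore also opaque. Although not required by the statement, this tie-back motivates why acyclicity of $CG(H')$ is the right target.
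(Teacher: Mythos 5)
Your proposal is correct and follows essentially the same route as the paper: the paper establishes the sequential, legal $H'$ via the method-level LP construction and \lemref{provingLegalityy}/\lemref{provingLegalityysec}, and then proves \thmref{CG} by exactly the contradiction you describe — a cycle in $CG(H')$ would force $TS(T_1) < TS(T_1)$ via \lemref{increasingTO}, contradicting the uniqueness of timestamps (\obsref{txTS}). Your additional remarks on bridging operational-level and transactional-level legality make explicit what the paper leaves implicit, but they do not change the argument.
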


\begin{proof}
	Assume that $CG(H')$ is cyclic, then their exist a cycle say of form $($ $T_1$,  $T_2$ $\cdots$ $T_n$, $T_1$ $)$, for all (n $\geq$ 1). From \lemref{increasingTO} ,
	\begin{equation}
	\label{eq:CG1}
	TS(T_1) < TS(T_2) \cdots < TS(T_n) < TS(T_1) \xRightarrow[]{} TS(T_1) < TS(T_1)
	\end{equation}
	
	But, this is impossible as each transaction has unique time-stamp, refer \obsref{txTS} . Hence the theorem. 
\end{proof}

\begin{theorem}
	\label{thm:coacyclic}
	%A legal history H is co-opaque iff CG(H) is acyclic.
	A legal \otm{} history $H$ is \coop{} iff CG(H) is acyclic.
\end{theorem}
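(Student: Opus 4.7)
\textbf{Proof plan for Theorem~\ref{thm:coacyclic}.} The plan is to establish the two directions of the biconditional separately, leveraging the machinery already developed in the excerpt (particularly Lemmas~\ref{lem:serialIsAcyclic}, \ref{lem:propConflictPreserv} and~\ref{lem:genericLegalityLemma} together with Observations~\ref{obs:topsort} and~\ref{obs:TSorder}).

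\textbf{Forward direction ($H$ co-opaque $\Rightarrow CG(H)$ acyclic).} Assume $H$ is legal and co-opaque. By the definition of co-opacity there exists a serial history $S$ that is equivalent to $\overline{H}$, is legal, respects $\prec^{\tr}_H$, and preserves conflicts (i.e.\ $\prec^{CO}_H \subseteq \prec^{CO}_S$). I will argue that every edge of $CG(H)$ is consistent with the total transaction order induced by $S$: real-time edges are preserved because $\prec^{\tr}_H \subseteq \prec^{\tr}_S$, and conflict edges are preserved because $\prec^{CO}_H \subseteq \prec^{CO}_S$. Hence the serial order of $S$ is a topological ordering of $CG(H)$. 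Since $S$ is serial, Lemma~\ref{lem:serialIsAcyclic} gives that $CG(S)$ is acyclic, and any graph that admits a topological ordering cannot contain a cycle; therefore $CG(H)$ is acyclic.

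\textbf{Reverse direction ($CG(H)$ acyclic $\Rightarrow H$ co-opaque).} Given that $CG(H)$ is acyclic, I will construct an equivalent serial witness $S$ by applying a topological sort to $CG(H)$ and scheduling the transactions in that linear order (Observation~\ref{obs:topsort}). I then need to verify the four clauses of the definition of co-opacity for $S$:
(i) \emph{Equivalence with $\overline{H}$.} The topological sort only reorders whole transactions, so $\evts{S} = \evts{\overline{H}}$ and $\met{S} = \met{H}$, giving $S \equiv \overline{H}$ by Lemma~\ref{lem:propConflictPreserv1}.
(ii) \emph{Real-time order.} Real-time edges of $H$ are present in $CG(H)$ by construction, so Observation~\ref{obs:TSorder} implies $\prec^{\tr}_H \subseteq \prec^{\tr}_S$.
(iii) \emph{Conflict preservation.} Conflict edges of $H$ are likewise edges of $CG(H)$, so topological ordering yields $\prec^{CO}_H \subseteq \prec^{CO}_S$ (Lemma~\ref{lem:propConflictPreserv2}).
(iv) \emph{Legality.} Since $H$ is legal by hypothesis, $S \equiv H$, and $\prec^{CO}_H \subseteq \prec^{CO}_S$, Lemma~\ref{lem:genericLegalityLemma} transports legality from $H$ to $S$.

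\textbf{Anticipated main obstacle.} The forward direction is essentially structural and should go through smoothly. The delicate step is clause~(iv) of the reverse direction: one must be careful that the hypothesis ``$H$ is legal'' refers to $H$ as a (sequential) history in the same sense that $S$ needs to be legal, and that Lemma~\ref{lem:genericLegalityLemma} indeed applies with $H_1 = H$, $H_2 = S$ and not with the roles reversed. The legality hypothesis on the input $H$ is precisely what lets us avoid re-proving the last-update structure from scratch; without it, topological reordering could in principle expose a \rvmt{} to a different last updater and break LR1--LR3. A second minor subtlety is the treatment of $\overline{H}$ versus $H$: aborted/live transactions must be completed with $\trya$ before the topological sort, and one must check that this completion step does not add or remove any conflicts in $CG$, so that the sort of $CG(H)$ still yields a well-defined schedule equivalent to $\overline{H}$.
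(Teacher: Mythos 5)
Your proposal matches the paper's proof essentially step for step: the forward direction uses the serial witness $S$ from the definition of co-opacity together with \lemref{serialIsAcyclic} (the paper phrases this as $CG(H)$ being a subgraph of the acyclic $CG(S)$, which is the same observation as your ``the serial order of $S$ is a topological ordering of $CG(H)$''), and the reverse direction topologically sorts $CG(H)$, checks equivalence, real-time order and conflict preservation via \obsref{topsort}, \obsref{TSorder} and \lemref{propConflictPreserv}, and transports legality from $\overline{H}$ to $S$ with \lemref{genericLegalityLemma} applied exactly in the orientation you flag as the delicate point. No gaps; your treatment of the $\overline{H}$-versus-$H$ completion issue is, if anything, more careful than the paper's, which simply asserts $CG(H)=CG(\overline{H})$.
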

\begin{proof}
	(Only if) If H is co-opaque and legal, then CG(H) is acyclic: Since H is co-opaque, there exists a legal t-sequential history S equivalent to $\bar{H}$ and S respects $\prec$$^{RT}_H$ and $\prec$$^{CO}_H$ (from co-opacity\cite{KuzPer:NI:TCS:2017}). Thus from the conflict graph construction we have that (CG($\bar{H}$)=CG(H)) is a sub graph of CG(S). Since S is sequential, it can be inferred that CG(S) is acyclic using \lemref{serialIsAcyclic}. Any sub graph of an acyclic graph is also acyclic. Hence CG(H) is also acyclic.

	(if) If H is legal and CG(H) is acyclic then H is co-opaque:
	Suppose that CG(H) = CG($\bar{H}$) is acyclic. Thus we can
	perform a topological sort on the vertices of the graph and
	obtain a sequential order. Using this order, we can obtain a
	sequential schedule S that is equivalent to $\bar{H}$. Moreover, by construction, S respects $\prec$$^{RT}_H$ = $\prec$$^{RT}_{\bar{H}}$ and $\prec$$^{CO}_H$ = $\prec$$^{CO}_{\bar{H}}$. 
	
	Since every two operations related by the conflict relation in S are also related by $\prec$$^{CO}_{\bar{H}}$, we obtain $\prec$$^{CO}_{\bar{H}}$ $\subseteq$ $\prec$$^{CO}_S$. Since H is legal, $\bar{H}$ is also legal. Combining this with \lemref{genericLegalityLemma}, We get that S is also legal. This satisfies all the conditions necessary for H to be co-opaque.
\end{proof}

%\vspace{-3mm}
\section{Evaluation}
\label{sec:results}
%\vspace{-3mm}

We performed all the experiments on Intel(R) Xeon(R) CPU E5-2690 v4 @ 2.60GHz machine with 56 CPUs and 32K L1 data cache and 32 GB memory. Each thread spawns 10 transactions each of which randomly generate up to 5 \mths of \otm. We assume that the \tab{} of \otm has 5 buckets and each of the bucket (or list in case of \ltm{}) can have maximum size of 1K keys. We ran the experiments to calculate two parameters: (1) time taken for a transaction to commit. Upon abort, a transaction is retried until it commits. (2) Number of aborts incurred until all the transactions commit. 

We compare \otm{} with the ESTM\cite{Felber:2017:jpdc} based \tab{} and the transactional \emph{hash-table} application built using RWSTM\cite{Singh2017PerformanceCO} which is synchronised by basic time stamp ordering protocol\cite[Chap 4]{WeiVoss:2002:Morg}. Further, we evaluate \emph{list-OSTM} with the state of the art lock-free transactional list (LFT)\cite{Zhang:2016:LTW:2935764.2935780}, NOrec STM list (NTM)\cite{conf/ppopp/DalessandroSS10} and boosting list (BST)\cite{herlihy2008transactional}. All these implementations are directly taken from the TLDS framework\footnote{https://ucf-cs.github.io/tlds/}. The experiments were performed under two kinds of workloads. Update intensive(lookup:50\%, insert:25\%, delete:25\%) and lookup intensive(lookup:70\%, insert:10\%, delete:20\%). In case of lookup intensive workloads, with higher percentage of reads, ESTM was performing better. Here, we have shown lookup intensive workload in which \otm performs better. The evaluation is done by varying threads from 2 to 64 in power of 2. Before each application is run there is a initialization phase where the data structure is populated randomly with nodes of half its maximum size.  

\vspace{1mm}
\noindent
\textbf{\otm.\footnote{lib source code link: https://github.com/PDCRL/ostm}}
 \figref{htostm-th} shows that w.r.t. time taken \otm{} outperforms
 ESTM\cite{Felber:2017:jpdc} and RWSTM on an average by 3 times for lookup intensive workload. Plus, for update intensive workload \otm{} on average is 6 times better than ESTM \& RWSTM. Similarly, in terms of aborts, \otm{} has 3 \& 2 times lesser aborts than  ESTM and RWSTM for lookup intensive workload, respectively. Also for update intensive load \otm{} has 7 and 8 times lesser aborts with ESTM and RWSTM respectively, as can be seen in \figref{htostm-abort}.
\vspace{-.9cm}
\begin{figure}[H]
	\centering
	\subfloat[\otm{} time in second(s) \label{fig:htostm-th}]
	{\includegraphics[width=0.5\textwidth]{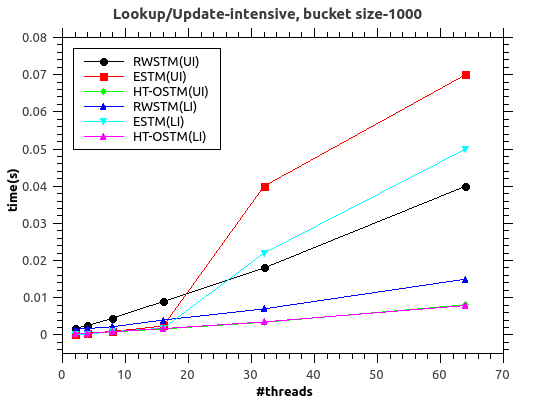}}
	%\hfill
	\subfloat[\otm{} aborts \label{fig:htostm-abort}]
	%{\includegraphics[width=0.45\textwidth]{figs/results/htostm/li&ui-aborts.png}}
	{\includegraphics[width=0.5\textwidth]{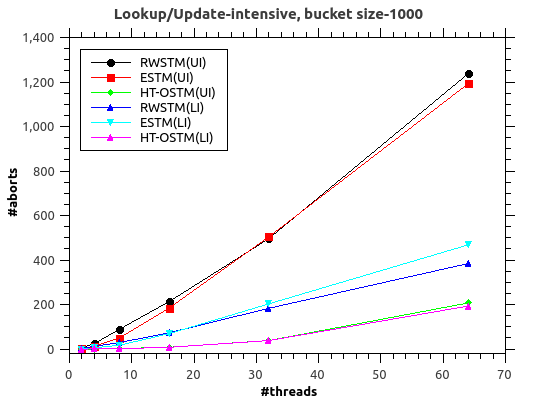}}
	\hfill\hfill
	\subfloat[\emph{list-OSTM} time in second(s) \label{fig:lsostm-th}]
	{\includegraphics[width=0.5\textwidth]{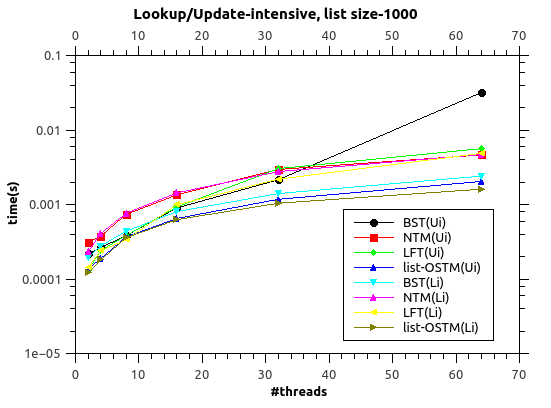}}
	\subfloat[\emph{list-OSTM} aborts \label{fig:lsostm-abort}]
	{\includegraphics[width=0.5\textwidth]{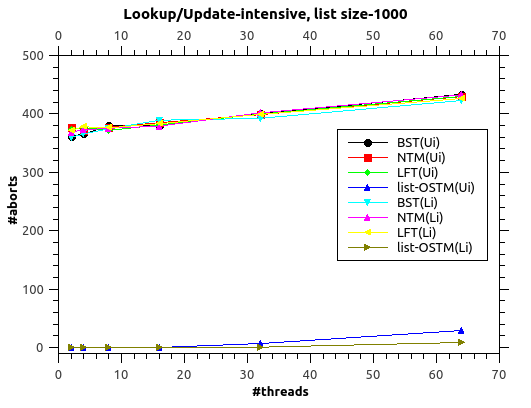}}
	\caption[The average and standard deviation of critical parameters]
	{\otm{} and \emph{list-OSTM} evaluation. Each curve is named as technique name(workload type). LI/UI denotes lookup intensive/ update intensive.}
	\label{fig:eval}
\end{figure}
%\vspace{-.9cm}
\noindent
\textbf{\emph{list-OSTM.}}
The average aborts for \emph{list-OSTM} never go beyond 30 in magnitude while that of other techniques (in \figref{lsostm-abort}) are of 388 in the magnitude for both types of workloads. While time taken is 76\%, 89\% and 33\% (with lookup intensive) and 77\%, 77\% and 154\% (with update intensive) better than LFT, NTM and BST respectively (as shown in \figref{lsostm-th}).

For better understanding, we have done the various experimental analysis while varying the workloads for \otm{} and \emph{list\_OSTM} are below:
\newpage
\noindent
\textbf{\emph{HT-OSTM evaluation for lookup intensive:}} \otm{} witness lowest time, lowest aborts and highest throughput in comparison to the ESTM and BTO based RWSTM as shown in \figref{htostm-li-time}, \ref{fig:htostm-li-abort} and \ref{fig:htostm-li-th}. 

\begin{figure}[H]
	\centering
	\subfloat[\otm{} time \label{fig:htostm-li-time}]
	{\includegraphics[width=0.46\textwidth]{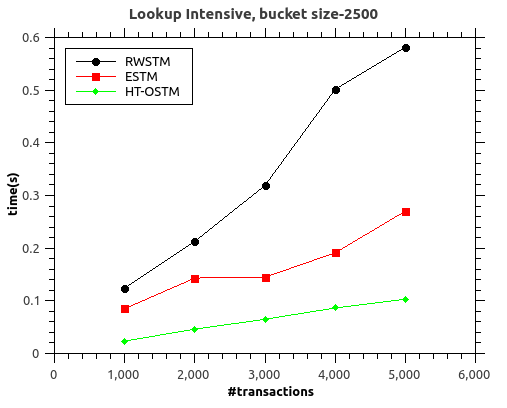}}
	%\hfill
	\subfloat[\otm{} aborts \label{fig:htostm-li-abort}]
	{\includegraphics[width=0.5\textwidth]{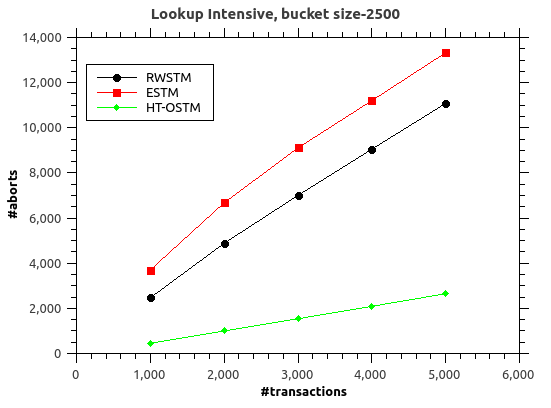}}
	\hfill
	\subfloat[\otm{} throughput \label{fig:htostm-li-th}]
	{\includegraphics[width=0.5\textwidth]{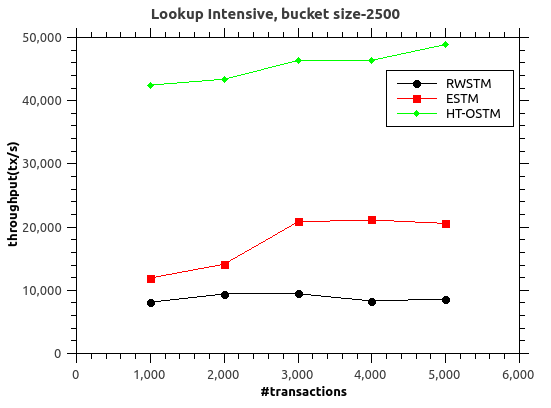}}
	\caption[The average and standard deviation of critical parameters]
	{\otm{}:Lookup Intensive(lookup:80\%, insert:15\%, delete:5\%). Number of operations/transaction are 10.}
	\label{fig:eval1}
\end{figure}

\noindent
\textbf{\emph{HT-OSTM evaluation for mid intensive:}} \otm{} witness lowest time, lowest aborts and highest throughput in comparison to the ESTM and BTO based RWSTM as shown in \figref{htostm-li-time}, \ref{fig:htostm-li-abort} and \ref{fig:htostm-li-th}. 

\begin{figure}[H]
	\centering
	\subfloat[\otm{} time \label{fig:htostm-mi-time}]
	{\includegraphics[width=0.46\textwidth]{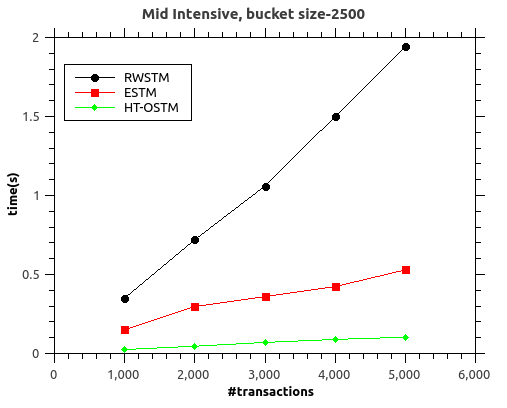}}
	%\hfill
	\subfloat[\otm{} aborts \label{fig:htostm-mi-abort}]
	{\includegraphics[width=0.5\textwidth]{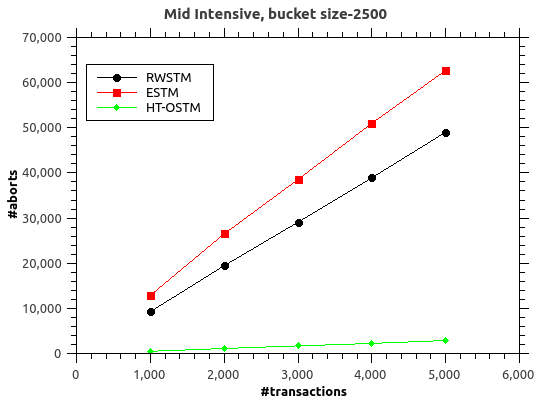}}
	\hfill
	\subfloat[\otm{} throughput \label{fig:htostm-mi-th}]
	{\includegraphics[width=0.5\textwidth]{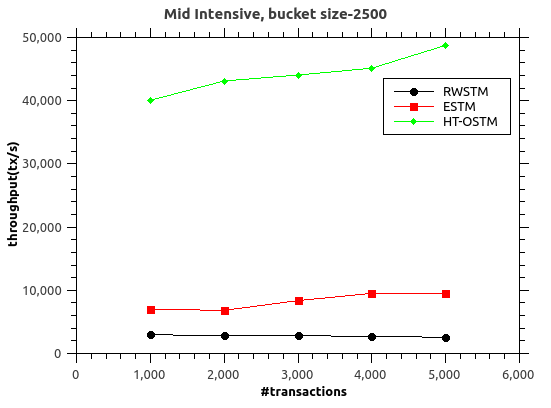}}
	\caption[The average and standard deviation of critical parameters]
	{\otm{}:Mid Intensive(lookup:50\%, insert:25\%, delete:25\%). Number of operations/transaction are 10.}
	\label{fig:eval2}
\end{figure}

\noindent
\textbf{\emph{HT-OSTM evaluation for update intensive:}} \otm{} witness lowest time, lowest aborts and highest throughput in comparison to the ESTM and BTO based RWSTM. The experiments in \figref{htostm-ui-time}, \ref{fig:htostm-ui-abort} and \ref{fig:htostm-ui-th} have the bucket size of 2500 (range of keys allowed) which implies that contention is low. \figref{htostm-ui-hc-time}, \ref{fig:htostm-ui-hc-abort} and \ref{fig:htostm-ui-hc-th} show the experiments for high contention with bucket size of 30.

\begin{figure}[H]
	\centering
	\subfloat[\otm{} time \label{fig:htostm-ui-time}]
	{\includegraphics[width=0.46\textwidth]{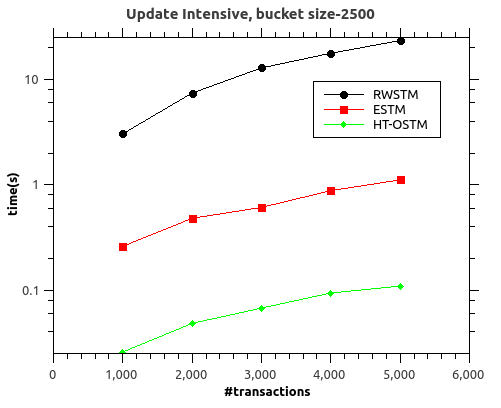}}
	%\hfill
	\subfloat[\otm{} aborts \label{fig:htostm-ui-abort}]
	{\includegraphics[width=0.5\textwidth]{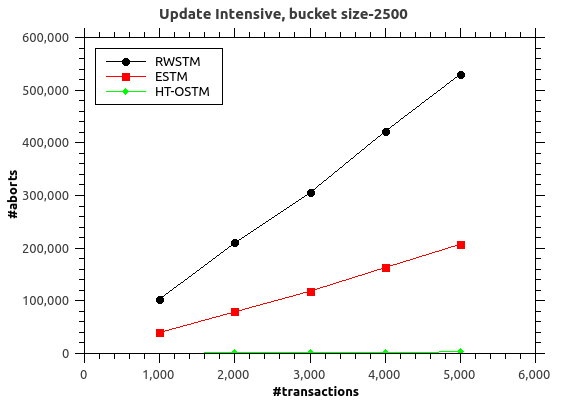}}
	\hfill
	\subfloat[\otm{} throughput \label{fig:htostm-ui-th}]
	{\includegraphics[width=0.5\textwidth]{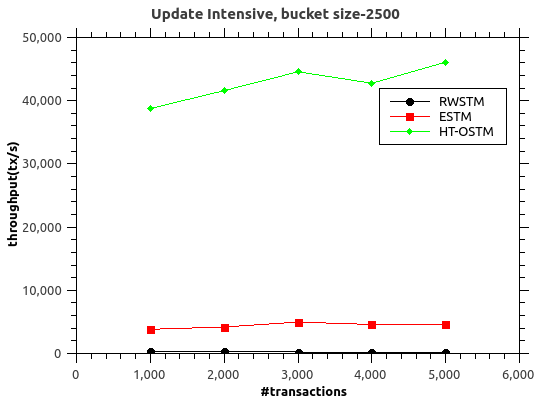}}
	\caption[The average and standard deviation of critical parameters]
	{\otm{}:Update Intensive(lookup:10\%, insert:45\%, delete:45\%). Number of operations/transaction are 10.}
	\label{fig:eval3}
\end{figure}

\begin{figure}[H]
	\centering
	\subfloat[\otm{} time \label{fig:htostm-ui-hc-time}]
	{\includegraphics[width=0.46\textwidth]{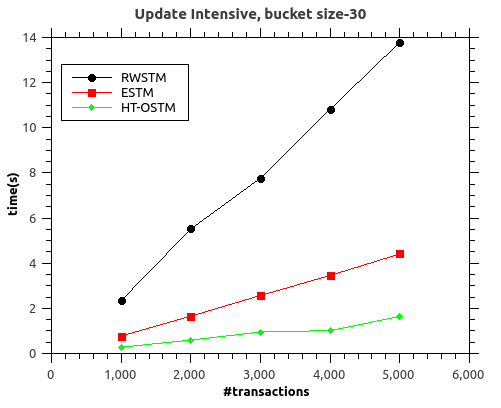}}
	%\hfill
	\subfloat[\otm{} aborts \label{fig:htostm-ui-hc-abort}]
	{\includegraphics[width=0.5\textwidth]{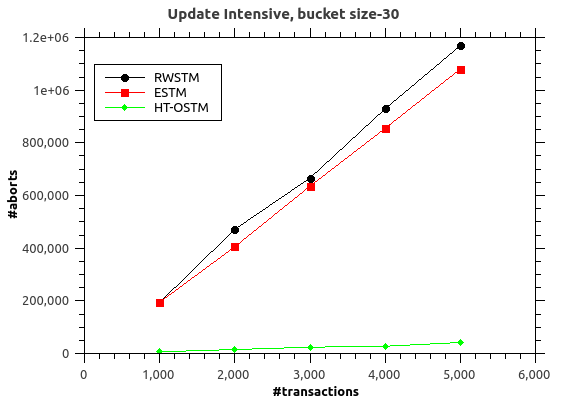}}
	\hfill
	\subfloat[\otm{} throughput \label{fig:htostm-ui-hc-th}]
	{\includegraphics[width=0.5\textwidth]{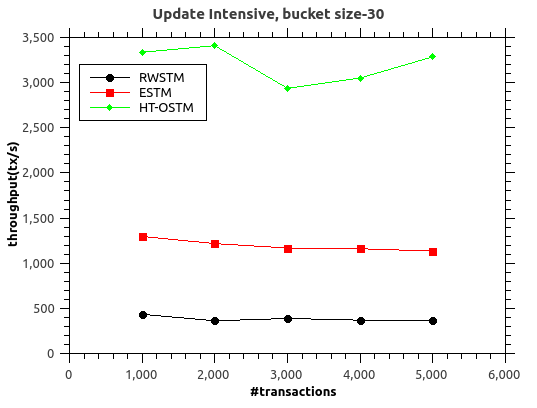}}
	\caption[The average and standard deviation of critical parameters]
	{\otm{}:Update Intensive(lookup:10\%, insert:45\%, delete:45\%) with high contention. Number of operations/transaction are 10.}
	\label{fig:eval4}
\end{figure}

\newpage
\noindent
\textbf{\emph{list-OSTM evaluation for lookup intensive:}} \emph{list-OSTM} witness lowest time, lowest aborts  in comparison to the LFT, NTM and BST as shown in \figref{lsostm-li-time} and \ref{fig:lsostm-li-abort}. 

\begin{figure}[H]
	\centering
	\subfloat[\emph{list-OSTM} time \label{fig:lsostm-li-time}]
	{\includegraphics[width=0.5\textwidth]{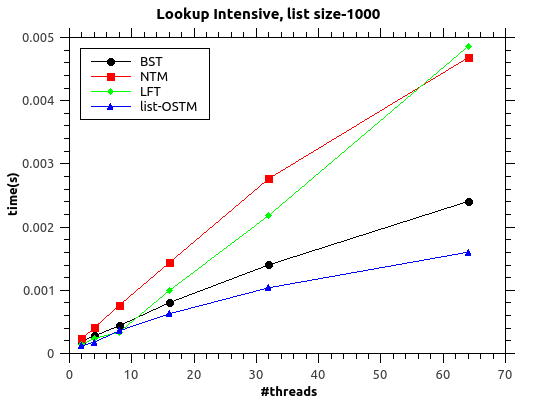}}
	%\hfill
	\subfloat[\emph{list-OSTM} aborts \label{fig:lsostm-li-abort}]
	{\includegraphics[width=0.5\textwidth]{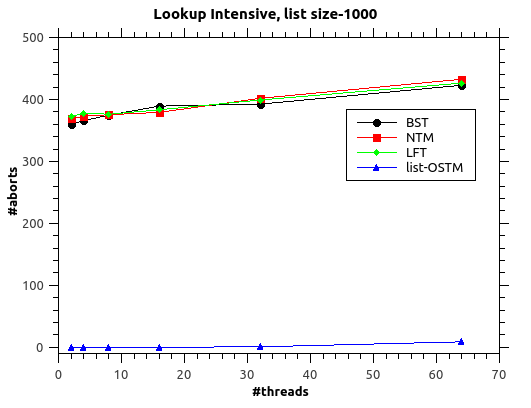}}
	\caption[The average and standard deviation of critical parameters]
	{\emph{list-OSTM}: Lookup Intensive(lookup:80\%, insert:15\%, delete:5\%). Number of operations/transaction are 10.}
	\label{fig:eval5}
\end{figure}

\noindent
\textbf{\emph{list-OSTM evaluation for mid intensive:}} \emph{list-OSTM} witness lowest time, lowest aborts  in comparison to the LFT, NTM and BST as shown in \figref{lsostm-mid-time} and \ref{fig:lsostm-mid-abort}. 
However, BST for lower number of threads takes similar time to list-OSTM.
\begin{figure}[H]
	\centering
	\subfloat[\emph{list-OSTM} time \label{fig:lsostm-mid-time}]
	{\includegraphics[width=0.5\textwidth]{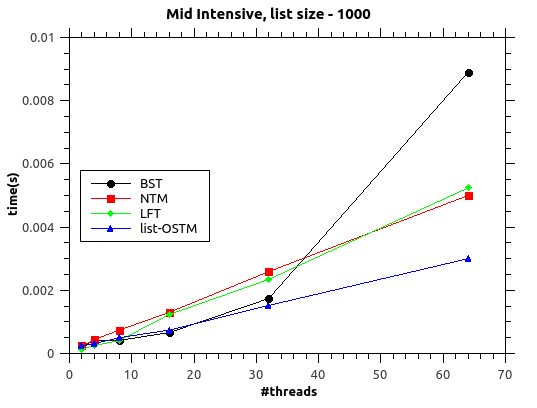}}
	%\hfill
	\subfloat[\emph{list-OSTM} aborts \label{fig:lsostm-mid-abort}]
	{\includegraphics[width=0.5\textwidth]{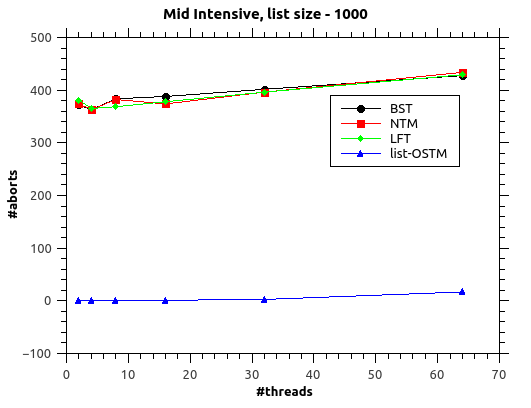}}
	\caption[The average and standard deviation of critical parameters]
	{\emph{list-OSTM}: Mid Intensive(lookup:50\%, insert:25\%, delete:25\%). Number of operations/transaction are 10.}
	\label{fig:eval7}
\end{figure}

\noindent
\textbf{\emph{list-OSTM evaluation for update intensive:}} \emph{list-OSTM} witness lowest time, lowest aborts  in comparison to the LFT, NTM and BST as shown in \figref{lsostm-ui-time} and \ref{fig:lsostm-ui-abort}.
\begin{figure}[H]
	\centering
	\subfloat[\emph{list-OSTM} time \label{fig:lsostm-ui-time}]
	{\includegraphics[width=0.5\textwidth]{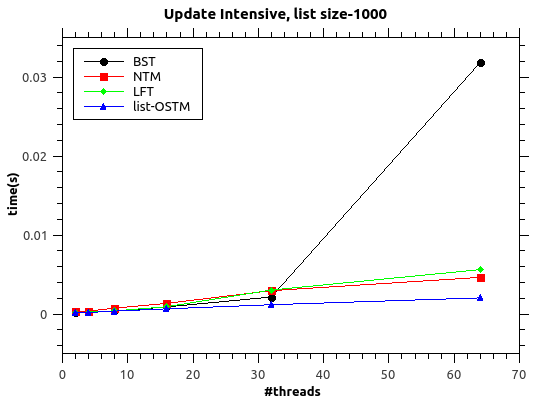}}
	%\hfill
	\subfloat[\emph{list-OSTM} aborts \label{fig:lsostm-ui-abort}]
	{\includegraphics[width=0.5\textwidth]{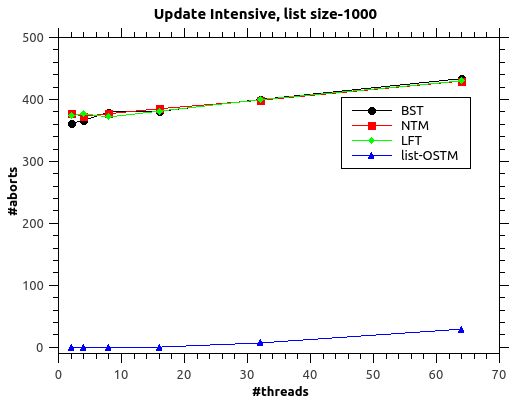}}
	\caption[The average and standard deviation of critical parameters]
	{\emph{list-OSTM}: Update Intensive(lookup:10\%, insert:45\%, delete:45\%). Number of operations/transaction are 10.}
	\label{fig:eval6}
\end{figure}

%\noindent
%\textbf{\emph{list-OSTM:Mid Intensive}}

\vspace{-.2cm}
\section{Conclusion and Future Work}
\label{sec:conc}
\vspace{-.2cm}
In this paper, we build a model for building highly concurrent and composable data structures with object level transactions called \lotm{}. We show that higher concurrency can be obtained by considering \lotms as compared to traditional \rwtm{} by leveraging richer object-level semantics. We propose comprehensive theoretical model based on legality semantics and conflict notions for \tab{} based \lotm, \otm. Using these notions we extend the definition of \opty and \coopty for \otm{s} in \secref{opty}. Then, based on this model, we develop a practical implementation of \otm{} \& \ltm{} to verify the gains achieved as demonstrated in \secref{results}. Further, we prove that proposed model is \coop\cite{KuzPer:NI:TCS:2017} thus composable.

%\otm{} combines the scalable abstraction and ease of programming from \emph{STM}s with our efficient mechanism of achieving composability using object level semantics. Our prototype implementation of \otm{} \& \ltm{} shows significant performance gain over ESTM hash table, NOrec list, lock free transcational list and boosting list. We believe that the proposed methodology would be a significant contribution for achieving the goal of scalable and efficient composition.  

%\input{optimizations}

\bibliographystyle{plain}
\bibliography{citations}

\end{document}